\newtheorem{theorem}{Theorem}
\newtheorem{claim}{Claim}
\newtheorem{proposition}{Proposition}
\newtheorem{lemma}{Lemma}
\newtheorem{fact}{Fact}
\newtheorem{corollary}{Corollary}
\newtheorem{definition}{Definition}
\newtheorem{example}{Example}
\newtheorem{remark}{Remark}
\newcommand{\mkw}[1]{\textcolor{blue}{\textbf{[#1 --mary]}}}
\newcommand{\TODO}[1]{\textcolor{orange}{\textbf{[TODO: #1 ]}}}
\newcommand{\yi}[1]{\textcolor{red}{\textbf{[#1 --yuval]}}}
\newcommand{\victor}[1]{\textcolor{magenta}{\textbf{[#1 --victor]}}}
\newcommand{\newvictor}[1]{{#1}}
\newcommand{\F}{\mathbb{F}}
\newcommand{\Eval}{\mathsf{Eval}}
\newcommand{\Share}{\mathsf{Share}}
\newcommand{\Rec}{\mathsf{Rec}}
\newcommand{\tr}{\mathrm{tr}}
\renewcommand{\vec}[1]{\mathbf{#1}}
\newcommand{\nvec}{\mathrm{Vec}}
\newcommand{\nmat}{\mathrm{Mat}}
\newcommand{\POLY}{\textnormal{\textsf{POLY}}}
\newcommand{\ALL}{\textnormal{\textsf{ALL}}}
\newcommand{\CONCAT}{\textnormal{\textsf{CONCAT}}}
\newcommand{\bx}{\mathbf{x}}
\newcommand{\bX}{\mathbf{X}}
\newcommand{\bz}{\mathbf{z}}
\newcommand{\by}{\mathbf{y}}
\newcommand{\bw}{\mathbf{w}}
\newcommand{\bv}{\mathbf{v}}
\newcommand{\bY}{\mathbf{Y}}
\newcommand{\bZ}{\mathbf{Z}}
\newcommand{\cX}{\mathcal{X}}
\newcommand{\cY}{\mathcal{Y}}
\newcommand{\cF}{\mathcal{F}}
\newcommand{\br}{\mathbf{r}}
\newcommand{\FF}{\mathbb{F}}
\newcommand{\EE}{\mathbb{E}}
\newcommand{\B}{\mathbb{B}}
\newcommand{\eps}{\varepsilon}
\title{On the Download Rate of Homomorphic Secret Sharing\thanks{This is a full version of~\cite{FIKWconf}.}}
\author{Ingerid Fosli\thanks{Google. Work done while at Stanford University. \texttt{ifosli@gmail.com}},\  Yuval Ishai\thanks{Technion. \texttt{yuvali@cs.technion.ac.il}},\  Victor I. Kolobov
\thanks{Technion. \texttt{kolobov.victor@gmail.com}},\  and Mary Wootters\thanks{Stanford University.  \texttt{marykw@stanford.edu}}}
\author{
    Ingerid Fosli\thanks{Google.  \texttt{ifosli@gmail.com}. Work done while at Stanford University.}
    \and
    Yuval Ishai\thanks{Technion. {\tt \{yuvali,tkolobov\}@cs.technion.ac.il}. Research partially supported by ERC Project NTSC (742754), ISF grant 2774/20, and BSF grant 2018393.}
    %\inst{2}
    \and 
    Victor I. Kolobov\footnotemark[3]
    %\inst{2}
    \and 
  Mary Wootters\thanks{Stanford University.  \texttt{marykw@stanford.edu}.  Research partially supported by NSF grants CCF-1844628 and CCF-1814629, and by a Sloan Research Fellowship.}
}
\begin{document}
\maketitle

\begin{abstract}
A homomorphic secret sharing (HSS) scheme is a secret sharing scheme that supports evaluating functions on shared secrets by means of a local mapping from input shares to output shares. We initiate the study of the {\em download rate} of HSS, namely, the achievable ratio between the length of the output shares and the output length when amortized over $\ell$ function evaluations. We obtain the following results.
\begin{itemize}
\item In the case of {\em linear} information-theoretic HSS schemes for degree-$d$ multivariate polynomials, we characterize the optimal download rate in terms of the optimal minimal distance of a linear code with related parameters. We further show that for sufficiently large $\ell$ (polynomial in all problem parameters), the optimal rate can be realized using Shamir's scheme, even with secrets over $\F_2$.
\item We present a general rate-amplification technique for HSS that improves the download rate at the cost of requiring more shares. As a corollary, we get high-rate variants of computationally secure HSS schemes and efficient private information retrieval protocols from the literature. 
\item We show that, in some cases, one can beat the best download rate of linear HSS by allowing {\em nonlinear} output reconstruction and $2^{-\Omega(\ell)}$ error probability. 
%Moreover, this can be done while ensuring that the output shares reveal no information other than the output.
\end{itemize} 
\end{abstract}
\thispagestyle{empty}	
\pagebreak

\tableofcontents
\thispagestyle{empty}	
\pagebreak

\setcounter{page}{1}

\section{Introduction}
%\mkw{placeholder boilerplate text, feel free to replace---I had a hard time filling in parts of the intro in a vacuum so I just put something here:}
\emph{Homomorphic Secret Sharing} (HSS)~\cite{C:Benaloh86a,BGI16a,BGILT18} is a form of secret sharing that supports computation on the shared data by means of locally computing on the shares. HSS can be viewed as a distributed analogue of homomorphic encryption~\cite{RivestAD78,STOC:Gentry09} that allows for better efficiency and weaker cryptographic assumptions, or even unconditional security. 

More formally, a standard $t$-private (threshold) secret-sharing scheme randomly splits an input $x$ into $k$ shares $x^{(1)}, \ldots, x^{(k)}$, distributed among $k$ servers, so that no $t$ of the servers learn any information about the input. (Here we assume {\em information-theoretic} security by default, but we will later also consider computational security.) 
A secret-sharing scheme as above is an HSS for a function class $\mathcal{F}$ if it additionally allows computation of functions $f\in\mathcal{F}$ on top of the shares.
More concretely,
an HSS scheme $\Pi$ consists of three algorithms, $\Share$, $\Eval$ and $\Rec$.  Given $m$ inputs $x_1, \ldots, x_m$, which we think of as originating from $m$ distinct \emph{input clients}, the randomized $\Share$ function independently splits each input $x_i$ among $k$ servers.  Each server $j$ computes $\Eval$ on its $m$ input shares and a target function $f\in\mathcal{F}$, to obtain an \emph{output share} $y^{(j)}$.  These output shares are then sent to an \emph{output client}, who runs $\Rec(y^{(1)},\ldots,y^{(k)})$ to reconstruct $f(x_1, \ldots, x_m)$.\footnote{One may also consider {\em robust} HSS in which reconstruction can tolerate errors or erasures. While some of our results can be extended to this setting, in this work (as in most of the HSS literature) we only consider the simpler case of non-robust reconstruction. 
}

As described up to this point, the HSS problem admits a trivial solution: simply let $\Eval$ be the identity function (which outputs all input shares along with the description of $f$) and let $\Rec$ first reconstruct the $m$ inputs and then compute $f$. To be useful for applications, HSS schemes are required to be {\em compact}, in the sense that the output shares $y^{(j)}$ are substantially shorter than what is sent in this trivial solution. A strong compactness requirement, which is often used in HSS definitions from the literature, is additive reconstruction. Concretely, in an {\em additive HSS} scheme the output of $f$ is assumed to come from an Abelian group $\mathbb G$, each output share $y^{(j)}$ is in $\mathbb G$, and $\Rec$ simply computes group addition. 
Simple additive HSS schemes for linear functions~\cite{C:Benaloh86a}, finite field multiplication~\cite{STOC:BenGolWig88,STOC:ChaCreDam88,EC:CraDamMau00}, and low-degree multivariate polynomials~\cite{BeaverF90,BeaverFKR90,Chor:1998:PIR:293347.293350} are implicit in classical protocols for information-theoretic secure multiparty computation and private information retrieval. More recently, {\em computationally secure} additive HSS schemes were constructed for a variety of function classes under a variety of cryptographic assumptions~\cite{BoyleGI15,DHRW16,BGI16a,BGI16,FazioGJS17,BGILT18,BoyleKS19,BCGIKS19,CouteauM21,OrlandiSY21,RoyS21}. 
% Non-additive HSS for low-degree polynomials: AC:LaiMalSch18,IshaiLM21

While additive HSS may seem to achieve the best level of compactness one could hope for, allowing for 1-bit output shares when evaluating a Boolean function $f$, it still leaves a factor-$k$ gap between the output length and the total length of the output shares communicated to the output client. This is undesirable when $f$ has a long output, especially when $k$ is big. 
We refer to the total output share length of $\Pi$ as its {\em download cost} and to the ratio between the output length and the download cost as its {\em download rate} or simply {\em rate}.  
We note that even when allowing a bigger number of servers and using a non-additive output encoding, it is not clear how to optimize the rate of existing HSS schemes (see Section~\ref{sec:optimal-linear-hss-intro} for further discussion). 

In the related context of homomorphic encryption, it was recently shown that the download rate, amortized over a long output, can approach 1 at the limit~\cite{DottlingGIMMO19,TCC:GenHal19,TCC:BDGM19}. However, here the concrete download cost must inherently be bigger than a cryptographic security parameter, and the good amortized rate only kicks in for big output lengths that depend polynomially on the security parameter. The relaxed HSS setting has the qualitative advantage of allowing the rate to be independent of any security parameters, in addition to allowing for information-theoretic security and better concrete efficiency. 

%\yi{I thought the above can help motivate our goals in the context of recent works. I am fine with moving this to ``related work'' in case you feel it is more distracting than helpful.}

In this work, we initiate the systematic study of the download cost of homomorphic secret sharing. We ask the following question:
\begin{quote}
{\em
    How compact can HSS be? Can existing HSS schemes be modified to achieve amortized download rate arbitrary close to $1$, possibly by employing more servers? }
\end{quote}

More concretely, our primary goal is to understand the best download rate attainable given the number of servers $k$, the security threshold $t$ and the class of functions $\mathcal{F}$ that the HSS is guaranteed to work for.  As a secondary goal, we would also like to minimize the overhead to the {\em upload cost}, namely the total length of the input shares.

To help establish tight bounds, we study the download rate when amortized over multiple instances.  That is, given inputs $x_{i,j}$ for $i \in [m], j \in [\ell]$, all shared separately, and functions $f_1, \ldots, f_\ell \in \mathcal{F}$, we consider the problem of computing $f_j(x_{1,j}, x_{2,j}, \ldots, x_{m,j})$ for all $j \in [\ell]$. (Note that positive results in this setting also apply in the easier settings of computing multiple functions on the same inputs or the same function on multiple inputs.)
HSS with a big number of instances $\ell$ can arise in many application scenarios that involve large-scale computations on secret-shared inputs. This includes private information retrieval, private set intersection, private statistical queries, and more. Such applications will motivate specific classes $\mathcal{F}$ we  consider in this work.

%We define the \emph{download rate} to be the ratio of $\ell$ (the number of bits we want to recover, assuming that each $f_j$ outputs a bit) to the download cost (also measured in bits).  

%\TODO{Introduction should probably (at least informally) define Shamir sharing and CNF sharing, since we probably want to mention those in the contributions.} \yi{Added a short high-level description below, I am not sure that a more detailed definition is needed here. }

\subsection{Contributions}

%\mkw{These contributions are listed roughly in the order that they appear in the paper (which I think makes sense), but the abstract has them listed in a different order.  Maybe we can update the abstract to reflect the order in the paper?}  \yi{Makes sense, just swapped order in abstract (slightly rephrasing the nonlinear-related bullet). }
%\mkw{Please feel free to re-write this, I was just trying to get something down.}

We develop a framework in which to study the download rate of HSS and obtain both positive and negative results for special cases of interest. In more detail, we make the following contributions.

%\begin{itemize}
%\item \textbf{
\subsubsection{Optimal-download linear HSS for low-degree polynomials, and applications}\label{sec:optimal-linear-hss-intro}

We consider information-theoretic HSS when the function class $\mathcal{F}$ is the set of degree-$d$ $m$-variate polynomials over a finite field $\F$. A standard HSS for this class~\cite{BeaverF90,BeaverFKR90} uses $k=dt+1$ servers and has download rate of $1/k$. By using $k\gg dt$ servers, a multi-secret extension of Shamir's secret sharing scheme~\cite{FY92} can be used to get the rate arbitrarily close to $1/d$, for sufficiently large $\ell$.\footnote{Intuitively, this is because one can use polynomials of degree $\approx k/d$ to share the secrets (yielding rate $\approx 1/d$ when $k\gg dt$). 
%as, after $d$ products, the degree becomes $\approx k$. 
A higher degree is not possible because the product of $d$ polynomials should have degree $<k$ to allow interpolation.
See Remark~\ref{rem:triv_concat_wont_work} for more details.} In Section~\ref{sec:linear} we give two constructions that obtain a better rate, arbitrarily close to 1. %\yi{Added above sentence about FY.}
%for sufficiently large $\ell$.

Our first construction (Theorem~\ref{thm:cnf-hss}) is based on the highly redundant \emph{CNF sharing}~\cite{ISN89}, where each input is shared by replicating ${k\choose t}$ additive shares. This construction is defined for all choices of $\F,m,d,t,\ell$ and $k>dt$, and its rate is determined by the best minimal distance of a linear code with related parameters. For sufficiently large $\ell$ this code is an MDS code, in which case the rate is $1 - dt/k$.
The main downside of this construction is a ${k\choose t}$ overhead to the upload cost. Settling for computational security, this overhead can be converted into a {\em computational} overhead (which is reasonable in practice for small values of $k,t$) by using a pseudorandom secret sharing technique~\cite{GI99,CDI05}.

Our second construction  (Theorem~\ref{thm:shamir-hss}) uses \emph{Shamir sharing}~\cite{Shamir79}, where each input is shared by evaluating a random degree-$t$ polynomial over an extension field of $\F$ at $k$ distinct points. This construction also achieves a download rate of $1 - dt/k$ for sufficiently large $\ell$, but here this rate is achieved with upload cost that scales polynomially with $t$ and the other parameters.

Both constructions are {\em linear} in the sense that $\Share$ and $\Rec$ are linear functions. In Section~\ref{sec:linearNegative} we show that for such linear HSS schemes, $1 - dt/k$ is the best rate possible, implying optimality of our schemes.  

We compare the above two HSS schemes in Section~\ref{sec:comparison}.  Briefly, the Shamir-based scheme has better upload cost (which scales polynomially with all parameters) but is more restrictive in its paramater regime: that is, it only yields an optimal scheme in a strict subset of the parameter settings where the CNF-based scheme is optimal. One may wonder if this is a limitation of our Shamir-based scheme in particular or a limitation of Shamir sharing in general.  We show in Proposition~\ref{prop:impossibleShamir} that it is the latter.  That is, there are some parameter regimes where \emph{no} HSS based on Shamir sharing can perform as well as an HSS based on CNF sharing.

\paragraph{Applications: High-rate PIR and more.} 
We apply our HSS for low-degree polynomials to obtain the first information-theoretic private information retrieval (PIR) protocols that simultaneously achieve low (sublinear) upload cost and near-optimal download rate that gets arbitrarily close to~1 when the number of servers grows. A $t$-private $k$-server PIR protocol~\cite{Chor:1998:PIR:293347.293350} allows a client to retrieve a single symbol from a database in $\mathcal{Y}^N$, which is replicated among the servers, such that no $t$ servers learn the identity of the retrieved symbol. The typical goal in the PIR literature is to minimize the communication complexity when $\mathcal{Y}=\{0,1\}$. In particular, the communication complexity should be sublinear in $N$. Here we consider the case where the database has (long) $\ell$-bit records, namely $\mathcal{Y}=\{0,1\}^\ell$. Our goal is to maximize the download rate while keeping the upload cost sublinear in $N$.
Chor et al.~\cite{Chor:1998:PIR:293347.293350} obtain, for any integers $d,t\ge 1$ and $k=dt+1$,  a $t$-private $k$-server PIR protocol with upload cost $O(N^{1/d})$ and download rate $1/k$ (for sufficiently large $\ell$). This protocol implicitly relies on a simple HSS for degree-$d$ polynomials. Using our high-rate HSS for degree-$d$ polynomials, by increasing the number of servers $k$ the download rate can be improved to $1-dt/k$ (in particular, approach 1 when $k\gg dt+1$) while maintaining the same asymptotic upload cost. See Theorem~\ref{thm:PIR} for a formal statement. 

It is instructive to compare this application to a recent line of work on the download rate of PIR. \newvictor{Sun and Jafar~\cite{SJ16,SJ17,SJ18robust}, following~\cite{SRR14}, have shown that the optimal download rate of $t$-private $k$-server PIR is $(1-t/k)/(1-(t/k)^N)$.}
%(for records of length $\ell\ge k^N$).
However, their positive result has $\Omega(N)$ upload cost. We get a slightly worse\footnote{
%Since the rate difference vanishes exponentially with $N$, it is insignificant in the context of sublinear upload cost, where $N$ is considered to be big. 
Note that our positive result applies also to a stronger variant of amortized PIR, which amortizes over $\ell$ {\em independent} instances of PIR with databases in $\{0,1\}^N$. In this setting, our construction with $d=1$ achieves an optimal rate of \newvictor{$1-t/k$} (where optimality follows from~\cite{SRR14} or from Lemma~\ref{lem:negNotLinear}). In Section~\ref{sec:bbintro} below we discuss a construction of {\em computationally} secure PIR that achieves the same rate of $1-1/k$ \newvictor{(for $t=1$)} with {\em logarithmic} upload cost.
} download rate of \newvictor{$1-dt/k$}, where the upload cost is sublinear for $d\ge 2$.

Finally, beyond PIR, HSS for low-degree polynomials can be directly motivated by a variety of other applications.  For instance, an inner product between two integer-valued vectors (a degree-2 function) is a measure of correlation. To amortize the download rate of computing $\ell$ such correlations, our HSS scheme for degree-2 polynomials over a big field $\F$ can be applied. As another example, the intersection of $d$ sets $S_i\subseteq [\ell]$, each represented by a characteristic vector in $\F_2^\ell$, can be computed by $\ell$ instances of a degree-$d$ monomial over $\F_2$. See~\cite{AC:LaiMalSch18,IshaiLM21} for more examples.

%\TODO{mention something quantitative?}

%\item \textbf
%\subsubsection{Linear black-box download rate amplification}  

\subsubsection{Black-box rate amplification for additive HSS}
\label{sec:bbintro}
The results discussed so far are focused on information-theoretic HSS for a specific function class. Towards obtaining other kinds of high-rate HSS schemes, in Section~\ref{sec:blackbox} 
we develop a general black-box transformation technique (Lemma~\ref{black-box}) that can improve the download rate of any  {\em additive} HSS (where $\Rec$ adds up the output shares) by using additional servers. 
More concretely, the transformation can obtain a high-rate $t$-private $k$-server HSS scheme $\Pi$ by making a {\em black-box} use of any additive $t_0$-private $k_0$-server $\Pi_0$, for suitable choices of $k_0$ and $t_0$.  The transformation typically has a small impact on the upload cost and applies to both information-theoretically secure and computationally secure HSS. While we cannot match the parameters of the HSS for low-degree polynomials (described above) by using this approach, we can apply it to other function classes and obtain rate that approaches 1 as $k$ grows.  

We present three useful instances of this technique. In the first (Theorem~\ref{th:bb1}), we use $\Pi_0$ with $k_0={k \choose t}$ and $t_0=k_0-1$ to obtain a $t$-private $k$-server HSS $\Pi$ with rate $1-t/k$. Combined with a computational HSS for circuits from~\cite{DHRW16} (which is based on a variant of the Learning With Errors assumption) this gives a general-purpose computationally $t$-private HSS with rate $1-t/k$, approaching $1$ when $k\gg t$, at the price of upload cost and computational complexity that scale with ${k \choose t}$.  
This should be compared to the $1/(t+1)$ rate obtained via a direct use of~\cite{DHRW16}. Note that, unlike recent constructions of ``rate-1'' fully homomorphic encryption schemes~\cite{TCC:GenHal19,TCC:BDGM19}, here the concrete download rate is independent of the security parameter.

The above transformation is limited in that it requires $\Pi_0$ to have a high threshold $t_0$, whereas most computationally secure HSS schemes from the literature are only 1-private. Our second instance of a black-box transformation (Theorem~\ref{hss2}) uses any 1-private 2-server $\Pi_0$ to obtain a 1-private $k$-server $\Pi$ with rate $1-1/k$. Applying this to HSS schemes from~\cite{BoyleGI15,BGI16}, we get (concretely efficient) 1-private $k$-server computational PIR schemes with download rate $1-1/k$, based on any pseudorandom generator, with upload cost \newvictor{$O(\lambda k \log (N))$} (where $\lambda$ is a security parameter). \newvictor{By using a different approach,  the authors in \cite{hafiz2019bit} show that when $k$ is a power of two an upload cost of $O(\lambda \log(k) \log (N))$ is sufficient.} We can also apply this transformation to 1-private 2-server HSS schemes from~\cite{BGI16a,OrlandiSY21,RoyS21}, obtaining 1-private $k$-server HSS schemes for branching programs based on number-theoretic cryptographic assumptions (concretely, DDH or DCR), with rate $1-1/k$.

Our third and final instance of the black-box transformation is motivated by the goal of {\em information-theoretic} PIR with sub-polynomial ($N^{o(1)}$) upload cost and download rate approaching 1. Here the starting point is a 1-private 3-server PIR scheme with sub-polynomial upload cost based on matching vectors~\cite{STOC:Yekhanin07,STOC:Efremenko09}. While this scheme is not additive, it can be made additive by doubling the number of servers. We then (Theorem~\ref{th:mv-pir}) apply the third variant of the transformation to the resulting 1-private 6-server PIR scheme,  obtaining a 1-private $k$-server PIR scheme with sub-polynomial upload cost and rate $1-1/\Theta(\sqrt{k})$. Note that here we cannot apply the previous transformation since $k_0=6>2$.

We leave open the question of fully characterizing the parameters for which such black-box transformations exist.

%\paragraph{Linear black-box download rate amplification.}  

\subsubsection{Nonlinear download rate amplification}

All of the high-rate HSS schemes considered up to this point have a {\em linear} reconstruction function $\Rec$. Moreover, they all improve the rate of existing baseline schemes by increasing the number of servers. 
%For the class of degree-$d$ polynomials, we show that this is inherent for linear HSS. \mkw{The fact that you need the number of servers to get large to have rate approach 1 applies to any HSS (not necessarily linear), right?  From Lemma~\ref{lem:negNotLinear}?  The thing that we are circumventing in Section~\ref{sec:nonlinear} is the specific $1 - dt/k$ bound.}
%\yi{Right. Maybe we can just contrast this part (rate amplification for free) with the previous part (rate amplification using more servers). }
In Section~\ref{sec:nonlinear} we study the possibility of circumventing this barrier by relaxing the linearity requirement, without increasing the number of servers.  

The starting point is the following simple example. Consider the class $\cal F$ of degree-$d$ monomials (products of $d$ variables) over a field $\F$ of size $|\F|\approx d$. Letting $k=d+1$ and $t=1$, we have the following standard ``baseline'' HSS scheme: $\Share$ applies Shamir's scheme (with $t=1$) to each input; to evaluate a monomial $f$, $\Eval$ (computed locally by each server) multiplies the shares of the $d$ variables of $f$; finally $\Rec$ can recover the output by interpolating a degree-$d$ polynomial, applying a linear function to the output shares. The key observation is that since each input share is uniformly random over $\F$, the output of $\Eval$ is biased towards 0. Concretely, by the choice of parameters, each output share is 0 except with $\approx 1/e$ probability. It follows that when amortizing over $\ell$ instances, and settling for $2^{-\Omega(\ell)}$ failure probability, the output of $\Eval$ can be compressed by roughly a factor of $e$ by simply listing all $\approx \ell/e$ nonzero entries and their locations.

While this example already circumvents our negative result for linear HSS, it only applies to evaluating products over a big finite field, which is not useful for any applications we are aware of. Moreover, this naive compression method does not take advantage of correlations between output shares. In Theorem~\ref{thm:slepianWolf} we generalize and improve this method 
by using a variant of Slepian-Wolf coding tailored to the HSS setting. Note that we cannot use the Slepian-Wolf theorem directly, because the underlying joint distribution depends on the output of $f$ and is thus not known to each server. We apply our general methodology to the simple but useful case where $f$ computes the AND of two input bits. As discussed above, $\ell$ instances of such $f$ can be motivated by a variant of the private set intersection problem in which the output client should learn the intersection of two subsets of $[\ell]$ whose characteristic vectors are secret-shared between the servers. By applying Theorem~\ref{thm:slepianWolf} to a 1-private 3-server HSS for AND based on CNF sharing, we show (Corollary~\ref{cor:beatlinear}) that the download rate can be improved from $1/3$ to $\approx 0.376$ (with $2^{-\Omega(\ell)}$ failure probability), which is the best possible using our general compression method. 

Perhaps even more surprisingly, the improved rate can be achieved while ensuring that the output shares reveal no additional information except the output (Proposition~\ref{weakshss}). We refer to the latter feature as {\em symmetric privacy}. This should be contrasted with the above example of computing a monomial over a large field, where the output shares {\em do} reveal more than just the output (as they reveal the product of $d$ degree-1 polynomials that encode the inputs). While symmetric privacy can be achieved by rerandomizing the output shares---a common technique in protocols for secure multiparty computation~\cite{STOC:BenGolWig88,STOC:ChaCreDam88,EC:CraDamMau00}---this eliminates the possibility for compression.

Our understanding of the rate of nonlinear HSS is far from being complete. Even for simple cases such as the AND function, some basic questions remain. Does the compression method of Theorem~\ref{thm:slepianWolf} yield an optimal rate? Can the failure probability be eliminated? Can symmetric privacy be achieved with nontrivial rate even when the output client may collude with an input client? We leave a more systematic study of these questions to future work.

%\yi{Any chance we can get some poly-time version of better download rate for AND, even if only for the inferior local version? Shouldn't standard compression algorithms work here? }\mkw{Yes, I think that they should.  I'm not sure what the theoretical results are in that area though.} \yi{Ok, thanks. Let's try to look into this post-deadline.}

%\TODO{mention symmetric HSS}
%\end{itemize}

\subsection{Technical Overview}
%\mkw{I filled in a first draft of a few sections, feel free to edit.}

\newvictor{In this section} we give a high level overview of the main ideas used by our results.
%Before diving into the details,
\subsubsection{Linear HSS for low-degree polynomials}
We begin by describing our results in Section~\ref{sec:linear} for linear HSS.  We give positive and negative results; we start with the positive results.

\paragraph{HSS for Concatenation.}
Both of our constructions of linear HSS (the first based on CNF sharing, the second on Shamir sharing) begin with a solution for the special problem of \emph{HSS for concatenation} (Definition~\ref{def:CONCAT}).  Given $\ell$ inputs $x_1, \ldots, x_\ell$ that are shared separately, the goal is for the servers to produce output shares (the outputs of $\Eval$) that allow for the joint recovery of $(x_1, \ldots, x_\ell)$, while still using small communication.  Once we have an HSS for concatenation based on either CNF sharing (Definition~\ref{cnf}) or Shamir sharing (Definition~\ref{shamir}), an HSS for low-degree polynomials readily follows by exploiting the specific structure of CNF or Shamir.\footnote{In some parameter regimes, HSS for concatenation with optimal download rate is quite easy to achieve using other secret sharing schemes, such as the multi-secret extension of Shamir's scheme due to Franklin and Yung~\cite{FY92}. However, this does not suffice for obtaining rate-optimal solutions for polynomials of degree $d>1$ (see Remark~\ref{rem:triv_concat_wont_work}). 
%\yi{Added this footnote, please check.}\victor{looks good}
}

This problem can be viewed as an instance of \emph{share conversion}. Concretely, the problem is to locally convert from a linear secret sharing scheme (LSSS) that shares $x_1, \ldots, x_\ell$ \emph{separately} (via either CNF or Shamir) to a linear multi-secret sharing scheme (LMSSS) that shares $x_1, \ldots, x_\ell$ \emph{jointly}.  Thus, our constructions follow by understanding such share conversions.

\paragraph{Construction from CNF sharing.}  If we begin with CNF sharing, we can completely characterize the best possible share conversions as described above.  Because $t$-private CNF shares can be locally converted to \emph{any} $\geq t$-private LMSSS (Corollary~\ref{cdiplusplus}, extending~\cite{CDI05} from LSSS to LMSSS), the above problem of share conversion collapses to the problem of understanding the best rate attainable by an LMSSS with given parameters.  It is well-known that LMSSS's can be constructed from linear error correcting codes with good dual distance (see, e.g., \cite{Mas95,CCGHV07}). % \yi{Theorem 1 from \cite{GM10} seems to give our minimal distance characterization for $b=1$, so it might make sense to update the text here and in the technical section accordingly.}\mkw{Agreed, please go ahead and do that.  Or else I can try to get to it tomorrow before the deadline.} \yi{Added ``generalizing~\cite{GM10}'' below and ``The special case where $b=1$ follows from Theorem~1 of~\cite{GM10}'' in the technical section.}
However, in order to construct HSS we are interested only in $t$-private LMSSS with the property that \emph{all} $k$ parties can reconstruct the secret (as opposed to any $t+1$ parties or some more complicated access structure), which results in a particularly simple correspondence (Lemma~\ref{folded-codes}, generalizing~\cite{GM10}).
This in turn leads to Theorem~\ref{thm:cnf-hss-d1}, which \emph{characterizes} the best possible download rate for any linear HSS-for-concatenation in terms of the best trade-off between the rate and distance of a linear code.  This theorem gives a characterization (a negative as well as a positive result).  The positive result (when we plug in good codes) gives our CNF-based HSS-for-concatenation, which leads to Theorem~\ref{thm:cnf-hss}, our CNF-based HSS for general low-degree polynomials. 

\paragraph{Construction from Shamir sharing.}  If we begin with Shamir sharing, we can no longer locally convert to any LMSSS we wish.  Instead, we develop a local conversion to a specific LMSSS with good rate.  In order to develop this construction, we leverage ideas from the \em regenerating codes \em literature (see the discussion in Section~\ref{sec:related} below).  Unfortunately, we are not able to use an off-the-shelf regenerating code for our purposes, but instead we take advantage of some differences between the HSS setting and the regenerating code setting in order to construct a scheme that suits our needs.  This results in Theorem~\ref{reg-share-conversion} for HSS-for-concatenation, and then Theorem~\ref{thm:shamir-hss} for HSS for general low-degree polynomials.

As an application of our Shamir-based construction, 
we extend information-theoretic PIR protocols from \cite{Chor:1998:PIR:293347.293350}
%\cite{Chor:1998:PIR:293347.293350,Ambainis97,WoodruffY05}, 
to allow better download rate by employing more servers, while maintaining the same (sublinear) upload cost.  This leads to Theorem~\ref{thm:PIR}.

\paragraph{Negative results.}  As mentioned above, Theorem~\ref{thm:cnf-hss-d1} contains both positive and negative results, with the negative results stemming from negative results about the best possible trade-offs between the rate and distance of linear codes.  This shows that our CNF-based construction is optimal for HSS-for-concatenation, but unfortunately does not extend to give a characterization of the best download rate for HSS for low-degree polynomials.  Instead, in Theorem~\ref{thm:barrier}, we use a linear-algebraic argument to show that no linear HSS for degree $d$ polynomials can have download rate better than $1 - dt/k$. This means that for sufficiently large $\ell$, both of our HSS schemes for low-degree polynomials have an optimal rate.

\subsubsection{Black-box rate amplification for additive HSS}

Next, we give a brief technical overview of our results in Section~\ref{sec:blackbox} on  black-box rate amplification for additive HSS.

%\yi{It is better to add some technical explanation of the actual constructions, otherwise it is not clear what this section gives beyond the overview in Section~\ref{sec:bbintro}.}\victor{How about now? I also relocated the diagram for the $k_0=2$ transformation from section \ref{sec:bb2} back here}. \yi{Looks good, thanks. Made some edits, please check. }\victor{looks good}

\paragraph{The general approach.} We show that starting from any $t_0$-private $k_0$-server HSS scheme $\Pi_0$ with additive reconstruction (over some finite field), it is possible to construct other $t$-private $k$-server HSS schemes with higher rate. The main observation is that due to the additive reconstruction property, after the servers perform their evaluation, the output shares form an additive sharing of the output $y=y^{(1)}+\ldots+y^{({k_0})}$ (which is $t_0$-private). By controlling how the shares are replicated among the servers, each output $y_i$, $i=1,\ldots,\ell$, is shared among the servers according to some LSSS. Hence, at this stage, this becomes a share conversion problem, where we want to convert separately shared outputs into a high-rate joint LMSSS, which yields our high-rate HSS scheme. 

\paragraph{Black-box transformations with large $k_0$.} We observe that if $k_0=\binom{k}{t}$ and $t_0=k_0-1$, then we can replicate the shares of $\Pi_0$ in such a manner that each output $y$ is $t$-CNF shared among the servers. Concretely, if $y=y^{(1)}+\ldots+y^{({k_0})}$, then we can identify each index $i=1,\ldots,k_0$ with a subset $T_i\in\binom{[k]}{t}$, and provide each server $j=1,\ldots,k$ with $y^{(i)}$ if and only if $j\notin T_i$, after which the servers hold a $t$-CNF sharing of $y$ (Definition~\ref{cnf}). Therefore, as in Section \ref{sec:optimal-linear-hss-intro}, this now reduces to finding a $t$-private LMSSS with the best possible rate. %As discussed in Section~\ref{sec:bbintro}, this yields a high-rate LWE-based HSS for circuits with an arbitrary constant threshold $t$.

\paragraph{Black-box transformations with $k_0=2$.} Most computationally secure HSS schemes from the literature are 1-private 2-server schemes, to which the previous transformation does not apply. Our second transformation converts any (additive) 1-private 2-server $\Pi_0$ to a $1$-private $k$-server $\Pi$ with rate $1-1/k$. This is obtained by replicating $k-1$ pairs of  (output) shares among the $k$ servers in a way that: (1) each server gets only one share from each pair; (2) the servers can locally convert their shares to a 1-private $(k-1)$-LMSSS of the outputs of rate $1-1/k$. To illustrate this approach for $k=3$, suppose we are given a $2$-additive secret sharing for every output $y_i=y_i^{({1})}+y_i^{({2})}$, $i=1,2$. We obtain a $1$-private $3$-server $2$-LMSSS sharing of the outputs with information rate $2/3$ in the following way:

\begin{center}
\begin{tikzpicture}
\node(x) at (0,1) {$y_1$};
\node(y) at (0,0) {$y_2$};
\node(a) at (3, 2) {$y_1^{({1})},y_2^{({1})}$};
\node(b) at (3, 0.5) {$y_1^{({2})},y_2^{({1})}$};
\node(c) at (3, -1) {$y_1^{({1})},y_2^{({2})}$};
\node(ay) at (7, 2) {$z^{({1})}$};
\node(by) at (7, 0.5) {$z^{({2})}$};
\node(cy) at (7, -1) {$z^{({3})}$};
\node[anchor=west](f) at (9, 1) {$z^{({1})}+z^{({2})}=y_1$};
\node[anchor=west](f2) at (9, 0) {$z^{({1})}+z^{({3})}=y_2$};
\draw[->] (x) to (a);
\draw[->] (x) to (b);
\draw[->] (x) to (c);
\draw[->] (y) to (a);
\draw[->] (y) to (b);
\draw[->] (y) to (c);
\draw[->] (a) to node[above]{$y_1^{({1})}+y_2^{({1})}$} (ay);
\draw[->] (b) to node[above]{$y_1^{({2})}-y_2^{({1})}$}(by);
\draw[->] (c) to node[above]{$-y_1^{({1})}+y_2^{({2})}$}(cy);
\draw[->] (ay) to (f.west);
\draw[->] (by) to (f.west);

\draw[->] (ay) to (f2.west);

\draw[->] (cy) to (f2.west);
\end{tikzpicture}
\end{center}
Since we need only $3$ shares to reconstruct $2$ secrets, the rate is $2/3$. This can be generalized in a natural way to $k$ servers and $k-1$ outputs.

%\victor{mention corollaries?}

\paragraph{Sub-polynomial upload cost PIR with high rate.} Our third variant of the black-box transformation is motivated by the goal of high-rate information-theoretic PIR with sub-polynomial upload cost. Unlike the previous parts, here we start with a $1$-private $6$-server HSS, which has a lower privacy-to-servers ratio. While we don't obtain a tight characterization for this parameter setting, we can reduce the problem to a combinatorial packing problem, which suffices to get rate approaching 1. Concretely, for a universe $\{1,\ldots,q\}$ we need the largest possible family of subsets $\mathcal{S}\subseteq\binom{[q]}{5}$, such that distinct sets in $\mathcal{S}$ have at most a single element in common. Next, we show that it is possible to associate every set in $\mathcal{S}$ with a secret from $\FF$, and also every set in $\mathcal{S}$ \emph{and} an element of the universe $\{1,\ldots,q\}$ with a server, in such a way that the output shares, each an element of $\FF$, constitute a high-rate LMSSS sharing of the outputs. This gives us a rate of $1-q/(q+|\mathcal{S}|)$. Using known constructions of subset families as above of size $\Theta(q^2)$ \cite{EH63}, we get a download rate of $1-1/\Theta(q)=1-1/\Theta(\sqrt{k})$. 

\subsubsection{Nonlinear download rate amplification}\label{sec:SWtech}
Finally, we describe our results in Section~\ref{sec:nonlinear} for nonlinear rate amplification with a small error probability.

\paragraph{Slepian-Wolf-style Compression.}  We begin with any HSS scheme $\Pi$ for a function class $\mathcal{F}$.  Suppose that, sharing a secret $\bx$ under $\Pi$, each server $j$ has an output share (that is, the output of $\Eval$), $z_j$.  The vector $\bz$ of these output shares is a random variable, over the randomness of $\Share$ and $\Eval$.  Thus, if we repeat this $\ell$ times with $\ell$ secrets $\bx_1, \ldots, \bx_\ell$ to get $\ell$ draws $\bz_1, \bz_2, \ldots, \bz_\ell$, we may hope to compress the sequence of $\bz_i$'s if, say, $H(\bz)$ is small.  

There are two immediate obstacles to this hope.  The first obstacle is that each vector $\bz_i$ is split between the $k$ servers, with each server holding only one coordinate.  The second obstacle is that the underlying distribution of each $\bz_i$ depends on the secret $\bx_i$, which is not known to the reconstruction algorithm.  
Both of these obstacles can be overcome directly by having each server compress its shares individually.  This trivially gets around the first obstacle, and it gets around the second because, by $t$-privacy, the distribution of any one output share does not depend on the secret.  However, we can do better.

The first obstacle has a well-known solution, known as \emph{Slepian-Wolf coding}.  In Slepian-Wolf coding, a random source $\bz$ split between $k$ servers as above can be compressed separately by each server, with download cost for a sequence of length $\ell$ approaching $\ell \cdot \max_{S \subseteq [k]} H(\bz_S | \bz_{S^c})$.  (Here, $\bz_S$ denotes the restriction of $\bz$ to the coordinates in $S$.)  Unfortunately, classical Slepian-Wolf coding does not work in the face of the second obstacle, that is if the underlying distribution is unknown.  

The most immediate attempt to adapting the classical Slepian-Wolf argument to deal with unknown underlying distributions is to take a large union bound over all $|\mathcal{X}^m|^\ell$ possible sequences of secrets.  Unfortunately, this does not work, as the union bound is too big.  However, by using the \emph{method of types} (see, e.g., \cite[Section 11.1]{coverThomas}), we are able to reduce the union bound to a manageable size.  This results in our main technical theorem of this section, Theorem~\ref{thm:slepianWolf}.  We instantiate Theorem~\ref{thm:slepianWolf} with 3-server HSS for the AND function, based on $3$-party CNF sharing, demonstrating how to beat the impossibility result in Theorem~\ref{thm:barrier} even for a simple and well-motivated instance of HSS. 

\paragraph{Symmetric privacy for free.} A useful added feature for HSS is having output shares that hide all information other than the output. We refer to this as {\em symmetric privacy}.
The traditional method of achieving this is by ``rerandomizing'' the output shares. However, this approach conflicts with the compression methodology discussed above. Somewhat unexpectedly, we show (Proposition~\ref{weakshss}) that our rate-optimized HSS for AND already satisfies the symmetric privacy property. 

To give a rough idea why, we start by describing the HSS scheme for AND that we use to  instantiate Theorem~\ref{thm:slepianWolf} in an optimal way. In fact, we describe and analyze a generalization to multiplying two inputs $a,b$ in a finite field $\F$ (the AND scheme is obtained by using $\F=\F_2$). The $\Share$ function shares each secret using 1-private CNF. Concretely, $a$ is first randomly split into $a=a_1+a_2+a_3$ and similarly $b$, and then server $j$ gets the 4 shares $a_i,b_i$ with $i\neq j$. For defining $\Eval$, we can assign each of the 9 monomials $a_ib_j$ to one of the servers that can evaluate it, and let each server compute the sum of its assigned monomials. It turns out that the monomial assignment for which Theorem~\ref{thm:slepianWolf} yields the best rate is the {\em greedy} assignment, where each monomial is assigned to the first server who can evaluate it. Using this assignment, the first and last output shares are $y^{(1)}=(a_2+a_3)(b_2+b_3)=(a-a_1)(b-b_1)$ and $y^{(3)}=a_1b_2+a_2b_1$. Since $y^{(1)}+y^{(2)}+y^{(3)}=ab$, it suffices to show that the joint distribution of $(y^{(1)},y^{(3)})$ reveals no information about $a,b$ other than $ab$. 

This can be informally argued as follows. First, viewing $y^{(3)}$ as a randomized function of $a_1,b_1$ with randomness $a_2,b_2$, the only information revealed by $y^{(3)}$ about $a_1,b_1$ is whether $a_1=b_1=0$. Since $a_2,b_2$ are independent of $(a-a_1)(b-b_1)$, the information about $(a,b)$ revealed by $(y^{(1)},y^{(3)})$ is equivalent to $(a-a_1)(b-b_1)$ together with the predicate $a_1=b_1=0$. Since $y^{(1)}$ is independent of $a,b$ and is equal to $ab$ conditioned on $a_1=b_1=0$, the latter reveals nothing about $a,b$ other than $ab$, as required.  In the formal proof of symmetric privacy (Proposition~\ref{weakshss}) we show an explicit bijection between the randomness leading to the same output shares given two pairs of inputs that have the same product.

To complement the above, we observe (Proposition~\ref{shamir-not-shss}) that if we use the natural HSS for multiplication based on Shamir's scheme (namely, locally multiply Shamir shares without rerandomizing), then symmetric privacy no longer holds. Indeed, in this scheme the output shares determine the product of two random degree-1 polynomials with free coefficients $a$ and $b$ respectively. Thus, one can distinguish between the case $a=b=0$, in which the product polynomial is of the form $\alpha X^2$, and the case where $a=0,b=1$, in which the product polynomial typically has a linear term. Note that the two cases should be indistinguishable, since in both we have $ab=0$. The insecurity of homomorphic multiplication without share randomization has already been observed in the literature on secure multiparty computation~\cite{STOC:BenGolWig88}.

%\TODO{Say something about the symmetric HSS result/techniques.}

%\victor{Maybe Mary's remark should appear here, that, interestingly, we don't know if its nonlinearity or small failure probability that allows us to break this barrier. See also Remark \ref{remark-break-barrier} in this context.} \mkw{I will add it if I can't figure it out.  I'll see if I can extend the linear impossibility result to work with some small failure probability.} \yi{Isn't it easy to argue via S-Z that with linear sharing and {\em deterministic} linear reconstruction, allowing a small error does not help? (We can assume wlog that Eval computes degree-$d$ polynomials, and so SZ applies.)
%Also, do we know that nonzero error is necessary for linear scheme? }   \mkw{I'm probably missing something but I didn't get the SZ approach to work out (that was the first thing I tried), at least to rule out the quantitative failure probability that we get in our Slepian-Wolf-like theorem.  (I agree that SZ shows that we can tolerate a very small amount of error, but our Slepian-Wolf thm has more error than that.) I don't know that nonzero error is necessary for any linear scheme.} \yi{ So both questions sound like fun riddles to think about...  }

\subsection{Related Work}\label{sec:related}

We already mentioned related work on homomorphic secret sharing, fully homomorphic encryption, private information retrieval, and secure multiparty computation. 
In the following we briefly survey related work on regenerating codes and communication-efficient secret sharing.

%\mkw{Victor's awesome list of refs is below this so that we can copy-paste up as appropriate.  Obviously feel free to adjust the subsectioning.} \yi{I don't think it is essential to further discuss work on HSS, PIR, MPC, FHE. Some discussion (including some of Victor's references) already appears in the relevant parts of the intro. We can add more discussion post-deadline.}\mkw{Okay, sounds good.  Maybe we can include a quick sentence pointing out that it's already been discussed in the intro? (In case someone skips to related work?)} \yi{Added to the opening paragraph.}

\iffalse
\subsubsection{HSS and FSS}
\TODO{}

\subsubsection{PIR}
\TODO{}

\subsubsection{MPC and FHE}
\TODO{}
\fi

\subsubsection{Regenerating codes}
Our Shamir-based HSS scheme is inspired by \emph{regenerating codes}~\cite{dimakis}, and in particular the work on using Reed-Solomon codes as regenerating codes~\cite{SPDC14, GW16,TYB17}.  
A \emph{Reed-Solomon code} of block length $k$ and degree $d$ is the set $C = \{ (p(\alpha_1), \ldots, p(\alpha_k)) : p \in \F[X], \deg(p) \leq d \}$.  
A \emph{regenerating code}, introduced by \cite{dimakis} in the context of distributed storage, is a code that allows the recovery of a single erased codeword symbol by downloading not too much information from the remaining symbols.  
The goal is to minimize the number of bits downloaded from the remaining symbols.  Thus, a repair scheme for degree $dt$ Reed-Solomon codes immediately yields an HSS for degree-$d$ polynomials with $t$-private Shamir-sharing with the same download cost.
%Indeed, we treat the Shamir-shares as the non-erased symbols of a Reed-Solomon codeword $(p(\alpha_1), \ldots, p(\alpha_k))$ for some $p \in \FF[X]$ of degree at most $d$, and our goal is to recover the secret $p(\alpha_0)$.  The download cost to recover the missing symbol is thus exactly the download cost to recover the secret.  
It turns out that one can indeed obtain download-optimal HSS schemes for low degree polynomials this way from the regenerating codes in~\cite{TYB17} (see Corollary~\ref{cor:TYB}).
%\mkw{Updated this:}
However, while this result obtains the optimal download rate of $1 - dt/k$, even for $\ell=1$, the field size $\FF$ must be extremely large: doubly exponential in the number of servers $k$.  Alternatively, if we would like to share secrets over $\FF_2$, for example, the upload cost must be huge (see Remark~\ref{rem:regenCompare}), even worse than CNF.
Moreover, \cite{TYB17} shows that this is unavoidable if we begin with a regenerating code: any linear repair scheme for Reed-Solomon codes that corresponds to an optimal-rate HSS must have (nearly) such a large field size.  In contrast, our results in Section~\ref{sec:linear} yield Shamir-based HSS with optimal download rate and with reasonable field size and upload cost.

The reason that we are able to do better (circumventing the aforementioned negative result of \cite{TYB17} for Reed-Solomon regenerating codes) is that (a) in HSS we are only required to recover the secret, while in renegerating codes one must be able to recover any erased codeword symbol (corresponding to any given share); (b) we allow the shares to be over a larger field than the secret comes from;\footnote{In the regenerating codes setting, this corresponds to moving away from the MSR (Minimum Storage Regenerating codes) point and towards the MBR (Minimum Bandwidth Regenerating codes) point; see~\cite{dimakisSurvey}.  To the best of our knowledge, repair schemes for Reed-Solomon codes have not been studied in this setting.} and (c) we amortize over $\ell > 1$ instances.

However, even though we cannot use a regenerating code directly, we use ideas from the regenerating codes literature.  In particular, our scheme can be viewed as one instantiation of the framework of \cite{GW16} and has ideas similar to those in~\cite{TYB17}; again, our situation is simpler particularly due to (a) above.

We mention a few related works that have also used techniques from regenerating codes.  First, the work~\cite{ACEDX21} uses regenerating codes, including a version of the scheme from \cite{GW16}, in order to reduce the communication cost per multiplication in secure multiparty computation. Their main result is a logarithmic-factor improvement in the communication complexity for a natural class of MPC tasks compared to previous protocols with the same round complexity.\footnote{One may \newvictor{ask} why \cite{ACEDX21} can use a regenerating code while we cannot.  The reason is that we are after \emph{optimal} download rate.  Indeed, one can obtain nontrivial download rate in our setting using a variant of the scheme in \cite{GW16}, which does have a small field size.  However, as is necessary for regenerating codes over small fields, the bandwidth of the regenerating code does not meet the so-called \emph{cut-set bound}, and correspondingly the download rate obtained this way is not as good as the optimal $1 - dt/k$ download rate that is achieved with our approach.}
%\mkw{Perhaps we should come back to this reference \cite{ACEDX21} and add more}\victor{Why GW suffices for them but not for us?}\victor{In the other direction, I will check if we can improve a constant factor in their communication complexity (by avoiding RMFEs or using our Shamir HSS)} 
%\mkw{I added a footnote explaining why they can use GW and we can't.  (Answer: we can, it's just not optimal, and we want optimal.  In fact, a GW-based construction is the first construction that Ingerid came up with, and it is in her notes.)}
Second, the recent work~\cite{SW21} studies an extension of regenerating codes (for the special case of Reed-Solomon codes) where the goal is not to compute a single missing symbol but rather any linear function of the symbols.  While primarily motivated by distributed storage, that result can be viewed as studying the download cost of HSS for Shamir sharing, in the single-client case where $m=1$, and restricted to linear functions.  One main difference of \cite{SW21} from our work is that in \cite{SW21} the secrets are shared jointly, while in our setting (with several clients) the secrets must be shared independently.   Thus \cite{SW21} does not immediately imply any results in our setting. Finally, the work \cite{EGKY21} studies the connection between regenerating codes and proactive secret sharing.

\subsubsection{Communication-efficient secret sharing}
As noted above, the HSS problem is easier than the general problem of regenerating codes, as we only need to recover the secret(s), rather than any missing codeword symbol (which corresponds to recovering any missing share in the HSS setting).  As such, one might hope to get away with smaller field size.  In fact, this has been noticed before, and previous work has capitalized on this in the literature on \emph{Communication-Efficient Secret Sharing} (CESS)~\cite{HLKB16,HB17,BR17,RKOV18}.  The simplest goal in this literature is to obtain optimal-download-rate HSS for the special case that $\mathcal{F}$ consists only of the identity function; more complicated goals involve (simultaneously) obtaining the best download rate for any given authorized set of servers (not just $[k]$); and also being able to recover missing shares (as in regenerating codes).  Most relevant for us, the simplest goal (and more besides) have been attained, and optimal schemes are known (e.g.,~\cite{HLKB16}).

However, while related, CESS---even those based on Shamir-like schemes as in \cite{HLKB16}---do not immediately yield results for HSS, or even for HSS-for-concatenation.  The main difference is that in CESS, the inputs need not be shared separately.  For example, when restricted to the setting of HSS for the identity function, the scheme in \cite{HLKB16} is simply the $\ell$-LMSSS described in Remark~\ref{rem:ell-shamir}(b), where the $\ell$ inputs are interpreted as coefficients of the same polynomial and are shared \emph{jointly.}  %Thus it does not yield an HSS-for-concatenation. %As such, these schemes do not automatically yield HSS-for-concatenation, or for HSS for $\POLY_{d,m}(\FF)$.  

One exception is the scheme from \cite{HB17}, which is directly based on Shamir's scheme (with only one input) over a field $\FF$.  The scheme is linear, and so it immediately yields an HSS for 
degree-$d$ polynomials.  However, while the download rate approaches optimality as the size of the field $\FF$ grows, it is not optimal.\footnote{In more detail, the download rate of the $t$-private, $k$-server Shamir-based scheme for degree-$d$ polynomials in \cite{HB17} is $\left(\frac{k}{k-dt} + \frac{ k^2(k-dt)^2 }{4 \log_{|\B|}|\FF|} \right)^{-1}$, where $\B$ is an appropriate subfield of $\FF$.  In particular, $\FF$ should be exponentially large in $k$ before this rate is near-optimal.}

\paragraph{Organization of the paper.} 
In Section~\ref{sec:prelim}, we set notation and formally define the notions of HSS and LMSSS that we will use.  In Section~\ref{sec:linear}, we present our results for linear HSS.  In Section~\ref{sec:blackbox}, we present our results for black-box rate amplification of additive HSS.  Finally, in Section~\ref{sec:nonlinear}, we present our results for rate amplification using nonlinear reconstruction.

\section{Preliminaries}\label{sec:prelim}

\paragraph{Notation.} For an integer $n$, we use $[n]$ to denote the set $\{1, 2, \ldots, n\}$.  For an object $w$ in some domain $\mathcal{W}$, we use $|w|$ to denote the number of \emph{bits} required to write down $w$.  That is, $|w| = \log_2|\mathcal{W}|$.  We will only use this notation when the domain is clear. 
%\mkw{ how about this as a definition?} \yi{Looks good. } 
We generally use bold symbols (like $\bx$) to denote vectors.

\subsection{Homomorphic Secret Sharing} We consider HSS schemes with $m$ inputs and
$k$ servers; we assume that each input is shared independently.  
We would like to compute functions from a function class
$\mathcal{F}$ consisting of functions $f:\mathcal{X}^m\rightarrow\mathcal{Y}$, where 
$\mathcal{X}$ and $\mathcal{Y}$ are input and output domains, respectively.  Formally, we have the following definition.

%\yi{Changed $\Share(i,x_i)$ in the following definition to just $\Share(x)$ (namely, the same sharing is applied to all inputs), which makes positive results stronger and is also the syntactic convention used later in this section. Why did we want $\Share$ to know which input is shared? }

\begin{definition}[HSS, modified from \cite{BGILT18}]\label{def:HSS}
A $k$-server HSS for a function family $\mathcal{F}=\{f:\mathcal{X}^m\rightarrow\mathcal{Y}\}$ is a tuple of algorithms $\Pi=(\Share,\Eval,\Rec)$ with the following syntax:
\begin{itemize}
    \item $\Share(x)$:
    %$\Share(i,x)$:
    On input 
    %$i\in[m]$ (input index),  and $x_i\in\mathcal{X}$ (input), 
    $x\in\mathcal{X}$
    the (randomized) sharing algorithm $\Share$ outputs $k$ shares 
    %$(x_i^{(1)},\ldots,x_i^{(k)})$. 
    $(x^{(1)},\ldots,x^{(k)})$. We will sometimes write $\Share(x,r)$ to explicitly refer to the randomness $r$ used by $\Share$.
We refer to the
%$x_i^{(j)}$ 
$x^{(j)}$ as \emph{input shares}.

    \item $\Eval(f,j,(x_1^{(j)},\ldots,x_m^{(j)}))$: On input $f\in\mathcal{F}$ (evaluated function), $j\in[k]$ (server index) and $x_1^{(j)},\ldots,x_m^{(j)}$ ($j$th share of each input), the evaluation algorithm $\Eval$ outputs $y^{(j)}$, corresponding to server $j$'s share of $f(x_1,\ldots,x_m)$. We refer to the $y^{(j)}$ as the \emph{output shares}.
    \item $\Rec(y^{(1)},\ldots,y^{(k)})$: Given $y^{(1)},\ldots,y^{(k)}$ (list of output shares), the reconstruction algorithm $\Rec$ computes a final output $y\in\mathcal{Y}$.
\end{itemize}
The algorithms $\Pi = (\Share, \Eval, \Rec)$ should satisfy the following requirements:

\begin{itemize}
    \item \textbf{Correctness:} For any $m$ inputs $x_1,\ldots,x_m\in\mathcal{X}$ and $f\in\mathcal{F}$,
    \[
\Pr\left[
\Rec\left(y^{(1)},\ldots,y^{(k)} \right)=f(x_1,\ldots,x_m):
\begin{array}{c}
\forall i\in[m]\left(x_i^{(1)},\ldots,x_i^{(k)} \right)\leftarrow
\Share(x_i)\\
%\Share(i,x_i)\\
\forall j\in [k]\ y^{(j)} \leftarrow \Eval(f,j,(x_1^{(j)},\ldots,x_m^{(j)}))
\end{array}\right]= 1.
\]
If instead the above probability of correctness is at least $\alpha$ for some $\alpha \in (0,1)$ (rather than being exactly $1$), we say that $\Pi$ is \emph{$\alpha$-correct}.
\item \textbf{Security:} We say that $\Pi$ is \emph{$t$-private}, if for every $T\subseteq[k]$ of size $|T|\leq t$ and $x,x'\in\mathcal{X}$, the distributions $(x^{(j)})_{j\in T}$ and $((x')^{(j)})_{j\in T}$  are identical, where $\bx$ is sampled from $\Share(x)$ and $\bx'$ from $\Share(x')$.
%$(x_1,\ldots,x_m)\neq (x'_1,\ldots,x'_m)\in\mathcal{X}^m$, the distributions $(x_i^j)_{i\in [m],j\in T}$ and $({x'_i}^j)_{i\in [m],j\in T}$ are identical.
%\yi{Simplified the previous definition, which for some reason considered $m$ secrets.}
\end{itemize}
\end{definition}

%\victor{Reminder: we should not forget to mention \emph{computational efficiency} of HSS in the relevant parts.} \yi{Tried to address it in the following, but I'm not sure this applies to the Slepian-Wolf construction. In case it doesn't, we should point it out explicitly in the Slepian-Wolf part. }\mkw{The Slepian-Wolf schemes are not efficient -- the reconstruction algorithm is to essentially find the unique pre-image of a random hash function.}

While the above definition does not refer to computational complexity, in positive results we require by default that all of algorithms are polynomial in their input and output length.

A major theme of this work is amortizing the download cost of HSS over $\ell$ function evaluations.  Informally, there are now $\ell$ points in $\mathcal{X}^m$, $\bx_j = (x_{1,j}, x_{2,j}, \ldots, x_{m,j})$ for each $j \in [\ell]$, and each input $x_{i,j}$ is shared \emph{separately} using $\Share$.  
The goal is to compute $f_j(\bx_j)$ for each $j \in [\ell]$ for some $f_j \in \mathcal{F}$.
Formally, we can view this as a special case of Definition~\ref{def:HSS} applied to the following class $\mathcal{F}^\ell$.

\begin{definition}[The class $\mathcal{F}^\ell$] Given a function class $\mathcal{F}$ that maps $\mathcal{X}^m$ to $\mathcal{Y}$, we define
$\mathcal{F}^\ell$ to be the function class that maps $\mathcal{X}^{\ell m}$ to $\mathcal{Y}^\ell$ given by
\[\mathcal{F}^{\ell}:=\{(x_{i,j})_{i\in[m],j\in [\ell]} \mapsto
(f_1(\vec{x}_1),\ldots,f_\ell(\vec{x}_\ell))\ :\ f_1,\ldots,f_\ell\in\mathcal{F}\}.
\]
%\victor{in view of the text above, shouldn't $\bx_1$ be $\bx^1$, etc?}\mkw{I updated the text earlier in order to have the $j$ downstairs, which I think is how we (usually) use it in the technical section.}
\end{definition}

\paragraph{Computational HSS.} In this work we will be primarily interested in information-theoretic HSS as in Definition~\ref{def:HSS}.  However, in Section~\ref{sec:blackbox} we will also be interested in {\em computationally secure} HSS schemes, where the security requirement is relaxed to hold against computationally bounded distinguishers. \newvictor{A formal definition appears in} Appendix~\ref{app:computational}. 
\medskip

We will be particularly interested in HSS schemes whose sharing and/or reconstruction functions are linear functions over a finite field, defined as follows.

\begin{definition}[Linear HSS]\label{def:linear}
Let $\mathbb{F}$ be a finite field.
We say that an HSS scheme $\Pi=(\Share,\Eval,\Rec)$ has \emph{linear reconstruction} over $\mathbb{F}$ if $\mathcal{Y}=\mathbb{F}^b$ for some integer $b
\ge 1$; $\Eval(f, j, \bx^{(j)})$ outputs $y^{(j)} \in \mathbb{F}^{b_j}$ for integer $b_j\ge 0$; and $\Rec: \mathbb{F}^{\sum_j b_j} \to \mathbb{F}^b$ is an $\FF$-linear map. We say that $\Pi$ has {\em additive reconstruction} over $\F$, or simply that $\Pi$ is {\em additive}, if $b=b_j=1$ for all $j$ and $\Rec(y^{(1)},\ldots,y^{(k)})=y^{(1)}+\ldots+y^{(k)}$. 
%\[ \Rec(y^{(1)}, \ldots, y^{(k)}) = 
%$\Rec(y^{(1)},\ldots,y^{(k)})=y^{(1)}+\ldots+y^{(k)}$. 

Finally, we say that $\Pi$ is \emph{linear} if it has linear reconstruction and in addition, $\cX = \mathbb{F}$ and 
%for every $i$, $\Share(i,\cdot)$ 
$\Share(x,\vec{r})$ is an $\mathbb{F}$-linear function of $x$ and a random vector $\vec{r}$ with i.i.d. uniform entries in $\mathbb{F}$. 
Notice that we never require $\Eval$ to be linear.
\end{definition}

The main focus of this work is on the communication complexity of an HSS scheme.  We formalize this with the following definitions.

\begin{definition}[Upload and download costs and rate]
\label{def-rate}
Let $k,t$ be integers and let $\mathcal{F}=\{f:\mathcal{X}^m\rightarrow\mathcal{Y}\}$ be a function class. 
Let $\Pi$ be a $k$-server $t$-private HSS for $\mathcal{F}$.
Suppose that the input shares for $\Pi$ are $x_i^{(j)}$ for $i \in [m], j \in [k]$, and that the output shares are $y^{(1)}, \ldots, y^{(k)}$. 
We define
\begin{itemize}
    \item The \emph{upload cost} of $\Pi$, $\mathrm{UploadCost}(\Pi)=\sum_{i=1}^m\sum_{j=1}^k|x_i^{(j)}|$.

    \item The \emph{download cost} of $\Pi$, $\mathrm{DownloadCost}(\Pi)=\sum_{j=1}^k|y^{(j)}|$.
    \item The \emph{download rate} (or just {\em rate}) of $\Pi$, %$\mathcal{P}=(\Pi)_{\ell\in\mathbb{N}}$ 
    \[
    \mathrm{Rate}(\Pi)=\frac{ \log_2|\mathcal{Y}|}{\mathrm{DownloadCost}(\Pi)}
    \]
\end{itemize}
\end{definition}

\paragraph{Symmetrically private HSS.} Several applications of HSS motivate a {\em symmetrically private} variant in which the output shares $(y^{(1)},\ldots,y^{(k)})$ reveal no additional information about the inputs beyond the output of $f$. Any HSS with {\em linear reconstruction} (Definition~\ref{def:linear}) can be modified to meet this stronger requirement without hurting the download rate (and with only a small increase to the upload cost) via a simple randomization of the output shares. We further discuss this variant in Section~\ref{sec:nonlinear}.

%\yi{Added the above paragraph.}\mkw{How does this interact with Def~\ref{def:secure} (which needs to be re-written)?  Does this replace that?} \yi{yes. Decided to go for an appendix, which seems the least distracting. }

\paragraph{Private information retrieval.} 
Some of our HSS results have applications to  \emph{private information retrieval} (PIR)~\cite{Chor:1998:PIR:293347.293350}.
A $t$-private $k$-server PIR protocol allows a client to retrieve a single symbol from a database in $\mathcal{Y}^N$, which is replicated among the servers, such that no $t$ servers learn the identity of the retrieved symbol. 
Note that such a PIR protocol reduces to %\yi{changed from ``can be viewed as'' because the PIR client can use the input shares to reconstruct. } 
a $t$-private $k$-server HSS scheme for the family $\ALL_\mathcal{Y}$ of \emph{all} functions $f:[N]\rightarrow\mathcal{Y}$, where the number of inputs is $m=1$. 
Indeed, in order to retrieve the $i$'th symbol $f(i)$ from a database represented by the function $f$, the client may use an HSS to share the input $x=i$ among the $k$ servers; each server computes $\Eval$ on their input share and the database and sends the output share back to the client; the client then runs $\Rec$ in order to obtain $f(i)$. The download rate and upload cost of a PIR protocol are defined as in Definition~\ref{def-rate}.

%In addition, we can consider PIR schemes for databases of $\ell$-bit records, that is, an HSS for $\ALL^\ell_{\{0,1\}}$.

%\yi{HSS for $\ALL^\ell_{\{0,1\}}$ is more powerful than PIR for $\ell$ bit records, since it allows a different index for each bit. So we can just say that PIR for $\ell$-bit records {\em reduces} to HSS for $\ALL^\ell_{\{0,1\}}$. }\mkw{Actually, isn't PIR for $\ell$-bit records just an HSS for $\ALL_{\{0,1\}^\ell}$ (with the $\ell$ downstairs), as a special case of the paragraph above?  Do we need to mention $\ell$ at all here?} \yi{Right, the original intention was to relate $\ell$-bit PIR to $\ell$ instances of HSS, but this is not important here. } \mkw{Okay, I have removed this sentence.  (Leaving the comments as a record -- feel free to delete them if you agree).}

%\subsection{Coding Theory}%
%
%\victor{recall some coding theory facts?}
%\mkw{I will fill this in if we need it}\mkw{Update, I don't think we need it.}

\subsection{Linear Secret Sharing Schemes}

In this section we define and give common examples of (information theoretic) linear secret sharing schemes (LSSS), with secrets from some finite field $\mathbb{F}$. We consider a generalized \emph{linear multi-secret sharing scheme} (LMSSS) notion, which allows one to share multiple secrets. %Note also that the output shares of an HSS with linear reconstruction for a secret $\bx$ and a function $f$ corresponds to the shares of some LMSSS for a secret $f(\bx)$.  \mkw{Commented out since this is reprised in the remark below.}

\begin{definition}[LMSSS]\label{def:lmsss}
Let $\Gamma, \mathcal{T} \subseteq 2^{[k]}$ be monotone (increasing and decreasing, respectively)\footnote{We say that $\Gamma$ and $\mathcal{T}$ are monotone (increasing and decreasing, respectively) if $Q \subseteq Q'$ and $Q \in \Gamma$ then $Q' \in \Gamma$; and if $T' \subseteq T$ and $T \in \mathcal{T}$ then $T' \in \mathcal{T}$.} collections of subsets of $[k]$, so that $\mathcal{T} \cap \Gamma = \emptyset$.
A $k$-party $\ell$-LMSSS $\mathcal{L}$ over a field $\F$ with access structure $\Gamma$ and adversary structure $\mathcal{T}$ is specified by numbers $e,b_1,\ldots,b_k$ and a linear mapping $\Share:\F^{\ell}\times\F^{e}\rightarrow\F^{b_1}\times\ldots\times\F^{b_k}$ so that the following holds.
\begin{itemize}
    \item Correctness: For any qualified set $Q=\{j_1,\ldots,j_m\}\in\Gamma$ there exists a linear reconstruction function $\Rec_Q:\F^{b_{j_1}}\times\ldots\times\F^{b_{j_m}}\rightarrow\F^{\ell}$ such that for every $\vec{x}\in\F^{\ell}$ we have that $\Pr_{\br\in\F^e}[\Rec_Q(\Share(\vec{x},\br)_Q)=\vec{x}]=1$, where $\Share(\vec{x},\vec{r})_Q$ denotes the restriction of $\Share(\vec{x},\vec{r})$ to its entries indexed by $Q$. 
    \item Privacy: For any unqualified set $U\in\mathcal{T}$ and secrets $\vec{x},\vec{x}'\in\F^{\ell}$, the random variables $\Share(\vec{x},\vec{r})_U$ and $\Share(\vec{x}',\vec{r})_U$, for uniformly random $\br\in\F^e$, are identically distributed.
\end{itemize}
If $\mathcal{T}$ 
%is a \emph{threshold adversary structure}, namely, if $\mathcal{T}$ consists of 
contains all sets of size 
%at most 
at most $t$ (and possibly more), we say that $\mathcal{L}$ is \emph{$t$-private}.
If $\ell=1$ we simply call $\mathcal{L}$ an \emph{LSSS}, and we refer to the $\ell$-LMSSS obtained via $\ell$ independent repetitions of $\mathcal{L}$ as \emph{$\ell$ instances of $\mathcal{L}$}. 
Finally, we define the \emph{information rate} of $\mathcal{L}$ to be $\ell/(b_1+\ldots+b_k)$. 

%\mkw{Question: do we call it $t$-private if $\mathcal{T} = \{ S : |S| \leq t\}$ or if $\mathcal{T} \supseteq \{ S : |S| \leq t \}$?   I think the latter is how I'd use it in speech (eg, it's $t$ private if no $t$ parties can learn anything), but that differs from this definition.} \yi{The latter convention is indeed more natural and standard, and it also simplifies the  formulation of share conversions from CNF. Changed the definition, I hope this does not cause inconsistencies or confusion. (In case this is confusing, we can define the notion of ``$t$-threshold'' to refer to the strict notion, and refer to CNF and Shamir as $t$-threshold instead of $t$-private. But probably not worth the trouble.)  }\mkw{Okay, sounds good.}
\end{definition}

\emph{Additive sharing} is an important example of an LSSS.
\begin{example}[Additive sharing]\label{additive}
The {\em additive sharing} of a secret $x \in \FF$ is a $(k-1)$-private LSSS with $\Gamma=[k]$, $e=k-1$ and $b_j = 1$ for all $j \in [k]$. It is defined as follows. 
\begin{itemize}
    \item \textbf{Sharing.} Let $\Share(x,\br) = (r_1,r_2,\ldots,r_{k-1},x-r_1-\ldots-r_{k-1})$. Note that the shares are uniformly distributed over $\FF^k$ subject to the restriction that they add up to $x$.
    \item \textbf{Reconstruction.}  Let $\Rec_{[k]}(x^{(1)},\ldots,x^{(k)})=x^{(1)}+\ldots+x^{(k)}$.
\end{itemize}
\end{example}

We now define two standard LSSS's and associated $\ell$-LMSSS's we will use in this work: the so-called ``CNF scheme''~\cite{ISN89}  (also referred to as replicated secret sharing) and Shamir's scheme~\cite{Shamir79}.

\begin{definition}[$t$-private CNF sharing]\label{cnf}
The $t$-private $k$-party {\em CNF 
sharing} of a secret $x \in \FF$ 
is an LSSS with parameters $e = {k \choose t}-1$ and $b_j = {k-1 \choose t}$ for all $j \in [k]$. (We use {\em $t$-CNF} when $k$ is clear from the context.) It is defined as follows. 
\begin{itemize}
    \item \textbf{Sharing.} Using a random vector $\br \in \F^e$, we first {\em additively} share $x$ by choosing ${k \choose t}$ random elements of $\FF$, $x_T$, so that $x = \sum_{T\subseteq[k] : |T| =t} x_T$.  Then we define 
    $\Share(x,\br)_j = (x_T)_{j \not\in T}$ for $j \in [k]$. 
%\yi{Changed from $\{x_T \,:\, j \not\in T\}$ to $(x_T)_{j \not\in T}$, to clarify that we need to know which set each element belongs to. If this causes notation inconsistencies, we can revert to the previous notation. }\mkw{Actually I've been using $(x_T)_{j \neq T}$ elsewhere, so this is great.}
    \item \textbf{Reconstruction.} 
    Any $t+1$ parties together hold all of  the additive shares $x_T$, and hence can recover $x$.  This defines $\Rec_Q$ for $|Q|>t$.
\end{itemize}

%To obtain a $t$-private CNF sharing of a secret $s \in \FF$ among $k$ servers (in general, such a sharing exists for any adversary structure $\mathcal{T}\subseteq 2^{[k]}$), we first additively share $s$ as follows
%\[
%s=\sum_{T\subseteq[k]:|T|=t} s^T,
%\]
%where $\{s^T\}$ are otherwise random additive shares. Then, the share of server $j\in[k]$ is all $s^T$ such that $j\notin T$. \mkw{would it be okay to make this $s_T$ instead of $s^T$?  That's the notation that ended up being most convenient in the linear negative result.  But if you've used $s^T$ elsewhere I can change it in thm \ref{thm:barrier}.}

\end{definition}
We note that there is a trivial $\ell$-LMSSS variant of $t$-private CNF sharing, as per %Remark~\ref{rem:trivial}, 
Definition~\ref{def:lmsss}, which shares $\ell$ secrets with $\ell$ independent instances of CNF sharing.

%\yi{Todo (for intro): add some discussion on how CNF can become more attractive in a computational setting, where we can cheaply generate many CNF shares of (pseudo)random secrets, and then share a secret by just broadcasting the difference to all servers. }

\begin{definition}[$t$-private Shamir sharing]\label{shamir}
Let $\FF$ be a finite field and let $\EE \supseteq \FF$ be an extension field (typically, the smallest extension field so that $|\EE| > k$), and suppose that $s = [\EE:\FF]$ is the degree of $\EE$ over $\FF$.  Fix distinct evaluation points $\alpha_0, \alpha_1, \ldots, \alpha_k \in \EE$.  The $t$-private, $k$-party {\em Shamir sharing} of a secret $x \in \FF$
(with respect to $\EE$ and the $\alpha_i$'s) is an LSSS with parameters $e = t\cdot s$ and $b_i = s$ for all $i \in [k]$, defined as follows.
\begin{itemize}
    \item \textbf{Sharing}. Let $x \in \FF$ and let $\br \in \FF^{ts}$.  We may view $\br$ as specifying $t$ random elements of $\EE$, and we use $\br$ to choose a random polynomial $p \in \EE[X]$ so that $\deg(p) \leq t$ and so that $p(\alpha_0) = x$.  Then $\Share(x, \br)_j = p(\alpha_j)$.
    \item \textbf{Reconstruction}.  Any $t+1$ parties together can obtain $t+1$ evaluation points of the random polynomial $p$, and hence can recover $x = p(\alpha_0)$ by polynomial interpolation.  This constitutes the $\Rec$ function.
\end{itemize}

%Let $\mathbb{F}'\supseteq\mathbb{F}$ be the smallest field extension such that $|\mathbb{F}'|>k$, and identify each share index $i$ with a distinct nonzero field element $\alpha_i\in\mathbb{F}'$. To obtain a $t$-private Shamir sharing of a secret $s\in\mathbb{F}$ among $k$ servers, we first pick a random polynomial of degree at most $t$, $p(X)=y+r_1X+r_2X^2+\ldots+r_tX^t\in\mathbb{F}'[X]$. Then, the share of server $j\in[k]$ is $p(\alpha_i)$.

\end{definition}

\begin{remark}[$\ell$-LMSSS variants of Shamir sharing]\label{rem:ell-shamir}
The definition above is for an LSSS ($1$-LMSSS).
There are several $\ell$-LMSSS varints of $t$-private Shamir sharing.  In particular: %\mkw{Made these (a) and (b) so that we can refer to them more easily later on}
\begin{itemize}
    \item[(a)] The first variant is the trivial $\ell$-LMSSS variant of $t$-private Shamir sharing where each of $\ell$ secrets are shared independently.  As per Definition~\ref{def:lmsss},
    %Remark~\ref{rem:trivial}, 
    we refer to this as ``$\ell$ instances of $t$-private Shamir sharing''.
    \item[(b)] The second (and third) variants are where $\ell=k-t$ secrets are encoded as different evaluation points of a polynomial with degree $\ell+t-1$ (requiring $|\mathbb{E}|>2k-\ell$), or, alternatively, different coefficients (requiring $|\mathbb{E}|>k$). These two $\ell$-LMSSS variants of Shamir sharing (the first of which is sometimes referred to as the Franklin-Yung scheme~\cite{FY92}) have an information rate of $\frac{ \ell \log |\FF| }{k \log|\EE|} = \frac{1-t/k}{s}$. %$\ell/k=1-t/k$. 
    %\mkw{updated information rate, which used to say $1 - t/k$.   I think that's only correct if $\EE = \FF$.}
\end{itemize}
\end{remark}

\paragraph{Local share conversion.} 
Informally, local share conversion allows the parties to convert from one LMSSS to another without communication. That is, the conversion maps {\em any} valid sharing of $\bx$ using a source scheme $\cal L$ to {\em some} (not necessarily random) valid sharing of  $\bx$ (more generally, some function $\psi(\bx)$), according to the target scheme $\cal L'$. %\yi{Added the ``That is,'' part to clarify. Maybe say that the following extends  the definition from CDI to multi-secret sharing.}\mkw{added that this extends the def of CDI as you suggested.} 
Formally, we have the following definition, which extends the definitions of \cite{CDI05,BeimelIKO12} to multi-secret sharing.
\begin{definition}[Local share conversion]
Suppose that $\mathcal{L} = (\Share, \Rec)$  is a $k$-party $\ell$-LMSSS with parameters $(e,b_1, \ldots, b_k)$, and suppose that $\mathcal{L}' = (\Share', \Rec')$ is a $k$-party $\ell'$-LMSSS with parameters $(e',b'_1, \ldots, b'_k)$.   Let $\psi: \FF^\ell \to \FF^{\ell'}$.  A \emph{local share conversion} from $\mathcal{L}$ to $\mathcal{L}'$ with respect to $\psi$ is given by functions $\varphi_i: \FF^{b_i} \to \FF^{b_i'}$ for $i \in [k]$, so that for any secret $\bx \in \FF^\ell$, for any $\br \in \FF^e$, there is some $\br' \in \FF^{e'}$ so that
\[(\varphi_1(\Share(\bx, \br)_1), \ldots , \varphi_k( \Share(\bx, \br)_k) ) = \Share'( \psi(\bx), \br'). \]
If there is a local share conversion from $\mathcal{L}$ to $\mathcal{L}'$ with respect to $\psi$, we say that $\mathcal{L}$ is \emph{locally convertible with respect to $\psi$} to $\mathcal{L}'$.  When $\psi$ is the identity map, we just say that $\mathcal{L}$ is locally convertible to $\mathcal{L}'$.
\end{definition}
%\victor{I think that the predicate ``$\mathcal{L}$ is convertible to $\mathcal{L}'$'' should also include $\psi$. Perhaps when it is omitted, $\psi$ should be understood to be the identity mapping? (see Theorem \ref{cdi} below.)}\mkw{Agreed, I added a ``with respect to $\psi$'' clause.}

It was shown in \cite{CDI05} that $t$-private CNF sharing 
%with adversary structure $\mathcal{T}$ 
can be locally converted to any LSSS $\mathcal{L}'$ which is (at least) $t$-private. Formally:
%of the same secret with adversary structure $\mathcal{T'}\subseteq\mathcal{T}$. \mkw{Should we state this just for $t$-private (rather than general adv. structures) since we didn't define CNF for arbitrary adv structures?}\victor{Yes, maybe just remark that the same holds for general adv. structures}\mkw{Changed below so that we aren't using anything we didn't define.}

\begin{theorem}[\cite{CDI05}]\label{cdi}
Let $\mathcal{L}$ be the $t$-private $k$-party CNF LSSS over a finite field $\F$ (Definition~\ref{cnf}).  Then $\mathcal{L}$ is locally convertible (with respect to the identity map $\psi$) to any $t$-private LSSS $\mathcal{L}'$ over $\F$.
%with adversary structure $\mathcal{T}'\supseteq\{S \subseteq [k] : |S| \leq t\}$.
%\yi{Simplified to use the new (more liberal) notion of $t$-privacy.}
\end{theorem}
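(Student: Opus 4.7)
The plan is to exploit the linearity of $\Share'$ together with the decomposition property of CNF sharing, in which the secret is expressed as a sum of additive pieces, each of which is known to all but $t$ parties. First, I would write the (linear) sharing map of $\mathcal{L}'$ as $\Share'(x,r) = Ax + Br$ for matrices $A,B$ over $\F$ with $r \in \F^{e'}$ uniform.

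Next, I would use the $t$-privacy of $\mathcal{L}'$ to show that for every $T \in \binom{[k]}{t}$ there is a matrix $W_T$ such that $\Share'(x,-W_T x)_T = 0$, i.e.\ a canonical deterministic sharing of $x$ under $\mathcal{L}'$ in which all parties in $T$ get the zero share. Concretely, restricting to rows indexed by $T$ gives $\Share'(x,r)_T = A_T x + B_T r$, whose distribution (over uniform $r$) is the uniform distribution on the affine coset $A_T x + \mathrm{Im}(B_T)$. $t$-privacy forces this coset to be independent of $x$, hence $A_T x \in \mathrm{Im}(B_T)$ for every $x$, which yields a matrix $W_T$ with $A_T = B_T W_T$; setting $r = -W_T x$ then gives $A_T x + B_T(-W_T x) = 0$.

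The conversion is then immediate. Given the CNF shares $(x_T)_{T \not\ni j}$ at party $j$, define
\[
\varphi_j\bigl((x_T)_{T \not\ni j}\bigr) := \sum_{T \,:\, j \notin T} \Share'(x_T, -W_T x_T)_j,
\]
which party $j$ can compute locally since the summand for $T$ only depends on $x_T$, which $j$ knows exactly when $j \notin T$. Because $\Share'(x_T,-W_T x_T)_j = 0$ whenever $j \in T$, we can extend the sum to all $T$ without changing its value, and by linearity of $\Share'$ the vector of outputs equals
\[
\sum_T \Share'(x_T,-W_T x_T) \;=\; \Share'\!\Bigl(\textstyle\sum_T x_T,\ -\sum_T W_T x_T\Bigr) \;=\; \Share'(x, r'),
\]
which is a valid $\mathcal{L}'$-sharing of the original secret $x = \sum_T x_T$ with randomness $r' = -\sum_T W_T x_T \in \F^{e'}$. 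This verifies the local share conversion with $\psi$ the identity.

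The main (modest) obstacle is the step extracting $W_T$ from $t$-privacy: it must combine linearity of $\Share'$ with the uniformity of $r$ to upgrade the distributional statement of privacy into a pointwise linear containment $A_T x \in \mathrm{Im}(B_T)$. Once this is in hand, the rest of the argument is bookkeeping that exploits the fact that each additive piece $x_T$ of CNF sharing is known to exactly the $k-t$ parties outside $T$, matching the set of parties whose shares are nonzero in the canonical sharing of $x_T$ under $\mathcal{L}'$.
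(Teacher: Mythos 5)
Your proof is correct and follows essentially the same strategy as the original argument in~\cite{CDI05}, which the paper cites without reproducing: for each unqualified $T$ one extracts (via linearity and $t$-privacy, exactly the affine-coset argument you give) a deterministic $\mathcal{L}'$-sharing of $x_T$ in which all parties in $T$ hold zero, and then each party sums the shares it can locally compute, with the zero shares justifying extending the sum over all $T$. The only thing I would add for completeness is a one-line remark that, since $\ell=1$, $A_T$ is a single column, so $A_T x \in \mathrm{Im}(B_T)$ for all $x \in \F$ is indeed equivalent to the existence of $W_T$ with $A_T = B_T W_T$.
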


We will use a natural extension of this idea: that $\ell$ instances of $k$-server CNF can be jointly locally converted to any $k$-server $\ell$-LMSSS with appropriate adversary structure.

\begin{corollary}\label{cdiplusplus}
Let $\mathcal{L}$ be the $k$-party $\ell$-LMSSS given by $\ell$ instances of $t$-CNF secret sharing over $\FF$ (Definition~\ref{cnf}).  Then $\mathcal{L}$ is locally convertible to any $t$-private $k$-party $\ell$-LMSSS $\mathcal{L}'$ over $\FF$.
%with adversary structure $\mathcal{T}' \supseteq \{S \subseteq [k] : |S| \leq t\}$.
\end{corollary}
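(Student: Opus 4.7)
The proof will be a direct lift of the standard argument for Theorem~\ref{cdi} from the single-secret to the multi-secret setting. Let $\mathcal{L}'$ be an arbitrary $t$-private $k$-party $\ell$-LMSSS over $\FF$ with linear sharing map $\Share':\FF^\ell\times\FF^{e'}\to\prod_j\FF^{b_j'}$. Under $\ell$ instances of $t$-CNF, party $j$ holds the collection of additive-share vectors $\bx_T=(x_{1,T},\ldots,x_{\ell,T})\in\FF^\ell$ for every $T\in\binom{[k]}{t}$ with $j\notin T$, and $\bx=\sum_T \bx_T$. The goal is to have each party locally compute its coordinate of $\Share'(\bx,\br'^*)$ for some $\br'^*$ that may depend on~$\bx$.

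The key step is the existence of ``zero-completing'' randomness for $\mathcal{L}'$. For each $T\in\binom{[k]}{t}$ and each vector $\by\in\FF^\ell$, the $t$-privacy of $\mathcal{L}'$ says that the restriction $\Share'(\by,\br')_T$ (to parties in $T$) has the same distribution as $\Share'(\mathbf{0},\br')_T$ when $\br'$ is uniform. Since $\Share'$ is linear, $\Share'(\mathbf{0},\mathbf{0})_T=\mathbf{0}$, so $\mathbf{0}$ lies in the support of that distribution. Hence for every pair $(T,\by)$ there exists some $\br'_T(\by)\in\FF^{e'}$ with $\Share'(\by,\br'_T(\by))_j=0$ for all $j\in T$; fix one such choice for each $(T,\by)$.

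Now define the local conversion for party $j$ by
\[
\varphi_j\bigl((\bx_T)_{T\not\ni j}\bigr)\ :=\ \sum_{T\,:\,j\notin T}\Share'\bigl(\bx_T,\br'_T(\bx_T)\bigr)_j.
\]
Party $j$ can evaluate this since it knows $\bx_T$ precisely for $T\not\ni j$, and $\br'_T(\bx_T)$ is a fixed deterministic function of $\bx_T$. For $T\ni j$ we have $\Share'(\bx_T,\br'_T(\bx_T))_j=0$ by construction, so the sum may be extended to all $T$; then by $\FF$-linearity of $\Share'$ in both arguments,
\[
\varphi_j(\cdot)\ =\ \sum_{T}\Share'\bigl(\bx_T,\br'_T(\bx_T)\bigr)_j\ =\ \Share'\!\left(\sum_T \bx_T,\ \sum_T \br'_T(\bx_T)\right)_{\!j}\ =\ \Share'(\bx,\br'^*)_j,
\]
where $\br'^*:=\sum_T \br'_T(\bx_T)$. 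Thus the vector $(\varphi_1,\ldots,\varphi_k)$ of outputs is a valid $\mathcal{L}'$-sharing of $\bx$ (with respect to randomness $\br'^*$), which is exactly the requirement of a local share conversion with $\psi$ the identity.

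The only conceptual content is the zero-completing step; once that is in place, the rest is just bookkeeping with linearity. There is no real obstacle: the natural generalization of the CDI proof over $\FF$ goes through verbatim when each CNF ``additive share'' $x_T$ is replaced by the vector $\bx_T\in\FF^\ell$, because $\Share'$ is linear as a map out of $\FF^\ell\times\FF^{e'}$.
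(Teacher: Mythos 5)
Your proof is correct, but it takes a different route from the paper's. The paper decomposes the target $\ell$-LMSSS $\mathcal{L}'$ into $\ell$ single-secret LSSS's $\mathcal{L}'_i$ (sharing $(0,\ldots,0,x_i,0,\ldots,0)$), invokes Theorem~\ref{cdi} once per coordinate to convert the $i$'th CNF instance to $\mathcal{L}'_i$, and then sums the resulting shares, relying on linearity of $\Share'$. In contrast, you bypass Theorem~\ref{cdi} entirely and re-derive the CDI argument directly at the multi-secret level: you use $t$-privacy of $\mathcal{L}'$ to obtain, for every $T\in\binom{[k]}{t}$ and every $\by\in\F^\ell$, a ``zero-completing'' randomness $\br'_T(\by)$ making all shares indexed by $T$ vanish, then sum the resulting sharings of the CNF additive shares $\bx_T$. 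The zero-completing step is sound (the all-zero share restriction lies in the support of $\Share'(\mathbf 0,\cdot)_T$, hence by identical distribution also in the support of $\Share'(\by,\cdot)_T$), and the rest is correct bookkeeping with $\FF$-linearity; note that the definition of local share conversion does not require the $\varphi_j$ to be linear, so your pointwise (possibly nonlinear) choice of $\br'_T(\by)$ is admissible, though one could also choose it $\FF$-linearly by fixing a basis of $\FF^\ell$. The trade-off is that the paper's proof is shorter and modular, explicitly leveraging the known single-secret theorem, whereas yours is self-contained and makes visible exactly which structural property of $\mathcal{L}'$ is used; indeed, your argument specializes to a proof of Theorem~\ref{cdi} itself when $\ell=1$.
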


\begin{proof}
Observe that we may obtain an LSSS $\mathcal{L}'_i$ for each $i \in [\ell]$ from $\mathcal{L}'$ by considering the LSSS that uses $\mathcal{L}'$ to share $(0, \ldots, 0, x_i, 0, \ldots, 0)$, where $x_i$ is in the $i'th$ position. Note that each $\mathcal{L}'_i$ 
%has adversary structure $\mathcal{T}''\supseteq\mathcal{T}'$, 
is also $t$-private, by definition of an LMSSS. 
Now, consider the secret-sharing scheme $\mathcal{L}_i$ that shares $(0, 0, \ldots, 0, x_i, 0, \ldots, 0)$ using $\mathcal{L}$; this is just the standard $t$-CNF LSSS.  Thus, we may apply Theorem \ref{cdi} to locally convert $\mathcal{L}_i$ to $\mathcal{L}_i'$ for each $i \in [\ell]$.  Finally, each party adds up element-wise its shares of all schemes $\mathcal{L}_i'$ to obtain, by linearity, a sharing of $(x_1,\ldots,x_\ell)$ according to $\mathcal{L}'$.
\end{proof}

\section{Linear HSS for Low-Degree Polynomials}\label{sec:linear}

In this section, we give both positive and negative results for linear HSS for low-degree multivariate polynomials. 
Before we get into our results, we state a few useful definitions.
%\mkw{refactored into a definition}
\begin{definition}
\label{def:POLY}
Let $m > 0$ be an integer and let $\FF$ be a finite field.  We define
\[
\POLY_{d,m}(\FF)=\{f\in\mathbb{F}[X_1,\ldots,X_m]\,:\, \deg(f) \leq d\}\]
to be the class of all $m$-variate degree-at-most-$d$ polynomials over $\FF$. When $m$ and $\FF$ are clear from context, we will just write $\POLY_d$ to refer to $\POLY_{d,m}(\FF)$.
\end{definition}

The class $\POLY_d^\ell$ may be interesting even when $d=1$.   In this case, the problem can be reduced to ``HSS for concatenation.'' That is, we are given $\ell$ secrets $x_1, \ldots, x_\ell \in \FF$, shared \emph{separately}, and we must locally convert these shares to small joint shares of $\bx = (x_1, \ldots, x_\ell)$. (To apply this towards HSS for $\POLY_1^\ell$, first locally compute shares of the $\ell$ outputs from shares of the inputs, and then apply HSS for concatenation to reconstruct the outputs.)  Formally, we have the following definition.

\begin{definition}[HSS for concatenation]
\label{def:CONCAT}
Let $\mathcal{X}$ be any alphabet and let $\mathcal{Y} = \mathcal{X}^\ell$. 
 We define $f:\mathcal{X}^\ell \to \mathcal{Y}$ to be the identity map, and
 %\mkw{Updated to remove the parameter $r$}
 %\[ \CONCAT_\ell(\mathcal{X}) = \{ f_I \,: \, I \in [k]^r, r \in \mathbb{Z} \}, \]
 %where for $I = (i_1, i_2, \ldots, i_r) \in [k]^r$,
 %$f_I:\mathcal{X}^\ell \to \mathcal{X}^{r}$  is given by $f(x_1, \ldots, x_\ell) = (x_{i_1}, \ldots, x_{i_r})$.
$\CONCAT_\ell(\mathcal{X}) = \{f\}$.
We refer to an HSS for $\CONCAT_\ell(\mathcal{X})$ as \emph{HSS for concatenation}.  
\end{definition}

Note that we view $m=\ell$ as the number of inputs, and so an HSS for concatenation must share each input $x_i \in \mathcal{X}$ independently. Also note that a linear HSS for $\CONCAT_\ell(\FF)$ is equivalent to a linear HSS for $\POLY_{1,1}(\FF)^\ell$.
While HSS for concatenation seems like a basic primitive, to the best of our knowledge it has not been studied before. In this work we will need HSS-for-concatenation with specific kinds of $\Share$ functions. Without this restriction, the problem has a trivial optimal solution discussed in the following remark.

\begin{remark}[Trivial constructions of HSS-for-concatenation]\label{rem:triv_concat}
Given  any $\ell$-LMSSS $\mathcal{L}$, there is a simple construction of an HSS-for-concatenation $\Pi = (\Share, \Eval, \Rec)$ with the same information rate as $\mathcal{L}$.  In more detail, suppose that $\mathcal{L}$ has share function $\Share'$ and reconstruction function $\Rec'$.  Given input $x \in \mathcal{X}$, we define $\Share$ by sharing $(x,0,\ldots,0),\ldots,(0,\ldots,0,x)$ separately using $\Share'$. Now, suppose the servers are given shares $\mathbf{y}_i^{(j)}$ of inputs $x_i$ for $i \in [\ell]$ and $j \in [k]$. Then we define $\Eval$ so that server $j$'s output share is $z^{(j)}=\sum_i y_i^{(j)}$, where $y_i^{(j)}$ is the part of $\mathbf{y}_i^{(j)}$ corresponding to the $i$'th instance of $\Share'$.   Finally, we define
\[ \Rec(z^{(1)}, \ldots, z^{(k)}) = \Rec'_{[k]}(z^{(1)}, \ldots, z^{(k)}). \]
The correctness of this scheme follows from the linearity of $\Rec'_{[k]}$. 

While this simple construction results in optimal linear HSS-for-concatenation, our applications depend on HSS-for-concatenation with particular $\Share$ functions (like the LSSS versions of $t$-CNF or Shamir).  Unfortunately, the share functions arising from natural $\ell$-LMSSS schemes (like the $\ell$-LMSSS version of Shamir in Remark~\ref{rem:ell-shamir}(b)) will not work in our applications (c.f. Remark~\ref{rem:triv_concat_wont_work}).
\end{remark}

% \yi{As Victor pointed out, there is a simple optimal construction for $d=1$ from any $\ell$-LMSSS: share $[x,0,\ldots,0],\ldots,[0,\ldots,0,x]$. It might make sense to briefly explain this construction and why it does not suffice to obtain our results for $d>1$. }\mkw{refactored into definitions and added a remark like this.  I also put a forward pointer to a remark later where can explicitly point out that the version based on shamir isn't good enough for our $d > 1$ Shamir-based result.}

%\victor{Remark something about the special case of $d=1$ which is the ``HSS for concatenation''?}\mkw{I tried above.}\victor{I think it is OK to say these problems coincide only for $m=1$}\mkw{Actually I think that having both an $m$ and an $\ell$ makes it a bit confusing about what needs to be shared separately, so I just dropped the $m$ and made it $\mathcal{X}$ instead of $\FF$.} \mkw{Perhaps both $\CONCAT$ and $\POLY$ should be defined in a definition environment (rather than a displayed eq. and a remark, respectively?} \yi{Agreed}

\subsection{Constructions of Linear HSS for Low-Degree Polynomials}\label{sec:linear-constructions}

In this section, we give two constructions of HSS for $\POLY_d^\ell$, one based on CNF sharing in Section~\ref{sec:HSS-cnf} and one on Shamir sharing in Section~\ref{sec:HSS-shamir}; we will show later in Section~\ref{sec:linearNegative} that both of these have optimal download rate among linear schemes for sufficiently large $\ell$.  We further compare the two schemes in Section~\ref{sec:comparison}, and we work out an application to PIR in Section~\ref{sec:linearPIR}.

\subsubsection{Linear HSS from CNF sharing}\label{sec:HSS-cnf}

Our main result in this section is a tight connection between HSS-for-concatenation (that is, for $\CONCAT_\ell$) %(that is, for $\POLY_1^\ell$)
and linear codes.  More precisely, we show in Theorem~\ref{thm:cnf-hss-d1} that the best trade-off between $t$ and the download rate $R$ for HSS-for-concatenation (assuming that the output of $\Eval$ has the same size for each party) is captured by the best trade-off between distance and rate for certain linear codes.  Then we extend this result to obtain constructions of HSS for $\POLY_d^\ell$.

The basic idea is to begin with CNF sharing, and then use Corollary~\ref{cdiplusplus} to convert the CNF shares to 
shares of some $\ell$-LMSSS $\mathcal{L}$.  The final outputs of $\Eval$ will be the shares of $\mathcal{L}$, so the total download rate of the resulting HSS scheme will be the information rate of $\mathcal{L}$.  
Thus, we need to study the best $t$-private $\ell$-LMSSS with access structure $[k]$.  To do this, we utilize a connection between such LMSSS and linear error correcting codes.

Let $C \subseteq (\FF^b)^k$ be an $\F$-linear subspace of $(\FF^b)^k$.  We call $C$ an \emph{$\F$-linear code with alphabet $\FF^b$ and block-length $k$.}\footnote{Codes that are linear over a field $\FF$ and have alphabet $\FF^b$ show up often in coding theory; for example, folded codes and multiplicity codes both have this feature.  Moreover, any linear code over $\FF_{q^b}$ can be viewed as an $\FF_q$-linear code over $\FF_q^b$, by replacing each symbol $\alpha \in \FF_{q^b}$ with an appropriate vector in $\FF_q^b$.}
We define the \emph{rate} of such a code $C$ to be
\[ \mathrm{Rate}(C) = \frac{\mathrm{dim}_\FF(C)}{bk}\]
and the distance to be
\[ \mathrm{Dist}(C) = \min_{c \neq c' \in C} |\{ i \in [k] \,:\, c_i \neq c'_i \}|. \]

It is well-known that LMSSSs can be obtained from error correcting codes with good dual distance (see, e.g., \cite{Mas95,CCGHV07}).  However, in order to construct HSS we are interested only in LMSSS with access structure $\Gamma = [k]$, which results in a particularly simple correspondence.  We record this correspondence in the lemma below. (The special case where $b=1$ follows from~\cite{GM10}.)
%\yi{I can ask Amos Beimel if he is aware of such a statement in the literature. If we think this has not appeared before, it might make sense to include the proof in the main text. Update: I asked Amos and he did not have a pointer. It still makes sense to downplay this, but I'm not sure why moving to an appendix is justified here.  }\mkw{Okay, I moved it back.}

\begin{lemma}[Generalizing Theorem 1 of~\cite{GM10}]\label{folded-codes}
Let $\ell < bk$.
There is a $t$-private $k$-party $\ell$-LMSSS $\mathcal{L}$ over $\FF$ with shares in $\FF^b$ and access structure $\Gamma = [k]$ (in particular, with rate $\frac{\ell}{kb}$)
 if and only if there is an $\FF$-linear code $C \subseteq (\FF^b)^k$ of information rate $R \geq \frac{\ell}{kb}$ and distance at least $t+1$.

%Suppose that $\ell<b\cdot k$. There is a $t$-private $k$-server $\ell$-LMSSS over $\mathbb{F}$ so that
%\begin{itemize}
%    \item each party's share is an element of %$\mathbb{F}^b$;
%    \item all $k$ parties can jointly reconstruct the %%$\ell$ secrets;
 %  \item it has information rate $\geq R$
%\end{itemize}
%if and only if there is a code $C_H$ so that
%\begin{itemize}
%    \item $C_H$ is a folded code as above;
%    \item $\mathrm{Dist}(C_H)\geq t+1$;
%    \item $\mathrm{Rate}(C_H)\geq R$.
%\end{itemize}

\end{lemma}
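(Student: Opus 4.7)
The plan is to translate the LMSSS $\mathcal{L}$ into linear algebra: write $\Share(\bx,\br)=M_x\bx+M_r\br$ with matrices $M_x\in\FF^{bk\times\ell}$ and $M_r\in\FF^{bk\times e}$, and $\Rec_{[k]}(y)=M_{\mathrm{rec}}\,y$ with $M_{\mathrm{rec}}\in\FF^{\ell\times bk}$, viewing shares in $(\FF^b)^k$ as vectors in $\FF^{bk}$ blocked by party. Correctness forces $M_{\mathrm{rec}}M_x=I_\ell$ and $M_{\mathrm{rec}}M_r=0$. I will take $C:=\mathrm{rowspan}(M_{\mathrm{rec}})\subseteq\FF^{bk}$; since $M_{\mathrm{rec}}$ has a right inverse $M_x$ it has full row rank, so $\dim C=\ell$ and $C$ has rate exactly $\ell/(bk)$, matching $\mathcal{L}$. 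The ``support'' of a vector in $\FF^{bk}$ is identified with the set of blocks $j\in[k]$ on which it has a nonzero entry.

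For the forward direction I will show that $t$-privacy forces $\mathrm{dist}(C)\geq t+1$. Suppose toward contradiction that some nonzero $c=\lambda M_{\mathrm{rec}}\in C$ is supported on $T\subseteq[k]$ with $|T|\le t$, so $\lambda(M_{\mathrm{rec}}|_{[k]\setminus T})=0$. The identity $\lambda=\lambda M_{\mathrm{rec}}M_x$ then collapses to $\lambda=(\lambda M_{\mathrm{rec}})|_T(M_x|_T)$, and analogously $(\lambda M_{\mathrm{rec}})|_T(M_r|_T)=\lambda M_{\mathrm{rec}}M_r=0$. The key reformulation is that $t$-privacy on $T$ is equivalent to $\mathrm{colspan}(M_x|_T)\subseteq\mathrm{colspan}(M_r|_T)$: the distribution of $M_x|_T\bx+M_r|_T\br$ over uniform $\br$ is the coset $M_x|_T\bx+\mathrm{colspan}(M_r|_T)$, which is $\bx$-independent iff $M_x|_T\bx\in\mathrm{colspan}(M_r|_T)$ for every $\bx$. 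Hence $M_x|_T=M_r|_T Q$ for some $Q$, and substituting yields $\lambda=(\lambda M_{\mathrm{rec}})|_T M_r|_T Q=0$, a contradiction.

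For the reverse direction, given $C$ of rate $\geq\ell/(bk)$ and distance $\geq t+1$, I will first pass to a subcode $C'\subseteq C$ of dimension exactly $\ell$, whose distance is $\geq\mathrm{dist}(C)\geq t+1$ since minimizing over fewer nonzero codewords can only increase the minimum weight. Let $M_{\mathrm{rec}}$ be a generator matrix of $C'$, pick $M_r$ with $\mathrm{colspan}(M_r)=\ker(M_{\mathrm{rec}})$, and choose any $M_x$ with $M_{\mathrm{rec}}M_x=I_\ell$ (possible since $M_{\mathrm{rec}}$ is surjective). Setting $\Share(\bx,\br)=M_x\bx+M_r\br$ and $\Rec=M_{\mathrm{rec}}$, correctness is immediate. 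For privacy on any $T$ with $|T|\le t$, I dualize: if privacy failed, there would exist a row vector $\mu\in\FF^{b|T|}$ with $\mu(M_r|_T)=0$ but $\mu(M_x|_T)\not\equiv 0$; zero-extending $\mu$ to $\tilde\mu\in\FF^{bk}$ supported on $T$ gives $\tilde\mu M_r=0$, hence $\tilde\mu\in\mathrm{colspan}(M_r)^\perp=\ker(M_{\mathrm{rec}})^\perp=C'$, while $\tilde\mu\ne 0$ (since $\tilde\mu M_x\ne 0$), producing a nonzero codeword of $C'$ supported on $T$ and contradicting $\mathrm{dist}(C')\geq t+1$. The main step to execute carefully is the privacy-to-colspan-inclusion reformulation used on both sides; once in place, each implication reduces to a single substitution identity.
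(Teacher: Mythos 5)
Your proof is correct, and it takes a route dual in flavor to the paper's. In the direction ``code implies LMSSS,'' the two constructions agree up to notation: both put the randomness in the dual code (your $\mathrm{colspan}(M_r) = \ker(M_{\mathrm{rec}})$ is the paper's $\mathrm{colspan}(H^T)$ for $H$ a parity-check matrix of $C$), and both pick the secret-embedding matrix to be complementary. The interesting divergence is in the direction ``LMSSS implies code.'' The paper takes the code to be $C_H = \ker(H)$, where $H^T$ is the randomness matrix of $\Share$; this has dimension $bk - e \ge \ell$. You instead take $\mathrm{rowspan}(M_{\mathrm{rec}})$, the row space of the reconstruction matrix, which has dimension exactly $\ell$ and is a subcode of $C_H$ (they coincide exactly when $e = bk-\ell$). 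Your choice is the safer one: $\ker(H)$ need not have distance $\ge t+1$ absent a further normalization of the scheme --- for instance, a share coordinate that is identically zero contributes a weight-one vector to $\ker(H)$ while remaining perfectly $t$-private --- whereas $\mathrm{rowspan}(M_{\mathrm{rec}})$ always does, exactly as your substitution argument shows. The enabling tool is your exact if-and-only-if characterization of $t$-privacy on a block set $T$ as the column-space inclusion $\mathrm{colspan}(M_x|_T) \subseteq \mathrm{colspan}(M_r|_T)$. The paper instead reasons with the sufficient (but not necessary) condition that the randomness matrix be of full column rank when restricted to $T$; that cleanly derives privacy from distance, but does not reverse, and your argument is precisely what makes the reverse implication go through without additional normalization.
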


\begin{proof}

\newcommand{\chop}{\mathrm{chop}}
We begin with some notation.
Given a vector $\bv \in \FF^{bk}$, we will write $\chop(\bv) = (\bv^{(1)}, \bv^{(2)}, \ldots, \bv^{(k)})$, where each $\bv^{(i)} \in \FF^b$, to denote the natural way of viewing $\bv$ as an element of $(\FF^b)^k$.
Similarly, given $H \in \FF^{e \times bk}$, we write $H = [H^{(1)} | H^{(2)} | \cdots | H^{(k)} ]$ where $H^{(i)} \in \FF^{e \times b}$.

Given a matrix $H \in \FF^{e \times kb}$, we may define a corresponding $\FF$-linear code $C_H \subset (\FF^b)^k$ by
\begin{equation}\label{eq:CH}
C_H := \{ \chop(\vec{c}) \,:\, \vec{c} \in \FF^{bk} \,:\, H\vec{c} = 0 \}.
\end{equation}
 Conversely, any $\FF$-linear code $C \subseteq (\FF^b)^k$ can be written as $C_H$ for some (not necessarily unique) $H \in \FF^{e \times bk}$ for some $e$.  

Now we prove the ``only if'' direction of the lemma. Let $C \subseteq (\FF^b)^k$ be a code as in the lemma statement.  By assumption, the dimension of the code is $Rkb \geq \ell$.  Since $C$ is linear, we may write $C = C_H$ as in \eqref{eq:CH} for some full-rank matrix $H \in \FF^{e \times bk}$, where $e = bk(1 - R)$.
Let $G \in \FF^{bk \times \ell}$ be any matrix so that $[G | H^T] \in \FF^{bk \times (\ell + e)}$ has rank $\ell + e$.  Notice that this exists because $H$ is full-rank and \[e + \ell = bk(1 - R) + \ell \leq bk(1 - R) + bkR = bk.\]  Then consider the $\ell$-LMSSS $\mathcal{L}$ given by
\[ \Share(\bx, \mathbf{r}) = \chop(G\bx + H^T \mathbf{r}), \]
where $\bx \in \FF^\ell$ contains the secrets and $\br \in \FF^e$ is a uniformly random vector.  We claim that $\mathcal{L}$ is a $t$-private $\ell$-LMSSS with access structure $\Gamma = [k]$.  To see that it is $t$-private, note that the distance of $C_H$ implies that there is no vector $\bv \in \FF^{bk}$ so that $\chop(\bv)$ has  weight at most $t$ (that is, at most $t$ of the $k$ ``chunks'' of $\bv$ are nonzero) and so that $H\bv = 0$.  This implies that for any $i_1, \ldots, i_t \in [k]$, the matrix $[H^{(i_1)} | H^{(i_2)} | \ldots | H^{(i_t)} ]$ is full rank.  This in turn implies that any $t$ elements of $\chop(H^T \br)$ are uniformly random, which implies that any $t$ elements of $\Share(\bx, \br)$ are uniformly random.  Thus $\mathcal{L}$ is $t$-private.  The fact that $\mathcal{L}$ admits access structure $\Gamma = [k]$ follows from the fact that $[G | H^T]$ is full rank.  

The ``if'' direction follows similarly.  Suppose we have an $\ell$-LMSSS $\mathcal{L}$ as in the theorem statement.  From the definition of LMSSS, we may write
$ \Share(\bx, \br) = \chop(G\bx + H^T \br)$
for some matrix $G \in \FF^{\ell \times bk}$ and some matrix $H \in \FF^{bk \times e}$ for some $e$.  Without loss of generality, we may assume that $H$ has full rank, since otherwise we may replace it with a full rank matrix with fewer rows.  Since $\mathcal{L}$ has access structure $\Gamma = [k]$, the matrix $[G|H^T]$ must have rank $\ell +e$. 
%Indeed, otherwise there would be some nonzero vector $(\bx, \br) \in \FF^{\ell + e}$ so that $G\bx + H^T \br = 0$.  Since $H$ is full rank, we must have $\bx \neq \mathbf{0}$ and meaning that $\Share(\bx, \br) = \Share(\mathbf{0}, \mathbf{0})$, and $\bx$ could not be reliably recovered. \mkw{commented out as being overly pedantic.}
This implies that $e \leq bk - \ell$.  Consider the code $C_H$ defined by $H$.  The rate of $C_H$ is $(bk - e)/(bk) \geq \ell/bk,$ which is the information rate of $\mathcal{L}$.  Finally, by the same logic as above, the fact that $\mathcal{L}$ is $t$-private implies that $C_H$ has distance at least $t$.
\end{proof}

%\paragraph{Impossibility results.} \victor{TODO}%
%
%\paragraph{Possibility results.} \victor{TODO}
%
%
%\subsection{HSS constructions}

With the connection to error correcting codes established, we can now use well-known constructions of error correcting codes in order to obtain good HSS schemes. 
We begin with a statement for HSS for concatenation (Definition~\ref{def:CONCAT}).  Then we will use share conversion from $t$-private CNF to $dt$-private CNF (Lemma~\ref{cnf-product} below) to extend this to $\POLY_d^\ell$ for $d > 1$.

\begin{theorem}[Download rate of HSS-for-concatenation is captured by the distance of linear codes]\label{thm:cnf-hss-d1}
Let $\ell, k$ be integers, and let $\FF$ be a finite field.  Suppose that $b > \ell/k$.

There is a $t$-private, $k$-server linear HSS for $\CONCAT_\ell(\FF)$ 
with output shares in $\FF^b$ (and hence with download rate $\ell/(kb)$) if and only if
there is an $\FF$-linear code $C \in (\FF^b)^k$  with rate at least $\ell/(kb)$ and distance at least $t+1$.

Further, for the ``if'' direction, the HSS scheme guaranteed by the existence of $C$ uses $t$-CNF sharing and has upload cost $k\ell {k - 1\choose t} \log_2|\FF|$.
\end{theorem}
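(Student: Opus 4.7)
The plan is to establish both directions of the theorem by reducing to Lemma~\ref{folded-codes}, which translates between $t$-private $k$-party $\ell$-LMSSS having shares in $\FF^b$ and access $\Gamma = [k]$ on one side, and $\FF$-linear codes $C \subseteq (\FF^b)^k$ of rate at least $\ell/(kb)$ and distance at least $t+1$ on the other. Concretely, I would show that the existence of a $t$-private linear HSS for $\CONCAT_\ell(\FF)$ with output shares in $\FF^b$ is equivalent to the existence of such an LMSSS, and then invoke Lemma~\ref{folded-codes}.

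For the ``if'' direction (which also yields the upload-cost bound), Lemma~\ref{folded-codes} applied to $C$ produces a $t$-private $\ell$-LMSSS $\mathcal{L}$ with shares in $\FF^b$ and access $[k]$. I would then define the HSS by letting $\Share$ share each $x_i \in \FF$ independently using $t$-private $k$-party CNF sharing (Definition~\ref{cnf}); since each input contributes $\binom{k-1}{t}$ field elements per server, the total upload cost is $k\ell\binom{k-1}{t}\log_2|\FF|$, as claimed. For $\Eval$, each server locally converts the $\ell$ independent $t$-CNF shares it holds---which jointly form a $t$-private $\ell$-LMSSS obtained by taking $\ell$ instances of $t$-CNF---into a single $\FF^b$-share of $(x_1,\ldots,x_\ell)$ under $\mathcal{L}$, using Corollary~\ref{cdiplusplus}. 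Finally, $\Rec$ applies the linear reconstruction of $\mathcal{L}$ on access set $[k]$. Correctness and $t$-privacy transfer from $\mathcal{L}$ and $t$-CNF, and the download cost is $kb\log_2|\FF|$, giving rate $\ell/(kb)$.

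For the ``only if'' direction, given a $t$-private linear HSS $\Pi$ for $\CONCAT_\ell(\FF)$ with output shares in $\FF^b$, I want to exhibit a $t$-private $\ell$-LMSSS with shares in $\FF^b$ and access $[k]$, whence Lemma~\ref{folded-codes} supplies the code. The naive attempt of taking $\Share'$ to be the composition of $\Share$ with the per-server $\Eval$ need not yield an $\FF$-linear map, since Definition~\ref{def:linear} does not constrain $\Eval$ to be linear---and this is exactly the main obstacle. My plan is to build a linear sharing $\Share'(\bx, \br') = G\bx + N\br'$ directly: take $G \in \FF^{kb\times\ell}$ to be any right inverse of $\Rec$ (which exists because the linear $\Rec$ is surjective onto $\FF^\ell$), and choose $N$ with columns in $\ker(\Rec)$. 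Correctness and access $[k]$ then hold automatically, so the only remaining condition is the linear $t$-privacy condition $\mathrm{colspan}(G_T) \subseteq \mathrm{colspan}(N_T)$ for every $T \subseteq [k]$ with $|T| = t$.

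To select such an $N$ I would exploit $t$-privacy of $\Pi$: because $\Eval$ is deterministic and the joint distribution of input shares on any $T$ of size $t$ is independent of $\bx$, the same holds for the output shares on $T$. Combined with linearity of $\Rec$, this forces the $T$-projection of the set of valid output-share vectors to consist, for each $\bx$, of a translate of a single subspace $V_T \subseteq \FF^{tb}$ that is independent of $\bx$ and whose translates cover every value of $G_T\bx$. Taking $N$ so that the columns of $N_T$ span $V_T$ for every $T$ simultaneously---which can be arranged by letting the columns of $N$ span a single subspace of $\ker(\Rec)$ whose projection onto each $T$ contains $V_T$---yields the required linear $t$-private sharing $\mathcal{L}' = (\Share', \Rec)$, and Lemma~\ref{folded-codes} then delivers the code.
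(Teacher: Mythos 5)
Your ``if'' direction matches the paper: share each input with $t$-CNF, apply Corollary~\ref{cdiplusplus} to locally convert to the $\ell$-LMSSS supplied by Lemma~\ref{folded-codes}, and reconstruct via its reconstruction map. For the ``only if'' direction you take a genuinely different route. The paper observes that the map $\bx\mapsto(\by^{(1)},\ldots,\by^{(k)})$ is an $\ell$-LMSSS with a linear reconstruction function, cites Claims 4.3, 4.7, 4.9 of Beimel's thesis (any scheme with linear reconstruction is equivalent to one with a linear share function, extended to LMSSS), and then invokes Lemma~\ref{folded-codes}. You instead try to build the linear share function $\Share'(\bx,\br)=G\bx+N\br$ directly, which would make this step self-contained. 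Your setup---take $G$ any right inverse of $\Rec$, take $N$ with columns in $\ker\Rec$, and reduce to the privacy condition $\mathrm{colspan}(G_T)\subseteq\mathrm{colspan}(N_T)$ for all $|T|=t$---is correct, and you correctly identify the obstacle (that $\Eval$ need not be linear).

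However, the step where you exploit $t$-privacy of $\Pi$ has a gap. You assert that the $T$-projection of the output-share support is, for each $\bx$, a coset of a single subspace $V_T$. This does not follow: since $\Eval$ is an arbitrary deterministic map, the projected support can be any subset of $\FF^{tb}$ (merely the same subset for every $\bx$), and need not be affine. Fortunately the coset claim is not needed; the argument can be repaired with something weaker and elementary. For any $|T|=t$ and any $\delta\in\FF^\ell$, the $T$-projected output-share distributions for $\bx$ and for $\bx'=\bx-\delta$ have the same support, so there exist output-share vectors $\bz$ (from $\bx$) and $\bz'$ (from $\bx'$) with $\bz_T=\bz'_T$; correctness gives $\Rec(\bz)=\bx$ and $\Rec(\bz')=\bx'$, so $\bv:=\bz-\bz'$ satisfies $\bv_T=0$ and $\Rec(\bv)=\delta$. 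Now for any $\bw\in\FF^{bk}$, apply this with $\delta=\Rec(\bw)$ to obtain $\bw-\bv\in\ker\Rec$ with $(\bw-\bv)_T=\bw_T$, which shows $\mathrm{proj}_T(\ker\Rec)=\FF^{tb}$ for every $|T|=t$. Taking $N$ to span $\ker\Rec$, the containment $\mathrm{colspan}(G_T)\subseteq\mathrm{colspan}(N_T)=\FF^{tb}$ then holds trivially for any right inverse $G$, completing your construction.
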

%\yi{Fix theorem statement as discussed and modify proof accordingly. We probably want to change notation below to $x_{i,r}^{(j)}$ for better consistency with the notation used in the definitions. }
\begin{proof}
For the ``if'' direction (a code $C$ implies an HSS-for-concatenation $\Pi$), we follow the outline sketched above.  
Our goal is to share each input $x_i$ independently with CNF sharing and to reconstruct $\bx = (x_1, \ldots, x_\ell)$ from the outputs of $\Eval$.
Let $C$ be a code as in the theorem statement, and define an HSS $\Pi = (\Share, \Eval, \Rec)$ as follows.
\begin{itemize}
    \item \textbf{Sharing.} 
    %For each input $x_1, \ldots, x_\ell$ we define $\Share(x_i, \mathbf{r}) = (\bx_i^{(1)}, \ldots, \bx_i^{(k)})$ to be $t$-CNF sharing, as in Definition~\ref{cnf}.
Let $\Share(x) = (\bx^{(1)}, \ldots, \bx^{(k)})$ be $t$-CNF sharing of $x$, as in Definition~\ref{cnf}.

    \item \textbf{Evaluation.}  Given $t$-CNF shares $\bx^{(j)} = (\bx_1^{(j)}, \ldots, \bx_\ell^{(j)})$ for each $j \in [k]$, we use Corollary~\ref{cdiplusplus} to locally convert these to shares $\by^{(j)} \in \FF^b$ of an $\ell$-LMSSS $\mathcal{L}$ for the secret $\bx = (x_1, \ldots, x_\ell)$, where $\mathcal{L}$ is the $\ell$-LMSSS that is guaranteed by Lemma~\ref{folded-codes} and the existence of the code $C$.  We define  $\Eval(f, j, \bx^{(j)}) = \by^{(j)}$.
    \item \textbf{Reconstruction.} By construction, we have \[ \Rec_{[k]}(\by^{(1)}, \ldots, \by^{(k)}) = \bx, \]
    where $\Rec_{[k]}$ is the reconstruction algorithm from $\mathcal{L}$.  This defines the reconstruction algorithm $\Rec$ for $\Pi$.
\end{itemize}
The parameters for the ``if'' and ``further'' parts of the theorem follow by tracing the parameters through Corollary~\ref{cdiplusplus} and Lemma~\ref{folded-codes}.

Now we prove the ``only if'' direction.  Suppose that $\Pi = (\Share, \Eval, \Rec)$ is an HSS as in the theorem statement.  Consider the $\ell$-LMSSS $\mathcal{L}$ that shares $\bx = (x_1, \ldots, x_\ell) \in \FF^\ell$ as $\Share'(\bx, \br) = (\by^{(1)}, \ldots, \by^{(k)})$, where $\by^{(j)}$ is the $j$'th output of $\Eval$ when $\bx$ is shared using $\Share$.  Since $\Rec(\by^{(1)}, \ldots, \by^{(k)}) = \bx$ is linear, this secret sharing scheme has a linear reconstruction algorithm.  By \cite[Claims 4.3, 4.7, 4.9]{beimelThesis}, any secret sharing scheme with a linear reconstruction algorithm is equivalent to a scheme with a linear share function; this argument can be extended to $\ell$-LMSSS. Thus there is a linear share function $\Share''$ so that $\mathcal{L} = (\Share'', \Rec)$ forms a $t$-private, $k$-party $\ell$-LMSSS with information rate $\ell/(bk)$.  By Lemma~\ref{folded-codes}, there is a code $C \subseteq (\FF^b)^k$ with distance at least $t+1$ and rate at least $\ell/(bk)$.  This establishes the ``only if'' direction.
\end{proof}

We can easily extend the ``if'' direction (the construction of HSS) to $\POLY_{d,m}^\ell(\FF)$ by using the following lemma, which shows that $t$-CNF shares of $d$ secrets can be locally converted to valid $dt$-CNF shares of the product of the secrets via a suitable assignment of monomials to servers. A similar monomial assignment technique was used in the contexts of communication complexity~\cite{BGKL}, private information retrieval~\cite{BIK}, and secure multiparty computation~\cite{Maurer}.

\begin{lemma}\label{cnf-product} 
Let $\mathcal{L}$ be the $d$-LMSSS that is given by $d$ instances of $t$-private, $k$-party CNF sharing.
Let $\mathcal{L}'$ be the LSSS given by $dt$-private, $k$-party CNF sharing.  Let  $\psi(x_1, \ldots, x_d) = \prod_{i=1}^d x_i$.
Then $\mathcal{L}$ is locally convertible to $\mathcal{L}'$ with respect to $\psi$.
%Then there are functions $\varphi_i: \FF^{d{k-1 \choose t}} \to \FF^{{k-1 \choose dt}}$ for $i \in [d]$ so that the $\varphi_i$ form a local share conversion from $\mathcal{L}$ to $\mathcal{L}'$ with respect to $\psi$.
\end{lemma}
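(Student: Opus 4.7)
The plan is to produce the conversion by assigning each monomial arising in the product to a single $dt$-CNF share position in a way compatible with what each server locally holds. Recall that $t$-CNF sharing of each $x_i$ writes $x_i = \sum_{T \in \binom{[k]}{t}} x_{i,T}$, with server $j$ holding precisely those summands $x_{i,T}$ for which $j \notin T$. Expanding the product gives
\[
\prod_{i=1}^d x_i \;=\; \sum_{(T_1,\ldots,T_d)\in \binom{[k]}{t}^d} \prod_{i=1}^d x_{i,T_i},
\]
and the goal is to reorganize the right-hand side into an additive decomposition indexed by $\binom{[k]}{dt}$ that matches a $dt$-CNF sharing of the product.

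Next, I would fix, once and for all (independently of the secrets), a function that sends each tuple $(T_1,\ldots,T_d) \in \binom{[k]}{t}^d$ to a set $S_{(T_1,\ldots,T_d)} \in \binom{[k]}{dt}$ satisfying $T_1 \cup \cdots \cup T_d \subseteq S_{(T_1,\ldots,T_d)}$. Such an $S$ always exists because $|T_1 \cup \cdots \cup T_d| \leq dt$; if the union is strictly smaller we pad arbitrarily (but deterministically) to size $dt$. Define
\[
y_S \;:=\; \sum_{(T_1,\ldots,T_d):\, S_{(T_1,\ldots,T_d)} = S} \,\prod_{i=1}^d x_{i,T_i}, \qquad S \in \binom{[k]}{dt}.
\]
By construction $\sum_S y_S = \prod_i x_i$.

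The key verification is locality: each server $j$ can compute, from its own input shares, every $y_S$ with $j \notin S$. Indeed, if $j \notin S$ and a monomial $\prod_i x_{i,T_i}$ contributes to this $y_S$, then $T_1 \cup \cdots \cup T_d \subseteq S$, so $j \notin T_i$ for every $i$, which is exactly the condition that server $j$ already holds each factor $x_{i,T_i}$ in its $t$-CNF share of $x_i$. Thus each server locally produces exactly the output shares prescribed by $\mathcal{L}'$.

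To conclude I would verify that the resulting family $(y_S)_{S\in \binom{[k]}{dt}}$ is a valid output of $\Share'(\prod_i x_i, \br')$ for some randomness $\br'$, as required by the definition of local share conversion with respect to $\psi$. This is immediate from the structure of $dt$-CNF sharing: it can realize any additive decomposition of the secret into $\binom{k}{dt}$ summands (the randomness $\br'$ of $\Share'$ ranges over all such decompositions), so the $\br'$ corresponding to our $(y_S)$ is the one we want. I do not foresee a serious obstacle; the only mild choice is the padding rule for $S_{(T_1,\ldots,T_d)}$ when the union $T_1 \cup \cdots \cup T_d$ has size less than $dt$, and any fixed choice works.
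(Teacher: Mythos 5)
Your proof is correct and takes essentially the same approach as the paper: assign each monomial $\prod_i x_{i,T_i}$ to a size-$dt$ set covering $T_1\cup\cdots\cup T_d$, and observe that any party $j\notin S$ can locally compute $y_S$ since $j\notin T_i$ for every factor. The paper groups monomials first by their union $W=\bigcup_i T_i$ and then assigns each $W$ to a superset $T(W)$ of size $dt$; you skip that intermediate indexing and assign tuples directly, which is an equivalent (and slightly more streamlined) presentation of the same idea.
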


\begin{proof}
%\mkw{should we move this to the appendix?  Not sure how ``standard'' this result is.} \yi{Since we also use the ``monomial assignment'' idea later, it might make sense to include the proof in the main text and refer to it later. But a high level explanation of the monomial assignment idea and a literature pointer (e.g., Maurer's ``MPC made simple'' paper) can help. }
%\mkw{Okay, we can keep it here. Since I'm not familiar with those works maybe you can write some prose along those lines wherever you feel it's most appropriate?} \yi{Added some discussion and literature pointers before the statement of the lemma.}
We describe the maps $\varphi_i:\FF^{d{k-1 \choose t}} \to \FF^{{k-1 \choose dt}}$ for each $i \in [d]$ that will define the local share conversion.  
Suppose that the original secrets to be shared with $\mathcal{L}$ are $x_1, \ldots, x_d \in \FF$.  Under $\mathcal{L}$, party $j$'s share consists of $\by^{(j)} = (x_{i,S} \,:\, i \in [d], j \not\in S),$  where for $S \subseteq [k]$ of size $t$, the random variables $x_{i,S} \in \F$  are uniformly random so that $x_i = \sum_S x_{i,S}$ for all $i \in [d]$.

For each set $W \subseteq [k]$ of size $|W| \leq dt$, define
\[ y_W = \sum_{S_1, \ldots, S_d : \bigcup_i S_i = W} \prod_{i=1}^d x_{i,S_i}, \]
where the sum is over all sets $S_1, \ldots, S_d \subseteq [k]$ of size $t$ whose union is exactly $W$.  Notice that
\begin{align*}
\sum_{W \subseteq [k], |W| \leq dt} y_W &=
\sum_{S_1, \ldots, S_d \subset [k]: |S_i| = t} \prod_{i=1}^d x_{i,S_i} \\
&= \prod_{i=1}^d\left( \sum_{S \subset [k] : |S| = t} x_{i,S} \right)
= \prod_{i=1}^d x_i 
= \psi(\bx).
\end{align*}

Now, for all $W \subset [k]$ with $|W| \leq dt$, arbitrarily assign to $W$ some set $T = T(W) \subset [k]$ of size exactly $dt$, so that $W \subseteq T$.  For $T \subseteq [k]$ of size exactly $dt$, define
\[ z_T = \sum_{W : T=T(W)} y_W, \]
where the sum is over all sets $W \subseteq [k]$ of size at most $dt$ so that $T(W)$ is equal to $T$.
By the above, we have
\[ \sum_{T \subseteq [k] : |T| = dt} z_T = \psi(\bx). \]

Further, each party $j$ can locally compute $z_T$ for each $T$ so that $j \not\in T$.  This is because $z_T$ requires knowledge only of $x_{i,S}$ for $i \in [d]$ and for $S \subseteq T$.  If $j \not\in T$, then $j \not\in S$, and therefore $x_{i,S}$ appears as part of the share $\by^{(j)}$. 
Thus, we define
\[ \varphi_j(\by^{(j)}) = (z_T \,:\, T \subseteq [k], |T| = dt, j \not\in T), \]
and the shares $\varphi_j(\by^{(j)})$ are legitimate shares of $\psi(\bx)$ under $\mathcal{L}'$. 
\end{proof}

Using Lemma~\ref{cnf-product} along with Theorem~\ref{thm:cnf-hss-d1} implies the following extension to $\POLY^\ell_d$.

\begin{theorem}[Download rate of HSS-for-polynomials follows from distance of linear codes]\label{thm:cnf-hss} 
%\mkw{Added title as Yuval suggested, but changed ``is captured by'' to ``follows from'' since this theorem only goes in one direction.}
Let $\ell, t,k, d, m$ be integers, and let $\FF$ be a finite field.  Suppose that for some integer $b > \ell/k$,
there is an $\FF$-linear code $C \in (\FF^b)^k$  with rate at least $\ell/(kb)$ and distance at least $dt+1$.  Then there is a $t$-private, $k$-server linear HSS for $\POLY_{d,m}(\FF)^\ell$ with download rate $\ell/(kb)$ and upload cost  %\yi{Consistency: server/party (probably server). For consistency with download, we should probably count total share size and not per-server share size. } 
$k \ell m {k - 1\choose t} \log_2|\FF|$. 

%\end{itemize}

%\mkw{There was a corollary here that didn't specify what code it came from and got different parameters. (commented out below the proof).  Please double-check that the parameters above are correct.}

\end{theorem}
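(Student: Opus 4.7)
My plan is to combine the HSS-for-concatenation construction of Theorem~\ref{thm:cnf-hss-d1} with the monomial-conversion Lemma~\ref{cnf-product}, so that the same code $C$ that governed the concatenation case now governs the output layer of a degree-$d$ scheme. Concretely, I will define $\Pi = (\Share, \Eval, \Rec)$ as follows. For $\Share$, independently $t$-CNF-share each of the $m\ell$ inputs $x_{i,j}$ for $i \in [m], j \in [\ell]$; this immediately gives $t$-privacy and the claimed upload cost of $m\ell k \binom{k-1}{t}\log_2|\F|$. For $\Eval$, given polynomials $f_1,\ldots,f_\ell \in \POLY_{d,m}(\F)$, each server first uses Lemma~\ref{cnf-product} monomial-by-monomial to locally produce $dt$-CNF shares of $f_j(x_{1,j},\ldots,x_{m,j})$ for every $j \in [\ell]$, summing the per-monomial contributions coordinatewise.

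After this first internal step, the servers jointly hold $\ell$ independent instances of $dt$-CNF sharings of the $\ell$ outputs. Since $C$ has distance at least $dt+1$ and rate at least $\ell/(kb)$, Lemma~\ref{folded-codes} yields a $dt$-private $k$-party $\ell$-LMSSS $\mathcal{L}$ with access structure $[k]$ and information rate $\ell/(kb)$, with shares in $\F^b$. By Corollary~\ref{cdiplusplus}, the servers can locally convert the $dt$-CNF sharings into $\mathcal{L}$-shares of the output vector $(f_1,\ldots,f_\ell)$; these $\F^b$-valued shares are the outputs of $\Eval$. Then $\Rec$ is simply $\Rec_{[k]}$ from $\mathcal{L}$, which is linear and recovers $(f_1(\cdot),\ldots,f_\ell(\cdot))$ by definition of the access structure.

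It remains to verify the parameters and privacy. Correctness is immediate from Lemma~\ref{cnf-product} (producing valid $dt$-CNF shares of each output) followed by Corollary~\ref{cdiplusplus} (producing valid $\mathcal{L}$-shares) and the correctness of $\mathcal{L}$ on $[k]$. The download rate equals the information rate of $\mathcal{L}$, namely $\ell/(kb)$, as claimed. For $t$-privacy, the shares of $\Share$ alone are $t$-private by definition of $t$-CNF, and since $\Eval$ is a local function of each server's input shares, the distribution of the joint view of any $t$ servers depends only on their $t$-CNF shares, so $t$-privacy is inherited. The upload cost is simply $m\ell$ copies of a $t$-CNF sharing over $\F$, which gives $m\ell k \binom{k-1}{t}\log_2|\F|$ bits.

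The only minor technicality is handling monomials of degree $d' < d$: Lemma~\ref{cnf-product} is stated for products of exactly $d$ secrets. I will handle this either by padding the product with $d - d'$ factors equal to the constant $1$, shared trivially by $t$-CNF (so each party's share includes an auxiliary ``constant $1$'' slot), or equivalently by noting that any $t$-CNF sharing of $x$ can first be converted to a $dt$-CNF sharing of $x$ using Theorem~\ref{cdi} (since $dt \geq t$), after which a degree-$d'$ monomial already lives in the $dt$-private CNF world. Either route is routine and does not affect the stated upload cost or rate; this is the only step that requires any care, and it is essentially bookkeeping rather than a genuine obstacle.
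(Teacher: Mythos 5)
Your proposal is correct and follows essentially the same route as the paper: $t$-CNF share the $m\ell$ inputs, apply Lemma~\ref{cnf-product} to obtain $dt$-CNF shares of each monomial output, then convert these to shares of a $dt$-private $\ell$-LMSSS derived from $C$ via Lemma~\ref{folded-codes} and Corollary~\ref{cdiplusplus}; the paper packages the last two steps as a call to Theorem~\ref{thm:cnf-hss-d1}, but the content is identical.

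One small note on your handling of monomials of degree $d' < d$: your option (a) (pad with trivially shared constants $1$) is clean and correct, but your option (b) as stated does not quite work — converting each variable's $t$-CNF share to a $dt$-CNF share and then ``multiplying'' in the $dt$-CNF world would blow up the privacy threshold. The correct version of (b) is to invoke Lemma~\ref{cnf-product} with parameter $d'$ to get $d't$-CNF shares and then convert those to $dt$-CNF shares via Theorem~\ref{cdi}. Since you already have a working option, this is a cosmetic point, and in fact you are more careful than the paper, which silently treats monomials of degree at most $d$ without addressing the $d' < d$ case.
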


\begin{proof}
%We first show that the ``if'' direction of Theorem~\ref{thm:cnf-hss-d1} along with Lemma~\ref{cnf-product} establishes the first part of the statement.  

By linearity, we may assume without loss of generality that the function $f \in \POLY_{d,m}(\FF)^\ell$ is a vector of $\ell$ {\em monomials} given by $f_r:\FF^m \to \FF$ for $r \in [\ell]$, so that that $f_r$ is a monomial of degree at most $d$.
We define a $t$-private, $k$-server linear HSS $\Pi' = (\Share', \Eval', \Rec')$ for $\POLY_{d,m}(\FF)^\ell$ with download rate $\ell/(kb)$ as follows.

Let $\Pi=(\Share,\Eval,\Rec)$ be the $dt$-private, $k$-server linear HSS for $\CONCAT_\ell(\FF)$ with output shares in $\FF^b$ and download rate $\ell/(kb)$ that is guaranteed by Theorem~\ref{thm:cnf-hss-d1}.  Consider $f \in \POLY_{d,m}(\FF)^\ell$ of the form described above.  For $r \in [\ell]$, let $\varphi_{1,r}, \ldots, \varphi_{k,r}$ be the local share conversion functions guaranteed by Lemma~\ref{cnf-product} for the monomial $f_r$.
\begin{itemize}
    \item \textbf{Sharing.} The $\Share'$ function is $t$-CNF sharing.  Suppose that we share the inputs $x_{i,r}$ for $i \in [m]$ and $r \in [\ell]$ as shares $\by_r^{(j)} \in \FF^{m{k-1 \choose t}}$ for each party $j \in [k]$ and each $r \in [\ell]$.  Let $\by^{(j)} = (\by_1^{(j)}, \ldots, \by_\ell^{(j)})$.
    \item \textbf{Evaluation.} Define $\bz_j$ and $\Eval'$ by:
    \[ \bz^{(j)} = \Eval'(f, j, \by^{(j)}) = \Eval(g,j,(\varphi_{j,1}(\by^{(j)}_1), \ldots, \varphi_{j,\ell}(\by^{(j)}_\ell)) ) \]
    for all $j \in [k]$ and for $\by^{(j)}$ as above, where $g$ is the identity function.
    \item \textbf{Reconstruction.} 
    Define
    \[ \Rec'(\bz_1, \ldots, \bz_k) = \Rec(\bz_1, \ldots, \bz_k). \]
\end{itemize} To see why this is correct, notice that
    by construction, $\{\varphi_{j,r}(\by^{(j)}_r)\,:\, j \in [k]\}$ are $dt$-CNF shares of the secret $f_r(x_{1,r}, \ldots, x_{m,r})$.  Thus, again by construction, the shares $\bz^{(j)}$ are shares under some $\ell$-LMSSS with reconstruction algorithm $\Rec$ for the concatenated secrets $(f_1(\bx_1), \ldots, f_\ell(\bx_\ell))$.  Therefore, 
    $\Rec'( \bz_1, \ldots, \bz_k)$ indeed returns $(f_1(\bx_1), \ldots, f_\ell(\bx_\ell))$, the desired outcome.
\end{proof}

Below are some instantiations of Theorem \ref{thm:cnf-hss} that yield an HSS with high rate. Example \ref{ex:cnf-bigfield} gives an HSS with optimal rate but requires sufficiently large $\ell$, while Example \ref{ex:hss-ham} has worse rate (and only works for $d=2$), but uses a significantly smaller $\ell$.

\begin{example}\label{ex:cnf-bigfield}
	%In particular,
	%\begin{itemize}
	%\item \textbf{Construction for any $\ell$, but large field sizes.} There is a $t$-private, $k$-server linear HSS with CNF sharing for any $\ell \geq 1$, over any field $\FF$ with $|\FF| \geq \ell k/(k - dt)$, with download rate at least $1 - dt/k$.
	%\item %\textbf{Construction for any field, but large $\ell$.} 
	Let $\ell, t,k, d, m$ be integers, and let $\FF$ be a finite field. There is a $t$-private, $k$-server linear HSS for $\POLY_{d,m}(\FF)^\ell$, where $\Share$ is $t$-private CNF over $\FF$, with download rate at least $1 - dt/k$, for any $\ell$ of the form $\ell = b(k-dt)$ where $b \geq \log_{|\FF|}(k)$ is an integer.  
\end{example}

\begin{proof}
	%We now apply the first part of the theorem with particular codes to obtain the second part.  
	We begin with an observation about the existence of linear codes with good rate/distance trade-offs (in particular, that meet the \emph{Singleton bound}).
	%\begin{fact}\label{fact:frs}
	%Let $b,k$ be positive integers, and suppose that $\FF$ is a finite field with $|\FF| \geq bk$.
	%For any $R \in (0,1)$ so that $kR \in \mathbb{Z}$,
	%there is an $\FF$-linear code $C \subset (\FF^b)^k$ with rate $R$ and distance $k(1 - R) + 1$.
	%\end{fact}
	%Indeed, such codes are given by, \emph{folded Reed-Solomon} codes.  That is, 
	%\[ C = \left\{\left( \begin{pmatrix} f(\alpha_1)\\ \vdots \\ f(\alpha_b)\end{pmatrix}, \begin{pmatrix} f(\alpha_{b+1})\\ \vdots \\ f(\alpha_{2b})\end{pmatrix}, \ldots, \begin{pmatrix}f(\alpha_{(k-1)b + 1})\\ \vdots\\ f(\alpha_{kb})\end{pmatrix} \right) \,:\ f \in \FF[X], \deg(f) < Rbk \right\}, \]
	%where $\alpha_1,\ldots,\alpha_{kb} \in \FF$ are distinct evaluation points.
	%These codes clearly have dimension $Rbk$ and hence rate $R$.  Further, there are no nonzero codewords with weight less than $(1-R)k$: if there were, then some nonzero polynomial of degree at most $Rbk$ would have at least $Rkb$ roots. 
	
	%Second, we observe that for general $\FF$, we can still achieve the Singleton bound provided that $b$ is large enough.
	\begin{fact}\label{fact:rs}
		Let $k,b$ be positive integers, and suppose that $\FF$ is a finite field with $|\FF| = q$.  Suppose that $q^b \geq k$.  For any $R \in (0,1)$ so that $kR \in \mathbb{Z}$, there is an $\FF$-linear code $C \subset (\FF^b)^k$ with rate $R$ and distance $k(1 - R) + 1$.
	\end{fact}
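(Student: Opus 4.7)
The plan is to realize the claimed code as a Reed--Solomon code over the degree-$b$ extension $\mathbb{E}$ of $\FF$, and then reinterpret $\mathbb{E}$-symbols as vectors in $\FF^b$ via a choice of $\FF$-basis for $\mathbb{E}$.

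First, set $\mathbb{E} = \FF_{q^b}$, so that $|\mathbb{E}| = q^b \geq k$. Let $\ell = kR$, which is a positive integer by hypothesis, and pick $k$ distinct evaluation points $\alpha_1, \ldots, \alpha_k \in \mathbb{E}$ (possible since $|\mathbb{E}| \geq k$). Consider the Reed--Solomon code
\[
C_0 = \bigl\{ (p(\alpha_1), \ldots, p(\alpha_k)) \,:\, p \in \mathbb{E}[X],\ \deg(p) < \ell \bigr\} \subseteq \mathbb{E}^k.
\]
By standard Reed--Solomon theory, $C_0$ is an $\mathbb{E}$-linear subspace of $\mathbb{E}^k$ of $\mathbb{E}$-dimension $\ell$, and its minimum Hamming distance is $k - \ell + 1 = k(1 - R) + 1$ (any nonzero polynomial of degree $<\ell$ has at most $\ell - 1$ roots).

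Next, I would fix any $\FF$-linear isomorphism $\iota : \mathbb{E} \to \FF^b$ (induced by a choice of $\FF$-basis of $\mathbb{E}$), and apply it coordinatewise to get an $\FF$-linear isomorphism $\iota^k : \mathbb{E}^k \to (\FF^b)^k$. Define $C = \iota^k(C_0) \subseteq (\FF^b)^k$. Since $\iota$ is an $\FF$-linear bijection, $C$ is an $\FF$-linear subspace, and because $\iota$ sends nonzero elements to nonzero vectors, the Hamming weight (as a word in $(\FF^b)^k$) of $\iota^k(c)$ equals the Hamming weight (as a word in $\mathbb{E}^k$) of $c$. Hence $\mathrm{Dist}(C) = k(1-R) + 1$.

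Finally, the $\FF$-dimension of $C$ equals the $\FF$-dimension of $C_0$, which is $\ell \cdot [\mathbb{E}:\FF] = \ell b$. Thus $\mathrm{Rate}(C) = \ell b / (bk) = \ell / k = R$, as required. There is no real obstacle here; the only thing to be careful about is separating the two meanings of ``linear'' ($\mathbb{E}$-linear versus $\FF$-linear) and checking that the distance is preserved under the scalar-extension-to-vector identification, both of which are immediate from the fact that $\iota$ is an $\FF$-linear isomorphism.
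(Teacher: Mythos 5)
Your proof is correct and follows the same approach the paper sketches: take a Reed--Solomon code over the extension field $\FF_{q^b}$ and then ``concatenate with the identity code'' by expanding each $\FF_{q^b}$-symbol into a vector in $\FF^b$ via a basis choice. You simply spell out the details (the $\FF$-linearity, weight preservation, and dimension count) that the paper leaves implicit.
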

	Indeed, we may simply take the standard Reed-Solomon code over $\FF_{q^b}$, and then concatenate with the identity code over $\FF$, replacing each field element in $\FF_{q^b}$ with an appropriate vector in $\FF_q^b$.
	
	\iffalse
	% Here are another getting decent constructions even if $b$ is too small, but I don't think we need them.
	Finally, we observe that for general $\FF$, there are $\FF$-linear codes that meet a version of the Gilbert-Varshamov bound.
	\begin{lemma}\label{lem:gv}
		Let $\eps > 0$.  Then there is a sufficiently large $k$ so that the following holds. 
		Let $b$ be a positive integer, and suppose that $\FF$ is any field.  Then for any $\delta \in (0,1)$, there is an $\FF$-linear code $C \subset (\FF^b)^k$ with distance at least $\delta k$ and rate
		\[ R \geq 1 - H_{|\FF|^b}(\delta) - \eps. \]
	\end{lemma}
	The proof of Lemma~\ref{lem:gv} is identical to the proof of the classical Gilbert-Varshamov bound over a large field.  We include it in Appendix~\ref{app:gv} for completeness.
	\fi

	%To obtain the first construction (over large fields, but for any $\ell \geq 1$), we use the Folded Reed-Solomon code of Fact~\ref{fact:frs} as follows.  We choose a code of rate $R = 1 - dt/k$, and distance $dt + 1$.  We choose $b = \frac{\ell}{Rk}$, so the download rate of the resulting HSS is $\ell/(bk) = R$.  Finally, we note that for $C$ to exist, the field size needs to be at least $bk = \ell/R = \ell k / (k - dt)$ by our choice of $b$.
	
	To obtain the result, combine Theorem \ref{thm:cnf-hss} with the concatenated Reed-Solomon code of Fact~\ref{fact:rs}.  We choose a code of rate $R = 1 - dt/k$ and distance $dt + 1$, and we choose $b \geq \log_{|\F|}(k)$ to be any parameter.  As in the theorem statement, we suppose that $\ell = bkR = b(k-dt)$, so that the download rate of the resulting HSS is $\ell/(bk) = R.$
\end{proof}

\begin{example}\label{ex:hss-ham}
		Let $\ell, r, m$ be integers, and let $\FF$ be a finite field with $|\FF|=q$. There is a $1$-private, $(k=(q^r-1)/(q-1))$-server linear HSS for $\POLY_{2,m}(\FF)^\ell$, where $\Share$ is 1-private CNF over $\FF$, with download rate at least $1 - r/k=1-O(\log_q(k)/k)$, for $\ell=k-r=k-O(\log_q(k))$.
\end{example}

\begin{proof}
	We begin with the following fact, which is achieved by Hamming codes.
	\begin{fact}\label{fact:ham}
		Let $r$ be a positive integer, suppose that $\FF$ is a finite field with $|\FF| = q$, and set $k=(q^r-1)/(q-1)$.  There is an $\FF$-linear code $C \subset (\FF)^k$ with dimension $k-r$, rate $1-r/k$, and distance $3$.
	\end{fact}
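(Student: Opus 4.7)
The plan is to instantiate the claimed code as a classical $q$-ary Hamming code, defined through its parity-check matrix. Since the parameters $k = (q^r-1)/(q-1)$, dimension $k-r$, and distance $3$ match exactly the standard Hamming parameters, there is no flexibility in the construction and no significant obstacle to expect; the argument is entirely textbook.

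First, I would construct a matrix $H \in \FF^{r \times k}$ whose $k$ columns are obtained by choosing one nonzero representative from each of the $(q^r-1)/(q-1)$ one-dimensional $\FF$-subspaces of $\FF^r$ (equivalently, one representative from each equivalence class of $\FF^r \setminus \{0\}$ under scalar multiplication). That the number of classes is exactly $k$ follows from the fact that each class has $q-1$ nonzero representatives. I would then define
\[ C = \{ c \in \FF^k : H c = 0 \}. \]
By construction $C$ is $\FF$-linear and has block length $k$.

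Next I would verify the dimension. The rank of $H$ is exactly $r$, because the columns of $H$ include representatives of the lines spanned by the standard basis vectors $e_1, \ldots, e_r \in \FF^r$, and any such representatives are linearly independent. Hence $\dim_\FF(C) = k - r$, yielding rate $1 - r/k$ as required.

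Finally I would check the distance. Using the standard reformulation that a codeword of weight $w$ corresponds to a nontrivial linear dependence among $w$ columns of $H$, weight $1$ is excluded since no column of $H$ is zero, and weight $2$ is excluded since any two columns come from distinct equivalence classes and hence are not scalar multiples of one another. Thus the minimum distance is at least $3$. To see it is exactly $3$ (assuming $r \geq 2$), it suffices to observe that the columns proportional to $e_1$, $e_2$, and $e_1 + e_2$ are pairwise non-proportional nonzero vectors—so each contributes a column of $H$—and they are linearly dependent.
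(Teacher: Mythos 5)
Your proof is correct and takes the same approach as the paper, which simply invokes the classical $q$-ary Hamming code; you merely spell out the standard parity-check construction and distance argument that the paper leaves implicit.
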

	Combining Theorem \ref{thm:cnf-hss} with Fact \ref{fact:ham} yields the desired result, with $\ell=k-r$ and download rate $\ell/k=1-r/k$.
\end{proof}

\subsubsection{Linear HSS from Shamir Sharing}\label{sec:HSS-shamir}

Our next construction of linear HSS is based on Shamir sharing (Definition~\ref{shamir}).  Later in Section~\ref{sec:comparison}, we compare this to the CNF-based construction from the previous section.

As noted in Section~\ref{sec:related}, HSS from Shamir sharing is related to work on the use of Reed-Solomon codes as regenerating codes.  In particular, existing results in the regenerating codes literature immediately imply the following corollary.

%\victor{Putting this here because I am not sure we discuss this (didn't read): CNF beats TYB17 in 3 aspects: (1) significantly smaller upload cost if secrest are over a small field, (2) simplicity, (3) optimal download rate for smaller $\ell$}\mkw{I believe that all of these are addressed between the discussion in the intro and the remark/discussion below.}

\begin{corollary}[Follows from~\cite{TYB17}]\label{cor:TYB}
Let $k,d,m,t$ be integers so that $k > dt$, and let $p$ be prime.  There is a field $\FF$ of characteristic $p$ and of size $|\FF| = \exp_p(\exp(k \log k))$ so that the following holds.
There is a $t$-private, $k$-server HSS for $\POLY_{d,m}(\FF)$ where $\Share$ is given by a Shamir scheme (Definition~\ref{shamir}, where both the secrets and the shares lie in $\FF$) and where the download rate is $1 - dt/k$.
\end{corollary}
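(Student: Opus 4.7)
The plan is to view the HSS reconstruction task as an instance of the Reed-Solomon repair problem and invoke the repair scheme of \cite{TYB17} as a black box. The starting point is $t$-private Shamir sharing (Definition~\ref{shamir}) over a field $\FF$ of characteristic $p$, with distinct evaluation points $\alpha_0, \alpha_1, \ldots, \alpha_k \in \FF$, where $\alpha_0$ plays the role of the ``secret'' evaluation point. Each input $x_i$ is shared as $p_i(\alpha_j)$ for a uniformly random polynomial $p_i \in \FF[X]$ of degree at most $t$ with $p_i(\alpha_0) = x_i$.

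Given $f \in \POLY_{d,m}(\FF)$, first have each server $j \in [k]$ locally compute $w_j := f(p_1(\alpha_j), \ldots, p_m(\alpha_j))$. Because $\deg f \le d$ and each $\deg p_i \le t$, the composed map is a single polynomial $p \in \FF[X]$ of degree at most $dt$, so the vector $(w_0, w_1, \ldots, w_k)$ is a Reed-Solomon codeword of degree $dt$ and block length $k+1$ with $w_0 = p(\alpha_0) = f(x_1, \ldots, x_m)$. The HSS reconstruction task is therefore exactly the problem of repairing the symbol at $\alpha_0$ in this Reed-Solomon code from the other $k$ symbols, while minimizing the total number of subfield symbols downloaded from those servers.

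Next, invoke \cite{TYB17}: over a field $\FF$ of characteristic $p$ whose size is $\exp_p(\exp(k \log k))$, there exists a linear repair scheme for the degree-$dt$ Reed-Solomon code at any coordinate that meets the cut-set bound, i.e.\ the total download from the surviving $k$ servers is $\frac{1}{1 - dt/k}$ times the size of one symbol. Translated into HSS language, define $\Eval(f, j, \cdot)$ to first compute $w_j$ and then apply the $\FF$-linear contraction prescribed by the \cite{TYB17} scheme for repairing the symbol at $\alpha_0$, and define $\Rec$ to be the linear combination of output shares dictated by the same repair scheme. Correctness is immediate from the guarantee of the repair scheme, $t$-privacy is inherited from Shamir sharing because $\Eval$ is a purely local function of the individual shares, and by construction the download rate is $1 - dt/k$.

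There is essentially no obstacle here beyond bookkeeping: the construction is a reformulation of a known regenerating-code result. The only parameter to check is that the extension field required by \cite{TYB17} also satisfies $|\FF| > k$ (needed for distinct Shamir evaluation points), which is trivially true given the stated doubly-exponential size. The cost of this translation, and the reason it is merely recorded as a corollary rather than pursued further, is precisely this enormous field-size requirement; our Section~\ref{sec:HSS-shamir} results will be obtained by a different route that relaxes requirement (a) from the related-work discussion (we only need to recover $\alpha_0$, not arbitrary erased symbols) and thereby achieves the same rate with a manageable field size.
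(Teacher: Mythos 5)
Your proposal is correct and takes essentially the same route as the paper, which records this as a corollary of~\cite{TYB17} without a written-out proof but explains the reduction in the related-work discussion: a linear repair scheme for the degree-$dt$ Reed--Solomon code at the coordinate $\alpha_0$ immediately yields an HSS for $\POLY_{d,m}$ with $t$-private Shamir sharing and the same download cost, and~\cite{TYB17} gives such a scheme meeting the cut-set bound over the stated doubly-exponential field. One small imprecision in your write-up: the contraction prescribed by~\cite{TYB17} is linear over a proper subfield $\B \subsetneq \FF$ (each output share is a handful of $\B$-symbols), not $\FF$-linear as written; this is exactly what makes the per-server download a fraction $1/(k-dt)$ of a full $\FF$-symbol, giving rate $1 - dt/k$.
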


%\mkw{Added this remark}

\begin{remark}[Limitations of regenerating-code based HSS]\label{rem:regenCompare}
While Corollary~\ref{cor:TYB} has optimal download rate, the fact that the field $\FF$ must be so large makes it impractical.  On the one hand, if we take the corollary as written, so that $\ell=1$, each secret must lie in an extremely large extension field, with extension degree exponential in $k$.  On the other hand, it is possible to ``pack'' $\ell = \exp(k \log k)$ secrets over $\FF_p$ into a single element of the large field $\FF$; we do this in our Theorem~\ref{thm:shamir-hss}, for example.  However, the problem with this when starting with a regenerating code is that each secret must be shared independently over the large field $\FF$.  This means that the upload cost will scale like $\ell^2$, while the number of secrets is $\ell$.  Since $\ell = \exp(k \log k)$ is so large, this results in a huge overhead in upload cost.

Moreover, \cite{TYB17} shows that a field size nearly this large is necessary for any optimal-download linear repair scheme for Reed-Solomon codes.  Thus, we cannot hope to obtain a practical Shamir-based HSS with optimal download rate in a black-box way from regenerating codes.
\end{remark}
In light of Remark~\ref{rem:regenCompare}, rather than using an off-the-shelf regenerating code, we adapt ideas from the regenerating codes literature to our purpose.  In particular, our approach is inspired by \cite{GW16,TYB17} (and is in fact much simpler).
Mirroring our approach for CNF-based sharing, our main tool is an HSS-for-concatenation scheme that begins with Shamir sharing.  

%\mkw{to mirror the previous section, I changed the language from ``share conversion'' to ``HSS for concatenation.''}

\begin{theorem}[HSS-for-concatenation from Shamir]\label{reg-share-conversion}
Let $t < k$ be any integers and
let $\ell = k-t$.  Let $\FF$ be any finite field with $|\FF| \geq k$ 
and let $\EE$ be an extension field of $\FF$ of size $|\FF|^\ell$.
There is an LSSS $\mathcal{L}$ given by $t$-private $k$-party Shamir secret sharing with secrets in $\FF$ and shares in $\EE$ (as in Definition~\ref{shamir}) so that the following holds.

There is a $t$-private, $k$-party, linear (over $\FF$) HSS $\Pi = (\Share, \Eval, \Rec)$ for $\CONCAT_\ell(\FF)$ so that $\Share$ is given by $\mathcal{L}$, 
and the outputs of $\Eval$ are elements of $\FF$.  In particular, the download rate is $1 - t/k$.

%There is an $\ell$-LMSSS $\mathcal{L}_0$ consisting of $\ell$ instances of the $t$-private $k$-server Shamir secret sharing scheme, sharing $\ell$ secrets in $\FF$ with shares in $\EE$ (as in Example~\ref{shamir}), and there is some  $t$-private $k$-server linear $\ell$-LMSSS $\mathcal{L}$ over $\F$ with information rate $1-t/k$, so that $\mathcal{L}_0$ is locally convertible to $\mathcal{L}$.
\end{theorem}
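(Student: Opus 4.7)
The plan is to combine $t$-private Shamir sharing with a Guruswami--Wootters-style trace compression: split evaluation points between a single ``secret point'' in $\EE\setminus\FF$ and $k$ ``server points'' in $\FF$, and then have each server output a single element of $\FF$ obtained by a specific $\EE\to\FF$ trace. Recovery of the $\ell=k-t$ secrets from $k$ field elements will then follow from a Reed--Solomon dual-code identity. First I would fix parameters: choose $\alpha_1,\ldots,\alpha_k\in\FF$ distinct (possible since $|\FF|\ge k$), pick $\alpha_0\in\EE$ with $\EE = \FF(\alpha_0)$ (exists since finite-field extensions are simple), and an $\FF$-basis $\beta_1,\ldots,\beta_\ell$ of $\EE$. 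Let $\mathcal{L}$ be $t$-private Shamir over $\EE$ with secret point $\alpha_0$ and server points $\alpha_1,\ldots,\alpha_k$. Given the independent Shamir shares $p_i(\alpha_j)\in\EE$ of the inputs $x_1,\ldots,x_\ell\in\FF$, each server $j$ locally forms $q(\alpha_j) = \sum_i \beta_i p_i(\alpha_j)$, the evaluation at $\alpha_j$ of the degree-$\le t$ polynomial $q(X)=\sum_i\beta_i p_i(X)\in\EE[X]$, and outputs
\[ y^{(j)} \;=\; \tr_{\EE/\FF}\bigl(v_j\, q(\alpha_j)\bigr) \;\in\; \FF, \]
where $v_j = \prod_{l\ne j}(\alpha_j-\alpha_l)^{-1}$ (product over $l\in\{0,\ldots,k\}\setminus\{j\}$) is the standard Reed--Solomon dual weight at $\alpha_j$.

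For reconstruction, I would invoke the identity $\sum_{j=0}^{k} v_j h(\alpha_j)q(\alpha_j) = 0$, valid for every $h\in\FF[X]$ with $\deg h\le \ell-1$ and every $q$ of degree $\le t$ (this is the vanishing $X^{k}$-coefficient of the Lagrange interpolant of the degree-$<k$ polynomial $h\cdot q$ through the $k+1$ points $\alpha_0,\ldots,\alpha_k$). Rearranging, applying $\tr_{\EE/\FF}$, and pulling the $\FF$-scalar $h(\alpha_j)$ out of the trace, I obtain
\[ \tr\bigl(v_0 h(\alpha_0)\,q(\alpha_0)\bigr) \;=\; -\sum_{j=1}^{k} h(\alpha_j)\,y^{(j)}. \]
Choosing $h(X)=X^{i-1}$ for $i=1,\ldots,\ell$ makes the right-hand side a known $\FF$-linear function of $(y^{(1)},\ldots,y^{(k)})$, while the left-hand side is $\tr(w_i\,q(\alpha_0))$ with $w_i=v_0\alpha_0^{i-1}$. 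Because $\EE=\FF(\alpha_0)$ and $v_0\ne 0$, the $w_i$ form an $\FF$-basis of $\EE$, so by nondegeneracy of the trace form $z\mapsto(\tr(w_i z))_i$ is an $\FF$-linear bijection $\EE\to\FF^\ell$. This lets $\Rec$ recover $q(\alpha_0)$ $\FF$-linearly from the $y^{(j)}$'s and then extract $(x_1,\ldots,x_\ell)$ via $q(\alpha_0)=\sum_i\beta_i x_i$.

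The main obstacle---and the reason we can bypass the doubly-exponential field sizes forced on Reed--Solomon regenerating codes in~\cite{TYB17}---is getting each server's output down to a \emph{single} $\FF$-symbol while still retaining enough information to pin down the whole element $q(\alpha_0)\in\EE$. Our leverage is that we need to recover only the evaluation at $\alpha_0$, not every $q(\alpha_j)$; placing the server points inside the small field $\FF$ is exactly what lets the scalar $h(\alpha_j)$ commute with the trace, so the $k$ collected $\FF$-symbols can be linearly combined into the $\ell$ required $\FF$-coordinates of $q(\alpha_0)$. Once this is in place, $t$-privacy is inherited directly from Shamir, $\Eval$ and $\Rec$ are $\FF$-linear by construction, and counting shares gives download rate $\ell/k = 1-t/k$.
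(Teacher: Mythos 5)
Your proof is correct and takes essentially the same approach as the paper's. Both proofs Shamir-share over the extension field $\EE$ with the secret point in $\EE \setminus \FF$ and all $k$ server points in $\FF$, bundle the $\ell$ polynomials into a single degree-$\le t$ polynomial over $\EE$ via an $\FF$-basis, have each server output a single trace of a suitably weighted share, and recover the $\ell$ secrets by combining the $k$ traces $\FF$-linearly and invoking nondegeneracy of the trace form; the only differences are cosmetic normalization choices (you carry the Reed--Solomon dual weights $v_j$ and compensate with $w_i = v_0\alpha_0^{i-1}$, while the paper absorbs the normalization into its Lagrange coefficients $\lambda_j$ and uses the basis $\gamma,\ldots,\gamma^\ell$ directly), and you make the underlying $[k+1,k]$ RS dual-code parity-check identity explicit where the paper simply asserts the existence of the interpolation coefficients.
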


\begin{proof}

Let $\EE$ be an extension field of $\FF$ so that $|\EE| = |\FF|^\ell$.  Let $\gamma$ be a primitive element of $\EE$ over $\FF$.  We choose $\mathcal{L}$ to be 
%the $\ell$ instances of 
the Shamir scheme with secrets in $\FF$, shares in $\EE$, and evaluation points $\alpha_0 = \gamma$ and any $\alpha_1, \ldots, \alpha_k \in \FF$, using the notation of Definition~\ref{shamir}.

Given $\mathcal{L}$, we define $\Share$, $\Eval$, and $\Rec$ below.

\begin{itemize}
    \item \textbf{Sharing}. The share function for $\Pi$ is the same as that for $\mathcal{L}$, and is thus given by
$\Share(x_i)_j = p_i(\alpha_j)$, where $p_i \in \EE[X]$ is a random polynomial of degree at most $t$, so that $p_i(\alpha_0) = x_i$.
Let $\bz^{(j)} := (p_1(\alpha_j), \ldots, p_\ell(\alpha_j))$.

%We next explain what the $\ell$-LMSSS $\mathcal{L}$ is and how to do the share conversion from $\bz^j$ to the $j$'th share $w^j \in \FF$ of $\mathcal{L}$.
\item \textbf{Evaluation}. %Without loss of generality, we suppose that $f \in \CONCAT_\ell(\FF)$ is given by concatenation of all $\ell$ inputs, so
%\[ f(x_1, \ldots, x_\ell) = (x_1, \ldots, x_\ell). \]
%
First, we bundle together the $\ell$ shares to form one Shamir sharing over the larger field $\EE$.  In more detail,
let $y = \sum_{i=1}^\ell x_i \gamma^i \in \mathbb{E}$, and define $f(X) \in \EE$ by
\[ f(X) = \sum_{i=1}^\ell p_i(X) \gamma^i. \]
Note that $f(X)$ is a degree-$t$ polynomial in $\EE[X]$, and that
\[ f(\gamma) = \sum_{i=1}^\ell p_i(\gamma) \gamma^i = \sum_{i=1}^\ell x_i \gamma^i = y. \]
Thus, if each server $j$ locally computes $z^{(j)} = f(\alpha_j) \in \EE$ from its shares, they now have a valid Shamir sharing over $\EE$ of the secret $y \in \EE$.

Since $\ell = k-t$, for any polynomial $f \in \EE[X]$ of degree at most $t$, and any $r\in[\ell]$, we have $\deg(X^{r-1} \cdot f(X)) \leq (k-t-1) + t = k-1$.  Thus, for each $r \in [\ell]$, there is a linear relationship between the $k+1$ evaluations $\alpha_1^{r-1}f(\alpha_1),  \ldots, \alpha_k^{r-1}f(\alpha_k)$ and $\gamma^{r-1}f(\gamma)$, in the sense that there are some coefficients $\lambda_i \in \EE$ (independent of $f$ and $r$) so that for any $f \in \EE[X]$ of degree at most $t$, and any $r \in [\ell]$,
\[ \gamma^{r-1}f(\gamma) = \sum_{i=1}^k \lambda_i\alpha_i^{r-1} f(\alpha_i).  \]
Define 
\[ w^{(j)} = \tr( \lambda_j f(\alpha_j) ), \]
where $\tr(\cdot)$ denotes the \emph{field trace} of $\EE$ over $\FF$, defined as
$ \tr(X) = \sum_{i=0}^{\ell-1} X^{|\FF|^i}. $
We define
\[ \Eval(f,j,\bz^{(j)}) = w^{(j)}.\]
Notice that $w^{(j)} \in \FF$, since the image of $\tr$ is $\FF$.

%We will take $w^{(j)}$ to be party $j$'s share in $\mathcal{L}$; notice that $w^{(j)} \in \FF$, as $\tr:\EE \to \FF$.
\item \textbf{Reconstruction}. Suppose we are given $w^{(1)}, \ldots, w^{(k)}$ produced by $\Eval$.  Our goal is to recover
 $(x_1, \ldots, x_\ell) \in \FF^\ell$ in an $\FF$-linear way.

Since $\tr$ is $\FF$-linear and $\alpha_j \in \FF$ for all $j \in [k]$, we have for all $r \in [\ell]$ that
\begin{align*}
    \tr( \gamma^{r-1}f(\gamma) ) &= \tr\left( \sum_{j=1}^k \lambda_j\alpha_j^{r-1} f(\alpha_j)\right) \\
    &= \sum_{j=1}^k \alpha_j^{r-1} \tr( \lambda_j f(\alpha_j) ) \\
    &= \sum_{j=1}^k \alpha_j^{r-1} w^{(j)}
\end{align*}
Therefore, there is an $\FF$-linear map to recover each $\tr(\gamma^{r-1} f(\gamma))$ for $r \in [\ell]$, given the shares $w^{(j)}$.
Since $\gamma^0, \ldots, \gamma^{\ell-1}$ form a basis for $\EE$ over $\FF$, 
%given $\tr(\gamma^{r-1}f(\gamma))$, 
there is also an $\FF$-linear function that recovers $f(\gamma) = y$ from $\tr(\gamma^{r-1}f(\gamma))$ for $r \in [\ell]$.
Finally, since $y$ is defined as $y = \sum_{j=1}^\ell x_i \gamma^i$ and $\gamma, \gamma^2, \ldots, \gamma^\ell$ form a basis for $\EE$ over $\FF$, there is an $\FF$-linear map to recover the coefficients $x_i$ from $y$.  Composing all of these, we see that there is an $\FF$-linear map that can recover $(x_1, \ldots, x_\ell)$ given the shares $w^{(1)}, \ldots, w^{(k)}$.  This map will be $\Rec$.
\end{itemize}

Finally, we observe that the information rate of $\mathcal{L}$ is $\frac{\ell}{k} = 1 - t/k$, as there are $\ell = t-k$ elements of $\FF$ that are shared as $k$ elements of $\FF$.
\end{proof}

In order to extend Theorem~\ref{reg-share-conversion} to an HSS for $\POLY_d^\ell$, we will use the 
well-known local multiplication property of Shamir sharing (e.g., \cite{STOC:BenGolWig88,STOC:ChaCreDam88,EC:CraDamMau00}).

\begin{fact}\label{shamir-product}
Let $\FF \leq \EE$ be finite fields and let $\alpha_0, \ldots, \alpha_k \in \mathbb{E}$ be distinct.
Let $\mathcal{L}$ be the $d$-LMSSS that is given by $d$ instances of the $t$-private, $k$-party Shamir sharing scheme with secrets in $\FF$, shares in $\EE$, and evaluation points $(\alpha_0, \ldots, \alpha_k)$ (Definition~\ref{shamir}).   Let $\mathcal{L}'$ be the LSSS given by the $dt$-private, $k$-party Shamir sharing scheme with the same specifications.  Let  $\psi(x_1, \ldots, x_d) = \prod_{i=1}^d x_i$.  Then $\mathcal{L}$ is locally convertible to $\mathcal{L}'$ with respect to $\psi$.

%Then there are functions $\varphi_i: \EE^d \to \EE$ for $i \in[d]$ so that the $\varphi_i$, along with the map $\psi(x_1, \ldots, x_d) = \prod_{i=1}^d x_i$ form a local share conversion from $\mathcal{L}$ to $\mathcal{L}'$.
\end{fact}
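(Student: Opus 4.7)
The plan is to exploit the standard fact that the product of Shamir-shared secrets is encoded by the product of the underlying Shamir polynomials, which has degree bounded by the sum of the individual degrees.

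More concretely, unpacking the definition of local share conversion, for each $j\in[k]$ I need to exhibit a map $\varphi_j:\EE^d\to\EE$ so that applying $\varphi_j$ coordinatewise to the output of $\Share$ under $\mathcal{L}$ produces a valid sharing of $\psi(\bx)=\prod_i x_i$ under $\mathcal{L}'$ for some choice of randomness $\br'$. I will simply take $\varphi_j(y_1,\ldots,y_d)=\prod_{i=1}^d y_i$, i.e., multiplication in $\EE$. This is independent of the party index $j$; the apparent dependence on $j$ comes only through the input shares.

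To verify correctness, fix a secret $\bx=(x_1,\ldots,x_d)\in\FF^d$ and randomness $\br=(\br_1,\ldots,\br_d)$ for the $d$ independent Shamir instances. By Definition~\ref{shamir}, each $\br_i$ selects a polynomial $p_i\in\EE[X]$ with $\deg(p_i)\le t$ and $p_i(\alpha_0)=x_i$, and party $j$'s share under $\mathcal{L}$ is $(p_1(\alpha_j),\ldots,p_d(\alpha_j))$. Setting $p(X):=\prod_{i=1}^d p_i(X)$, the converted share for party $j$ is
\[
\varphi_j(p_1(\alpha_j),\ldots,p_d(\alpha_j))=\prod_{i=1}^d p_i(\alpha_j)=p(\alpha_j).
\]
Two properties are immediate: (a) $\deg(p)\le \sum_i \deg(p_i)\le dt$, and (b) $p(\alpha_0)=\prod_i p_i(\alpha_0)=\prod_i x_i=\psi(\bx)$.

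It remains to exhibit randomness $\br'\in\FF^{dt\cdot s}$ for $\mathcal{L}'$ producing exactly this $p$. The randomness space of $\mathcal{L}'$ parameterizes all polynomials in $\EE[X]$ of degree at most $dt$ with fixed value $\psi(\bx)$ at $\alpha_0$; since $p$ is such a polynomial, some $\br'$ selects it, and then $\Share'(\psi(\bx),\br')_j=p(\alpha_j)$ for all $j\in[k]$, matching the converted share vector. This gives the required equality of share vectors and completes the verification. There is no serious obstacle here; the only mild subtlety is being careful that the ``randomness'' $\br'$ ranges over a set bijective with the degree-$\le dt$ polynomials fixing $\alpha_0$, so that $p$ is in the image, which is built into Definition~\ref{shamir}.
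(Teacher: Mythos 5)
Your proof is correct and is precisely the standard argument for the local multiplication property of Shamir sharing. The paper states Fact~\ref{shamir-product} without proof, citing it as well known from classical works on secure multiparty computation (\cite{STOC:BenGolWig88,STOC:ChaCreDam88,EC:CraDamMau00}); your argument fills in exactly that standard reasoning—each party multiplies its $d$ Shamir shares, the product polynomial $p=\prod_i p_i$ has degree at most $dt$ and satisfies $p(\alpha_0)=\psi(\bx)$, and the randomness space of the $dt$-private Shamir scheme ranges bijectively over degree-$\le dt$ polynomials with the prescribed value at $\alpha_0$, so some $\br'$ selects $p$.
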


%\mkw{added this remark}

\begin{remark}[Comparison of Theorem~\ref{reg-share-conversion} to the straightforward scheme]\label{rem:triv_concat_wont_work}
As noted in Remark~\ref{rem:triv_concat}, there is a straightforward HSS-for-concatenation with a different version of Shamir sharing, which arises from the $\ell$-LMSSS of Franklin-Yung~\cite{FY92} (Remark~\ref{rem:ell-shamir}(b)).  This scheme matches the parameters of Theorem~\ref{reg-share-conversion} (and in particular is optimal for the parameters to which it applies).  However, we cannot use that straightforward scheme for our next step, which is to generalize to $\POLY_d^\ell$.  The reason is that the generalization uses Fact~\ref{shamir-product}.  This fact is not true for the sharing that arises from Remark~\ref{rem:triv_concat}, since to ensure $t$-privacy, the degree of the polynomials used to share must be $t + \ell - 1$, rather than $t$.
\end{remark}

By combining Fact \ref{shamir-product} with Theorem \ref{reg-share-conversion} we conclude the following.

\begin{theorem}[HSS-for-polynomials from Shamir]
\label{thm:shamir-hss}
Let $\FF$ be a finite field.  Let $m$ be a positive integer.  Let $b \geq \log_{|\FF|}(k)$ be a positive integer and let $\ell = b(k - dt)$.  
There is a $t$-private $k$-server linear HSS $\Pi = (\Share, \Eval, \Rec)$ for $\POLY_{d,m}(\FF)^\ell$ so that $\Share$ is Shamir sharing (Definition~\ref{shamir}), where the upload cost is $kmb^2 (k-dt)^2 \log_2|\FF|$, and the download cost is $kb \log_2|\FF|$.  Consequently, the download rate is $1 - dt/k$.

%where the input shares are Shamir shares with upload cost $m\cdot k\cdot\lceil \log_{|\mathbb{F}|}(k)\rceil\cdot \log_2|\mathbb{F}|$, and $\ell=(k-dt)\cdot \lceil \log_{|\mathbb{F}|}(k)\rceil \cdot w$, $w>0$ an integer, with download cost $k\cdot w\cdot \log_2|\mathbb{F}|$. Consequently, its download rate is $1-dt/k$.
\end{theorem}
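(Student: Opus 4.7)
The plan is to extend Theorem~\ref{reg-share-conversion} from HSS-for-concatenation to $\POLY_{d,m}(\FF)^\ell$ by composing $t$-Shamir sharing with the local multiplication of Fact~\ref{shamir-product} and a two-level packing/compression trick that handles the case $|\FF| < k$. As in the proof of Theorem~\ref{thm:cnf-hss}, linearity lets us assume that $f$ consists of $\ell$ monomials of degree at most $d$ (lower-degree monomials can be padded with shares of the constant~$1$). Since $b \geq \log_{|\FF|}(k)$, we may fix an extension $\FF'$ of $\FF$ of degree $b$, which satisfies $|\FF'| \geq k$ and admits $k$ distinct Shamir evaluation points $\alpha_1,\ldots,\alpha_k \in \FF'$. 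Let $\EE$ be the degree-$(k-dt)$ extension of $\FF'$, so $[\EE:\FF] = b(k-dt) = \ell$ and $|\EE| = |\FF|^\ell$; fix $\gamma' \in \EE$ primitive over $\FF'$, $\beta \in \FF'$ primitive over $\FF$, and set $\alpha_0 := \gamma'$. Reindex the $\ell$ outputs as $f_{g,l}$ for $(g,l) \in [k-dt] \times [b]$.

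Define $\Share$ to be $t$-private Shamir sharing of each $x_{i,r} \in \FF$ using a random polynomial in $\EE[X]$ of degree $\leq t$, evaluated at $\alpha_0,\alpha_1,\ldots,\alpha_k$. Then $\Eval$ proceeds locally in three substeps. (a) \emph{Multiplication:} apply Fact~\ref{shamir-product} to each monomial $f_{g,l}$ to convert the $t$-Shamir shares of its variables into a $dt$-Shamir share $q_{g,l}(\alpha_j)$ of $f_{g,l}(\bx_{g,l}) \in \FF$, with $\deg q_{g,l} \leq dt$ and $q_{g,l}(\gamma') = f_{g,l}(\bx_{g,l})$. (b) \emph{Packing:} set $\tilde q_g(X) := \sum_{l=1}^b q_{g,l}(X)\,\beta^{l-1}$; this polynomial has degree $\leq dt$ and satisfies $\tilde q_g(\gamma') = \tilde x_g := \sum_{l=1}^b f_{g,l}(\bx_{g,l})\,\beta^{l-1} \in \FF'$, so the servers now hold $dt$-Shamir shares over $\EE$ of $k-dt$ super-outputs in $\FF'$. (c) \emph{Compression:} let $F(X) := \sum_{g=1}^{k-dt} \tilde q_g(X)\,(\gamma')^{g-1}$, still of degree $\leq dt$, with $F(\gamma') = \sum_g \tilde x_g (\gamma')^{g-1} \in \EE$. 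Because $\deg(X^{s-1}F(X)) \leq k-1$ for each $s \in [k-dt]$, Lagrange interpolation at $\alpha_1,\ldots,\alpha_k$ provides coefficients $\lambda_j \in \EE$ (independent of $F$) with $(\gamma')^{s-1}F(\gamma') = \sum_j \lambda_j \alpha_j^{s-1}F(\alpha_j)$, and server $j$ outputs $w^{(j)} := \tr_{\EE/\FF'}\!\bigl(\lambda_j F(\alpha_j)\bigr) \in \FF'$.

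The reconstruction $\Rec$ inverts these three steps via an $\FF$-linear (in fact $\FF'$-linear) map: using $\alpha_j \in \FF'$ and $\FF'$-linearity of the trace, compute $\sum_{j=1}^k \alpha_j^{s-1} w^{(j)} = \tr_{\EE/\FF'}\!\bigl((\gamma')^{s-1}F(\gamma')\bigr)$ for each $s \in [k-dt]$; recover $F(\gamma') \in \EE$ from these $k-dt$ trace values, using that $\{(\gamma')^{s-1}\}_{s \in [k-dt]}$ is an $\FF'$-basis of $\EE$ and that the pairing $(x,y) \mapsto \tr_{\EE/\FF'}(xy)$ is nondegenerate; extract the super-outputs $\tilde x_g \in \FF'$ as the coordinates of $F(\gamma')$ in that same basis; and finally unpack each $\tilde x_g$ into its $b$ constituents $f_{g,l}(\bx_{g,l}) \in \FF$ using the $\FF$-basis $\{\beta^{l-1}\}_{l \in [b]}$ of $\FF'$. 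Standard $t$-privacy of Shamir sharing over $\EE$ gives the privacy guarantee.

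The parameters then follow immediately. Each $w^{(j)}$ is one element of $\FF'$, yielding download $k \log_2 |\FF'| = kb \log_2 |\FF|$; each of the $m\ell$ inputs is shared into $k$ elements of $\EE$, contributing upload $k m \ell \cdot b(k-dt) \log_2 |\FF| = k m b^2 (k-dt)^2 \log_2 |\FF|$; and the output length is $\ell \log_2 |\FF| = b(k-dt) \log_2 |\FF|$, so the download rate equals $1 - dt/k$. The main point that will require care in the full proof is the bookkeeping across the three nested field extensions $\FF \subset \FF' \subset \EE$, namely verifying that packing and the trace-to-$\FF'$ step preserve $\FF$-linearity of $\Rec$ and that every degree and basis condition above holds without implicitly assuming $|\FF| \geq k$.
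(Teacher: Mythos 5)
Your proof is correct and implements the same construction as the paper: $t$-Shamir sharing over a degree-$\ell$ extension $\EE$ of $\FF$, local multiplication via Fact~\ref{shamir-product}, packing with a $\FF$-basis of the intermediate field $\FF'$, and the trace-based compression over $\EE/\FF'$ from Theorem~\ref{reg-share-conversion}. The paper organizes the argument differently---it first establishes the $b=1$ case (requiring $|\FF|\geq k$) and then handles general $b$ by running $b$ independent copies over $\tilde{\FF}=\FF'$ and packing the $\FF'$-valued output shares into one $\FF'$-element per server using a primitive element of $\FF'/\FF$---but unwinding that two-step reduction yields precisely your one-pass construction over the tower $\FF\subset\FF'\subset\EE$, so the resulting schemes and parameters coincide.
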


\begin{proof}
We first observe that a proof nearly identical to the proof of the first part of Theorem~\ref{thm:cnf-hss} establishes the theorem for $b = 1$, if it holds that $|\FF| \geq k$.  (Indeed, we use Theorem~\ref{reg-share-conversion} rather than Theorem~\ref{folded-codes} for the HSS-for-concatenation, and Fact~\ref{shamir-product} rather than Lemma~\ref{cnf-product} for the share conversion.)  To see the upload and download costs for the $b=1$ case, notice that each server $j \in [k]$ holds as an input share an element of $\EE$ (as in the statement of Theorem~\ref{reg-share-conversion}), which is $\log_2|\EE| = \ell \log_2|\FF|$ bits.  Thus the upload cost is $k m \ell^2 \log_2|\FF| = km(k-dt)^2 \log_2|\FF|$ bits.  Each output share is a single element of $\FF$, so the download cost is $k\log_2|\FF|$.  Thus, the download rate is \[ \frac{ \ell \log_2|\FF| }{k \log_2 |\FF| } = 1 - dt/k, \] as desired.

In order to prove the theorem for $b \geq 1$ (and hence possibly for $|\FF| < k$), let $\tilde{\FF}$ be an extension field of $\FF$ of degree $b$, so that $|\tilde{\FF}| = |\FF|^b \geq k$.  Let $\gamma$ be a primitive element of $\tilde{\FF}$ over $\FF$.  Then let $\Pi' = (\Share', \Eval', \Rec')$ be the HSS scheme over $\tilde{\FF}$ guaranteed by the $b=1$ case by the above, for $\tilde{\ell} = k-dt$.  We will define $\Pi = (\Share, \Eval, \Rec)$ as follows.
\begin{itemize}
    \item \textbf{Sharing.} Suppose the inputs to $\Pi$ are $\bx_1, \ldots, \bx_\ell \in \FF^m$. Organize these inputs as $\bx_{i,r}$ for $i \in [b]$ and $r \in [\tilde{\ell}]$. The $\Share$ function shares each $x_{i,r}$ independently by applying $\Share'$ to $(x_{i,r})_{r \in [\tilde{\ell}]}$ independently for each $i \in [b]$.  Suppose that these input shares are given by $y_{i,r}^{(j)} \in \tilde{\FF}$ for each $i \in [b], r \in [\tilde{\ell}]$ and for each server $j \in [k]$.  
    \item \textbf{Evaluation.} For $f \in \POLY_{d,m}(\FF)^\ell$ given by $(f_1, \ldots, f_{\ell})$ and re-organized as $(f_{i,r})_{i \in [b], r \in [\tilde{\ell}]}$, define
\[ z^{(j)} := \Eval(f,j, (y_{i,r}^{(j)})_{i \in [b], r \in [\tilde{\ell}]}) = \sum_{i \in [b]} \Eval'( f, j, (y_{i,r}^{(j)})_{r \in [\tilde{\ell}]} )\gamma^i, \]
so that $z^{(j)} \in \tilde{\FF}$.  
\item \textbf{Reconstruction.} Suppose the output shares are $z_1, \ldots, z_k$ as above.  Then by linearity and the correctness of $\Rec'$,
\begin{align*}
\Rec'(z^{(1)}, \ldots, z^{(k)}) &= \sum_{i \in [b]} \gamma^i \Rec'( \Eval'( f, j, ((y_{i,r}^{(j)})_{r \in [\tilde{\ell}]})_{j\in[k]}))  \\
&= \sum_{i \in [b]} \gamma^i (f_{i,r}(x_{i,r}))_{r \in [\tilde{\ell}] }.
\end{align*}
Since $\gamma, \gamma^2, \ldots, \gamma^b$ form a basis of $\tilde{\FF}$ over $\FF$, and since $(f_{i,r}(x_{i,r})_{r \in [\tilde{\ell}]} \in \FF^{\tilde{\ell}}$, given this, we can recover all of the vectors $(f(x_{i,r})_{r \in [\tilde{\ell}]} \in \FF^{\tilde{\ell}}$ for all $i \in [b]$.  Thus we, after re-arranging, we have assembled $(f_s(x_s))_{s \in [\ell]}$, as desired.  This defines $\Rec$.

\end{itemize}

It remains to verify the upload and download costs. The upload cost is $b$ times the upload cost for $\Pi'$, for a total of 
\[ b \cdot km (k-dt)^2 \log_2|\tilde{\FF}| = b^2 km (k-dt)^2 \log_2|\FF|.\]
The download cost is the same as the download cost for $\Pi'$, which is
\[ k \log_2|\tilde{\FF}| = kb \log_2|\FF|. \]
Consequently, the download rate is
\[ \frac{ \ell \log_2{|\FF|}}{kb \log_2|\FF|} = 1 - dt/k. \]

% \mkw{Upload and download cost from previous version, where we take b copies of the \ell' = (k-dt) version but where \FF is big enough.  This (I think) gets only a linear dependence on b, but in order to incorporate it into the above we'd need two b's floating around and that feels like a bit much.}
%
%The Shamir sharing used by $\Pi'$ shares $m \ell'$ secrets in $\FF$ independently as elements of $\EE$, where $|\EE| = |\FF|^{\ell'}$.  Thus, the total upload cost is \[km\ell \log|\EE| = mb(k - dt)^2 \log_2|\FF|\] bits.
%
%The $\Eval'$ function downloads $b$ symbols of $\FF$ from each party, for a total download cost of $bk \log_2|\FF|$ bits.  The download rate is thus
%\[ \frac{ \ell \log_2|\FF| }{bk \log_2|\FF|} = \frac{ b(k-dt) }{bk} = 1 - \frac{dt}{k}. \]
\end{proof}

Finally, note that if $b$ is a multiple of $\lceil \log_{|\FF|}(k) \rceil$ in Theorem~\ref{thm:shamir-hss}, then the upload cost can be improved to $kmb\lceil \log_{|\FF|}(k) \rceil (k-dt)^2 \log_2|\FF|$. This is achieved by initializing the scheme from Theorem~\ref{thm:shamir-hss} with $b=\lceil \log_{|\FF|}(k) \rceil$, and then repeating it $b/\lceil \log_{|\FF|}(k) \rceil$ times.

\subsubsection{Comparison between CNF-based HSS and Shamir-based HSS}\label{sec:comparison}

In this section we comment briefly on the differences between Theorem~\ref{thm:cnf-hss} and Theorem~\ref{thm:shamir-hss} in particular and on the differences between CNF and Shamir sharing more generally.

\paragraph{Comparison of our HSS schemes.}
For simplicity, we focus on our schemes for HSS-for-concatenation (Theorems~\ref{thm:cnf-hss-d1} and \ref{reg-share-conversion}), and the case where $m=1$ and where the $\Eval$ functions of both schemes output a single element of the same field (that is, $b=1$).  Thus, there are only two parameters to adjust: the number of parties $k$ and the privacy $t$.  The goal is to maximize the download rate (or equivalently to minimize the number of repetitions $\ell$).  

Note that for our Shamir-based HSS, the download rate is $1 - t/k$ whenever the scheme is defined, and in particular when $b=1$ we always take $\ell = k-t$ and enforce $|\FF| \geq k$.  Note that our CNF-based HSS scheme is defined for any field and any $\ell$, since linear codes exist for all such choices.

%\mkw{ Yuval asked: ``for every $k$, what is the biggest $\ell$ you can accommodate when every server sends only one bit.  For CNF, the answer is the minimal distance.  We should figure out the answer for our Shamir-based HSS to compare it.'' Unfortunately, I don't think that this exercise is too enlightening.  The reason is that the Shamir-based HSS requires $k \leq |\F|$ in order to be defined when $b=1$ (even if we share over a larger field $\EE$), and in that parameter regime, both schemes get $\ell = k-t$ (or download rate $1 - t/k$).  We can try compare to Shamir for secrets in smaller fields by  embedding, say, binary secrets into a larger field $\FF$.  Obviously this gives a worse result by a factor of about $\log(k)$ since we are communicating with larger field symbols.  I don't find this particularly enlightening, but I've included it below in case it's helpful.  I've also  tried to quantitatively highlight what I think are the meaningful differences, which are share size (points for Shamir) and being defined (and optimal) in any parameter regime (points for CNF).} \yi{Taking a quick look, the current comparison looks great. }

Both our CNF-based HSS and our Shamir-based HSS achieve the optimal download rate $1 - t/k$ when they are both defined, but there are three important differences.
\begin{itemize}
\item \textbf{Upload cost.} The main advantage of the Shamir-based HSS (Theorem~\ref{reg-share-conversion}) is that the upload cost is smaller: when $\ell=k-t$ (so that our Shamir-based HSS exists), the upload cost for our Shamir-based HSS is $k(k-t)^2 \log_2|\FF|$ bits, while the upload cost for our CNF-based HSS is $k(k-t){k-1\choose t}\log_2|\FF|$ bits.  Thus, the Shamir-based HSS has a smaller upload cost whenever $k-t < {k-1 \choose t}$.
\item \textbf{Flexibility.} The main advantage of our CNF-based HSS is that the parameter regime in which it is defined is much less restrictive. 
More concretely, the Shamir-based HSS requires that the field $\F$ have size at least $k$ (when $b=1$ as we consider here), and it requires $\ell \geq k-t$.  In contrast, the CNF-based HSS applies for any $\ell$ and over any field, as linear codes of rate $\ell/(kb)$ exist for any $\ell$ over any field.

\item \textbf{Download rate.}  As mentioned above, whenever the Shamir-based HSS is defined, both schemes have the same (optimal) download rate.  However, the CNF-based HSS applies more generally, and in the case of HSS-for-concatenation has optimal rate for any linear HSS whenever it is defined, as per Theorem~\ref{thm:cnf-hss-d1}.  

In order to get a meaningful comparison between our two HSS constructions, one can try to apply the Shamir-based HSS over a field of size less than $k$ by embedding the secrets into a field $\mathbb{K} \geq \mathbb{F}$, where $|\mathbb{K}| \geq k$.  When we do this, the CNF-based HSS attains a strictly better download rate. 

For example, suppose that $\FF = \FF_2$, and $t=2$, and suppose for simplicity that $k = 2^r - 1$ for some $r$.  Furthermore, suppose that we only allow $\ell\leq k$. In this case, the best CNF-based HSS is given by the best binary linear code with distance $3$ and length $2^r - 1$, as per Theorem~\ref{thm:cnf-hss-d1}.  This is the Hamming code of length $2^r - 1$ (see Fact \ref{fact:ham}).  Translating the parameters of the Hamming code to HSS, the CNF-based HSS has $\ell = k - r$ and hence download rate 
\[1 - \frac{r}{k} \approx 1 - \frac{\log k}{k}.\]
In contrast, the Shamir-based HSS has $\ell = k - 2$ (recalling that $t=2$ for this comparison), but is working over a bigger field of size $2^r$ and so has download rate 
\[\frac{1 - t/k}{r} \approx \frac{1 - 2/k}{\log k}. \]
\end{itemize}

\paragraph{Comparison of CNF and Shamir sharing, with any HSS scheme.}

One may wonder whether there is an inherent limitation to Shamir sharing relative to CNF sharing.  That is, could it be possible that \em some \em HSS based on Shamir sharing can match the performance of our CNF-based sharing scheme?

%It turns out that the answer is trivially ``yes,'' in the sense that Shamir sharing, where the shares lie in a large enough extension field $\EE$ over $\FF$, can be locally converted to CNF sharing. \victor{(for threshold adversary structures)} 
%\mkw{Should we mention shamir to cnf share conversion?  I can put in the construction in the appendix if we want to.} \yi{Yes, this would be great. I think it is conceptually interesting and can  mention it in the introduction. }
%\mkw{Hm, I forgot that we hadn't actually worked out the conversion from $t$-shamir to $t$-cnf (only $1$-shamir to $t$-cnf).  It seems like it might be messy and not a high priority, so maybe it's better just to omit this.  I commented out the prose that was here and wrote some new prose to reflect it.  (FWIW, I still think that the statement is true for a large enough field -- if you think it's interesting perhaps we can work it out for a future project.)} \yi{Sounds good!}

%However, this results in the Shamir-based HSS actually having a \emph{larger} share size than the CNF-based HSS, perhaps defeating the purpose.

%It is thus more enlightening to ask 
More precisely, we ask if there is a Shamir-based HSS that can match the performance of our CNF-based HSS, \emph{where the field $\EE$ that the Shamir shares lie in is the smallest such that $|\EE| > k$} (aka, the smallest so that Shamir sharing is defined). The reason to phrase the question this way is that if we allowed $\EE$ to be arbitrarily large, it might be the case that the Shamir-based HSS had a larger share size than the CNF-based HSS, undoing the advantage of the Shamir-based HSS.

It turns out that the answer to this question can be ``no,'' as the following example shows.

%\begin{remark}
%If $|\mathbb{F}|>k$ both Theorem  \ref{thm:cnf-hss} and Corollary \ref{shamir-hss} imply the same download cost, when considering a general integer $\ell>0$. However, if $|\mathbb{F}|\leq k$, for small values of $\ell$, sometimes Corollary \ref{shamir-hss} has a worse download cost (due to needing to pack multiple secrets in $\mathbb{F}'=\mathbb{F}^{\lceil \log_{|\mathbb{F}|}(k)\rceil})$. Furtheremore, this is inherent in cases where Shamir sharing gives us savings in input share size, as we show in the example below (which would have yielded savings as the share size of Shamir is just $12$ bits, compared to CNF's $16$ bits).
%\end{remark}

\begin{proposition}\label{prop:impossibleShamir}
Consider the problem of constructing $1$-private $5$-server HSS for $\CONCAT_4(\FF_2)$.

There is a $\F_2$-linear HSS that solves this problem with CNF sharing and download cost $5$ bits (that is, one bit per server).
However, there is \emph{no} $\F_2$-linear HSS that solves this problem with Shamir sharing and with download cost $5$, where each of the $5$ secrets in $\FF = \FF_2$ are shared independently over $\EE = \FF_8$.
\end{proposition}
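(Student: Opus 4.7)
The positive part is a direct instantiation of Theorem~\ref{thm:cnf-hss-d1}: a $1$-private $5$-server $\F_2$-linear HSS for $\CONCAT_4(\F_2)$ with single-bit output shares is equivalent to an $\F_2$-linear code $C \subseteq \F_2^5$ of dimension at least $4$ and distance at least $2$, and the single-parity-check code $\{\vec{c}\in\F_2^5 : c_1+\cdots+c_5=0\}$ achieves exactly this.

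For the impossibility direction, suppose a Shamir-based scheme with total download $\leq 5$ exists. A quick $1$-privacy argument rules out any server outputting (almost) the entire $\vec{y}$, and an idle server reduces to a scheme with fewer effective servers; modulo these degenerate cases I may assume $b_j=1$ for all $j\in[5]$. Write each server's share as $\mathbf{s}^{(j)} = \bx + \alpha_j\br$ for distinct $\alpha_j \in \F_8\setminus\{0\}$, let $\phi^{(j)}:\F_8^4\to\F_2$ be server $j$'s evaluation function, and let $R\in\F_2^{4\times 5}$ represent $\Rec$ with kernel spanned by some $\vec{w}\in\F_2^5\setminus\{0\}$ of Hamming weight $|J|$. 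The key structural ingredient is a \emph{constancy lemma}: if $y_k = \phi^{(k)}(\bx+\alpha_k\br)$ is independent of $\br$ for every $\bx$, then $\phi^{(k)}$ must be a constant function, because $\br\mapsto \bx+\alpha_k\br$ is an $\F_2$-bijection of $\F_8^4$. When $|J|\leq 4$, pick any $k\notin J$: both elements of the coset $\Rec^{-1}(\bx)$ agree in position $k$, so $y_k$ is determined by $\bx$ and is hence a constant $c_k$ by the lemma. Since $\sum_{k\in J} R_{\cdot,k}=R\vec{w}=0$, the columns of $R$ indexed by $J$ have rank at most $|J|-1$, so $\bx = \sum_{k\in J} R_{\cdot,k}y_k + \sum_{k\notin J}R_{\cdot,k}c_k$ is confined to an affine subspace of $\F_2^4$ of dimension at most $|J|-1\leq 3$, contradicting $\bx$ ranging over all $16$ elements of $\F_2^4$.

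The case $|J|=5$, so $\vec{w}=(1,1,1,1,1)$, is the heart of the argument. Choosing $\{R_{\cdot,1},\ldots,R_{\cdot,4}\}$ as a basis of $\F_2^4$ gives $R_{\cdot,5}=\sum_{k=1}^4 R_{\cdot,k}$ and $(y_k+y_5)_{k\leq 4} = M^{-1}\bx$, so each sum $y_k+y_5$ is a function of $\bx$ alone. Specializing to $\bx=0$ and substituting $\br\mapsto\alpha_5^{-1}\mathbf{u}$ yields $\phi^{(k)}(\bv)=f_k(0)+\phi^{(5)}(\bv/\beta_k)$ with $\beta_k=\alpha_k/\alpha_5\neq 1$; plugging this back into the general-$\bx$ relation gives the discrete-derivative identity
\[
\phi^{(5)}(\bv'+\gamma_k\bx) + \phi^{(5)}(\bv') = \tilde{\ell}_k(\bx),\qquad \gamma_k = 1/\beta_k - 1 \in \F_8\setminus\{0,1\},
\]
so $\phi^{(5)}$ has constant discrete derivative in every direction of $W_k := \gamma_k\cdot\F_2^4 \subset \F_8^4$. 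Since any four distinct nonzero elements of the $3$-dimensional $\F_2$-space $\F_8$ must already $\F_2$-span $\F_8$, we get $\sum_{k=1}^4 W_k = \F_8^4$, so $\phi^{(5)}$ (and hence every $\phi^{(k)}$) is $\F_2$-affine.

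To close, write the linear part of $\phi^{(j)}$ on coordinate $l$ in trace form as $s_l\mapsto \tr(a^{(j)}_l s_l)$ with $a^{(j)}_l\in\F_8$. Matching the $\br$-dependence in $R\vec{y}=\bx$ server by server forces $a^{(j)}_l = z_l/\alpha_j$ for some $z_l\in\F_8$, and matching the $\bx$-dependence reduces correctness to the conditions $\tr(z_l q_i) = \delta_{il}$ for $i,l\in[4]$, where $q_i := \sum_j R_{i,j}/\alpha_j \in \F_8$. Non-degeneracy of the trace pairing then demands that $q_1,q_2,q_3,q_4$ be $\F_2$-linearly independent in $\F_8$, which is impossible since $\dim_{\F_2}\F_8=3$. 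The main difficulty in this plan is precisely the $|J|=5$ step, since every other case collapses to a quick affine-subspace dimension count once the constancy lemma is in hand.
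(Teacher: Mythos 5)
Your positive direction is the same as the paper's (both instantiate Theorem~\ref{thm:cnf-hss-d1} with the $[5,4,2]$ parity-check code / its associated LMSSS).

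Your negative direction is correct but takes a genuinely different route. The paper reduces the problem — after asserting that $\Eval$ may be taken $\F_2$-linear — to a finite linear-algebraic parametrization (a choice of evaluation points together with a vector $\by\in\F_2^{12}$) and then settles it by an exhaustive computer search. You instead give a purely analytical proof that case-splits on the Hamming weight $|J|$ of the kernel vector of $\Rec\in\F_2^{4\times 5}$. For $|J|\le 4$, your constancy lemma (if $y_k$ is determined by $\bx$ alone, then since $\br\mapsto\bx+\alpha_k\br$ surjects onto $\F_8^4$ the bit is globally constant) forces the reconstruction image into an affine subspace of dimension $\le |J|-1\le 3$, a contradiction. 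For $|J|=5$, the heart of your argument is the discrete-derivative step: the identity $\phi^{(5)}(\bv'+\gamma_k\bx)+\phi^{(5)}(\bv')=\ell_k(\bx)$ gives constant derivatives in directions $W_k=\gamma_k\F_2^4$, the set of such directions is an $\F_2$-subspace, and since $\gamma_1,\dots,\gamma_4$ are four distinct nonzero elements of the $3$-dimensional space $\F_8$ they $\F_2$-span it, forcing $\phi^{(5)}$ (hence all $\phi^{(j)}$) to be affine; the trace-pairing endgame then hits the $4$-vs-$3$ dimensional wall in $\F_8$. I verified each of these steps. A nice bonus is that your route proves, rather than assumes, that $\Eval$ is affine; the paper's footnote justifying ``WLOG $\Eval$ is $\F_2$-linear'' via $\Share'\circ\Share^{-1}$ is delicate (that composition is not obviously a local map), whereas you bypass the issue entirely. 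What each buys: the paper's proof is short and mechanical and would adapt trivially to nearby parameter choices by re-running the search; yours explains \emph{why} the scheme fails and generalizes more transparently.

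Two small points to tighten. First, the dismissal of ``idle server'' cases should cite something concrete: if some $b_j=0$, the remaining $\le 4$ servers must support a $1$-private HSS for $\CONCAT_4(\F_2)$ with $5$ bits, which Lemma~\ref{lem:negNotLinear} (or Theorem~\ref{thm:barrier}) rules out since $\tfrac{4\cdot4}{4-1}>5$; this is what makes ``WLOG all $b_j=1$'' legitimate. Second, the stray $f_k(0)$ in the $\bx=0$ specialization should read $\ell_k(0)$, which vanishes since $\ell_k$ is linear, so it does not affect the argument.
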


We defer the proof of Proposition~\ref{prop:impossibleShamir} to Appendix~\ref{app:impossibleShamir}.
Briefly, the possibility result (that is, that there is such a CNF-based HSS) follows from an example of such a scheme.
The impossibility result (that is, that there is no such Shamir-based HSS) follows from a computer search.  Naively such a search (for example, over all sets of $k=5$ linear functions from $\FF_8^4 \to \FF_2$) is not computationally tractable.  Instead we first analytically reduce the problem to one that is tractable, and then run the search.   

%\mkw{moved proof to appendix and added a short overview}

Notice that the share size per server is $16$ bits with CNF, and $12$ bits with Shamir (when the shares are in $\FF_8$).  In particular, if $\EE$ were any larger extension field of $\FF_2$ than $\FF_8$, the Shamir scheme an upload cost at least as large as the CNF scheme.  Thus, the example when $\EE = \FF_8$ is the most generous for Shamir-based HSS if we demand that the Shamir-based HSS still has an upload cost advantage.

\subsubsection{Application to Private Information Retrieval} \label{sec:linearPIR}

As a simple application of HSS for low-degree polynomials, we extend the information-theoretic PIR protocol from \cite{Chor:1998:PIR:293347.293350}
%\cite{Chor:1998:PIR:293347.293350,Ambainis97,WoodruffY05}, 
to allow better download rate by employing more servers, while maintaining the same (sublinear) upload cost.
%choosing $k>dt+1$.

%\yi{Edited the above text to better reflect the relation with previous schemes. }

\begin{theorem}[PIR with sublinear upload cost and high download rate]\label{thm:PIR}
For all integers $d,t,k,w>0$, such that $dt+1\leq k$, there is a $t$-private $k$-server PIR protocol for $(w\cdot(k-dt)\cdot\lceil \log_2 k\rceil)$-bit record databases of size $N$ such that:
\begin{itemize}
    \item The upload cost is $O(k^3\log k\cdot N^{1/d})$ bits;
    \item The download cost is $wk\lceil \log_2 k\rceil$ bits. Consequently, the rate of the PIR is $1-dt/k$.
\end{itemize}
\end{theorem}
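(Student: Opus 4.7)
The plan is to combine the Shamir-based HSS for degree-$d$ polynomials from Theorem~\ref{thm:shamir-hss} with the classical reduction from PIR to evaluating a degree-$d$ polynomial on a secret-shared query~\cite{Chor:1998:PIR:293347.293350}. Concretely, I will instantiate the $b=1$ version of Theorem~\ref{thm:shamir-hss} over an auxiliary field $\tilde\FF$ taken to be an extension of $\FF_2$ of degree $s=\lceil\log_2 k\rceil$, so that $|\tilde\FF|\ge k$ (as required by Theorem~\ref{reg-share-conversion}) and each element of $\tilde\FF$ fits in $s$ bits.

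\textbf{Reduction.} I identify $[N]$ with $[n]^d$ for $n=N^{1/d}$ and write the queried index as $j=(j_1,\dots,j_d)$. Each record is parsed as $w(k-dt)$ elements of $\tilde\FF$, grouped into $w$ blocks of $k-dt$ chunks each. The $(\mu,\nu)$th chunk of the target record then equals the degree-$d$ polynomial
\[
f_{\mu,\nu}(x_1,\dots,x_d)=\sum_{i\in[n]^d}D_{i,\mu,\nu}\prod_{\ell=1}^{d}(x_\ell)_{i_\ell}
\]
evaluated at $(e_{j_1},\dots,e_{j_d})$, where $e_{j_\ell}\in\tilde\FF^{n}$ is the indicator vector of $j_\ell$ and $D_{i,\mu,\nu}\in\tilde\FF$ is the $(\mu,\nu)$th chunk of the $i$th record. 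Thus the PIR reduces to evaluating $w(k-dt)$ degree-$d$ polynomials in $m=dn=dN^{1/d}$ variables on the \emph{same} shared query.

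\textbf{Executing and accounting.} The client will $t$-Shamir-share each of the $m$ query variables \emph{once} as an element of the extension $\EE$ of $\tilde\FF$ of degree $k-dt$ used in Theorem~\ref{reg-share-conversion}. Each of the $k$ servers then receives one share per variable, of $(k-dt)\lceil\log_2 k\rceil$ bits, so the total upload is $km(k-dt)\lceil\log_2 k\rceil = k\cdot dN^{1/d}\cdot(k-dt)\lceil\log_2 k\rceil = O(k^3\log k\cdot N^{1/d})$ bits (using $d(k-dt)\le k^2$). For each block $\mu\in[w]$, the servers invoke $\Eval$ of Theorem~\ref{thm:shamir-hss} on the $\ell=k-dt$ polynomials $\{f_{\mu,\nu}\}_\nu$ and the shared query, each sending back a single element of $\tilde\FF$; the client runs $\Rec$ per block. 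The download across all $w$ blocks is $wk\lceil\log_2 k\rceil$ bits, and the rate is $w(k-dt)\lceil\log_2 k\rceil / (wk\lceil\log_2 k\rceil) = 1-dt/k$, as required.

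\textbf{Main obstacle.} The one point that needs justification is \emph{input reuse}. Theorem~\ref{thm:shamir-hss} is phrased for $\POLY_{d,m}(\FF)^\ell$, where the $\ell$ polynomials are formally applied to \emph{independently} shared input tuples, whereas the PIR feeds the same shared query into all $\ell=k-dt$ polynomials in each block (and indeed across all $w$ blocks). This will be harmless because the sharing in Theorem~\ref{thm:shamir-hss} is input-oblivious $t$-Shamir sharing: the client shares each query variable once, and each server applies the local $\Eval$ for every $f_{\mu,\nu}$ on the same shares; $t$-privacy is unaffected since any $t$-coalition still sees only $t$-private Shamir shares per variable, and correctness for each individual polynomial follows directly from Theorem~\ref{thm:shamir-hss}. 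Without this observation we would pay an extra factor of $k-dt$ in the upload cost, losing the claimed bound.
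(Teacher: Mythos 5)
Your proof is correct and follows the same approach as the paper: invoke the Shamir-based HSS of Theorem~\ref{thm:shamir-hss} (with $b=1$ over $\FF_{2^{\lceil\log_2 k\rceil}}$) on a classical polynomial encoding of PIR, packing $k-dt$ field elements per record. Two differences are worth noting. First, your index encoding uses $[N]\cong[n]^d$ with indicator vectors, while the paper takes the smallest $m$ with $\binom{m}{d}\ge N$ and uses Hamming-weight-$d$ vectors; both are standard CGKS-style encodings yielding $m=\Theta(dN^{1/d})$, though yours requires $N$ to be a perfect $d$-th power (or padding). Second, and more substantive, you observe that the query can be Shamir-shared once and those shares reused across all $\ell=k-dt$ polynomial evaluations, saving a factor of $k-dt$ in upload; the paper does not make this optimization, simply quoting the upload cost $km(k-dt)^2\lceil\log_2 k\rceil$ from Theorem~\ref{thm:shamir-hss} as a black-box bound (which still fits $O(k^3\log k\cdot N^{1/d})$). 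Your reuse argument is sound: $t$-privacy is preserved since each query variable is still shared just once via $t$-private Shamir, and correctness holds because the local share conversions underlying Theorem~\ref{thm:shamir-hss} (Fact~\ref{shamir-product} and Theorem~\ref{reg-share-conversion}) are polynomial identities valid for \emph{any} admissible input shares, not only independently sampled ones. One small phrasing quibble: correctness does not follow ``directly from Theorem~\ref{thm:shamir-hss}'' as stated, since that theorem assumes independently shared inputs; it follows from unrolling its proof, and you should make this explicit rather than citing the theorem black-box.
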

\begin{proof}
Without loss of generality, assume $w=1$, as the scheme can be repeated to apply for arbitrary $w$. To this end, we view the database of size $N$ as $k-dt$ vectors $(D_{i,1},\ldots, D_{i,N})\in(\mathbb{F}_{2^{\lceil \log_2 k\rceil}})^N$, $i=1,\ldots,k-dt$, such that each record with index $j\in [N]$ in the database consists of $(D_{1,j},D_{2,j},\ldots,D_{{k-dt,j}})$. Next, fix an injective mapping $\eta:[N]\rightarrow W(d,m)$, where $W(d,m)$ is the set of all vectors from $\{0,1\}^m$ with Hamming weight $d$, $m>0$ is the smallest integer such that $N\leq \binom{m}{d}$. In addition, for $j\in[N]$ and $\ell\in [d]$ let $\eta(j)_\ell$ denote the $\ell$'th nonzero coordinate of $\eta(j)$. Furthermore, for $i=1,\ldots,k-dt$ define the polynomial over $\mathbb{F}_{2^{\lceil \log_2 k\rceil}}[z_1,\ldots,z_m]$:
\[
p_i(z_1,\ldots,z_m)=\sum_{j=1}^N D_{i,j}\cdot z_{\eta(j)_1}\cdots z_{\eta(j)_d}
\]
Now, notice that the polynomials $p_1,\ldots, p_{k-dt}$ jointly encode the database, in the sense that for every $j\in [N]$, $i=1,\ldots,k-dt$ we have $p_i(\eta(j))=D_{i,j}$. 

%\mkw{Added some detail about how to construct the PIR scheme.}

Let $\Pi$ be the $t$-private $k$-server HSS from Theorem \ref{thm:shamir-hss} for $\POLY_{d,m}(\mathbb{F}_{2^{\lceil \log_2 k\rceil}})^{(k-dt)}$ (with the choice of $b=1$).
Suppose that the client wishes to recover the $j$'th record  $(D_{1,j},D_{2,j},\ldots,D_{{k-dt,j}})$.  By construction, the client needs to receive the evaluation of all $k-dt$ polynomials $p_i$ on point $\eta(j)$. 
Thus, the client uses $\Pi$ to share the $\eta(j)$ among the $k$ servers; $t$-privacy of the PIR protocol follows from $t$-privacy of $\Pi$.  Next, the servers each compute their output shares according to $\Pi$ and return them to the client.  The client runs $\Pi$'s reconstruction algorithm to recover $p_i(\eta(j))$ for each $i$.  This determines the $j$'th record as noted above.

%Because the polynomials are, by definition, of degree $d$, the %client can use the HSS $\Pi$ from Theorem \ref{thm:shamir-hss} for %$\POLY_{d,m}(\mathbb{F}_{2^{\lceil \log_2 k\rceil}})^{(k-dt)}$. 

By the guarantees of Theorem \ref{thm:shamir-hss}, the upload cost is $k\cdot m\cdot(k-dt)^2\cdot\lceil \log_2 k\rceil=O(k^3\log k\cdot N^{1/d})$, as $m=O(N^{1/d})$. In addition, the download cost is $wk\lceil \log_2 k\rceil$ bits.
\end{proof}

\subsection{Negative Results for Linear HSS}\label{sec:linearNegative}

%\subsection{Negative Results}\label{sec:linearNegative}

In Section~\ref{sec:HSS-cnf}, we obtained linear HSS schemes for $\POLY_{d,m}(\FF)^\ell$ with rate $1 - dt/k$.  Our main result in this section, Theorem~\ref{thm:barrier}, shows that this bound is tight for {\em linear} HSS.  In particular, this implies that for linear HSS, there is an inherent overhead for polynomials of larger degree in terms of download cost or rate. 

Moreover, we also provide some simple negative results that (a) strengthen the bound of Theorem~\ref{thm:barrier} for linear HSS-for-concatenation for specific parameters (Proposition~\ref{cor:negDist}), and (b) provide a negative result for general (not necessarily linear) HSS, which is tight for $d=1$ (Theorem~\ref{lem:negNotLinear}).

\begin{theorem}\label{thm:barrier}
Let $t,k,d,m, \ell$ be positive integers so that $m \geq d$.
Let $\FF$ be any finite field.
Let $\Pi = (\Share, \Eval, \Rec)$ be a $t$-private $k$-server linear HSS for $\POLY_{d,m}(\FF)^\ell$.  
Then $dt < k$, and the download cost of $\Pi$ is at least $k\ell\log_2|\mathbb{F}|/(k-dt)$. Consequently, the download rate of $\Pi$ is at most $1-dt/k$. 
\end{theorem}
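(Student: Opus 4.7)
The plan is to establish the following key claim: for every $T \subseteq [k]$ with $|T| = dt$, the partial reconstruction map obtained by summing only the contributions of servers outside $T$ is surjective onto $\F^\ell$, which forces $\sum_{s \notin T} b_s \geq \ell$. Given this, summing the inequality over all $\binom{k}{dt}$ choices of $T$ and using the double-counting identity $\sum_T \sum_{s \notin T} b_s = \binom{k-1}{dt} \sum_s b_s$ yields $\sum_s b_s \geq k\ell/(k-dt)$, which converts directly into the claimed bounds on download cost and download rate. The impossibility case $dt \geq k$ is handled analogously: partitioning $[k]$ into at most $d$ blocks of size $\leq t$ lets one run the same argument below with $T = [k]$, giving the impossible inequality $0 \geq \ell$.

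For the key claim I would argue by contradiction. Write $\Rec(y^{(1)}, \ldots, y^{(k)}) = \sum_s C_s y^{(s)}$ for the linear reconstruction, and suppose some $\bw \in \F^\ell \setminus \{0\}$ satisfies $\bw^T C_s = 0$ for every $s \notin T$. Applying $\bw^T$ to $\bz = \sum_s C_s y^{(s)}$ yields the identity $u(\bx) := \bw^T f(\bx) = \sum_{s \in T} (\bw^T C_s)\, y^{(s)}(\bx, \br)$, which must hold for all inputs $\bx = (x_{i,j})$ and all sharing randomness $\br$. Since $m \geq d$, I take $f$ to be $\ell$ copies of the monomial $x_1 \cdots x_d$, so that $u(\bx) = \sum_{j=1}^\ell w_j \prod_{i=1}^d x_{i,j}$ is a nonzero polynomial of degree exactly $d$ containing a fully mixed monomial $w_{j^*} \prod_i x_{i,j^*}$ with $w_{j^*} \neq 0$. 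I next exploit $t$-privacy: writing the linear part of $\Share$ on a single input as $x \mapsto A_s x + R_s \br$, and partitioning $T = T_1 \sqcup \cdots \sqcup T_d$ into blocks of size $t$, $t$-privacy at each $T_\ell$ produces a vector $\br^{(\ell)} \in \F^e$ with $A_s = R_s \br^{(\ell)}$ for all $s \in T_\ell$. Hence the share of $x_{i,j}$ to $s \in T_\ell$ equals $R_s(\br_{i,j} + \br^{(\ell)} x_{i,j})$, and $y^{(s)}$ depends on $(\bx, \br)$ only through the ``shifted'' combinations $(\br_{i,j} + \br^{(\ell)} x_{i,j})_{i,j}$. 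The hypothesis now reads
\[
u(\bx) \;=\; \sum_{\ell=1}^d h_\ell\bigl( (\br_{i,j} + \br^{(\ell)} x_{i,j})_{i,j} \bigr)
\]
for some functions $h_\ell$, as an identity in $(\bx, \br)$.

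The final step, and the main obstacle, is to extract a contradiction from this representation. The intuition is that each $h_\ell$ couples the variables $\{x_{i,j}\}_{i,j}$ only through the single direction $\br^{(\ell)}$, so a sum of $d$ such terms cannot reproduce a fully mixed monomial $\prod_{i=1}^d x_{i, j^*}$ in $d$ independent variables. Concretely, I would apply the mixed partial $\partial_{x_{1,j^*}} \cdots \partial_{x_{d,j^*}}$ to both sides and then take further partials in the $\br_{i,j^*}$'s to kill the right-hand side, while the left-hand side remains the nonzero constant $w_{j^*}$. The main difficulty is making the derivative argument rigorous over an arbitrary finite field, where formal derivatives of polynomials can vanish in small characteristic; the standard workaround is to reinterpret the identity in the polynomial ring $\F[\bx, \br]$ via the unique polynomial representative of degree $<|\F|$ in each variable, or to lift the argument to an extension of $\F$ of size exceeding $d$. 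The linear-algebraic heart of the argument is then that the $d$ directional derivatives along the distinct vectors $\br^{(\ell)}$ cannot span the space of independent mixed partials $\partial_{x_{1,j^*}} \cdots \partial_{x_{d,j^*}}$, delivering the contradiction.
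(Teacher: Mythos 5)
Your approach differs genuinely from the paper's: the paper first reduces WLOG to $t$-CNF sharing via the CDI conversion theorem and then argues purely combinatorially about which monomials a given server can possibly form; you instead keep the sharing abstract and extract the shift-invariance structure directly from $t$-privacy. Your double-counting over all $T$ of size $dt$ and the paper's sort-and-truncate argument are equivalent ways to aggregate the same key surjectivity claim. However, the final step of your proof of that claim does not work as stated, and this is not merely a characteristic-of-$\FF$ technicality. After applying $\partial_{x_{1,j^*}}\cdots\partial_{x_{d,j^*}}$ the left-hand side is the constant $w_{j^*}$, so ``taking further partials in the $\br_{i,j^*}$'s to kill the right-hand side while the left-hand side remains $w_{j^*}$'' is incoherent: the further $\br$-partials annihilate the left side too, yielding $0=0$. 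The stated intuition is also off: a single $h_\nu$ evaluated at $(\br^{(\nu)}x_{i,j})_{i,j}$ \emph{can} contain the fully mixed monomial $\prod_i x_{i,j^*}$; what is actually impossible is that $\sum_\nu h_\nu\bigl((\br_{i,j}+\br^{(\nu)}x_{i,j})_{i,j}\bigr)$ contains that monomial \emph{and} is independent of $\br$.

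The fix, which does rescue the approach, is to use operators tailored to the invariance you derived, rather than a bare mixed $x$-partial followed by $\br$-partials. For each block index $\nu\in[d]$ (I avoid your $\ell$, which collides with the number of instances) and $s\in\FF$ let $\delta_\nu^{s}$ be the finite-difference operator that shifts $x_{\nu,j^*}\mapsto x_{\nu,j^*}+s$ and simultaneously $\br_{\nu,j^*}\mapsto\br_{\nu,j^*}-s\,\br^{(\nu)}$, and subtracts the unshifted value. Your shift-invariance says precisely $\delta_\nu^{s}H_\nu=0$, and the $\delta_\nu^{s}$ pairwise commute since they touch disjoint coordinate blocks. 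Hence $\delta_1^{s_1}\cdots\delta_d^{s_d}\sum_\nu H_\nu = 0$. Applied to the $\br$-free left-hand side $u(\bx)=\sum_j w_j\prod_i x_{i,j}$, each $\delta_\nu^{s_\nu}$ reduces to a plain finite difference in $x_{\nu,j^*}$, so the composite gives $w_{j^*}s_1\cdots s_d$; choosing all $s_\nu\neq 0$ forces $w_{j^*}=0$, the contradiction. Because each operator is first-order and acts on a distinct $x$-variable, no factorial coefficients ever arise, so the argument is uniform over all finite fields and the workarounds you proposed are unnecessary (passing to an extension field would also be awkward, since $|\FF|$ appears explicitly in the bound you want). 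With this replacement your proof is correct and constitutes a valid alternative to the paper's CNF-based argument.
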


\begin{remark}
An inspection of the proof shows that Theorem~\ref{thm:barrier} still holds if $\Rec$ is allowed to be linear over some subfield $\B$ of $\FF$,  rather than over $\FF$ itself.
\end{remark}

\begin{proof}[Proof of Theorem~\ref{thm:barrier}]
Suppose that $\Pi = (\Share, \Eval, \Rec)$ is a $t$-private $k$-server linear HSS scheme for $\POLY_d^\ell$. By the definition of linearity (Definition~\ref{def:linear}) and of $\POLY^\ell_d$ (Definition~\ref{def:POLY}), we have $\mathcal{X} = \mathcal{Y} = \mathbb{F}$ for some finite field $\FF$; $\Share(\cdot)$ is linear over $\FF$; each output share $\by^{(j)}$ (output of $\Eval$) is an element of $\F^{b_j}$ for some $b_j \geq 0$; and $\Rec:\F^{\sum_j b_j} \to \F^{\ell}$ is linear over $\F$.
Suppose without loss of generality that $b_1 \leq b_2 \leq \cdots \leq b_k.$

Without loss of generality, we may assume that $\Share$ is $t$-CNF sharing; indeed, by Theorem~\ref{cdi}, $t$-CNF shares can be locally converted to shares of any $t$-private secret-sharing scheme with a linear $\Share$ function.

Also without loss of generality, it suffices to show an impossibility result for the special case where there are $d\ell$ inputs, and the HSS is for the class $\mathcal{F}^\ell$, where $\mathcal{F}$ consists only of the monomial $f(x_1, \ldots, x_d) = \prod_{i=1}^d x_i$.  Indeed, this is because $\mathcal{F} \subset \POLY_d$.

Suppose that the secrets are $x_{i,r}$ for $i \in [d]$ and for $r \in [\ell]$, so that we want to compute $\prod_{i=1}^d x_{i,r}$ for each $r \in [\ell]$.  In order to share them according to $t$-CNF (Definition~\ref{cnf}), we choose random $X_{i,r}^S \in \F$ for all $i \in [d], r \in [\ell],$ and all $S\subset[k]$ of size $t$, so that for all $i,r$,
\begin{equation}\label{eq:sharing}
 x_{i,r} = \sum_S X_{i,r}^S, 
\end{equation}
where the sum is over all sets $S \subset [k]$ of size $t$.
Then server $j$ receives the input share $\bx^{(j)} = (X_{i,r}^S\,:\, j \not\in S)$ and generates an output share $\by^{(j)} \in \FF^{b_j}$ via the $\Eval$ map.

We may treat each output share $\by^{(j)}$ as a vector of polynomials $\vec{p}^{(j)}(\bX) = (p^{(j)}_{1}(\bX), \ldots, p^{(j)}_{b_j}(\bX))$ in the variables $\bX = (X_{i,r}^S)_{i \in [d],r\in[\ell],S\subset [k] \text{ with } |S| =t }$.
Since $\Rec$ is $\FF$-linear, there are some vectors $\vec{v}^{(j)}_{r} \in \FF^{b_j}$ for each $j \in [k], r \in [\ell]$ so that the $r$'th output of $\Rec(\by^{(1)}, \ldots, \by^{(k)})$ is given by
\begin{equation}\label{eq:prdef}
 p_r(\bX) := \sum_{j=1}^k \langle \vec{v}^{(j)}_{r}, \vec{p}^{(j)}(\bX) \rangle.
\end{equation}
By the correctness of $\Pi$, this output must be equal to $\prod_{i=1}^d x_{i,r}$.  Plugging in \eqref{eq:sharing}, this reads
\begin{equation}\label{eq:pr}
 p_r(\bX) = \sum_{S_1, \ldots, S_d} \left( \prod_{i=1}^d X_{i,r}^{S_i} \right), 
\end{equation}
where the sum is over all choices of $S_1, \ldots, S_d \subset [k]$ of size $t$.
Define $\ell$ polynomials $w_r(\bX)$ by
\[ w_r(\bX) = \sum_{j=1}^{k-dt} \langle \vec{v}_r^{(j)}, \vec{p}^{(j)}(\bX) \rangle \]
for $r \in [\ell]$.  We will show below in Claim~\ref{cl:linind} that, on the one hand, the $w_r$ are linearly independent over $\FF$; but on the other hand, they are all clearly contained in a $\left(\sum_{j=1}^{k-dt} b_j\right)$-dimensional subspace, spanned by the polynomials $p^{(j)}_{h}(\bX)$ for $j \in [k-dt]$ and $h \in [b_j]$ (recalling that $\vec{p}^{(j)} = (p^{(j)}_{h})_{h \in [b_j]}$).  This implies that
\[ \ell \leq \sum_{j=1}^{k-dt} b_j \leq \max\left\{ 0, \frac{k-dt}{k} \sum_{j=1}^k b_j\right\}, \]
where we have used the assumption that $b_1 \leq b_2 \leq \cdots \leq b_k$.  Rearranging (and using the fact that $\ell > 0$), this implies that $dt < k$ and that the scheme downloads at least
\[ \sum_{j=1}^k b_j \geq \ell \left( \frac{k}{k-dt} \right) \]
symbols of $\FF$, or at least $k \ell \log_2|\FF| / (k-dt)$ bits, which will prove the theorem.

Thus, it suffices to prove the following claim.
\begin{claim}\label{cl:linind}
The polynomials $w_r(\bX)$ are linearly independent over $\FF$.
\end{claim}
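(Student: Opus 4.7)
The plan is to prove the claim by extracting a carefully chosen monomial. Suppose for contradiction that there exist coefficients $\alpha_r\in\FF$, not all zero, with $\sum_r \alpha_r w_r(\bX) = 0$. Combining this with the decomposition $p_r(\bX) = w_r(\bX) + \sum_{j=k-dt+1}^{k} \langle \vec{v}^{(j)}_r, \vec{p}^{(j)}(\bX)\rangle$ from \eqref{eq:prdef} gives
\[
\sum_{r=1}^{\ell} \alpha_r\, p_r(\bX) \;=\; \sum_{j=k-dt+1}^{k}\Bigl\langle \sum_{r=1}^\ell \alpha_r \vec{v}^{(j)}_r,\ \vec{p}^{(j)}(\bX)\Bigr\rangle.
\]
The right-hand side depends only on the polynomials $\vec{p}^{(j)}$ for $j$ in the set $A := \{k-dt+1,\ldots,k\}$, which has size exactly $dt$. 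The left-hand side is described explicitly by \eqref{eq:pr}.

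The key step is to choose a distinguishing monomial. Partition $A$ into $d$ pairwise disjoint subsets $T_1,\ldots,T_d\subset[k]$, each of size $t$; this is possible precisely because $|A|=dt$. For any fixed index $r^*\in[\ell]$, consider the monomial
\[
\mu_{r^*} \;:=\; \prod_{i=1}^d X^{T_i}_{i,r^*}.
\]
By \eqref{eq:pr}, this monomial occurs in $p_{r^*}(\bX)$ with coefficient $1$ (arising from the unique choice $S_i = T_i$ for all $i$); and it occurs in $p_r$ with coefficient $0$ for $r\neq r^*$, since $p_r$ only involves variables with second index $r$. Consequently, the coefficient of $\mu_{r^*}$ on the left-hand side is exactly $\alpha_{r^*}$.

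It remains to argue the coefficient of $\mu_{r^*}$ on the right-hand side is zero. Since $\vec{p}^{(j)}(\bX)$ is a function of server $j$'s input share, it is a polynomial in the variables $\{X^S_{i,r} : j\notin S\}$. For $j\in A = \bigcup_i T_i$, there is some $i(j)$ with $j\in T_{i(j)}$, so the variable $X^{T_{i(j)}}_{i(j),r^*}$ is \emph{not} among those available to server $j$. Hence $\mu_{r^*}$ cannot appear in $\vec{p}^{(j)}$, and therefore cannot appear in any $\FF$-linear combination of entries of $\vec{p}^{(j)}$. Summing over $j\in A$, the coefficient of $\mu_{r^*}$ on the right-hand side is $0$, forcing $\alpha_{r^*}=0$. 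Repeating this for every $r^*\in[\ell]$ contradicts the assumption that some $\alpha_r$ is nonzero.

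The main conceptual obstacle is identifying the right monomial to track: because $\vec{p}^{(j)}$ may be an arbitrary polynomial (not necessarily multilinear or degree-$d$) in its permitted variables, one needs a monomial whose variable set is individually denied to every server in $A$, rather than relying on degree or privacy considerations. The disjoint-partition choice of $T_1,\ldots,T_d$ inside $A$ is exactly what forces this denial, and it is available precisely when $|A|=dt$, which is why the bound $k-dt$ on the partial sum is the correct threshold.
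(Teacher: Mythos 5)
Your proof is correct and takes essentially the same approach as the paper: both identify a distinguishing monomial $\prod_{i=1}^d X^{T_i}_{i,r}$ whose variable set covers $\{k-dt+1,\ldots,k\}$, so it cannot appear in any $\vec{p}^{(j)}$ with $j>k-dt$, yet it does appear in $p_r$. The only cosmetic differences are that you argue by contradiction via a hypothetical dependence $\sum_r \alpha_r w_r = 0$ (the paper argues directly), and you insist on a disjoint partition of $\{k-dt+1,\ldots,k\}$ whereas the paper allows any cover by sets of size $t$; both variants work.
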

\begin{proof}
Choose sets $S_1^*, S_2^*, \ldots, S_d^* \subseteq [k]$ with $|S_i^*|=t$ so that 
$$\bigcup_{i=1}^{d} S_i^* = \{k - dt + 1, k - dt + 2, \ldots, k\}.$$
If $dt < k$, then the $S_i^*$ can be any sets that cover the last $dt$ elements of $[k]$. Then consider the monomial
\[ M_r(\bX) := \prod_{i=1}^d X^{S_i^*}_{i,r} \]
for $r \in [\ell]$.  From \eqref{eq:pr}, we see that $M_r(\bX)$ appears in $p_r(\bX)$.  However, since for all $j > k-dt$ there exists $i \in [d]$ such that  $j \in S_i^*$,
$M_r(\bX)$ cannot appear in $\vec{p}^{(j)}(\bX)$ for any $j > k - dt$.  Then \eqref{eq:prdef} implies that $M_r(\bX)$ must appear in $w_r(\bX)$.  

Further, for any $r' \neq r$, $M_{r}(\bX)$ cannot appear in $w_{r'}(\bX)$.  Indeed, since $M_{r}(\bX)$ does not appear in $p_{r'}(\bX)$ (from \eqref{eq:pr}), if $M_{r}(\bX)$ appeared in $w_{r'}(\bX)$, then \eqref{eq:prdef} implies that it must be canceled by a contribution by some $\vec{p}^{(j)}(\bX)$ for $j > k - dt$. But, as above, $M_r(\bX)$ cannot appear in $\vec{p}^{(j)}(\bX)$.

Thus, each $M_r(\bX)$ appears in $w_{r'}(\bX)$ if and only if $r' = r$.  Since the monomials $M_{r}(\bX)$ are linearly independent over $\FF$, this implies that the polynomials $w_{r}(\bX)$ are also linearly independent over $\FF$.  This proves the claim.
\end{proof}
With the claim proved, the theorem follows.
\end{proof}

%\mkw{Added the following paragraph and corollary:}

Theorem~\ref{thm:barrier} implies that any linear HSS scheme for concatenation must have download rate at most $1 - t/k$.  As we have seen, this is achievable in some parameter regimes, but it turns out that it is not achievable for all parameter regimes.
In particular, in the corollary below, we observe that Theorem~\ref{thm:cnf-hss-d1} immediately rules this out, based on known bounds on the distance of linear codes.  

\begin{corollary}\label{cor:negDist}
Let $k,b,t, \ell$ be integers so that $\ell < kb-1$.
Suppose that $\Pi$ is a linear $t$-private $k$-server HSS for $\CONCAT_\ell(\FF)$ with output shares in $\FF^b$, and suppose that $|\FF|^b < k/2$.  Then the download rate of $\Pi$ is strictly less than $1 - t/k$.
\end{corollary}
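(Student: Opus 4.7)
The plan is to argue by contradiction via Theorem~\ref{thm:cnf-hss-d1}. Suppose the download rate of $\Pi$ is at least $1 - t/k$, i.e., $\ell \geq b(k - t)$. Then the ``only if'' direction of Theorem~\ref{thm:cnf-hss-d1} furnishes an $\FF$-linear code $C \subseteq (\FF^b)^k$ of $\FF$-dimension at least $b(k - t)$ and block distance at least $t + 1$. Viewing $C$ as a code of length $k$ over the alphabet $\Sigma = \FF^b$ of size $q := |\FF|^b$, the Singleton bound gives $\log_q |C| \leq k - d + 1 \leq k - t$, i.e., $\dim_\FF(C) \leq b(k-t)$. Combined with the reverse inequality from our assumption, all three inequalities must be tight, so $C$ is an MDS code of length $k$, size $q^{k-t}$, and distance exactly $t + 1$ over $\Sigma$.

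The hypothesis $\ell < kb - 1$ then forces $\dim_\FF(C) = b(k-t) \leq kb - 2$, i.e., $bt \geq 2$, which is meant to rule out the degenerate MDS regimes (the repetition code at one extreme and the single-parity-check code at the other). I would then invoke a standard upper bound on the length of non-trivial $q$-ary MDS codes: any such code must satisfy $n \leq q + O(1)$. Depending on the sub-regime, this follows either from known unconditional cases of the MDS conjecture (e.g., small dimension or codimension, or prime alphabet size), or from a Plotkin-style argument in the regime $d > k(1 - 1/q)$, which is forced once $q$ is small enough relative to $k$. Since our code has length $n = k$ while $q < k/2$, this yields $k \leq q + O(1) < k/2 + O(1)$, a contradiction for all $k$ beyond a small constant.

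The main obstacle is pinning down the precise MDS-length bound to cite or prove, since the full MDS conjecture is still open in general and the reduction above funnels us into a tight MDS regime rather than a generic one. However, the hypothesis $q < k/2$ is quite strong (the block alphabet is much smaller than the code length), so the argument can be carried out by splitting into sub-regimes handled by known unconditional instances of the MDS bound and by Plotkin-type bounds respectively. The other subtlety is ensuring that the ``$\ell < kb-1$'' exclusion is actually strong enough to eliminate every trivial MDS configuration (parity check, repetition, and their $\FF$-linear variants over $\Sigma = \FF^b$) that would otherwise meet the Singleton bound with equality; verifying this may require a small additional case analysis beyond the bound $bt \geq 2$.
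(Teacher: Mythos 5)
Your high-level route is exactly the one the paper takes: invoke the ``only if'' direction of Theorem~\ref{thm:cnf-hss-d1} to extract an $\FF$-linear code $C \subseteq (\FF^b)^k$ whose rate/distance trade-off meets the Singleton bound, then appeal to a non-existence result for MDS codes over a small alphabet. The paper's proof is essentially a single citation: it invokes a result attributed to~\cite{Ball12} stating that any linear code of length $k$ over an alphabet of size $q < k/2$ with dimension below $k-1$ cannot be MDS. So the overall strategy matches.

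That said, two of the ingredients you reach for are off the mark. First, the Plotkin argument does not apply here: Plotkin kicks in when $d > (1-1/q)n$, which for an $[n,\kappa,n-\kappa+1]$ MDS code forces roughly $n > q(\kappa - 1)$, far more extreme than the given hypothesis $q < n/2$. Second, you do not need (even a known case of) the MDS conjecture $n \leq q + 1$. The relevant fact is the much older and elementary one: an $\FF_q$-linear MDS code of dimension $\kappa \geq 2$ has $d \leq q$ (look at two rows of the systematic generator matrix and note their entrywise ratios must be distinct and nonzero), so $n \leq q + \kappa - 1$; applying this to the dual (when $n-\kappa \geq 2$) gives $\kappa \leq q - 1$, hence $n \leq 2q - 2$. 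Under $q < k/2$ this is immediately contradicted. In short, the bound needed is considerably weaker than the MDS conjecture, which is why the argument closes without any ``sub-regime'' case split.

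Finally, the subtlety you flag at the end is real and is not resolved by $bt \geq 2$ alone. The inequality $bt \geq 2$ does not exclude the ``repetition'' end of the MDS range: with $b = 1$, $\ell = 1$, and $t = k - 1$, one has $bt = k-1 \geq 2$ yet $\kappa = \log_q|C| = 1$, and the length-$k$ repetition code over $\FF$ is MDS for any field — indeed the corresponding LMSSS is additive sharing, which achieves download rate exactly $1/k = 1 - t/k$. To make the argument airtight one must actually assume $2 \leq \kappa \leq k - 2$, i.e.\ $b + 1 \leq \ell \leq b(k-2)$ or thereabouts; $\ell < kb - 1$ only excludes the parity-check end. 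You are right to be suspicious of this corner; note that the citation in the paper's own proof is stated at a level of generality that also does not visibly exclude the repetition code.
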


\begin{proof}
It is known (see, e.g., \cite{Ball12}) that any linear code $C$ of length $k$ over an alphabet of size $q < k/2$ with $\dim(C) < k-1$ \em cannot \em achieve the Singleton bound; that is, we must have $\mathrm{dim}(C) < k - \mathrm{Dist}(C) + 1$.  Thus, Theorem~\ref{thm:cnf-hss-d1} implies the corollary.
\end{proof}

Finally, we note that for general (not necessarily linear) information-theoretic HSS, one can also show a negative result that implies that some degredation in rate is inevitable for larger security thresholds $t$.  However, this result does not depend on the complexity of $\mathcal{F}$, and in fact applies to families $\cF$ with only one function in them.  The proof below is similar to one that appears in \cite{HLKB16} for low-bandwidth secret sharing.

\begin{lemma}[Similar to Prop. 1 in
\cite{HLKB16}]\label{lem:negNotLinear}
%\mkw{Whoops, the assm here on $\mathcal{F}$ was not right.  I fixed it.}
Let $\Pi$ be any $t$-private $k$-server HSS for a function class $\mathcal{F}$ with output alphabet $\mathcal{Y}$, so that there is some $f \in \mathcal{F}$ with $\mathrm{Im}(f) = \mathcal{Y}$.
%$ \bigcup_{f \in \mathcal{F}} \mathrm{Im}(f) = \mathcal{Y}. $
Then 
%Any $t$-private $k$-server HSS for a function family $\mathcal{F}^\ell$, where $\mathcal{F}$ contains a function $f$ with at least two possible outputs, 
$\Pi$ must have download cost at least $\frac{k\log_2|\mathcal{Y}|}{k-t}$. Consequently, its download rate is at most $1-t/k$.
%\mkw{There was something here about how $\mathcal{F}$ needs to have at least 2 outputs, but I don't think that this the right assm.}
\end{lemma}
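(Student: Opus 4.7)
The plan is to give a standard information-theoretic argument. Fix a function $f \in \mathcal{F}$ whose image is all of $\mathcal{Y}$, and choose a distribution on the inputs $(x_1,\ldots,x_m)$ so that the random variable $Y := f(x_1,\ldots,x_m)$ is uniformly distributed on $\mathcal{Y}$; this is possible because $f$ is surjective. Write $y^{(j)}$ for the output share produced by server $j$ when evaluating $f$, and for $S \subseteq [k]$ let $y^{(S)} := (y^{(j)})_{j \in S}$. All entropies below are in bits.

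Next I would extract two ingredients. First, by correctness, $Y$ is a deterministic function of $(y^{(1)},\ldots,y^{(k)})$, so for every $S \subseteq [k]$, chain rule gives
\[
H(Y \mid y^{(S)}) \leq H(y^{([k] \setminus S)} \mid y^{(S)}) \leq H(y^{([k] \setminus S)}) \leq \sum_{j \notin S} H(y^{(j)}) \leq \sum_{j \notin S} |y^{(j)}|,
\]
where the last step uses $H(y^{(j)}) \leq \log_2|\mathcal{Y}^{(j)}| = |y^{(j)}|$. Second, since each input $x_i$ is shared independently and $\Pi$ is $t$-private, for every $|S| \leq t$ the joint distribution of the input shares $(x_i^{(j)})_{i \in [m], j \in S}$ is independent of $(x_1,\ldots,x_m)$. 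The output shares $y^{(S)}$ are a (randomized) function of these input shares together with independent local randomness used by $\Eval$, so $y^{(S)}$ is independent of $Y$, giving $H(Y \mid y^{(S)}) = H(Y) = \log_2|\mathcal{Y}|$.

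Combining these, for every $S \subseteq [k]$ with $|S| = t$,
\[
\log_2|\mathcal{Y}| \;\leq\; \sum_{j \notin S} |y^{(j)}|.
\]
Summing over all $\binom{k}{t}$ such subsets $S$ and observing that each $j \in [k]$ is excluded from exactly $\binom{k-1}{t}$ of them yields
\[
\binom{k}{t} \log_2|\mathcal{Y}| \;\leq\; \binom{k-1}{t} \sum_{j=1}^k |y^{(j)}|,
\]
so $\mathrm{DownloadCost}(\Pi) = \sum_j |y^{(j)}| \geq \tfrac{k}{k-t}\log_2|\mathcal{Y}|$, and dividing yields rate at most $1-t/k$.

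There is no real obstacle here; the only points that need a line of justification are (i) that a uniform-output distribution on inputs exists (follows from surjectivity), and (ii) that $t$-privacy of input shares transfers to independence of the output shares from the inputs, which requires noting that $\Eval$ uses only the $t$ input shares and fresh local randomness.
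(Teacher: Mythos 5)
Your proof is correct and uses essentially the same approach as the paper: an information-theoretic argument combining $t$-privacy (a set of $t$ output shares is independent of the secret) with correctness (all $k$ output shares determine $f(\bx)$). The only variation is cosmetic: the paper sorts the output-share sizes, conditions on the $t$ largest, and bounds the remainder by $\tfrac{k-t}{k}\sum_j b_j$, whereas you average the inequality $\log_2|\mathcal{Y}| \le \sum_{j\notin S}|y^{(j)}|$ over all $\binom{k}{t}$ subsets $S$ of size $t$; and the paper tracks $H(\bx)$ and then converts to $H(f(\bx))$ via mutual information, whereas you work with $H(Y\mid y^{(S)})$ directly. Both yield the same bound.
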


\begin{proof}
Fix a function $f \in \mathcal{F}$ so that $\mathrm{Im}(f) = \mathcal{Y}$.
Choose a secret $\bx \in \cX^m$ randomly, according to some distribution so that $f(\bx)$ is uniformly distributed in $\mathcal{Y}$.  (Such a distribution exists since the image of $f$ is $\cY$).
Suppose that $y^{(1)}, \ldots, y^{(k)}$ are the output shares (outputs of $\Eval$), so that $y^{(j)} \in \{0,1\}^{b_j}$ is $b_j$ bits long.  Assume without loss of generality that $b_1 \geq b_2 \geq \cdots \geq b_k$.  Then we have
\begin{align}
    H(\bx) &= H(\bx | y^{(1)}, \ldots, y^{(t)}) \label{eq:sec}\\
    &\leq H(\bx, y^{({t+1})}, \ldots, y^{(k)} | y^{(1)}, \ldots, y^{(t)} ) \notag\\
    &= H(\bx | y^{(1)}, \ldots, y^{(k)} ) + H(y^{({t+1})}, \ldots, y^{(k)} | y^{(1)}, \ldots, y^{(t)} ) \label{eq:chrl} \\
    & \leq H(\bx | f(\bx) ) + H(y^{({t+1})}, \ldots, y^{(k)} ), \label{eq:last}
\end{align}
where \eqref{eq:sec} follows from $t$-privacy, \eqref{eq:chrl} follows from the chain rule for conditional entropy, and \eqref{eq:last} follows from the facts that $y^{(1)}, \ldots, y^{(k)}$ determine $f(\bx)$ and that $H(A|B) \leq H(A|g(B))$ for any function $g$ and any random variables $A,B$.
Rearranging and using the fact that $H(y^{({t+1})}, \ldots, y^{(k)}) \leq b^{({t+1})} + \ldots + b^{(k)}$, we conclude that
\[H(\bx) - H(\bx | f(\bx)) \leq H(y^{({t+1})}, \ldots, y^{(k)}) \leq b_{t+1} + \cdots + b_k \leq \frac{k-t}{k}(b_1 + \cdots + b_k),\]
using the assumption that $b_1 \geq b_2 \geq \cdots \geq b_k$.
Finally, we have
\[ H(\bx) - H(\bx | f(\bx) ) = I(\bx; f(\bx)) = H(f(\bx)) - H(f(\bx) | \bx) = H(f(\bx)), \]
because since $f$ is fixed, $H(f(\bx) | \bx) = 0$.
We conclude that
\[ H(f(\bx)) \leq \frac{k-t}{k}(b_1 + \cdots + b_k). \]
Since $f(\bx)$ is uniform in $\mathcal{Y}$, we have $H(f(\bx)) = \log_2|\cY|$, and re-arranging we see that
\[ b_1 + \cdots + b_k \geq \frac{k \log_2|\cY|}{k-t}, \]
as desired.
\end{proof}

\section{Black-Box Rate Amplification}\label{sec:blackbox}

In this section we present a {\em generic} rate amplification technique for HSS.  More concretely, we show how to make a black-box use of any $t_0$-private $k_0$-server HSS  scheme $\Pi_0 = (\Share_0, \Eval_0, \Rec_0)$ for $\mathcal{F}$ with {\em additive reconstruction} (see Definition~\ref{def:linear}) 
%\yi{Added this to Definition~\ref{def:linear}. We can also use the term ``additive HSS'' which was used in previous works and is currently used in the introduction. One question is whether here we want to restrict ``additive'' to have output in $\F$ (i.e., with $b=1$) or want it to be over $\F^b$. The current definition assumes the former, but can easily changed.}\victor{Isn't this the same if we use extension fields?}\yi{My question is really about whether it is useful for the output shares of $\Pi_0$ to come from a different domain than the output of $f$. I assume it is not.}
to obtain a $t$-private $k$-server HSS scheme $\Pi=(\Share,\Eval,\Rec)$ (where $t/k<t_0/k_0$) for $\mathcal{F}^\ell$ with better download rate and similar upload cost. This construction can be applied, with useful corollaries, to both information-theoretic and computational HSS. 

After laying out the general approach (Lemma \ref{black-box}), we present the three useful instances of this technique discussed in the introduction. First, in Section \ref{sec:bblarge} we obtain a general-purpose (computationally) $t$-private $k$-server HSS $\Pi$ with rate $1-t/k$.  Second, in Section~\ref{sec:bb2}, we show how to transform any $1$-private $2$-server HSS scheme $\Pi_0$ into a $1$-private $k$-server HSS scheme $\Pi$ with rate $1-1/k$, and obtain high-rate computationally secure PIR and HSS.  Finally, in Section~\ref{sec:mvpir}, we obtain (information theoretic) PIR with sub-polynomial ($N^{o(1)}$) upload cost and download rate approaching $1$.

\paragraph{The general approach.} Let $\Pi_0 = (\Share_0, \Eval_0, \Rec_0)$ be a $t_0$-private, $k_0$-server HSS for $\mathcal{F}$ \emph{with additive reconstruction over $\FF$}.  We will construct a $t$-private, $k$-server HSS $\Pi$ for $\mathcal{F}^\ell$, using $\Pi_0$ as a black box.  Suppose that we have inputs $\bx_1, \ldots, \bx_\ell$ and we wish to compute $(f_1(\bx_1), \ldots, f_\ell(\bx_\ell))$, for $f_i \in \mathcal{F}$.  The general paradigm proceeds as follows:
\begin{enumerate}
    \item Using $\Share_0$, we share the inputs as input shares $\bx_i^{(v)}$, for $i \in [\ell], v \in [k_0]$.  In order to define $\Share$ in $\Pi$, these input shares $\bx_i^{(v)}$ are replicated, \emph{somehow}, between the $k$ servers.
    \item Each server $j \in [k]$ applies $\Eval_0$ to the input shares $\bx_i^{(v)}$ that they hold, to obtain output shares $y_i^{(v)}$.  Thus, $(y_i^{(v)})_{v \in [k_0]}$ are additive shares of $f_i(\bx_i)$: $\sum_{v \in [k_0]} y_i^{(v)} = f_i(\bx_i)$.  
    Now each server $j$ holds some collection $\bY(j)$ of the $y_i^{(v)}$. These $\bY(j)$ form shares of $(f_1(\bx_1), \ldots, f_\ell(\bx_\ell))$ under some $\ell$-LMSSS $\mathcal{L}_0$.
    \newline\newline
    In order to define $\Eval$ for $\Pi$, we use \emph{some} share conversion from $\mathcal{L}_0$ to a new $\ell$-LMSSS $\mathcal{L}$.  The download rate of $\Pi$ is thus given by the information rate of $\mathcal{L}$.
    \item The output client uses the reconstruction algorithm for $\mathcal{L}$ in order to define $\Rec$ for $\Pi$.
\end{enumerate}

This approach allows essentially two degrees of freedom: First, in Step 1 we may choose how to replicate the shares; and second, in Step 2 we may choose the share conversion and the
new LMSSS $\mathcal{L}$.
%share conversion to the new LMSSS.
Hence, since the shares $y_i^{(v)}$ are $k_0$-additive shares, the above can be seen as a \emph{generalized CNF share conversion question} \`a la Theorem \ref{cdi}: instead of using the usual CNF sharing from Definition~\ref{cnf}, each secret is additively shared into only $k_0$ shares, and there is the freedom to choose which server receives which share. 
In general, for all integers $t,k,t_0,k_0,\ell$, we will be interested in whether a $t$-private $k$-server HSS $\Pi$ for $\mathcal{F}^\ell$ can be constructed from a $t_0$-private $k_0$-server HSS $\Pi_0$ for $\mathcal{F}$ using the black-box approach above. As we will see later, for our purposes it will be sufficient to restrict ourselves to $t_0=(k_0-1)$-private $k_0$-server schemes, as even for $(t_0<k_0-1)$-private $\Pi_0$, we could still apply black-box transformations which preserve $k_0-1$ privacy (see Remark \ref{remark:black-box-smaller-t}). %\yi{It is not clear why this applies to the third transformation. Wouldn't it be better to add $t_0$ as an extra parameter and say that we assume $t_0=k_0-1$ by default? }\victor{Is it better now? I would rather not add an additional parameter $t_0$ to the definition because we never use it in the context of black-box transformations. If it is still bad I will add this parameter and change Remark \ref{remark:black-box-smaller-t} to say that a $t_0=(k_0-1)$-private (source HSS privacy) black-box transformation with $t=k_0-1$ (target HSS privacy) implies a $t_0$-private (source HSS privacy) black-box transformation with $t=t_0$ (target HSS privacy) (and same $k_0,k,\ell,\FF$)}
%\mkw{decision: let's revisit after the deadline.  The proposal is to have ``with parameters $(t, t_0, k, k_0, \ell)$ over $\FF$''}
We can capture this approach with the following notion:

\begin{definition}[Black-box transformation]\label{def:blackbox}
Let $t,k,k_0,\ell$ be integers, and $\FF$ a finite field. We say there is a \emph{black-box transformation for $(t,k,k_0,\ell,\mathbb{F})$}, if there is $k$-server $\ell$-LMSSS over $\F$ $\mathcal{L} = (\Share_\mathcal{L}, \Rec_\mathcal{L})$ with parameters $(e,b_1,\ldots,b_k)$, \emph{replication functions} $\psi_i:[k_0]\rightarrow 2^{[k]}$, $i\in [\ell]$, and \emph{conversion functions} $\varphi_j:\FF^{c_j}\rightarrow\FF^{b_j}$, $j\in[k]$, where $c_j=\sum_{i=1}^\ell |\{v \in [k_0]\,:\, j \in \psi_i(v)\}|$, 
such that 
\begin{itemize}
    \item \textbf{Correctness:} For every $(y_i^{(j)})_{i\in [\ell],j\in[k_0]}$ it holds that
\[
\exists \br\in\FF^e: (\varphi_j(\vec{Y}(j)))_{j=1}^k=\Share_\mathcal{L}\left(\left(\sum_{v=1}^{k_0} y_1^{(v)},\ldots,\sum_{v=1}^{k_0} y_\ell^{(v)} \right),\br\right).
\]
where $\vec{Y}(j)=(y_i^{(v)})_{i\in[\ell],v\in [k_0]:j\in\psi_i(v)}$ denotes the view of party $j \in [k]$.
\item \textbf{Security:} For every $T\subseteq [k]$ such that $|T|\leq t$ and every $i\in[\ell]$ we have that $$\left|\bigcup_{j\in T}\{v\in [k_0]:j\in\psi_i(v)\}\right|\leq k_0-1.$$
\end{itemize}
We further say that the black-box transformation $(\mathcal{L},(\psi_i)_i,(\varphi_j)_j)$ has rate $R$ if $\mathcal{L}$ has information rate $R$.
\end{definition}

Given a black-box transformation and a compatible HSS, we may readily deduce an HSS with potentially better rate. A formal description of our 3-step black-box approach is given in the following lemma:

\begin{lemma}[Black-box transformations applied to HSS]\label{black-box}
Suppose there is a (\emph{potentially computationally secure}) $(k_0-1)$-private $k_0$-server HSS $\Pi_0$ for $\mathcal{F}=\{\mathcal{X}^m\rightarrow\FF\}$ with additive reconstruction over $\FF$. In addition, let $(\mathcal{L}=(\Share_\mathcal{L},\Rec_\mathcal{L}),(\psi_i)_{i\in[\ell]},(\varphi_j)_{j\in[k]})$ be a black-box transformation with parameters $(t,k,k_0,\ell,\mathbb{F})$ and rate $R$. Then, there is a (computationally secure if $\Pi$ is computationally secure) $t$-private $k$-server HSS $\Pi$ for $\mathcal{F}^\ell$ with rate $R$. Furthermore, in $\Pi$ server $j\in[k]$ has individual upload cost\footnote{Here, individual upload cost stands for the total number of bits an individual server $j\in[k]$ receives. That is, if $x_i^{(j)}$ are the shares of all $\ell m$ inputs, then the individual upload cost is $\sum_{i=1}^{\ell m}|x_i^{(j)}|$.} (for all $\ell m$ inputs) of $\sum_{i\in[\ell],v\in[k_0]:j\in\psi_i(v)}L_v$, where $L_v$ is the individual upload cost of the $v$'th server in $\Pi_0$.
\end{lemma}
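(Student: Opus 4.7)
The plan is to follow verbatim the three-step paradigm already described informally in the text, and verify that the conditions built into Definition~\ref{def:blackbox} were chosen precisely so that the verification is routine. Concretely, I define $\Pi=(\Share,\Eval,\Rec)$ as follows. For each instance $i\in[\ell]$ and each input $k\in[m]$, $\Share$ independently runs $\Share_0$ on the input to obtain $k_0$ shares; then for each $v\in[k_0]$, the $v$th such share is handed to every server $j\in\psi_i(v)$. For $\Eval$, server $j$ first runs $\Eval_0$ on its input shares for every instance $i$ it is able to (namely every $(i,v)$ with $j\in\psi_i(v)$), producing output shares $y_i^{(v)}$, and assembles them into the vector $\vec{Y}(j)$ as in Definition~\ref{def:blackbox}; server $j$ then outputs $\varphi_j(\vec Y(j))\in\FF^{b_j}$. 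Finally, $\Rec$ is just the reconstruction map $\Rec_\mathcal{L}$ of the target LMSSS.

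For correctness I would invoke the two guarantees in sequence. Additive reconstruction of $\Pi_0$ gives $\sum_{v\in[k_0]}y_i^{(v)}=f_i(\vec x_i)$ for every $i$. The correctness clause of Definition~\ref{def:blackbox} then says that for some $\br\in\FF^e$,
\[
(\varphi_j(\vec Y(j)))_{j\in[k]}=\Share_\mathcal{L}\bigl((f_1(\vec x_1),\ldots,f_\ell(\vec x_\ell)),\br\bigr),
\]
so the vector of output shares is a valid $\mathcal{L}$-sharing of $(f_1(\vec x_1),\ldots,f_\ell(\vec x_\ell))$, which $\Rec_\mathcal{L}$ recovers because $[k]$ lies in the access structure of $\mathcal{L}$.

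For $t$-privacy, fix any $T\subseteq[k]$ with $|T|\leq t$. Because the $\ell$ instances are shared independently under $\Pi_0$, it suffices to check privacy one instance at a time. The joint view of $T$ for instance $i$ is exactly the set of $\Pi_0$-shares $\{x_i^{(v)}:v\in U_{i,T}\}$ where $U_{i,T}=\bigcup_{j\in T}\{v:j\in\psi_i(v)\}$. The security clause of Definition~\ref{def:blackbox} guarantees $|U_{i,T}|\leq k_0-1$, so $(k_0{-}1)$-privacy of $\Pi_0$ implies this view is distributed independently of $\vec x_i$, giving information-theoretic $t$-privacy. In the computational case, the same reduction works via a standard hybrid argument over the $\ell$ instances: an adversary violating $t$-privacy of $\Pi$ can be turned into one breaking $(k_0{-}1)$-privacy of $\Pi_0$ by simulating all other instances and all shares not indexed by $U_{i,T}$ internally.

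Finally, the rate calculation is immediate: the output length is $\ell\log_2|\FF|$ and the download cost is $\sum_j b_j\log_2|\FF|$, so $\mathrm{Rate}(\Pi)=\ell/\sum_j b_j=R$ by definition of the information rate of $\mathcal{L}$. The upload-cost claim follows by bookkeeping: server $j$ receives the $v$th share of instance $i$ precisely when $j\in\psi_i(v)$, contributing $L_v$ bits, so summing gives the advertised $\sum_{(i,v):j\in\psi_i(v)}L_v$. I do not expect a substantive obstacle here; the main care needed is notational, together with noting that the hybrid reduction in the computational case preserves the black-box use of $\Pi_0$ because the replication and conversion steps are oblivious to the internals of $\Pi_0$'s shares.
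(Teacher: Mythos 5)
Your proposal is correct and follows essentially the same three-step construction and verification as the paper's proof: $\Share$ replicates $\Pi_0$-shares via the $\psi_i$, $\Eval$ applies $\Eval_0$ then $\varphi_j$, $\Rec=\Rec_\mathcal{L}$, with correctness, privacy, rate, and upload cost checked exactly as you describe. The only addition you make is spelling out the hybrid argument for the computational case, which the paper leaves implicit.
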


\begin{proof}
As discussed above when we went over the general approach, the HSS $\Pi$ is defined as follows:
\begin{enumerate}
    \item \textbf{Sharing:} Every input $\vec{x}_i\in\mathcal{X}^m$, $i=1,\ldots,\ell$, is secret shared according to the source $(k_0-1)$-private $k_0$-server HSS $\Pi_0$ to obtain $(x_i^{({v})})_{v=1}^{k_0}$. For $i=1,\ldots,\ell$, the shares $(x_i^{({v})})_{v=1}^{k_0}$ are replicated according to $\psi_i$: $x_i^{(v)}$ is sent to all servers in the set $\psi_i(v)\subseteq[k]$.
    
    \item \textbf{Evaluation:} Each server $j\in[k]$ holding share $x_i^{({v})}$ computes $y_i^{({v})}:=\textsf{Eval}({f_i},v,x_i^{({v})})$, where $f_i\in\mathcal{F}$ is the function to be evaluated at $\vec{x}_i$. Then, each server $j\in [k]$ locally computes $z_j=\varphi_j(\vec{Y}(j))$ as its output share.
    \item \textbf{Reconstruction:} The client computes $\Rec_\mathcal{L}(z_1,\ldots,z_k)$.
\end{enumerate}
We first argue that $\Rec_\mathcal{L}(z_1,\ldots,z_k)=(f_1(\vec{x}_1),\ldots,f_\ell(\vec{x}_\ell))$. Indeed, by the correctness of $\Pi_0$, we have that $y_i^{(1)} +\cdots+y_i^{({k_0})}=f_i(\vec{x}_i)$, and thus, by the correctness of the black-box transformation there is a vector $\br$ such that 
\[
(z_j)_{j=1}^k=(\varphi_j(\vec{Y}(j)))_{j=1}^k=\Share_\mathcal{L}((f_1(\vec{x}_1),\ldots,f_\ell(\vec{x}_\ell)),\br).
\]
Hence, applying $\Rec_\mathcal{L}$ to $(z_j)_{j=1}^k$ should recover $(f_1(\vec{x}_1),\ldots,f_\ell(\vec{x}_\ell))$ by definition of the LMSSS. Furthermore, because the rate of the black-box transformation is the same as the information rate of $\mathcal{L}$, and $(z_j)_{j=1}^k$ are shares of $\mathcal{L}$, we conclude the rate of this HSS is also $R$. In addition, we want to argue that the HSS satisfies (computational or information theoretic) $t$-privacy. By the $(k_0-1)$ privacy assumption of $\Pi_0$ we only need to show that every set of $t$ servers receives at most $k_0-1$ shares of each $\bx_i$. The fact each server $j\in [k]$ receives a share $x_i^{(v)}$ if $j\in\psi_i(v)$, together with the security of the black-box transformation, guarantee this. Finally, the claim about individual upload cost follows because server $j$ receives shares $(x_i^{(v)})_{i\in[\ell],v\in[k_0]:j\in\psi_i(v)}$ and $|x_i^{(v)}|=L_v$.
\end{proof}

\begin{remark}[Using Lemma \ref{black-box} with $t_0<k_0-1$]\label{remark:black-box-smaller-t}
If $t=k_0-1$ then Lemma \ref{black-box} can also allow to combine a $t_0$-private (possibly $t_0<k_0-1$) $k_0$-server HSS $\Pi_0$ for $\mathcal{F}$ with a black-box transformation to obtain a new $t_0$-private $k$-server HSS $\Pi$ for $\mathcal{F}^\ell$. This follows because a black-box transformation with $t=k_0-1$ \emph{necessarily} replicates at most a single share $y_i^{(v)}$ of $y_i$ to every party.
\end{remark}

\subsection{Black-Box Transformations with Large \texorpdfstring{$k_0$}{k0}}\label{sec:bblarge}

A simple case to consider is one where $k_0$ can be large compared to $t,k$. More concretely, fixing $t,k$, we would like to find $k_0=k_0(k,t)$ large enough that there is a black-box transformation with the best possible rate $R=R(k,t)$. Note that if such a transformation exists for some $k_0(k,t)$, then it also exists for all $k_0'>k_0(k,t)$, because we may simply ``bunch servers together'' in $\Pi_0$. As it turns out, picking $k_0=\binom{k}{t}$ is sufficient, since we can replicate the shares in such a manner as to obtain the usual $t$-private $k$-server CNF sharing of $y_1,\ldots,y_\ell$. Then, to characterize the best rate achievable, it only remains to apply Corollary \ref{cdiplusplus} and Lemma \ref{folded-codes}. Formally, we have the following theorem.

\begin{theorem}
\label{th:bb1}
Let $t,k,\ell$ be integers, and let $\FF$ be a finite field.  Suppose that $b > \ell/k$.

There is a black-box transformation with parameters $(t,k,k_0=\binom{k}{t},\ell,\FF)$, such that the associated LMSSS $\mathcal{L}$ has output shares in $\FF^b$ (and hence information rate $\ell/(kb)$) if and only if
there is an $\FF$-linear code $C \in (\FF^b)^k$  with rate at least $\ell/(kb)$ and distance at least $t+1$.

Further, for the ``if'' direction, for every $j\in[k]$, $c_j = \ell {k-1 \choose t}$, where $c_j=\sum_{i=1}^\ell |\{v \in [k_0]\,:\, j \in \psi_i(v)\}|$ is as in Definition~\ref{def:blackbox}.

\iffalse
In particular,
\begin{itemize}
    \item \textbf{Construction for any $\ell$, but large field sizes.} If $|\FF| \geq \ell k/(k - t)$ then the rate is at least $1 - t/k$.
    \item \textbf{Construction for any field, but large $\ell$.} The rate is at least $1 - t/k$, for any $\ell$ of the form $\ell = b(k-t)$ where $b \geq \log_{|\FF|}(k)$ is an integer.
\end{itemize}
\fi
\end{theorem}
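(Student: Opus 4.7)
The plan is to reduce both directions of the theorem to the CNF-to-arbitrary-LMSSS conversion (Corollary~\ref{cdiplusplus}) and the correspondence between $t$-private $\ell$-LMSSS with access structure $[k]$ and $\FF$-linear codes with distance $t+1$ (Lemma~\ref{folded-codes}). The key observation is that when $k_0 = \binom{k}{t}$, we can choose the replication functions $\psi_i$ so that the additive $k_0$-sharing of each $y_i$ becomes exactly a $t$-CNF sharing of $y_i$ among the $k$ servers.

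For the \emph{``if'' direction}, given an $\FF$-linear code $C \subseteq (\FF^b)^k$ of rate $\ell/(kb)$ and distance $\geq t+1$, Lemma~\ref{folded-codes} yields a $t$-private $k$-party $\ell$-LMSSS $\mathcal{M}$ with shares in $\FF^b$ and access structure $\Gamma=[k]$. I take this $\mathcal{M}$ to be the LMSSS $\mathcal{L}$ of the black-box transformation. To define $(\psi_i)_i$, I identify $[k_0]$ with $\binom{[k]}{t}$ via $v \leftrightarrow T_v$ and set $\psi_i(v) = [k]\setminus T_v$ for every $i\in[\ell]$. Then server $j$'s view $\bY(j)$ consists precisely of those additive shares $y_i^{(v)}$ with $j \notin T_v$, which is exactly a $t$-CNF sharing of $\sum_v y_i^{(v)}$ for each $i$. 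By Corollary~\ref{cdiplusplus}, there exist conversion maps $\varphi_j$ that locally transform $\ell$ instances of $t$-CNF into $\mathcal{L}$-shares, giving correctness. For security, fix $T\subseteq[k]$ with $|T|\leq t$ and any $i\in[\ell]$: a share index $v$ is missing from the union iff $T\subseteq T_v$, and since $|T|\leq t=|T_v|$ such a $T_v$ exists, so the union has size at most $k_0 - 1$. The ``further'' claim is then a direct count: server $j$ receives $y_i^{(v)}$ for each $i\in[\ell]$ and each $T_v\not\ni j$, giving $c_j = \ell\binom{k-1}{t}$.

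For the \emph{``only if'' direction}, suppose a black-box transformation $(\mathcal{L},(\psi_i)_i,(\varphi_j)_j)$ exists. I build an auxiliary $k$-party $\ell$-LMSSS $\mathcal{L}^*$ that shares $(x_1,\ldots,x_\ell)\in\FF^\ell$ by (i) additively $k_0$-sharing each $x_i$ as $x_i = \sum_{v=1}^{k_0} x_i^{(v)}$ using fresh uniform randomness, (ii) forming the views $\bY(j)$ via the replication functions $\psi_i$, and (iii) outputting $\varphi_j(\bY(j))$ as server $j$'s share. The correctness condition of the black-box transformation guarantees that $(\varphi_j(\bY(j)))_j$ is a valid $\mathcal{L}$-sharing of $(x_1,\ldots,x_\ell)$, so $\mathcal{L}^*$ has access structure $[k]$ and shares in $\FF^b$. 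For $t$-privacy, observe that the security condition ensures every set $T$ of at most $t$ servers sees at most $k_0-1$ of the additive shares of each $x_i$; since additive $k_0$-sharing is $(k_0-1)$-private, the joint view of the parties in $T$ is independent of $(x_1,\ldots,x_\ell)$. Thus $\mathcal{L}^*$ is a $t$-private $\ell$-LMSSS with shares in $\FF^b$, access structure $[k]$, and information rate $\ell/(kb)$. Applying the ``only if'' direction of Lemma~\ref{folded-codes} to $\mathcal{L}^*$ yields the desired code $C\subseteq(\FF^b)^k$.

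The only mildly subtle point, and what I expect to be the main obstacle in writing out the proof carefully, is the ``only if'' direction: the LMSSS $\mathcal{L}$ produced by the black-box transformation is not \emph{a priori} asserted to be $t$-private — its privacy against $t$ servers only materializes once one instantiates the additive shares with genuine additive sharing and invokes the security condition on the $\psi_i$'s. Once this point is cleanly handled by passing to $\mathcal{L}^*$ as above, the rest is a direct combination of Corollary~\ref{cdiplusplus} and Lemma~\ref{folded-codes}.
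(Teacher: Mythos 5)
Your proof takes essentially the same route as the paper: the ``if'' direction is identical (identify $[k_0]$ with $\binom{[k]}{t}$ via $\psi_i(v)=[k]\setminus T_v$, obtain $\mathcal{L}$ from the code via Lemma~\ref{folded-codes}, and use the CNF share conversion of Corollary~\ref{cdiplusplus}, with the count $c_j=\ell\binom{k-1}{t}$ and the security argument matching the paper's). For the ``only if'' direction the paper is terse, asserting in one sentence that ``the black-box transformation implies a $t$-private $k$-server $\ell$-LMSSS'' and invoking Lemma~\ref{folded-codes}; your construction of the auxiliary scheme $\mathcal{L}^*$ is the correct unpacking of that claim, and you rightly flag the subtlety that Definition~\ref{def:blackbox} does not itself require the LMSSS $\mathcal{L}$ to be $t$-private --- the $t$-privacy has to be derived, as you do, from the $(k_0-1)$-privacy of additive sharing combined with the security condition on the $\psi_i$'s. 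One small gap that you share with the paper: Definition~\ref{def:blackbox} does not require the conversion maps $\varphi_j$ to be $\FF$-linear, so $\mathcal{L}^*$ is not \emph{a priori} an LMSSS in the sense of Definition~\ref{def:lmsss}; since $\Rec_{\mathcal{L}}$ is linear, this is patched by the same appeal to \cite[Claims 4.3, 4.7, 4.9]{beimelThesis} that the paper already uses in the ``only if'' direction of Theorem~\ref{thm:cnf-hss-d1}, and it would be worth making that appeal explicit here too.
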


\begin{proof}
Since the black-box transformation implies a $t$-private $k$-server $\ell$-LMSSS $\mathcal{L}$, the ``only if'' direction follows by Lemma \ref{folded-codes}.

For the ``if'' direction, we need to construct a black-box transformation. To this end, let $\eta:[k_0]\rightarrow\binom{[k]}{t}$ be a bijection which assigns a $t$-sized subset of $[k]$ to each index of $k_0$. For every $i\in [\ell]$ we let $\psi_i(v)=[k]\setminus \eta(v)$ (this already implies $c_j=\ell\binom{k-1}{t}$). Next, let $\mathcal{L}$ be the $t$-private $k$-server $\ell$-LMSSS $\mathcal{L}$ constructed from $C$ according to Lemma \ref{folded-codes} (which has output shares in $\FF^b$). For every $i\in[\ell]$, by definition of $\psi_i$, each server $j$ holds $\vec{Y}(j)=(y_i^{(v)})_{i\in[\ell],v\in [K]:j\notin\eta(v)}$. Hence, the servers jointly hold a $t$-CNF sharing of $y_i^{(1)} +\ldots+y_i^{({k_0})}$. By Corollary \ref{cdiplusplus}, this sharing is convertible to $\mathcal{L}$, and we choose $\varphi_j$, $j\in[k]$ to be the functions of this conversion.

Correctness follows by the aforementioned observation that $\vec{Y}(j)$ are $t$-CNF shares of $y_i^{(1)} +\ldots+y_i^{({k_0})}$, combined with the fact that the functions $\varphi_j$ are taken from Corollary \ref{cdiplusplus} to yield a joint sharing of $(y_i^{(1)} +\ldots+y_i^{({k_0})})_{i\in[\ell]}$ according to $\mathcal{L}$.

For security, note that indeed, by definition of $\eta$, for every $T\subseteq [k]$ such that $|T|\leq t$ we have that $|\bigcup_{j\in T}\{v\in [K]:j\notin\eta(v)\}|=|\{A\in \binom{[k]}{t}:A\cap T\neq\emptyset\}|=|\{B\in \binom{[k]}{k-t}:B\cup T\neq[k]\}|\leq \binom{k}{t}-1$.

%Finally, the ``in particular'' part of the theorem follows in similar fashion to Theorem \ref{thm:cnf-hss-d1} by picking the appropriate code $C$.
\end{proof}

As a corollary of Theorem \ref{th:bb1}, we can amplify the rate of a $t$-private computationally secure HSS scheme for circuits that relies on a circular-secure\footnote{Alternatively, circular security can be eliminated if the share size is allowed to grow with the size of the circuit given as input to $\Eval$.
} variant of the Learning With Errors assumption~\cite{DHRW16,BGILT18}. %
%\victor{also iO?}\yi{Yes, but only proved (in BGI15) for $k=2$, so let's not include it. Edited the above text.}
See Appendix~\ref{app:computational} for the relevant definitions.

%\victor{Corollary: From low rate fully private HSS to high rate $t$-private HSS, copy assumptions from BGILT. maybe suppose there exists HSS for circuits, and this follows from LWE,iO,etc}
\begin{corollary}[High-rate $t$-private HSS for circuits]\label{lwe-hss} 
Let $t,k$ be integers.
%and $\FF$ a finite field such that $|\FF|>k$. 
%\yi{This formulation is not natural, because a circuit has a one-bit output by default rather than an output from a large field. Let's change this to a transformation that takes HSS for Boolean circuits (with 1-bit output) and gets HSS for circuits with an $\ell$-bit output (which is equivalent but conceptually simpler than $\ell$ instances of a Boolean circuit.} 
Suppose there exists a computationally $t_0$-private $k_0$-server HSS for circuits, for $k_0=\binom{k}{t}$ and $t_0=k_0-1$, with additive reconstruction over $\FF_2$ and individual upload cost $L$. Then, there exists a computationally $t$-private $k$-server HSS for circuits with $\ell$-bit outputs, $\ell=(k-t)\lceil \log_2 k\rceil$, with download rate $1-t/k$ and individual upload cost $\ell\binom{k-1}{t}L$. %\yi{Looks good now (didn't check the calculations). Made some minor edits for readability, please check. }
%\yi{The notation $\mathcal{C}^{k-t}$ is not consistent with the syntax for computational HSS used in the appendix. Let's change this to HSS for circuits with an $\ell$-bit output, changing the conclusion accordingly.}\victor{How about now? note the $\log_2 k$ factor appears because we need the $\log_2 k$ factor for high-rate LMSSS (see code instantiations in Theorem \ref{thm:cnf-hss})} 
%\victor{to be polished}
%\yi{Here and elsewhere, it is better to start with the list of parameters and the constraints they need to satisfy. Here: Let $k,t,m,w>0$ be integers such that ... }
%Under some LWE-based hardness assumptions,
%\yi{This is both informal and not sufficiently general. We can just use ``Suppose there exists a computational, additive [...] HSS for circuits''. Following the corollary statement, we can discuss the assumptions this is known to follow from (a circular-secure variant of LWE, and probably also iO).}
%there exists a computationally secure $t$-private $k$-server HSS for $\ell=w\cdot \lceil \log_{|\mathbb{F}|}(k)\rceil(k-t)$ functions from $(\{0,1\}^n)^m$ to $\{0,1\}$, $w\in\mathbb{N}$, expressed as circuits, with upload cost $O(\mathrm{poly}(n))$ and download cost $w\cdot \lceil\log_{|\mathbb{F}|}(k)\rceil k$. Consequently the HSS has rate $1-\frac{t}{k}$. Furthermore, $\Share$, $\Eval$, and $\Rec$ are PPT algorithms. \yi{The last sentence seems redundant since this is a requirement for computational HSS. }\victor{TODO: add dependency on $\binom{k}{t}$}
\end{corollary}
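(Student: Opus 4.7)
The plan is to obtain the corollary as a direct composition of the black-box transformation of Theorem~\ref{th:bb1} with the given starting HSS $\Pi_0$ via Lemma~\ref{black-box}. The reader should think of ``circuits with $\ell$-bit output'' as $\mathcal{F}^\ell$, where $\mathcal{F}$ is the class of Boolean circuits with single-bit output over $\FF_2$; any $\ell$-bit-output circuit can be split into $\ell$ single-output circuits (one per output bit), so an HSS for $\mathcal{F}^\ell$ immediately yields an HSS for $\ell$-bit-output circuits with the claimed parameters.

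First I would instantiate the black-box transformation of Theorem~\ref{th:bb1} with $\F=\FF_2$, the given $k,t$, and $k_0=\binom{k}{t}$. To get rate exactly $1-t/k$ with shares in $\FF_2^b$, one needs a binary linear code $C\subseteq(\FF_2^b)^k$ of rate $\ell/(kb)=(k-t)/k$ and minimum distance at least $t+1$. Setting $b=\lceil \log_2 k\rceil$ so that $2^b\ge k$, Fact~\ref{fact:rs} supplies such a code: take a Reed--Solomon code over $\FF_{2^b}$ of rate $(k-t)/k$ (which meets the Singleton bound and thus has distance $t+1$) and concatenate with the identity code over $\FF_2$. This yields $\ell=b(k-t)=(k-t)\lceil \log_2 k\rceil$, matching the corollary's output length, and by the ``further'' clause of Theorem~\ref{th:bb1} the replication functions $\psi_i$ satisfy $c_j=\ell\binom{k-1}{t}$ for every $j\in[k]$.

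Next I would apply Lemma~\ref{black-box} to the black-box transformation $(\mathcal{L},(\psi_i)_i,(\varphi_j)_j)$ and to $\Pi_0$. The hypothesis that $\Pi_0$ is $(k_0-1)$-private $k_0$-server with additive reconstruction over $\FF_2$ is exactly what Lemma~\ref{black-box} requires, and since $\Pi_0$ is computationally secure, the resulting $\Pi$ is a computationally $t$-private $k$-server HSS for $\mathcal{F}^\ell$. Its rate coincides with the information rate of $\mathcal{L}$, namely $1-t/k$. The individual upload cost of server $j$ in $\Pi$, as stated in Lemma~\ref{black-box}, is $\sum_{i\in[\ell],\, v\in[k_0]:\,j\in\psi_i(v)} L_v = c_j\cdot L = \ell\binom{k-1}{t}L$, matching the claim.

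I do not foresee any real obstacle: both ingredients (Theorem~\ref{th:bb1} applied to concatenated Reed--Solomon, and Lemma~\ref{black-box} applied to $\Pi_0$) are already developed in this section, and the parameters line up by design. The only minor point worth being careful about is the identification of ``HSS for circuits with $\ell$-bit outputs'' with HSS for $\mathcal{F}^\ell$, which is justified because one can always re-express an $\ell$-bit-output circuit as $\ell$ separate single-bit-output circuits fed to the $\mathcal{F}^\ell$-HSS, without affecting privacy, upload cost, or the download rate.
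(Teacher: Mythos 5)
Your proof is correct and follows the route the paper intends: the paper states Corollary~\ref{lwe-hss} directly as a consequence of Theorem~\ref{th:bb1} (without a separate proof block), and your instantiation --- $\FF=\FF_2$, $b=\lceil\log_2 k\rceil$, the concatenated Reed--Solomon code of Fact~\ref{fact:rs} with rate $(k-t)/k$ and distance $t+1$, then Lemma~\ref{black-box} --- is exactly the intended composition, with all parameters ($\ell = b(k-t)$, $c_j = \ell\binom{k-1}{t}$, upload cost $\ell\binom{k-1}{t}L$) matching. Your side remark identifying ``circuits with $\ell$-bit outputs'' with $\mathcal{F}^\ell$ for single-bit-output circuits is also the right reading, and is consistent with the stated $\ell$-fold factor in the upload cost.
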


\begin{remark}\label{remark-break-barrier} Note that the conclusion of Corollary \ref{lwe-hss} circumvents the $1-dt/k$ barrier in Theorem~\ref{thm:barrier} by allowing nonlinear (and computational) HSS.
\end{remark}

%\yi{Can we get some corollary for $t$-private PIR from OWF using the $t$-private DPF from BGI15?}

\subsection{Black-Box Transformations with \texorpdfstring{$k_0=2$}{k0=2}}\label{sec:bb2}

%\mkw{Rephrased this for clarity/flow, please make sure I didn't mess anything up.}\victor{Looks good}
In this section, we consider black-box transformations with $k_0=2$. %This simple case essentially only permits  us to combine $1$-private $2$-server HSS schemes $\Pi_0$ with the black-box transformation. 
We show that, for the special case of $t=1$, we may take $k_0=2$ and obtain a black-box transformation with \emph{optimal} rate $1 -1/k$ (for any $k$).  Note that this is better than the guarantee of Theorem~\ref{th:bb1}, which says that if we choose $k_0 = {k \choose 1} = k$ then we can achieve optimal rate.  The fact that we may obtain optimal rate already with $k_0 = 2$ may be surprising since when $k_0=2$, essentially only a $1$-private $2$-server HSS scheme $\Pi_0$ can be combined with the black-box transformation.
%The results in the previous section show that for $t=1$ it is sufficient to take $k_0=\binom{k}{t}=k$. This is not optimal as, somewhat surprisingly, we show that for $t=1$ it is sufficient to take $k_0=2$ to have a black-box transformation with an LMSSS with \emph{optimal information rate}. 

\begin{theorem}[Optimal black-box transformations with $k_0=2$]\label{hss2}
Let $k\geq 2$ be an integer and $\FF$ be a finite field. There is a black-box transformation with parameters $(1,k,k_0=2,\ell=k-1,\FF)$ and rate $1-1/k$. Furthermore, for every $j\in[k]$, 
$c_j=k-1$, where
$c_j=\sum_{i=1}^\ell |\{v \in [k_0]\,:\, j \in \psi_i(v)\}|$ is as in Definition~\ref{def:blackbox}.
\end{theorem}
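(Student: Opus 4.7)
The plan is to generalize the $k=3$ construction shown in the introduction diagram in a natural way. I will split the proof into three steps: defining the replication functions, defining the target LMSSS $\mathcal{L}$, and defining the conversion functions $\varphi_j$.

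For the replication, set $\psi_i(1) = [k] \setminus \{i+1\}$ and $\psi_i(2) = \{i+1\}$ for every $i \in [\ell] = [k-1]$. Thus server $1$ holds $y_i^{(1)}$ for every $i \in [\ell]$, while server $j \geq 2$ holds $y_i^{(1)}$ for every $i \neq j-1$, together with the single ``second share'' $y_{j-1}^{(2)}$. Each server therefore receives exactly $k-1$ shares, so $c_j = k-1$. The security condition of Definition~\ref{def:blackbox} reduces, for $t=1$, to verifying that every single server receives at most one of the two additive shares of each $y_i$; this is immediate from the definition, since server $j$ receives $y_i^{(1)}$ exactly when $j \neq i+1$ and receives $y_i^{(2)}$ exactly when $j = i+1$.

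For the target LMSSS $\mathcal{L}$, use a single element $r \in \FF$ of randomness and let
\[
\Share_\mathcal{L}\bigl((y_1,\ldots,y_\ell),\, r\bigr) \;=\; \bigl(r,\ y_1 - r,\ y_2 - r,\ \ldots,\ y_{k-1} - r\bigr),
\]
with reconstruction $y_{j-1} = z_1 + z_j$ for $j \in \{2,\ldots,k\}$. Each share lies in $\FF$, so $\mathcal{L}$ has information rate $(k-1)/k = 1 - 1/k$, matching the claimed rate.

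For the conversion functions, define
\[
\varphi_1\bigl((y_i^{(1)})_{i \in [\ell]}\bigr) = \sum_{i \in [\ell]} y_i^{(1)}, \qquad
\varphi_j\bigl((y_i^{(1)})_{i \neq j-1},\, y_{j-1}^{(2)}\bigr) = y_{j-1}^{(2)} - \sum_{i \neq j-1} y_i^{(1)} \quad (j \geq 2).
\]
Correctness is then a one-line check: choosing $r := \sum_{i \in [\ell]} y_i^{(1)}$ as the randomness witness, $\varphi_1$ outputs $r$ by construction, and for $j \geq 2$,
\[
\varphi_j = y_{j-1}^{(2)} - \bigl(r - y_{j-1}^{(1)}\bigr) = (y_{j-1}^{(1)} + y_{j-1}^{(2)}) - r = y_{j-1} - r,
\]
which exactly matches the $j$-th share of $\mathcal{L}$.

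I anticipate no real obstacle: the construction is essentially a direct generalization of the worked $k=3$ example. The only thing that warrants a careful look is the correctness calculation above, which hinges on the cancellation $y_{j-1}^{(2)} + y_{j-1}^{(1)} = y_{j-1}$ once $r$ is identified with $\sum_i y_i^{(1)}$; everything else (share counts, privacy) follows by direct inspection of the replication pattern.
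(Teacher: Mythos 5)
Your construction is correct and is essentially the same as the paper's, differing only by an index relabeling (you designate server $1$ as the ``base'' server holding all first shares, whereas the paper uses server $k$) and a sign convention in the target LMSSS ($\Share_\mathcal{L}$ yields $(r, y_1-r,\ldots)$ versus the paper's $(y_1+r,\ldots,-r)$). Your version in fact matches the paper's $k=3$ introductory diagram verbatim, and the correctness, security, and share-count verifications go through exactly as you wrote them.
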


\begin{proof}
For every $i=1,\ldots,k-1$ we define
\[
\psi_i(v)=\begin{cases}
[k]\setminus\{i\},&v=1\\
\{i\},&v=2.
\end{cases}
\]
In addition, note that for every server $j=1,\ldots,k-1$ we have
\[
\vec{Y}(j)=((y_i^{(1)})_{i\neq j},(y_j^{(2)}))
\]
and for server $j=k$ we have
\[
\vec{Y}(k)=(y_i^{(1)})_{i\in[k-1]}
\]
This implies that $c_j=k-1$ for every $j\in[k]$. In addition, for every server $j=1,\ldots,k-1$ we choose
\[
\varphi_j(\vec{Y}(j))=y_j^{(2)} -\sum_{i\neq j}y_i^{(1)}
\]
and for server $j=k$ we choose
\[
\varphi_k(\vec{Y}(k))=\sum_{i=1}^{k-1}y_i^{(1)}.
\]
To prove the correctness of the black-box transformation, consider the LMSSS $\mathcal{L}=(\Share_\mathcal{L},\Rec_\mathcal{L})$ over $\mathbb{F}$ defined by $\Share_\mathcal{L}((x_1,\ldots,x_{k-1}),r)=(x_1+r,\ldots,x_{k-1}+r,-r)$ and $\Rec_\mathcal{L}(z_1,\ldots,z_k)=(z_1-z_k,z_2-z_k,\ldots,z_{k-1}-z_k)$. It is not difficult to see that this is a $k$-server $(k-1)$-LMSSS. To see that the black-box transformation is correct with respect to $\mathcal{L}$, note that the output shares take the form 
\[
\left(y_1^{(2)} -\sum_{i\neq 1}y_i^{(1)},y_2^{(2)} -\sum_{i\neq 2}y_i^{(1)},\ldots,y_{k-1}^{(2)} -\sum_{i\neq k-1}y_i^{(1)},\sum_{i=1}^{k-1}y_i^{(1)} \right).
\]
Letting $r=-\sum_{i=1}^{k-1}y_i^{(1)}$, we see that for any $j \in [k-1]$, the $j$'th output share is $$y_j^{(2)} - \sum_{i \neq j} y_i^{(2)} = y_j^{(2)} + y_j^{(1)} + r = x_j + r,$$
as desired.
By construction, the information rate of the black box scheme is $1-1/k$, as we transmit $k$ symbols to reconstruct $k-1$.  Security follows because every server receives exactly one share of each secret.
\end{proof}

Before proceeding with implications of Theorem~\ref{hss2}, we observe that any $1$-private $k$-server LMSSS has rate at most $1 - 1/k$ 
(see Theorem \ref{thm:barrier}). In this sense, Theorem~\ref{hss2} obtains the optimal rate.

%Thus, Theorem~\ref{hss2} is optimal with respect to the rate, and an HSS obtained via Theorem~\ref{hss2} has the optimal download rate of any HSS obtained with this approach. 

%\yi{What does ``this approach'' refer to? This paragraph sounds a bit handwavy, I suggest to either make sharper statements or remove. }\victor{removed}

Next, we discuss the implications of Theorem~\ref{hss2}.
Combining Theorem \ref{hss2} with Proposition \ref{black-box}, we obtain several corollaries. In particular, this can be immediately applied to the constructions in \cite{BoyleGI15,BGI16}, to obtain  computationally secure PIR with rate approaching 1 and upload cost $O(\lambda\log N)$, where $\lambda$ is a security parameter (concretely, the seed length of a length-doubling pseudorandom generator), assuming one way functions exist.

\begin{corollary}[High-rate CPIR with logarithmic upload cost]
\label{cor-cpir}
Suppose one-way functions exist. Then, for any $k\ge 2$, $w\geq 1$, and $\ell=w(k-1)$, there is a computationally $1$-private $k$-server PIR protocol for databases with $N$ records of length $\ell$, with upload cost $O(k\lambda\log N)$ and download rate $1-1/k$.

%Let $k,w$ be integers and $\FF$ a finite field. Suppose there exists a computationally $1$-private $2$-server PIR for databases with records over $\FF$, upload cost $L$, and additive reconstruction. 
%Then, there exists a computationally $1$-private $k$-server PIR for databases with records over $\FF^w$, with upload cost $k\cdot L$ and rate $1-1/k$.
%Assuming OWFs exist, \mkw{Has OWF been defined/discussed?} there exists a computationally secure $1$-private $k$-server PIR for $w\cdot (k-1)$-bit, $w\in\mathbb{N}$, record databases of size $N$, with upload cost $O(\log N)$ and download cost $w\cdot k$. Consequently, the rate of the PIR is $1-1/k$.
\end{corollary}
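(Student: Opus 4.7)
The plan is to invoke Lemma~\ref{black-box} with the black-box transformation from Theorem~\ref{hss2} ($k_0=2$, $t=1$, rate $1-1/k$) and the DPF-based $1$-private $2$-server PIR of \cite{BoyleGI15, BGI16} as the source scheme $\Pi_0$. First, I would recall that any one-way function gives a length-doubling pseudorandom generator, which in turn yields a 2-party distributed point function with keys of size $O(\lambda \log N)$ \cite{BoyleGI15, BGI16}. For databases with records in $\FF_{2^w}$, this gives a computationally $1$-private $2$-server HSS $\Pi_0$ for $\ALL_{\FF_{2^w}}$ with additive reconstruction over $\FF_{2^w}$: $\Share_0$ delivers one DPF key of size $O(\lambda \log N)$ to each server (so $L_1 = L_2 = O(\lambda \log N)$), each server's $\Eval_0$ computes the $\FF_{2^w}$-linear combination of database entries weighted by the DPF evaluation vector, and the two output elements of $\FF_{2^w}$ sum to the targeted record.

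Next, I would apply Theorem~\ref{hss2} with parameters $(t=1,\ k,\ k_0=2,\ \ell=k-1,\ \FF=\FF_{2^w})$ to obtain a black-box transformation of rate $1-1/k$ with per-server share count $c_j = k-1$. Plugging this into Lemma~\ref{black-box} combines it with $\Pi_0$ to yield a computationally $1$-private $k$-server HSS $\Pi$ for $(\ALL_{\FF_{2^w}})^{k-1}$ with download rate $1-1/k$ and individual upload cost $c_j \cdot O(\lambda \log N) = O(k\lambda \log N)$ per server, matching the bound claimed in the corollary.

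Finally, I would interpret this amortized HSS as a PIR for a single length-$\ell$ record with $\ell = w(k-1)$. Each length-$\ell$ record decomposes into $k-1$ chunks in $\FF_{2^w}$, so the database splits into $k-1$ sub-databases $D_1, \dots, D_{k-1}$ over $\FF_{2^w}$, where the $s$-th chunk of record $i$ becomes the $i$-th record of $D_s$. Retrieving the length-$\ell$ record at index $i^*$ is then realized by running the $k-1$ instances of $\Pi$ in parallel, each instance being a PIR query on $D_s$ for the same target index $i^*$ but with an independent $\Pi_0$-sharing of $i^*$ (as required by Lemma~\ref{black-box}). Concatenating the $k-1$ recovered chunks yields the full record; the download rate is $1-1/k$ and computational $1$-privacy follows from that of $\Pi$.

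The only subtlety worth flagging is that the replication pattern $(\psi_i)$ from Theorem~\ref{hss2} is carefully chosen so that, for each PIR instance $i$, no single server receives both DPF keys; together with the pseudorandomness of a single DPF key, this gives computational $1$-privacy, which is precisely the security guarantee that Lemma~\ref{black-box} carries through. Thus no additional argument is required beyond assembling the three ingredients, and the main ``work'' of the proof lies in correctly matching the $(k-1)$-fold amortization of $\Pi$ to the chunk decomposition of the length-$\ell$ record.
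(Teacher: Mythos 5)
Your proposal follows the paper's intended route: instantiate Lemma~\ref{black-box} with the rate-$(1-1/k)$ black-box transformation of Theorem~\ref{hss2} and the additive, DPF-based $1$-private $2$-server PIR of~\cite{BoyleGI15,BGI16} as $\Pi_0$, then re-read the resulting $k$-server HSS for $(\ALL_{\FF_{2^w}})^{k-1}$ as a PIR for $w(k-1)$-bit records by splitting the database into $k-1$ chunks over $\FF_{2^w}$ and sharing $i^*$ independently for each chunk. The ingredients, parameter trace, and privacy argument (the replication pattern of Theorem~\ref{hss2} never places both keys of a single DPF instance at one server) are all correct and essentially the same as in the paper.

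One accounting caveat worth being explicit about: you equate $c_j\cdot O(\lambda\log N)=O(k\lambda\log N)$, which is the \emph{per-server} upload cost, with the corollary's stated $O(k\lambda\log N)$. Under Definition~\ref{def-rate}, $\mathrm{UploadCost}$ is a sum over all $k$ servers, so the total for this construction is $\sum_{j=1}^{k} c_j\cdot O(\lambda\log N)=O(k^2\lambda\log N)$; key~$1$ of each instance is replicated to $k-1$ servers, and the $k-1$ independent DPF-sharings of $i^*$ required by Lemma~\ref{black-box} cannot be collapsed to one without giving some server both keys of a single DPF. So the corollary's bound is most naturally read as per-server rather than total; your per-server figure is right and likely what was intended, but the total under Definition~\ref{def-rate} carries an extra factor of $k$.
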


The above can be extended to high-rate HSS schemes for other simple classes of functions, considered in~\cite{BoyleGI15,BGI16}, based on one-way functions.
Similarly, we can get high-rate variants of 2-server HSS schemes for {\em branching programs} based on DDH~\cite{BGI16a}\footnote{Known DDH-based constructions of additive HSS for branching programs have an inverse-polynomial failure probability. When amortizing over $\ell$ instances, one can use an erasure code to make the failure probability negligible in $\ell$ while maintaining the same asymptotic download rate.
} or DCR~\cite{OrlandiSY21,RoyS21}. 
Analogously to Corollary~\ref{cor-cpir}, we can get similar HSS schemes for branching programs with rate $1-1/k$.

%\yi{Replaced the formal corollary by the above text. }
%, which also don't extend to a $2$-private $3$-server HSS and are ``stuck'' at a $1$-private $2$-server HSS.

\iffalse
\begin{corollary}
\yi{This is almost identical to Cor 4 so is not very useful. I think we can remove the formal statement. }
Let $k$ be an integer and $\FF$ a finite field. Suppose there exists a computationally $1$-private $2$-server HSS for 
%the function class of 
%YI: the terminology of ``HSS for branching programs'' is 
branching programs $\mathcal{B}$, with upload cost $L$ and additive reconstruction over $\FF$. Then, there exists a computationally $1$-private $k$-server HSS for $\mathcal{B}^{k-t}$ with upload cost $k\cdot L$ and rate $1-1/k$.
%\victor{to be polished}
%Assuming DDH, there exists a $1$-private $k$-server HSS for $\ell=w\cdot (k-1)$, $w\in\mathbb{N}$, branching programs for $n$-bit inputs, with upload cost $O(\mathrm{poly}(n))$ and download cost $w\cdot k$. Consequently, the rate of the PIR is $1-1/k$.
\end{corollary}
\fi

\subsection{High-Rate Information-Theoretic PIR with Sub-Polynomial Upload Cost}\label{sec:mvpir}

While Theorem \ref{thm:PIR} provides PIR protocols with download rate approaching $1$ as $k\rightarrow\infty$, these schemes have polynomial ($O(N^{1/d})$) upload cost in the database size $N$. In this section we will construct PIR protocols with download rate approaching $1$ as $k\rightarrow\infty$ and that have sub-polynomial ($N^{o(1)}$) upload cost, albeit at a worse trade-off between the number of servers and the download rate.

The black-box approaches of Sections~\ref{sec:bblarge} and \ref{sec:bb2} are not sufficient because they either require an HSS with a high ratio of security-to-servers ($1-1/k_0$), or a very small number of servers ($k_0=2$), while in this section we eventually apply it to a $1$-private $6$-server HSS. We don't obtain a tight result in this section, but, nevertheless, it allows us to convert a $t_0$-private $k_0$-server HSS $\Pi_0$ to a $t_0$-private $k$-server HSS $\Pi$ with higher rate, where $k>k_0$, and hence it is sufficient for our purposes of improving the rate of PIR protocols with sub-polynomial upload cost. 

Note that we preserve the $t_0$ privacy of the original HSS (in similar fashion to the $k_0=2$ case where $t_0=1$). Thus, in view of Remark \ref{remark:black-box-smaller-t}, without loss of generality we can assume that $t_0=k_0-1$. To demonstrate the idea of this black-box transformation, suppose we are given a $3$-additive secret sharing for every secret $y_i=y_i^{(1)} +y_i^{(2)} +y_i^{(3)}$, $i\in[2]$. We will show how to obtain a $2$-private $5$-server $2$-LMSSS sharing with information rate $2/5>1/3$. Let $q=3$ and suppose we have servers indexed by the elements of $[q]=\{1,2,3\}$. In addition, suppose we have two additional servers indexed by the \emph{sets} $\{1,2\}$ and $\{1,3\}$, respectively. In this construction we give the third share of the first secret $y_1^{(3)}$ to server $\{1,2\}$ and distribute the other shares $y_1^{(1)},y_1^{(2)}$ arbitrary among the servers indexed with $1,2$. Next, suppose we do the same with the servers $1,3,\{1,3\}$ and the shares $y_2^{(1)},y_2^{(2)},y_2^{(3)}$. Assuming each server simply adds all the shares it gets, we would like that $\varphi_{1}(\vec{Y}(1))+\varphi_{2}(\vec{Y}(2))+\varphi_{\{1,2\}}(\vec{Y}(\{1,2\}))=y_1$ (and the same with $y_2$ and servers $1,3,\{1,3\}$). Unfortunately, as it currently stands this does not hold because the servers $1,2,3$ hold terms related to $y_1$ and $y_2$ that do not cancel. To remedy this, suppose we also give server $\{1,2\}$ all the shares of $y_2$ servers $1,2$ received, so it may subtract them from the sum (and the same with server $\{1,3\}$  and the shares of $y_1$). Formally:
\begin{eqnarray*}
    \vec{Y}(1)=\{y_1^{(1)},y_2^{(1)}\}&\qquad&\varphi_1(\vec{Y}(1))=y_1^{(1)} +y_2^{(1)} \\
    \vec{Y}(2)=\{y_1^{(2)} \}&\qquad&\varphi_2(\vec{Y}(2))=y_1^{(2)} \\
    \vec{Y}(3)=\{y_2^{(2)} \}&\qquad&\varphi_3(\vec{Y}(3))=y_2^{(2)} \\
    \vec{Y}(\{1,2\})=\{y_1^{(3)},y_2^{(1)} \}&\qquad&\varphi_{\{1,2\}}(\vec{Y}(\{1,2\}))=y_1^{(3)} -y_2^{(1)} \\
     \vec{Y}(\{1,3\})=\{y_2^{(3)},y_1^{(1)} \}&\qquad&\varphi_{\{1,3\}}(\vec{Y}(\{1,3\}))=y_2^{(3)} -y_1^{(1)}
\end{eqnarray*}
This satisfies $\varphi_{1}(\vec{Y}(1))+\varphi_{2}(\vec{Y}(2))+\varphi_{\{1,2\}}(\vec{Y}(\{1,2\}))=y_1$ (and the same with $y_2$ and servers $1,3,\{1,3\}$). Furthermore, since $|\{1,2\}\cap\{1,3\}|\leq 1$, this scheme is $2$-private, as each server receives at most a single share from each secret. Generalizing this idea leads to the following lemma:

\begin{lemma}
\label{th:bb3}
Let $k_0<q$ be an integer and $\FF$ a finite field. Suppose we have a collection of $(k_0-1)$-sized subsets from $[q]$, $\mathcal{S}\subseteq\binom{[q]}{k_0-1}$, with the additional requirement that any two distinct sets $S_1,S_2\in\mathcal{S}$ satisfy $|S_1\cap S_2|\leq 1$. Then there is a black-box transformation with parameters $(k_0-1,q+|\mathcal{S}|,k_0,|\mathcal{S}|,\mathbb{F})$ and rate $1-q/(q+|\mathcal{S}|)$. 

Furthermore, for every $j\in[q+|\mathcal{S}|]$, $c_j \leq |\mathcal{S}|$, where e $c_j=\sum_{i=1}^\ell |\{v \in [k_0]\,:\, j \in \psi_i(v)\}|$ is as in Definition~\ref{def:blackbox}.
\end{lemma}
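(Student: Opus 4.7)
The plan is to generalize the toy example preceding the lemma. First I will set up notation: identify the $k=q+|\mathcal{S}|$ servers with $[q] \sqcup \mathcal{S}$, index the $\ell=|\mathcal{S}|$ secrets by $S \in \mathcal{S}$, and for each $S \in \mathcal{S}$ fix an arbitrary ordering $S=\{s_1^S, \ldots, s_{k_0-1}^S\}$; write $v(S,j)$ for the position of $j \in S$. For each secret $S$ the sharing from $\Pi_0$ produces $y_S^{(1)}, \ldots, y_S^{(k_0)}$ with $\sum_v y_S^{(v)} = y_S$. The replication functions are defined by
\[ \psi_S(v) = \{s_v^S\} \cup \{S' \in \mathcal{S}\setminus\{S\} : s_v^S \in S'\} \quad (v < k_0), \qquad \psi_S(k_0) = \{S\}. \]
Thus share $y_S^{(v)}$, for $v<k_0$, goes to the ``coordinate server'' $s_v^S \in [q]$ and also to every ``set server'' $S' \in \mathcal{S}$ whose intersection with $S$ is realized at $s_v^S$; the last share $y_S^{(k_0)}$ goes only to server $S$.

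Next I will define the conversion functions. For $j \in [q]$, let $\varphi_j$ sum the shares it holds: $\varphi_j = \sum_{S\ni j} y_S^{(v(S,j))}$. For $j=S \in \mathcal{S}$, use the condition $|S\cap S'|\le 1$: for each $S' \in \mathcal{S}\setminus\{S\}$ with $S\cap S' = \{j_{S,S'}\}$, server $S$ holds exactly one extra share $y_{S'}^{(v(S', j_{S,S'}))}$ (and holds $y_S^{(k_0)}$); set
\[ \varphi_S = y_S^{(k_0)} - \sum_{S' \neq S,\, S\cap S' \neq \emptyset} y_{S'}^{(v(S', j_{S,S'}))}. \]
This already gives the bound $c_j \leq |\mathcal{S}|$: server $j \in [q]$ holds one share per set of $\mathcal{S}$ containing $j$, and server $S \in \mathcal{S}$ holds one share of $y_S$ plus at most one share per other secret.

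The main work is to exhibit the target LMSSS $\mathcal{L}$ and verify correctness. I will take $\mathcal{L}$ over $\FF$ with secrets $(y_S)_{S\in\mathcal{S}}$, randomness $(r_S^{(v)})_{S\in\mathcal{S},\, v\in[k_0-1]}$ (the role of the $y_S^{(v)}$ for $v<k_0$), and shares defined by the same formulas as $\varphi_j$ after substituting $y_S^{(v)} \mapsto r_S^{(v)}$ for $v<k_0$ and $y_S^{(k_0)} \mapsto y_S - \sum_{v<k_0} r_S^{(v)}$. Then by definition, $(\varphi_j(\vec Y(j)))_{j}$ is exactly $\Share_\mathcal{L}((y_S)_S, \br)$ for $\br=(y_S^{(v)})_{S,v<k_0}$. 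To check $\mathcal L$ is a legitimate $\ell$-LMSSS with $\Gamma \ni [k]$, I will verify reconstruction by summing over $S \cup \{S\}$: a direct calculation (using that $|S\cap S'|\le 1$ so each random $r_{S'}^{(v)}$ appearing in the sum cancels exactly with the corresponding term in $\varphi_S$) gives $\sum_{j\in S}\varphi_j + \varphi_S = y_S$, yielding a linear reconstruction map $\Rec_{[k]}$. The information rate is $|\mathcal{S}|/(q+|\mathcal{S}|) = 1 - q/(q+|\mathcal{S}|)$ as claimed.

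Finally, for the security clause of Definition~\ref{def:blackbox}, I will observe that by construction every server $j$ is contained in $\psi_S(v)$ for at most one value of $v$: if $j \in [q]$ then only $v=v(S,j)$ (when $j\in S$) works; if $j=S' \in \mathcal{S}$ then either $S'=S$ giving $v=k_0$, or $S'\ne S$ and only the single $v=v(S,j_{S,S'})$ works (by $|S\cap S'|\le 1$). Hence for any $T\subseteq[k]$ with $|T|\le k_0-1$ and any secret $S$,
\[ \Bigl|\bigcup_{j\in T}\{v\in[k_0]:j\in\psi_S(v)\}\Bigr|\;\le\;|T|\;\le\;k_0-1, \]
which is the required condition. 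The only subtle point in the whole argument is the correctness verification, where the hypothesis $|S\cap S'|\le 1$ is used crucially to ensure that each randomness term $r_{S'}^{(v)}$ contributed by the coordinate servers is cancelled by exactly one term in some $\varphi_{S}$; I expect this to be the main place to write carefully.
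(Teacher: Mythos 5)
Your proof is correct and follows essentially the same approach as the paper: the same replication scheme (share $y_S^{(v)}$ to its coordinate server and to the set servers meeting $S$ at that coordinate, $y_S^{(k_0)}$ to set server $S$), the same conversion functions (summation at coordinate servers, a cancelling subtraction at set servers), and the same reconstruction identity $\sum_{j\in S}\varphi_j+\varphi_S=y_S$ used to exhibit $\mathcal{L}$. In fact, your phrasing of the replication step --- add set server $S'$ to $\psi_S(v)$ only when $s_v^S\in S'$ --- is cleaner than the paper's Step~4 as literally written (which, read verbatim, would send all of $y_{i'}^{(1)},\ldots,y_{i'}^{(k_0-1)}$ to every other set server and would break security); yours is the version that actually matches both the paper's toy example and its correctness and security arguments.
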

\begin{proof}
Associate indices $i=1,\ldots,\ell$, $\ell=|\mathcal{S}|$, with the sets $S_i\in\mathcal{S}$, and the indices $\ell+1,\ldots,\ell+q$ with the elements of $[q]$. We will describe the construction of $(\psi_i)_{i\in[\ell]}$ in algorithmically as follows.
\begin{enumerate}
    \item For every $i\in[q+\ell]$ and $v\in[k_0]$ initialize $\psi_i(v)=\emptyset$.
    \item For every $i\in[\ell]$, update $\psi_i(k_0)=\psi_i(k_0)\cup \{i\}$.
    \item For every $i\in [\ell]$, let $S_i=\{j_1,\ldots,j_{k_0-1}\}\in\mathcal{S}$ be the associated set. Update $\psi_{i}(r)=\psi_i(r)\cup\{\ell+j_r\}$ for $r=1,\ldots,k_0-1$ (note that the indices in $S_i$ may be ordered arbitrarily).
    \item For every $i\in[\ell]$ let again $S_i=\{j_1,\ldots,j_{k_0-1}\}\in\mathcal{S}$ be the associated set. Now, for $i'\in[\ell]\setminus\{i\}$ and $r=1,\ldots,k_0-1$ update $\psi_{i'}(r)=\psi_{i'}(r)\cup\{i\}$.
\end{enumerate}
Next, we define $(\varphi_j)_{j\in[q+\ell]}$ as follows:
\begin{itemize}
    \item For every server $j=1,\ldots,\ell$ let $I_j\subseteq [\ell]\times[k_0]$ denote the set of indices $(i,v)\in I_j$ such that $j\in\psi_i(v)$, \emph{excluding} $(j,k_0)$. Then, we set $\varphi_j(\vec{Y}(j))=y_j^{({k_0})}-\sum_{(i,v)\in I_j}y_i^{(v)}$.
    \item Every server $j=\ell+1,\ldots,\ell+q$ simply computes the sum of the shares it got $\varphi_j(\vec{Y}(j))=\sum_{i\in[\ell],v\in [k_0]:j\in\psi_i(v)}y_i^{(v)}$.
\end{itemize}
To show that the above describes a black-box transformation over $\FF$, we need to establish security and correctness.  We first argue security, which holds because, for every $i\in[\ell]$, every server with index $j\in[q+\ell]$ receives at most a single element $y_i^{(v)}$ (which also implies $c_j\leq \ell$). To see this, note that $y_i^{({k_0})}$ is sent exclusively to server with index $i$ (and no other share $y_i^{(v)}$ is sent to it). In addition, every server with index $j=\ell+1,\ldots,\ell+q$ receives at most a single element $y_i^{(v)}$ for every $i\in[\ell]$. It is potentially only possible for a server with index $j\in[\ell]$ to receive elements $y_i^{(v)},y_i^{({v'})}$ with $v\neq v'$. However, if $j=i$ this does not happen because we exclude it in step 4. If $j\neq i$ it can only happen if there are two elements $j_v,j_{v'}\in S_i\cap S_{i'}$ for some $i'\neq i$. However, because we require that $|S_i\cap S_{i'}|\leq 1$ this does not happen. Hence, every subset $T$ of size $|T|\leq k_0-1$ satisfies that there is no $i\in[\ell]$ for which the servers in $T$ hold all the shares $y_i^{(1)},\ldots, y_i^{({k_0})}$.

To argue correctness we first define a $(q+\ell)$-server $\ell$-LMSSS over $\mathbb{F}$, $\mathcal{L}=(\Share_\mathcal{L},\Rec_\mathcal{L})$ as follows:
\begin{itemize}
    \item $\Share_\mathcal{L}$ first additively shares $y_i=y_i^{(1)} +\ldots+y_i^{({k_0})}$ for every $i\in[\ell]$. Then, each server $j$ receives share $y_i^{(v)}$ if $j\in\psi_i(v)$.
    \item For $\Rec_\mathcal{L}$ note that $\varphi_i(Y(i))+\sum_{j\in S_i}\varphi_{\ell+j}(Y(\ell+j))=y_i^{(1)} +\ldots+y_i^{({k_0})}=y_i$ because if $y_{i'}^{({v'})}\in Y(\ell+j)$ for $i'\neq i$ then also $y_{i'}^{({v'})}\in Y(i)$ (and so these terms cancel as they appear with a minus sign in $\varphi_i(Y(i))$). Therefore, only terms from $\{y_i^{(v)} \}_{v\in[k_0]}$ don't cancel. Because we can reconstruct every $y_i$, this fully specifies $\Rec_\mathcal{L}$.
\end{itemize}
Now, since $\mathcal{L}$ is an LMSSS and the shares $(\varphi_j(Y(j))_{j\in[q+\ell]}$ are shares of $\mathcal{L}$, correctness follows. Finally, for the rate of the black-box transformation, note that the information rate of $\mathcal{L}$ is $1-q/(q+\ell)$.
\end{proof}

Next, we need the PIR protocol from \cite{BeimelIKO12} (although the PIR from \cite{STOC:Efremenko09}, which has larger upload cost, can also be used).

\begin{theorem}\cite{BeimelIKO12}\label{thm:biko}
There exists a $1$-private $3$-server HSS $\Pi=(\Share,\Eval,\Rec)$ for $\ALL_{\FF_4}$ with upload cost $O\left(2^{6\sqrt{\log N\log\log N}}\right)$. Furthermore, the output shares satisfy $z^{(1)} +z^{(2)} +z^{(3)}=\eta\cdot f(x)$, where $f\in\ALL_{\FF_4}$ is the function on which $\Eval$ is applied, $x$ is the input, and $\eta\neq 0$ depends only on $x$ and the randomness of $\Share$.
\end{theorem}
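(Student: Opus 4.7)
The plan is to adapt the matching-vector PIR framework of Efremenko, as refined by Beimel-Ishai-Kushilevitz-Orlitsky. The argument rests on three ingredients: an explicit sub-polynomial matching-vector family modulo~$6$, a $\mathbb{Z}_6$-to-$\mathbb{F}_4$ share-conversion lemma, and a linear encoding of the database that couples them.

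I would begin by invoking Grolmusz's matching vector families: for every $N$ there exist $(\vec{u}_i, \vec{v}_i)_{i \in [N]}$ in $\mathbb{Z}_6^h$ with $h = \exp(O(\sqrt{\log N \log\log N}))$ such that $\langle \vec{u}_i, \vec{v}_i \rangle \equiv 0 \pmod 6$, while $\langle \vec{u}_i, \vec{v}_j \rangle \in S$ for $i \neq j$, where $S = \{s \in \mathbb{Z}_6 \setminus \{0\} : 2 \mid s \text{ or } 3 \mid s\} = \{2, 3, 4\}$. The dimension $h$ controls the upload cost: a share is an element of $\mathbb{Z}_6^h$, yielding the claimed $O(2^{6\sqrt{\log N \log\log N}})$ bound.

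Given this, I would define $\Pi$ as follows. On input $x \in [N]$, $\Share$ samples uniform $\vec{r}_1, \vec{r}_2 \in \mathbb{Z}_6^h$ and outputs $\vec{q}_1 = \vec{r}_1$, $\vec{q}_2 = \vec{r}_2$, $\vec{q}_3 = \vec{v}_x - \vec{r}_1 - \vec{r}_2 \bmod 6$. Each individual $\vec{q}_j$ is uniform, giving $1$-privacy. For $\Eval$, given a database $f: [N] \to \mathbb{F}_4$, server $j$ computes $\sigma_i^{(j)} := \langle \vec{u}_i, \vec{q}_j \rangle \bmod 6$ for every $i \in [N]$; by $\mathbb{Z}_6$-linearity, $\sigma_i^{(1)} + \sigma_i^{(2)} + \sigma_i^{(3)} = \langle \vec{u}_i, \vec{v}_x \rangle$, which is $0$ when $i=x$ and lies in $S$ otherwise. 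The server then outputs $z^{(j)} := \sum_{i=1}^N f(i) \cdot \varphi_j(\sigma_i^{(j)})$, for maps $\varphi_j: \mathbb{Z}_6 \to \mathbb{F}_4$ produced by the share-conversion step below.

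The heart of the proof is the BIKO share-conversion lemma: there exist local maps $\varphi_1, \varphi_2, \varphi_3: \mathbb{Z}_6 \to \mathbb{F}_4$ such that for any $3$-additive $\mathbb{Z}_6$-sharing $(s_1, s_2, s_3)$ of $s$, the sum $\varphi_1(s_1) + \varphi_2(s_2) + \varphi_3(s_3)$ equals $0$ whenever $s \in S$, and equals a nonzero $\eta \in \mathbb{F}_4$ (depending only on the particular sharing) when $s = 0$. Granting this, summing over $i$ telescopes to $z^{(1)} + z^{(2)} + z^{(3)} = \sum_{i : i = x} f(i) \cdot \eta = \eta \cdot f(x)$, where $\eta = \eta(\sigma_x^{(1)}, \sigma_x^{(2)}, \sigma_x^{(3)})$ depends only on $x$ and the randomness of $\Share$ (through the $\vec{q}_j$'s), exactly as the theorem requires.

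The main obstacle is establishing the share-conversion lemma itself. The key is to exploit the CRT decomposition $\mathbb{Z}_6 \cong \mathbb{Z}_2 \times \mathbb{Z}_3$ together with two features of $\mathbb{F}_4$: it has characteristic~$2$, which absorbs the $\mathbb{Z}_2$ coordinate, and it contains a primitive cube root of unity $\omega$, which yields a non-trivial additive character on the $\mathbb{Z}_3$ coordinate. By writing $s = (s^{(2)}, s^{(3)}) \in \mathbb{Z}_2 \times \mathbb{Z}_3$ and defining $\varphi_j(s)$ as an $\mathbb{F}_4$-linear combination whose $\mathbb{Z}_3$-part is governed by $\omega^{s^{(3)}}$ and whose $\mathbb{Z}_2$-part is governed by carefully chosen masks $\lambda_j$, one can verify by a direct character-sum computation that the cancellation occurs precisely on the set $\{s : s \in S\}$, while the value on sharings of $0$ is forced to a nonzero element of $\mathbb{F}_4^*$ by arranging $\lambda_1 + \lambda_2 + \lambda_3 \not\equiv 0$ on the $\mathbb{Z}_2$-sharings of $0$. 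This verification is the sole technical hurdle; everything else follows by tracing definitions.
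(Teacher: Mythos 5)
This theorem is cited from \cite{BeimelIKO12} and is not proved in the paper, so there is no in-paper argument to compare against. Your reconstruction correctly identifies the ingredients of the Efremenko/BIKO scheme: Grolmusz matching vectors over $\mathbb{Z}_6$ of sub-polynomial dimension, additive $3$-sharing of $\vec{v}_x$, a local $\mathbb{Z}_6\to\FF_4$ share conversion that annihilates the $S$-matching contributions, and the telescoping sum producing $z^{(1)}+z^{(2)}+z^{(3)}=\eta\cdot f(x)$ with $\eta$ determined by $x$ and the sharing randomness.

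Two gaps remain. First, the canonical matching set modulo $6$ is $S=\{1,3,4\}$, not $\{2,3,4\}$: Grolmusz's family guarantees that both $\langle\vec{u}_i,\vec{v}_j\rangle\bmod 2$ and $\langle\vec{u}_i,\vec{v}_j\rangle\bmod 3$ lie in $\{0,1\}$ for $i\neq j$, i.e., $s\bmod 3\neq 2$, which is not the condition ``$2\mid s$ or $3\mid s$''; for your $S$ the needed share conversion does not exist. Second, and more importantly, the $\mathbb{Z}_6\to\FF_4$ share-conversion lemma is the entire technical content of the cited result, and the sketch does not establish it. Taking $\varphi_j$ governed solely by $\omega^{s\bmod 3}$ cannot work: for a $3$-additive $\mathbb{Z}_3$-sharing $(b_1,b_2,b_3)$ of $0$, the additive sum $\omega^{b_1}+\omega^{b_2}+\omega^{b_3}$ equals $0$ when the $b_j$ form a permutation of $(0,1,2)$ but equals $\omega^c\neq 0$ when $b_1=b_2=b_3=c$, so the value is not determined by $s$ alone, and a nontrivial interplay with the $\mathbb{Z}_2$ coordinate is required. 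You have correctly isolated where the work lies (``carefully chosen masks $\lambda_j$'', case analysis), but the construction and verification have not been carried out; as written, the proposal shows how the scheme would follow \emph{given} BIKO's conversion lemma, which is precisely what the paper takes on faith by citing it.
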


The PIR from Theorem \ref{thm:biko} does not have additive reconstruction, but its reconstruction algorithm is simple enough so that it can be modified into an additive PIR, at the cost of doubling the number of servers.

\begin{proposition}\label{prop:mvpir}
There exists a $1$-private $6$-server HSS $\Pi$ for $\ALL_{\FF_4}$ with additive reconstruction and upload cost $O\left(2^{6\sqrt{\log N\log\log N}}\right)$.
\end{proposition}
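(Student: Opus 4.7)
The plan is to convert the $3$-server scheme $\Pi'$ of Theorem~\ref{thm:biko} into a $6$-server additive PIR by having three additional servers distribute the work of the scalar multiplication by $\eta^{-1}$ that otherwise appears in reconstruction. The crucial observation is that $\eta$ depends only on the input $x$ and the randomness used by $\Share'$, both of which are known to the client; hence the client can compute $\eta^{-1}\in\FF_4$ at sharing time. The main thing to get right is cryptographic: the shares of $\eta^{-1}$ delivered to the servers must not leak information about $x$ (even though $\eta$ itself depends on $x$), and they must be arranged so that the product $\eta^{-1}\cdot(z^{(1)}+z^{(2)}+z^{(3)})$ can be computed \emph{locally}, without any server holding too many shares.

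I will have the client run $\Share'$ to obtain PIR input shares $x^{(1)},x^{(2)},x^{(3)}$ and, using fresh randomness, additively share $\eta^{-1}=c_1+c_2+c_3$ over $\FF_4$. Then, for each $v\in\{1,2,3\}$, server $v$ will receive $x^{(v)}$ together with the two shares $\{c_i:i\neq v\}$, while server $v+3$ will receive $x^{(v)}$ together with the single share $c_v$. Both server $v$ and server $v+3$ run $\Eval'$ locally on $x^{(v)}$ to compute $z^{(v)}$. Server $v\in\{1,2,3\}$ outputs $y^{(v)}=(c_i+c_j)\cdot z^{(v)}=(\eta^{-1}-c_v)\,z^{(v)}$, and server $v+3$ outputs $y^{(v+3)}=c_v\cdot z^{(v)}$. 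Summing all six outputs gives
\[
\sum_{v=1}^{3}(\eta^{-1}-c_v)\,z^{(v)}+\sum_{v=1}^{3}c_v\,z^{(v)}=\eta^{-1}\sum_{v=1}^{3}z^{(v)}=\eta^{-1}\cdot\eta f(x)=f(x),
\]
giving additive reconstruction.

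It remains to check $1$-privacy and the upload cost. The view of server $v\in\{1,2,3\}$ is $(x^{(v)},c_i,c_j)$ with $\{i,j\}=\{1,2,3\}\setminus\{v\}$: by $1$-privacy of $\Pi'$ the distribution of $x^{(v)}$ is independent of $x$, and any two of three additive shares of $\eta^{-1}$ are jointly uniform on $\FF_4^2$ and independent of $\eta^{-1}$ (using that the randomness for the additive sharing of $\eta^{-1}$ is chosen independently of the PIR randomness), so the full view is independent of $x$. The view of server $v+3$ is $(x^{(v)},c_v)$, where $c_v$ is a single uniform share of $\eta^{-1}$ over $\FF_4$ and hence independent of $\eta^{-1}$, and the same reasoning applies. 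Finally, each server receives one input share $x^{(v)}$ of size $O(2^{6\sqrt{\log N\log\log N}})$ together with at most two constant-size $\FF_4$ elements, so the total upload cost remains $O(2^{6\sqrt{\log N\log\log N}})$, as required.
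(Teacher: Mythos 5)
Your proof is correct and follows essentially the same approach as the paper's: independently additively share $\eta^{-1}$, distribute the shares among the six servers so that each server holding a PIR share $x^{(v)}$ also holds a fresh piece of $\eta^{-1}$, and have each server locally multiply. The only difference is cosmetic: the paper uses a 2-out-of-2 additive sharing $\eta^{-1}=\eta_1+\eta_2$ and gives $\eta_1$ to servers $1,2,3$ and $\eta_2$ to servers $4,5,6$, whereas you use a 3-out-of-3 sharing $\eta^{-1}=c_1+c_2+c_3$ and give server $v$ the two shares $\{c_i\}_{i\neq v}$ (equivalently, a single combined element $\eta^{-1}-c_v$) and server $v+3$ the share $c_v$; both yield the same local-multiplication-then-sum reconstruction and the same $1$-privacy argument, so this is a minor variation rather than a genuinely different route.
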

\begin{proof}
Let $\Pi_0=(\Share_0,\Eval_0,\Rec_0)$ be the PIR from Theorem \ref{thm:biko}. The HSS $\Pi$ proceeds as follows:
\begin{itemize}
    \item \textbf{Sharing:} Let $(x_0^{(1)},x_0^{(2)},x_0^{(3)})=\Share_0(x)$ and $\eta\in\FF_4$ be as in Theorem \ref{thm:biko}. Let $\eta^{-1}=\eta_1+\eta_2$ be an additive sharing. We use $6$ servers with input shares: $(x^{(1)},x^{(2)},x^{(3)},x^{(4)},x^{(5)},x^{(6)})=((x_0^{(1)},\eta_1),(x_0^{(2)},\eta_1),(x_0^{(3)},\eta_1),(x_0^{(1)},\eta_2),(x_0^{(2)},\eta_2),(x_0^{(3)},\eta_2))$.
    \item \textbf{Evaluation:} Each server $r$ holding a share $x_0^{(i)}$ and $\eta_j$, outputs $z_r=\eta_j\cdot \Eval(f,i,x_0^{(i)})$.
    \item \textbf{Reconstruction:} The reconstruction algorithm computes $z_1+\ldots+z_6$.
\end{itemize}
Privacy holds because each server receives a single share $x_0^{(i)}$ and a single share $\eta_j$. Correctness holds because $\sum_{r=1}^6z_r=(\eta_1+\eta_2)\sum_{i=1}^3\Eval(f,i,x_0^{(i)})=\eta^{-1}\eta f(x)=f(x)$.
\end{proof}

\begin{theorem}\label{th:mv-pir}
There exists a $1$-private $k$-server PIR for $2w\cdot (k-\Theta(\sqrt{k}))$-bit, $w\in\mathbb{N}$, record databases of size $N$, with upload cost $O\left(k^2 \cdot 2^{6\sqrt{\log N\log\log N}}\right)$ and download cost $2w\cdot k$. Consequently, the rate of the PIR is $1-1/\Theta(\sqrt{k})$.
\end{theorem}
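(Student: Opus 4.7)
The plan is to start from the $1$-private $6$-server additive-reconstruction HSS $\Pi_0$ for $\ALL_{\FF_4}$ guaranteed by Proposition~\ref{prop:mvpir}, and to amplify its download rate via the black-box transformation of Lemma~\ref{th:bb3} with $k_0 = 6$. That lemma takes as input a family $\mathcal{S} \subseteq \binom{[q]}{5}$ with pairwise intersections at most $1$; the resulting transformation has $k = q + |\mathcal{S}|$ servers and rate $1 - q/k$. To make $q/k$ as small as possible, I want $|\mathcal{S}|$ to be as large as possible compared to $q$.

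The combinatorial step is to invoke a near-Steiner system $S(2,5,q)$ via the Erd\H{o}s--Hanani construction \cite{EH63}: a double-counting argument on pairs gives $|\mathcal{S}| \leq \binom{q}{2}/\binom{5}{2}$, and for all sufficiently large $q$ this upper bound can be matched up to a $(1-o(1))$ factor, so that $|\mathcal{S}| = \Theta(q^2)$. Setting $k := q + |\mathcal{S}|$ therefore gives $k = \Theta(q^2)$, $q = \Theta(\sqrt{k})$, and $|\mathcal{S}| = k - \Theta(\sqrt{k})$, yielding a rate $1 - q/k = 1 - 1/\Theta(\sqrt{k})$. Applying Lemma~\ref{th:bb3} with this family and plugging $\Pi_0$ into Lemma~\ref{black-box} produces a $k$-server HSS $\Pi$ for $\ALL_{\FF_4}^{|\mathcal{S}|}$. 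The subtle point is that Lemma~\ref{th:bb3} nominally yields a $(k_0-1)$-private transformation; however, its proof explicitly verifies that every server holds at most \emph{one} share from each additively-shared output, so Remark~\ref{remark:black-box-smaller-t} applies and $\Pi$ inherits the $1$-privacy of $\Pi_0$.

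To complete the PIR construction, I would view each $N$-record database with records of length $2w(k-q) = 2w|\mathcal{S}|$ bits as $w$ independent batches of $|\mathcal{S}|$ parallel $\FF_4$-valued databases (using the identification $\FF_4 \cong \{0,1\}^2$). The client shares its query $x \in [N]$ once using $\Pi_0$'s sharing algorithm combined with the replication prescribed by the $\psi_i$'s from Lemma~\ref{th:bb3}; each of the $k$ servers then runs $\Eval$ on all $w|\mathcal{S}|$ database functions, applies the local conversion $\varphi_j$ to each batch, and returns $w$ elements of $\FF_4$ (so $2w$ bits per server), giving total download cost $2wk$ bits. By the cost bound of Lemma~\ref{black-box}, per-server upload is at most $|\mathcal{S}| \cdot O(2^{6\sqrt{\log N \log\log N}}) = O(k \cdot 2^{6\sqrt{\log N \log\log N}})$, for a total upload cost of $O(k^2 \cdot 2^{6\sqrt{\log N \log\log N}})$. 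The resulting download rate is $\tfrac{2w(k-q)}{2wk} = 1 - q/k = 1 - 1/\Theta(\sqrt{k})$, as claimed. The main technical obstacle is the packaging argument that simultaneously exhibits a $\Theta(q^2)$-size packing and verifies preservation of $1$-privacy through the transformation via Remark~\ref{remark:black-box-smaller-t}; the rest is parameter bookkeeping.
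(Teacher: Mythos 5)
Your proposal is correct and follows essentially the same route as the paper: the packing family from Erd\H{o}s--Hanani \cite{EH63}, the black-box transformation of Lemma~\ref{th:bb3} with $k_0=6$, the additive $6$-server HSS from Proposition~\ref{prop:mvpir}, privacy preservation via Remark~\ref{remark:black-box-smaller-t}, and the same parameter bookkeeping. You merely spell out a few steps the paper leaves implicit (the double-counting upper bound behind $|\mathcal{S}| = \Theta(q^2)$, and the $\FF_4 \cong \{0,1\}^2$ identification explaining the factor $2$ in the record length), but the underlying argument is identical.
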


\begin{proof}

We will first need the following fact.

\begin{fact}[\cite{EH63}]
Let $k_0>0$ be an integer.
There exists a function $\delta_{k_0}\colon\mathbb{N}\rightarrow [0,\infty)$ such that $\lim_{q\rightarrow \infty}\delta_{k_0}(q)=0$ and such that for every integer $q>0$ there is a family of $(k_0-1)$-sized subsets from $[q]$, $\mathcal{S}_q\subseteq\binom{[q]}{k_0-1}$, with the additional requirement that any two distinct sets $S_1,S_2\in\mathcal{S}_q$ satisfy $|S_1\cap S_2|\leq 1$, and such that $|\mathcal{S}_q|=(1-\delta_{k_0}(q))\frac{q(q-1)}{k_0(k_0-1)}$.
\end{fact}

Now, we combine the above fact, Lemma \ref{th:bb3}, Proposition \ref{prop:mvpir}, and the modification to Lemma \ref{black-box} given in Remark \ref{remark:black-box-smaller-t}, which yields our claim with $w=1$. To obtain it for $w>1$ we can just repeat the scheme $w$ times. 
\end{proof}

Moreover, similar results for sub-polynomial upload cost PIR protocols with ($t\geq 2$)-privacy could also be obtained. These protocols exist, for example, by combining the results of \cite{BIW07} with the protocols from \cite{STOC:Yekhanin07,STOC:Efremenko09,BeimelIKO12}.

\section{Improving Rate via Nonlinear Reconstruction and a Small Failure Probability}
\label{sec:nonlinear}

In Section~\ref{sec:linearNegative}, we saw that that the rate of $1 - dt/k$ is the best possible for linear HSS.  In this section, we will see that if we allow non-linear reconstruction with an exponentially small failure probability, then we can in fact beat this bound.

%\mkw{added the following remark.  I thought a bit about how to adapt the ideas from these works, even in the special case of computing the AND of two bits, and it seemed nontrivial.  Perhaps for future work!}
\begin{remark}[Efficiency of $\Rec$]\label{rem:efficient}
For this section, we do not attempt to address the computational efficiency of the $\Rec$ algorithm in our main theorem statement.  We note however that there are efficient algorithms for a simpler scheme that still can beat the $1 - dt/k$ barrier; see Remark~\ref{rem:naiveapproach}.

Our main theorem, Theorem~\ref{thm:slepianWolf}, offers stronger guarantees on the download rate than the simple scheme in Remark~\ref{rem:naiveapproach}.  The tools that we use to prove it are information-theoretic---the approach is based on classical Slepian-Wolf coding---and do not immediately yield efficient algorithms for $\Rec$.  However, we are hopeful that progress in efficient implementations of Slepian-Wolf coding, e.g.~\cite{GC06, CHJ09, SMT15}, may be applicable in our setting as well.
\end{remark}
%\mkw{added Yuval's example.  I kept it informal since I think that makes it clearer.}

\paragraph{A warm-up.} To motivate how this relaxation to non-linear schemes might help, consider the following simple example. Suppose that $m=d$ and we would like to construct a $1$-private HSS for $\{f\}^\ell$, where $f(x_1, \ldots, x_d) = \prod_{i=1}^d x_i$, based on Shamir sharing.  

Let $k=d+1$. Let $q$ be a prime power so that $q>k$ and suppose that $|\F|=q\approx k$. 
Given inputs $x_{i,r} \in \F$ for $i \in [d]$ and $r \in [\ell]$, each server $j$ holds the input shares $(p_{i,r}(\alpha_j))_{i,r}$, where $p_{i,r}$ is a random degree $1$ polynomial over $\F$, 
 with $p_{i,r}(0) = x_{i,r}$.  Then each party computes the output shares
\[ \Eval(f, j, (p_{i,r}(\alpha_j))_{i,r}) = \left(\prod_{i=1}^d p_{i,r}(\alpha_j)\right)_{r \in [\ell]} =: \by^j \in \FF^\ell.\]
Then, since $k > d$, we may recover $\prod_{i \in [d]} x_{i,r}$ for each $r \in [\ell]$ from the output shares by polynomial interpolation.
The download cost of this scheme is $\ell k \log_2(q)$.

Now consider the following improvement on the same scheme.  We observe that for $r \in [\ell]$,
\[ \Pr[ y^j_r = 0 ] = (1 - 1/q)^d \approx 1/e. \]
Thus, with high probability, there are only about $\ell/e$ nonzero entries in each $\by^j$.  This allows each party to compress their output shares.  Rather than sending $\by^j$ itself, each party can send the location and value of each of the nonzeros of $\by^j$.  With high probability, the number of bits that each party needs to send is
\newvictor{\[ \log_2 {\ell \choose \ell/e } + \frac{\ell \log_2(q)}{e} = \ell \left( (1 + o(1))H(1/e) + \frac{  \log_2(q) }{e} \right). \]}
For large $d$ (with $k=d+1$ and $q\approx k$), this is a savings of about a factor of $e$ over the naive version.

%\yi{Integrated a version of the following text in the next paragraph:
%``This example already shows that we can surpass the barrier of $1 - dt/k$ for linear schemes from Theorem~\ref{thm:barrier}.  Indeed, for $k=3$, this gives a rate tending to $e/3$, while $1 - dt/k = 1/3$. Below, we will see how to push this idea further to obtain an even better rate.'' I removed the second sentence because for $k=3$ the improvement we get above is worse than a factor of $e$ (and moreover we don't get a factor-$e$ improvement for the AND example). 
%}

%\yi{ I am not sure I understood the last sentence: doesn't the warm-up example already go beyond the $1-dt/k$ barrier? If it does, then we should remove or rephrase. In any case, I added some more motivation for the main example in the following text.}
%\mkw{Oh, that's right for $k=3$.  (I didn't realize since the rate is $1/k$ or $e/k$, which in general is not the same or larger than $1 - 2/k$)}.

\paragraph{A general methodology for HSS share compression.} The previous example  already shows that we can surpass the barrier of $1 - dt/k$ for linear schemes from Theorem~\ref{thm:barrier} by using a simple compression technique. However, it only applies to evaluating products over a big finite field, which (despite being natural) is not useful for any applications we are aware of. Moreover, this naive compression method is entirely {\em local}, not taking advantage of correlations between output shares.
In the following we develop a more general framework for compressing HSS shares by using Slepian-Wolf coding. We apply this methodology to a simple and well-motivated instance of HSS, where $f$ computes $\F_2$ multiplication (i.e., the AND of two input bits). This can be motivated a ``dense'' variant of the private set intersection problem, where the sets are represented by their characteristic vectors. We further show that, unlike in the above warm-up example, here it is possible to obtain improved rate while ensuring that the output shares reveal no additional information except the output.
\medskip

The rest of this section is organized as follows. In Section~\ref{sec:slepianWolf}, we show that it is possible to push the above line of reasoning as far as the classical \emph{Slepian-Wolf theorem} allows.  As described in Section~\ref{sec:SWtech}, the Slepian-Wolf theorem describes the extent to which we can compress (possibly correlated) sources that are held by different parties.  We cannot use the Slepian-Wolf theorem directly because the underlying joint distribution is not known to each server (as it depends on the value of the secret), but we show how to adapt the proof to deal with this. We apply this general methodology to optimize the download rate of 3-server HSS for AND ($\F_2$ multiplication).

Next, in Section~\ref{sec:symmetricHSS}, we consider a desirable {\em symmetric privacy} property, ensuring that the output shares reveal no information about the input beyond the output. While symmetric privacy is easy to achieve for HSS with linear reconstruction, this is not necessarily true for HSS with nonlinear reconstruction. Indeed, we show that Shamir-based HSS in the above warm-up example does not satisfy this property. However, somewhat unexpectedly, we show that the optimized HSS for AND from Section~\ref{sec:slepianWolf} does satisfy this requirement.

\subsection{Beating the \texorpdfstring{$1 - dt/k$}{1-dt/k} Barrier via Slepian-Wolf Coding}\label{sec:slepianWolf}

%\victor{We should include Yuval's simpler example of beating the barrier from the email titled ``nonlinear schemes'' (of course, this example could be combined with the Slepian-Wolf approach, but there is probably no motivation for calculating this). The CNF based scheme we have in this section has the benefit of working over a small field $\mathbb{F}_2$, which is useful for applications such as private set intersection. Moreover, Yuval's example does not satisfy the weak symmetric HSS property.}\mkw{added above}

In this section we introduce a broad class of non-linear, $(1 - \exp(-\Omega(\ell)))$-correct HSS schemes for function classes of the form $\mathcal{F}^\ell$ when $\ell$ is large, which can circumvent the linear impossibility result of Theorem~\ref{thm:barrier} for linear, $1$-correct HSS schemes.

\subsubsection{Notation and theorem statements}\label{sec:SWnotation}
We begin by setting up some further notation for this section.  Suppose we have a function class $\mathcal{F}$, and suppose we have a $k$-server HSS scheme $\Pi = (\Share, \Eval, \Rec)$ for $\mathcal{F}$.  We will formally state how we use $\Pi$ below when we define our new HSS scheme $\Pi'$; for now we sketch enough of it to set up our notation.  Given $\bx_1, \ldots, \bx_\ell \in \mathcal{X}^m$, we run the $\Share$ and $\Eval$ functions of $\Pi$ independently to obtain $\ell$ output shares (outputs of $\Eval$) for each server $j$.  Let $z_{j,i} \in \mathcal{Z}_j$ be the $i$'th  output share held by server $j$, for $j \in [k]$ and for $i \in [\ell]$.  Let $\bZ$ be the $k \times \ell$ matrix whose $(j,i)$ entry is $z_{j,i}$.  We refer to the columns of $\bZ$ as $\bz_i$ and the rows as $\by_j$.  Letting \[\mathcal{Z} = \mathcal{Z}_1 \times \mathcal{Z}_2 \times \cdots \times \mathcal{Z}_k,\]
we also view $\bZ$ as an element of $\mathcal{Z}^\ell$, that is, as a sequence of the columns of the matrix $\bZ$.  Abusing notation, we will move back and forth between these two views.

Each $\bx \in \mathcal{X}^m$ induces a distribution on $\bz \in \mathcal{Z}$,  obtained by first using $\Share$ to share $\bx$ among the $k$ servers, and then using $\Eval$ to arrive at output shares $\bz$.  The randomness of this distribution comes from the randomness in the probabilistic algorithms $\Share$ and $\Eval$.  Let $\mathcal{D}$ be the set of all probability distributions on $\mathcal{Z}$ that arise this way.  Notice that $|\mathcal{D}| \leq |\mathcal{X}|^{m}$, the number of possible values for $\bx$.

We define $\Delta(\mathcal{D})$ to be the collection of probability distributions on $\mathcal{D}$.  We define $\Delta_\ell(\mathcal{D}) \subset \Delta(\mathcal{D})$ to be the collection of distributions with denominator $\ell$.  That is,
\[ \Delta_\ell(\mathcal{D}) = \{ \pi \in \Delta(\mathcal{D}) : \forall D \in \mathcal{D}, \exists i \in \mathbb{Z}, \pi(D) = i/\ell\}. \]
Notice that 
\begin{equation}\label{eq:sizeofDelta}
|\Delta_\ell(\mathcal{D})| \leq \ell^{|\mathcal{D}|} \leq \ell^{|\mathcal{X}|^m}.
\end{equation}

For $\pi \in \Delta(\mathcal{D})$, define a distribution $\sigma_\pi$ on $\mathcal{Z}$ by
\[ \sigma_\pi(\bz) := \sum_{D \in \mathcal{D}} \pi(D) \cdot D(\bz), \]
where for a distribution $D$, $D(\bz)$ denotes the probability of $\bz$ under $D$.  Thus, $\sigma_\pi$ is defined by first drawing $D \sim \pi$ and then drawing $\bz \sim D$.

With all of this notation out of the way, we may state the main result in this section, which states that given a particular HSS scheme $\Pi = (\Share, \Eval, \Rec)$ for a function class $\mathcal{F}$, we may ``compress'' the outputs of $\Eval$ to create a new HSS scheme for $\mathcal{F}^\ell$, for sufficiently large $\ell$, with potentially better download rate.  As noted in Section~\ref{sec:SWtech}, this theorem is closely related to the classical Slepian-Wolf theorem for compression of dependent sources with separate encoders and a joint decoder. %\victor{For the HSS we consider, should we numerically compare the rate of our Slepain-Wolf-like approach to that of the naive local compression?}\mkw{ That's a good idea -- I added Remark~\ref{rem:naiveapproach}, which numerically compares to the example in Cor.~\ref{cor:beatlinear}.  Do you think we should include more examples?  (It is fairly easy to generate these with my code, but I'm not sure how interesting they are after the first one). }

\begin{theorem}[Slepian-Wolf coding for HSS shares]\label{thm:slepianWolf}
Suppose there is a $t$-private $k$-server HSS $\Pi = (\Share, \Eval, \Rec)$ for $\mathcal{F}$.  Let $\mathcal{D}$ be defined as above. There is a function $\delta:\mathbb{R} \to \mathbb{R}$ so that $\delta(\eps) \to 0$ as $\eps \to 0$, and constants $C$ and $\ell_0$ (all of which may depend on $\Pi$ and its parameters) so that the following holds for any $\eps > 0$ and any $\ell \geq \ell_0$.  

For a distribution $\pi' \in \Delta(\mathcal{D})$, let $\bz' \in \mathcal{Z}$ denote a random variable drawn from $\sigma_{\pi'}$.  
Suppose that $b_1, b_2, \ldots, b_k$ are non-negative integers so that for all $S \subseteq [k]$,
\[ \sum_{i \in S}b_i \geq \max_{\pi' \in \Delta(\mathcal{D})} \ell \cdot ( H(\bz'_S | \bz'_{S^c}) + \delta(\eps) ). \]

Then there is a $t$-private $k$-server $\alpha$-correct HSS scheme for $\mathcal{F}^\ell$ with $\alpha = 1 - \exp(-C \eps^2 \ell)$, and with download cost $\sum_{i=1}^k b_i$. 

\end{theorem}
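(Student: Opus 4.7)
The plan is to construct $\Pi'$ by having each server locally compress its row $\by_j = (z_{j,1}, \ldots, z_{j,\ell}) \in \mathcal{Z}_j^\ell$ of output shares via random binning, and then use a method-of-types argument (a universal variant of Slepian-Wolf) to show that decoding succeeds with high probability even without knowing the underlying distribution.

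Concretely, for $\Pi'$, the share function $\Share'$ applies $\Share$ independently to each of the $\ell m$ inputs; each server $j$ runs $\Eval$ on each of its $\ell$ sets of input shares to form the row $\by_j$, and then sends $w_j := h_j(\by_j) \in \{0,1\}^{b_j}$, where $h_j$ is a hash function from a $2$-universal family (treated as public randomness, which we will fold into the randomness of $\Share$). The reconstruction algorithm $\Rec'$ searches for a matrix $\bZ^* \in \mathcal{Z}^\ell$ such that (a) $h_j(\bZ^*_j) = w_j$ for every $j$, and (b) the empirical column distribution of $\bZ^*$ (viewed in $\Delta(\mathcal{Z})$) is $\eps$-close to $\sigma_{\pi'}$ for some $\pi' \in \Delta(\mathcal{D})$. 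If such a $\bZ^*$ is unique, $\Rec'$ outputs $(\Rec(\bz^*_1), \ldots, \Rec(\bz^*_\ell))$; otherwise it outputs an arbitrary default. Since $\Share' = \Share$, the $t$-privacy of $\Pi'$ is inherited directly from $\Pi$.

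For correctness, let $\pi \in \Delta_\ell(\mathcal{D})$ denote the empirical distribution on $\mathcal{D}$ induced by the inputs $\bx_1, \ldots, \bx_\ell$. Conditioned on the inputs, the columns $\bz_i$ of the true $\bZ$ are independent with $\bz_i$ drawn from the distribution $D_i \in \mathcal{D}$ corresponding to $\bx_i$, so the expected empirical column distribution is exactly $\sigma_\pi$. A Chernoff bound over the (finite) set $\mathcal{Z}$ shows that $\bZ$ itself satisfies condition (b) with $\pi' = \pi$ except with probability $\exp(-\Omega(\eps^2 \ell))$. The main obstacle is then bounding the probability that some $\bZ^* \neq \bZ$ also satisfies (a) and (b). Here we cannot afford to union-bound naively over all $|\mathcal{X}|^{m\ell}$ input sequences; instead we exploit the bound $|\Delta_\ell(\mathcal{D})| \leq \ell^{|\mathcal{X}|^m}$ from~\eqref{eq:sizeofDelta}. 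For a fixed nonempty $S \subseteq [k]$ and a fixed target type $\pi' \in \Delta_\ell(\mathcal{D})$, a standard method-of-types computation shows that the number of matrices $\bZ^*$ that agree with $\bZ$ on rows outside $S$ and are $\eps$-typical with respect to $\sigma_{\pi'}$ is at most $2^{\ell(H(\bz'_S \mid \bz'_{S^c}) + \delta(\eps)/2)}$ for an appropriate $\delta(\eps) \to 0$ as $\eps \to 0$, where $\bz' \sim \sigma_{\pi'}$.

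Under $2$-universal hashing, the probability (over the $h_j$'s) that any such $\bZ^*$ collides with $\bZ$ on every row in $S$ is at most $2^{-\sum_{j \in S} b_j}$. Union-bounding over $S$ and over $\pi' \in \Delta_\ell(\mathcal{D})$, the total failure probability of type (ii) is at most
\[
\sum_{\pi' \in \Delta_\ell(\mathcal{D})} \sum_{\emptyset \neq S \subseteq [k]} 2^{\ell (H(\bz'_S \mid \bz'_{S^c}) + \delta(\eps)/2) - \sum_{j\in S} b_j} \;\leq\; \ell^{|\mathcal{X}|^m} \cdot 2^k \cdot 2^{-\Omega(\delta(\eps) \ell)},
\]
using the hypothesis $\sum_{j \in S} b_j \geq \ell(\max_{\pi'} H(\bz'_S \mid \bz'_{S^c}) + \delta(\eps))$. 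Combined with the typicality bound for (i), the total failure probability is $\mathrm{poly}(\ell) \cdot \exp(-\Omega(\min(\eps^2,\delta(\eps))\, \ell))$, which for $\ell \geq \ell_0$ is bounded by $\exp(-C\eps^2 \ell)$ for a suitable constant $C$ (absorbing the polynomial factor and choosing $\delta(\eps) = \Theta(\eps^2)$ as is standard in $\eps$-typicality). This yields the claimed $\alpha$-correct HSS with download cost $\sum_j b_j$, and the hardest step—circumventing the classical Slepian-Wolf assumption of a known source distribution—is precisely the method-of-types union bound that keeps the number of candidate types polynomial in $\ell$.
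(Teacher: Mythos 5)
Your proposal is correct and follows essentially the same route as the paper's proof: (i) independently share and evaluate each of the $\ell$ instances, (ii) have each server locally bin its row of output shares via a hash into $\{0,1\}^{b_j}$, (iii) have $\Rec'$ search for the unique matrix in the hash preimage whose empirical column type is $\eps$-close to $\sigma_{\pi'}$ for some type $\pi'$, (iv) establish typicality of the true $\bZ$ via a Hoeffding/Chernoff bound, and (v) control the probability of a spurious collision by union-bounding over $S \subseteq [k]$ and over $\pi' \in \Delta_\ell(\mathcal{D})$, using the method-of-types bound $|\Delta_\ell(\mathcal{D})| \leq \ell^{|\mathcal{X}|^m}$ and a cardinality bound on the conditional typical set. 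The two cosmetic differences are that you use a $2$-universal hash family where the paper uses uniformly random maps (both give the same $2^{-\sum_{j\in S}b_j}$ collision bound, and yours is marginally more constructive), and that your closing remark ``choosing $\delta(\eps) = \Theta(\eps^2)$'' treats $\delta$ as a free parameter when it is really produced by the typical-set size estimate; the paper is more careful, distinguishing the $\delta$ from Theorem~\ref{thm:infoTheory} from the $\delta'$ it ultimately puts in the theorem statement, and concluding only that $\delta'(\eps)\to 0$. This imprecision does not break your argument since the final error bound is in any case dominated by the $\exp(-\Omega(\eps^2\ell))$ typicality term.
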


\begin{remark}
In order to interpret the bound on the $b_i$'s in Theorem~\ref{thm:slepianWolf}, let us consider it in terms of the download cost of the naive scheme that repeats $\Pi$ $\ell$ times.  Suppose that, in the original scheme $\Pi$, server $i$ sends $b_i$ bits, so server $i$ sends $\ell b_i$ bits in the naive scheme.  Since $\bz_S' \in \{0,1\}^{\sum_{i\in S} b_i}$, we (naively) have
\[ H(\bz_S' | \bz_{S^c}') \leq \sum_{i \in S} b_i \]
for all $S \subseteq [k]$, and the total download cost is $\sum_{i=1}^k b_i$.
Thus (ignoring the slack $\delta(\eps)$), the naive bound on $H(\bz_S' | \bz_{S^c}')$ yields the download cost of the naive scheme.  As we will see below, by obtaining better-than-naive bounds on this conditional entropy term, we will be able to beat the naive scheme, resulting in a new scheme that has an improved download rate.
\end{remark}

\begin{remark}\label{rem:converse}
Theorem~\ref{thm:slepianWolf} is tight in the sense that if we are to follow the strategy of ``compress the output shares of a fixed HSS $\Pi$,'' $\max_{\pi'}\ell \cdot H(\bz_S' | \bz_{S^c}' )$ is the best download cost we can hope for.  This follows from the standard converse for Slepian-Wolf coding, by allowing an adversary to choose inputs that yield a sequence $D_1, \ldots, D_\ell$ of distributions in $\mathcal{D}$ with empirical distribution that is approximately the maximizer $\pi^*$.
\end{remark}

Before we prove the theorem, we apply it to a simple case where we would like to multiply two bits.  We will apply it to the following HSS. 

%\victor{TODO: we need to somehow demonstrate that the greedy monomial assignment has the best entropy (maybe even just over $\mathbb{F}_2$). I have examples for this. Also, is it true that Shamir has higher entropy than CNF?}\mkw{I have added this to the statement of the corollary for $\F_2$ and for multiplying two bits, and I updated the proof to establish it.  Notice that since CNF is convertible to Shamir, the fact that greedy CNF is best among all CNF implies that greedy CNF is at least as good as Shamir.}

\begin{definition}[Greedy-Monomial CNF HSS]\label{def:greedy}
Let $t,k,d,m$ be positive integers with $k>dt$  and let $\FF$ be a finite field.
Define a $t$-private $k$-server HSS $\Pi = (\Share, \Eval, \Rec)$ for $\POLY_{d,m}(\FF)$ as follows.  
\begin{itemize}
    \item \textbf{Sharing.} The $\Share$ function is given by $t$-CNF sharing.  To set notation, suppose that server $j$ receives $\by^j = (X_{i,S} : j \not\in S)$ where $X_{i,S}$ for $i \in [m]$ and $S \subset [k]$ of size $t$ are random so that $\sum_S X_{i,S} = x_i$.
    \item \textbf{Evaluation.} Let $f \in \POLY_{d,m}(\FF)$.  We may view $f(x_1, \ldots, x_m)$ as a polynomial $F(\bX)$ in the variables $\bX = (X_{i,S})_{i \in [m],S \subset [k]}$.  Each server $j$ can form some subset of the monomials $\prod_{s=1}^r X_{i_s, S_s}$ that appear in $F(\bX)$.  Server $1$ greedily assembles all of the monomials in $F(\bX)$ that they can; the sum of these monomials is $\Eval(f,1,\by^1)$.  Inductively, Server $j$ greedily assembles all of the monomials in $F(\bX)$ that they can and that have not been taken by Servers $1, \ldots, j-1$, and the sum of these monomials is $\Eval(f,j, \by^j)$. 
    \item \textbf{Reconstruction.} By construction, $f(x_1, \ldots, x_m)$ is equal to $\sum_j \Eval(f,j,\by^j)$.  Thus, $\Rec$ is defined additively.
\end{itemize}
We refer to this $\Pi$ as the $t$-private, $k$-server \emph{greedy monomial CNF} HSS.
\end{definition}

\begin{corollary}\label{cor:beatlinear}
Let $\mathcal{F} = \{f(x_1,x_2) \mapsto x_1 \cdot x_2\}$.  Let $m=2$ and $\mathcal{X} = \mathcal{Y} = \mathbb{F}_2$, so $f:\mathcal{X}^m \to \mathcal{Y}.$  
For sufficiently large $\ell$, the Greedy Monomial CNF HSS is a
3-server, 1-private, $\left(1 - 2^{-\Omega(\ell)}\right)$-correct HSS for the function family $\mathcal{F}^\ell$, with download rate $R \geq 0.376$.

Moreover, for these parameters, the Greedy Monomial CNF HSS yields the best download rate when plugged into Theorem~\ref{thm:slepianWolf}, out of all $\F_2$-linear HSS schemes.
\end{corollary}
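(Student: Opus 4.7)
The plan is to instantiate Theorem~\ref{thm:slepianWolf} on the Greedy Monomial CNF HSS $\Pi$ with parameters $k=3$, $t=1$, $d=2$, $m=2$, $\F = \F_2$, and $f(a,b) = ab$. First I would write $\Pi$ down explicitly: $\Share$ is $1$-CNF over $\F_2$, giving additive shares $a = a_1 + a_2 + a_3$ and $b = b_1 + b_2 + b_3$ with server $j$ receiving $\{a_i, b_i : i \neq j\}$. Expanding $ab = \sum_{i,i'} a_i b_{i'}$ and applying Definition~\ref{def:greedy} produces the three output shares
\begin{align*}
z^{(1)} &= (a_2+a_3)(b_2+b_3),\\
z^{(2)} &= a_1 b_1 + a_1 b_3 + a_3 b_1,\\
z^{(3)} &= a_1 b_2 + a_2 b_1,
\end{align*}
whose sum over $\F_2$ is $ab$.

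Next I would enumerate the joint distribution of $(z^{(1)}, z^{(2)}, z^{(3)})$ over the $16$ choices of randomness $(a_1, a_2, b_1, b_2)$, for each of the four secrets $(a,b) \in \F_2^2$. A direct calculation shows that the three secrets with $ab = 0$ all induce the same distribution $D_0$, supported on $\{(0,0,0),(0,1,1),(1,1,0),(1,0,1)\}$ with probabilities $(1/2, 1/4, 1/8, 1/8)$, while $(a,b)=(1,1)$ induces a distribution $D_1$ supported on $\{(1,0,0), (0,1,0), (0,0,1)\}$ with probabilities $(1/4, 3/8, 3/8)$. Hence $|\mathcal{D}| = 2$, and every $\pi' \in \Delta(\mathcal{D})$ is a mixture $\sigma_p = (1-p)D_0 + pD_1$ for some $p\in[0,1]$. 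Because $\Pi$ is $1$-private, each singleton marginal $H(Z_i)$ under $\sigma_p$ is independent of $p$, so every conditional entropy with $|S^c|=1$ satisfies $H(Z_S \mid Z_{S^c}) = H(\sigma_p) - H(Z_{S^c})$ with its $p$-maximum attained at the same $p^\ast$ that maximizes $H(\sigma_p)$. The singleton constraints $r_i \geq \max_p H(Z_i \mid Z_{-i})$ may be maximized at a different value of $p$, but I would confirm numerically that the resulting bounds admit an allocation $(r_1, r_2, r_3)$ with $\sum_i r_i = \max_p H(\sigma_p)$ satisfying every Slepian-Wolf constraint simultaneously.

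The core of the argument is then a single-variable maximization. Expanding the mixture gives
\[
H(\sigma_p) = H_{\mathrm{bin}}(p) + \frac{7+4p}{4} - \frac{3p}{4}\log_2 3,
\]
where $H_{\mathrm{bin}}(p) = -p\log_2 p - (1-p)\log_2(1-p)$ is the binary entropy. Setting $\partial_p H = 0$ yields $(1-p^\ast)/p^\ast = 3^{3/4}/2$, hence $p^\ast \approx 0.467$ and $H(\sigma_{p^\ast}) \approx 2.659 < 1/0.376$. Plugging this into Theorem~\ref{thm:slepianWolf} with sufficiently small $\eps > 0$ yields the claimed $3$-server, $1$-private, $(1 - 2^{-\Omega(\ell)})$-correct HSS for $\{f\}^\ell$ of rate at least $0.376$.

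For the ``moreover'' claim, the plan is a finite case analysis. By Theorem~\ref{cdi} any $\F_2$-linear $1$-private $3$-server HSS for $ab$ can be put in canonical form with $\Share = 1$-CNF; by replacing the linear $\Rec$ with addition (absorbing the linear post-processing into $\Eval$, which can only decrease the entropy of the output-share vector), I may further assume $\Eval_j$ is an $\F_2$-valued polynomial in $\{a_i, b_i : i\neq j\}$ with $\sum_j \Eval_j = ab$ identically. A short argument shows that adding any non-bilinear or linear monomials to $\Eval_j$ can only increase $\max_p H(\sigma_p)$, so the optimization reduces to assignments of each of the nine bilinear monomials $a_i b_{i'}$ to a server $j \notin \{i,i'\}$; the off-diagonal monomials are forced and each diagonal $a_i b_i$ admits two placements, so modding out by the $S_3$-action on server labels leaves exactly two inequivalent assignments (the greedy ``2-cycle with pendant'' and the ``3-cycle''). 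For each I would compute $\max_p H(\sigma_p)$ in closed form and verify directly that the greedy assignment yields the smaller value. The main obstacle is justifying cleanly the reduction to bilinear monomial assignments; after that, the comparison is an explicit two-case calculation.
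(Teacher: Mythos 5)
Your plan for the main claim is essentially the paper's proof, just carried out more explicitly: you correctly compute the two distributions $D_0, D_1$ (the SHSS property implicitly guarantees all three secrets with $ab=0$ yield the same $D_0$), observe that disjoint supports give the clean formula $H(\sigma_p)=H_{\mathrm{bin}}(p)+(1-p)H(D_0)+pH(D_1)$, and maximize a univariate concave function; the paper delegates the same computation to code. Your observation that all constraints with $|S^c|=1$ are maximized at the same $p^\ast$ by $1$-privacy is correct, and the feasibility check you promise (finding $(r_1,r_2,r_3)$ with sum $\approx 2.6587$ meeting all seven Slepian--Wolf constraints) does go through: the paper exhibits $b_1/\ell=b_2/\ell\approx 0.92372$, $b_0/\ell\approx 0.81128$.

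For the ``moreover'' part there is a gap. You claim ``a short argument shows that adding any non-bilinear or linear monomials to $\Eval_j$ can only increase $\max_p H(\sigma_p)$,'' and then restrict attention to the two bilinear assignments (the greedy one and the $3$-cycle). That monotonicity claim is not obvious: the linear terms a server can add are functions of the CNF shares it already holds, so they are correlated with the bilinear part of that server's output and of the \emph{other} servers' outputs in a nontrivial way; this is not a fresh-randomness ``can only add entropy'' situation, and a data-processing argument does not directly apply because the output-share vector is not a deterministic function of the linear-free shares alone. The paper sidesteps this entirely: it notes that the degree-$1$ monomials that can be added while preserving $\sum_j \Eval_j = x_1 x_2$ form a set of $64$ parity-balanced assignments, multiplies by the $2$ inequivalent bilinear placements to get $128$ cases, and checks all of them by code, finding that the binding constraint $b_0+b_1+b_2$ is always one of three values with the greedy assignment giving the smallest. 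So either supply the monotonicity lemma with a real proof, or fall back to the paper's exhaustive enumeration -- the latter is what actually justifies the ``moreover'' clause as stated.
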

%\yi{To a crypto audience allowing a constant error probability may be unsatisfying, so it seems be better to rephrase (or add a ``moreover'' part) with $2^{-\Omega(\ell)}$ error.}\mkw{Good point, I updated it.}

\begin{remark}[Greedy Monomial CNF admits optimal compression for $3$-server AND]\label{rem:tight}
Corollary~\ref{cor:beatlinear} shows that the simple Greedy Monomial CNF HSS can be modified to exceed the $1 - dt/k$ limit of Theorem~\ref{thm:barrier} for linear schemes, even in the extremely bare-bones case of multiplying two bits.  (Notice that the Shamir-based example at the beginning of Section~\ref{sec:nonlinear} requires large fields).  Moreover, the ``moreover'' part of the corollary, along with Remark~\ref{rem:converse}, suggests that the the rate $0.376$ is the best achievable by these methods for this problem. In comparison, a symmetric monomial assignement that assigns 3 monomials to each server achieves rate $\approx 0.350$.
\end{remark}

\begin{remark}[A simpler way to get the same qualitative result]\label{rem:naiveapproach}
In fact, we beat the $1 - dt/k$ bound without the full power of Theorem~\ref{thm:slepianWolf}, even over $\FF_2$.  Rather than compressing the joint distribution of the output shares, each server may compress their share individually.  Notice that by the privacy guarantee, the distribution of each server's share does not depend on the secret; thus, if this distribution is not uniform, it can be compressed.  As $\ell$ grows, the download rate achievable can approach
\begin{equation}\label{eq:naiverate}
\frac{\log_2|\mathcal{Y}|}{\sum_{j=1}^k H(y^j) }, 
\end{equation}
where $y^j$ is the output share for party $j$.  

This straightforward scheme, while it does not yield bounds as strong as those obtained by Theorem~\ref{thm:slepianWolf}, already allows us to beat the $1 - dt/k$ bound of Theorem~\ref{thm:barrier}.  For example, for the $3$-party Greedy Monomial CNF scheme for multiplying two bits, the rate in \eqref{eq:naiverate} is easily seen to be $R \approx 0.367 > 1/3$.

Further, this scheme immediately comes with efficient $\Eval$ and $\Rec$ algorithms, using known efficient algorithms for optimal compression (for example Lempel-Ziv coding).
\end{remark}

\begin{proof}[Proof of Corollary~\ref{cor:beatlinear}]
To prove the first part of the corollary, we apply Theorem~\ref{thm:slepianWolf}, which states that there is an HSS scheme $\Pi'$ for $\mathcal{F}^\ell$ of download cost $\sum_i b_i$ provided that, for each $S \subseteq[k]$,
\begin{equation}\label{eq:needToOpt}
 \max_{\pi' \in \Delta(\mathcal{D})} H(\bz'_S | \bz'_{S^c}) \leq \sum_{i\in S}b_i,
\end{equation}
where $\bz' \sim \sigma_\pi$.
To prove the theorem we describe a particular HSS scheme to use as $\Pi$.
Our scheme $\Pi$ is the $1$-private $3$-party greedy CNF HSS, as in Definition~\ref{def:greedy}.
\iffalse
To set notation for us:
\begin{itemize}
    \item \textbf{Sharing.} The $\Share$ function is the standard 3-party 1-private CNF sharing.  Each of the two bits in $\bx \in \mathbb{F}_2^2$ are shared independently.  More precisely (and to set up notation for $\Eval$), for a secret $\bx \in \mathbb{F}^2_2$, we choose a random $X^i_r \in \mathbb{F}_2$ for $i \in [2], r \in [3]$ so that $x_i = \sum_{r\in[3]} X_r^i$. 
    Party $j$ receives $\bx_i^j = (X^i_r \,:\, r \neq j)$ for each $i \in [2]$.  That is, $\Share(i,x_i) = (\bx_i^j : j \in [3])$.
    \item \textbf{Evaluation.} Each party constructs the following expressions:
    \begin{align*}
        z^0 &= (s_1^0 + s_2^0)\cdot (s_1^1 + s_2^1) \\
        z^1 &= s_0^0s_0^1 + s_0^0s_2^1 + s_2^0s_0^1 \\
        z^2 &= s_0^0s_1^1 + s_1^0s_0^1
    \end{align*}
    
    so that $\Eval(f, j, (\bx_0^j, \bx_1^j)) = z^j$.
    We refer to this as the ``greedy CNF'' evaluation function, as Party 0 greedily sends the sum of all pairs $s_i^0s_j^1$ that it can construct; Party 1 sends the sum of all pairs it can construct that Party 0 has not already included; and so on.
    \item \textbf{Reconstruction.} We have $f(\bx) = x_0 \cdot x_1 = z^0 + z^1 + z^2$, so we define $\Rec(z^0, z^1, z^2) = z^0 + z^1 + z^2$. 
\end{itemize}
\fi

We will compute the quantity on the left-hand-side of \eqref{eq:needToOpt} for $\Pi$.
We note that when $(X,Y)$ are random variables with joint distribution $\mu$, the function $H(X|Y)$ is concave in $\mu$.   This implies that if we view $H(\bz'_S | \bz'_{S^c})$ as a function of the distribution $\sigma_{\pi'}$ on $\bz \in \mathcal{Z}$ that is induced by $\pi'$, the function is concave in $\sigma_{\pi'}$.  Thus, we may solve a convex optimization problem to find the optimal $\sigma_{\pi'}$, and hence the optimal value of \eqref{eq:needToOpt}, for each $S \subseteq [k]$.\footnote{This approach works for any original HSS scheme $\Pi$, but for the particular $\Pi$ defined above, the situation is even simpler because $\mathcal{D}$ consists of only two distributions.
Thus, the problem becomes a univariate convex optimization problem that can be easily solved using calculus.}
\iffalse
In more detail, it is not hard to see that if the product of the secrets is zero, the distribution of the output shares of $\Pi$ 
%$\bz = (z^0, z^1, z^2) \in \mathbb{F}_2$ 
is given by
\[ \Pr_{D_0}(0,0,0) = 1/2;\qquad \Pr_{D_0}(0,1,1) = 1/4; \qquad \Pr_{D_0}(1,1,0) = \Pr_{D_0}(1,0,1) = 1/8. \]
On the other hand, if the product of the secrets is $1$, the distribution of $\bz$ is given by
\[ \Pr_{D_1}(0,0,1) = \Pr_{D_1}(0,1,0) = 3/8; \qquad \Pr_{D_1}(1,0,0) = 1/4. \]

From this, for each $S \subseteq [3]$, we observe that if $\pi'$ places weight $\lambda$ on $D_0$ and $1 - \lambda$ on $D_1$, then 
\[H(\bz'_S | \bz'_{S^c}) = \sum_{\bz \in \mathbb{F}_2^3} (\lambda \Pr_{D_0}(\bz) + (1 - \lambda)\Pr_{D_1}(\bz)) \cdot \log \left( \frac{\lambda \Pr_{D_0}(\bz_S) + (1 - \lambda)\Pr_{D_1}(\bz_S)}{\lambda \Pr_{D_0}(\bz) + (1 - \lambda)\Pr_{D_1}(\bz)}\right)
\]
is a concave univariate function in $\lambda$.  We may thus find the optimum for each $S$ by taking the derivative with respect to $\lambda$ and setting it equal to zero.
\fi
Implementing this, we obtain the following requirements on $b_0,b_1,b_2$.\footnote{The code that we used to obtain this can be found at \url{https://web.stanford.edu/~marykw/files/HSS_SlepianWolf.sage}.  This code also obtains methods for finding the requirements on $b_i$ in Theorem~\ref{thm:slepianWolf}, starting from more general original HSS schemes $\Pi$.} 
\begin{align*}
\frac{b_0}{\ell} &\geq 0.75089\\
\frac{b_1}{\ell} &\geq 0.90690\\
\frac{b_2}{\ell} &\geq 0.90690\\
\frac{b_0 + b_1}{\ell} &\geq 1.70429\\
\frac{b_0 + b_2}{\ell} &\geq 1.70429\\
\frac{b_1 + b_2}{\ell} &\geq 1.84745\\
\frac{b_0 + b_1 + b_2}{\ell} &\geq 2.65873
\end{align*}
We see that if we take $b_1 = b_2 = \frac{\ell}{2} \cdot 1.84745 = \ell \cdot 0.92372$ and if we take $b_0 = \ell \cdot 2.65873 - b_1 - b_2 = \ell \cdot 0.81128$ then all of these requirements are satisfied, and we have
\[ b_0 + b_1 + b_2 = \ell \cdot 2.65873. \]
Thus, Theorem~\ref{thm:slepianWolf} implies that there is an HSS scheme $\Pi'$ that approaches this download cost, provided that $\ell$ is sufficiently large.  Thus, for any download rate $R$ satisfying
\[ R < \frac{1}{2.65873} \approx 0.37612, \]
there is some large enough $\ell$ so that 
there is an HSS scheme for $\mathcal{F}^\ell$ with download rate $R$.  This proves the first part of the corollary.

For the ``moreover'' part of the corollary, we first observe that any linear HSS scheme is some sort of monomial assignment.  Indeed, writing the shares of the inputs $x_1,x_2$ as $X_{i,j}$ for $i \in [2]$ and $j \in [3]$ (where $x_i = \sum_j X_{i,j}$), then each server's output share is some polynomial in the $X_{i,j}$.  
Next, we simply enumerate over all such assignments and check the values in Theorem~\ref{thm:slepianWolf}. 

In more detail,
as we assume that the HSS scheme is linear, we must include all monomials of the form $X_{1,j} X_{2,j}$ an odd number of times across the servers.  Up to symmetries, there are only two ways to distribute these degree-$2$ monomials (the way prescribed by the Greedy Monomial CNF assignment, and one other way that assigns three monomials to each server).  We must include each linear monomial $X_{i,j}$ an even number of times across the servers, as no degree-$1$ monomial occurs in $x_1 x_2 = \sum_{i,j} X_{1,i}X_{2,j}$.  There are 64 ways to do this (including ways that are equivalent by symmetry).  Finally, degree $3$ or higher monomials cannot occur in any server, as any such monomial $X_{i_1, j_1} X_{i_2,j_2}, X_{i_3,j_3}$ must contain two distinct values in $\{j_1, j_2, j_3\}$ and thus can only be held by one server and cannot cancel out.  Thus, there are only 128 things to check (again, with some redundancy by symmetry).  This is easily done with the code referenced earlier.  We see that for all of these settings, the condition on $b_1 + b_2 + b_3$ is one of the following three requirements:
\begin{align*} 
 \frac{b_0 + b_1 + b_2}{\ell} &\geq 2.65873\\
 \frac{b_0 + b_1 + b_2}{\ell} &\geq 2.90564\\
 \frac{b_0 + b_1 + b_2}{\ell} &\geq 2.85200
\end{align*}
The first of these is the binding constraint for the Greedy Monomial CNF scheme.  The other two are larger, meaning that there is no linear HSS that can outperform the Greedy Monomial CNF HSS.
\end{proof}

\subsubsection{Proof of Theorem~\ref{thm:slepianWolf}}
As discussed in Section~\ref{sec:SWtech}, the statement of Theorem~\ref{thm:slepianWolf} is quite similar to the classical Slepian-Wolf coding theory.  The difference between our setup and that of the Slepian-Wolf theorem is that we do not know the underlying distributions, which depend on the secrets.  However, our proof follows the basic outline of the classical Slepian-Wolf argument; the main difference is that we need to take a union bound over the distributions $\pi \in \Delta_\ell(\mathcal{D})$ that may arise, and we need to argue that the distribution of output shares arising from a sequence $(D_1, \ldots, D_\ell)$ with empirical distribution $\pi$ is sufficiently close to i.i.d. samples from $\sigma_\pi$.\footnote{The empirical distribution $\pi$ of $(D_1, \ldots, D_\ell) \in \mathcal{D}^\ell$ is sometimes called the \emph{type} of $(D_1, \ldots, D_\ell)$.}

We begin with some necessary notions from information theory.  We tailor these notions to our notation and setting for readability; in particular, the notation is non-standard from an information-theory point of view.

We use the notation from the beginning of Section~\ref{sec:SWnotation}.
Given $\bZ \in \mathcal{Z}^\ell$ and given $\bz \in \mathcal{Z}$ we write $f_{\bZ}(\bz)$ to denote the number of times that $\bz$ appears in $\bZ$.  (Viewing $\bZ$ as a matrix, this is the number of times that $\bz$ appears as a column of $\bZ$).

For a set $S \subset [k]$, let $\mathcal{Z}_S = \prod_{j \in S} \mathcal{Z}_j$ (recalling that $\mathcal{Z} = \prod_{j=1}^k \mathcal{Z}_j$). For $\bz \in \mathcal{Z}$, we write $\bz_S \in \mathcal{Z}_S$ to denote the restriction of $\bz$ to the coordinates indexed by $S$.    Given $\bz_S \in \mathcal{Z}_S$ and $\bz_{S^c} \in \mathcal{Z}_{S^c}$, we denote by $\bz^S \circ \bz^{S^c}$ the vector $\bz \in \mathcal{Z}$ obtained by combining the two of them in the natural way (that is, $z_j = z^S_j$ if $j \in S$ and $z^{S^c}_j$ otherwise).  Given $\bZ_S \in (\mathcal{Z}_S)^\ell$ and $\bZ_{S^c} \in (\mathcal{Z}_{S^c})^\ell$, we define $\bZ_S \circ \bZ_{S^c} \in \mathcal{Z}^\ell$ similarly.

\begin{definition}\label{def:typical}
Let $\eps > 0$.
Let $\sigma$ be a probability distribution on $\mathcal{Z}$.  
We say that $\bZ \in \mathcal{Z}^\ell$ is a \emph{$\eps$-strongly typical sequence} for $\sigma$ of length $\ell$ if for all $\bz \in \mathcal{Z}$ so that $\sigma(\bz) = 0$, we have $f_\bZ(\bz) = 0$; and for all $\bz \in \mathcal{Z}$ so that $\sigma(\bz) > 0$, we have
\[ |f_\bZ(\bz)/\ell - \sigma(\bz)| \leq \frac{\eps}{|\mathcal{Z}|}. \]
We define
\[ T_{\eps,\sigma}^{(\ell)} = \left\{ \bZ \in \mathcal{Z}^\ell \,:\,
\bZ \text{ is $\eps$-strongly typical for $\sigma$ } \right\}. \]
%\forall S \subseteq [k], \bZ_S \text{ is $\eps$-strongly typical for %$\sigma_S$ } \right\},
%\]
%where above $\sigma_S$ is the distribution on $\mathcal{Z}_S$ induced by $\sigma$. 

Let $S \subset [k]$ and fix $\bZ_{S^c} = (\bz^1_{S^c}, \ldots, \bz^\ell_{S^c}) \in (\mathcal{Z}_{S^c})^\ell$.  We define
\[ T_{\eps, \sigma, S}^{(\ell)}(\bZ_{S^c}) = \left\{ \bZ_S \in (\mathcal{Z}_S)^\ell \,:\, \bZ_S \circ \bZ_{S^c} \in T_{\eps, \sigma}^{(\ell)}  \right\}.
\]
\end{definition}
That is, $T_{\eps, \sigma}^{(\ell)}$ is the set of all $\bZ$ that have about the right number of copies of each $\bz \in \mathcal{Z}$, and $T_{\eps, \sigma,S}^{(\ell)}(\bZ_{S^c})$ is the set of all ways $\bZ_{S^c}$ to complete $\bZ_{S^c}$ in order to have that property.

We will use the following well-known fact (see, for example %\cite{EK11}, Section 2.5). 
\cite[Eq. (10.175)]{coverThomas}).
\begin{theorem}\label{thm:infoTheory}
Let $0 < \eps' < \eps$.  Let $S \subseteq[k]$.  Suppose that $\ell$ is sufficiently large.
Let $\sigma$ be a distribution on $\mathcal{Z}$, and let $\sigma_S$ be the restriction to $\mathcal{Z}_S$.
Let $\bz' \sim \sigma$ be a random vector drawn from $\sigma$.
For all $\bZ_{S^c} \in \mathcal{Z}^\ell_{S^c}$, % \in T_{\eps', \sigma_{S^c}}^{(\ell)}$,
\[ |T_{\eps, \sigma, S}^{(\ell)}(\bZ_{S^c})| \leq (\ell + 1)^{|\mathcal{Z}|}\cdot 2^{\ell H( \bz'_S | \bz'_{S^c})(1 + \delta(\eps))}, \]
for some function $\delta$ so that $\delta(\eps) \to 0$ as $\eps \to 0$.
\end{theorem}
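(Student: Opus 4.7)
The plan is to use the method of types in a two-level fashion, first reducing to the case where the marginal type of $\bZ_{S^c}$ is approximately $\sigma_{S^c}$, and then counting completions group by group.

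Step 1: \emph{Reducing to approximately correct marginals.} For $\bz^* \in \mathcal{Z}_{S^c}$, let $n(\bz^*)$ denote the number of columns of $\bZ_{S^c}$ equal to $\bz^*$. Observe that if $\bZ_S \circ \bZ_{S^c} \in T^{(\ell)}_{\eps,\sigma}$, then summing the joint-frequency constraints $|f_{\bZ_S \circ \bZ_{S^c}}(\bz_S \circ \bz^*)/\ell - \sigma(\bz_S \circ \bz^*)| \leq \eps/|\mathcal{Z}|$ over $\bz_S \in \mathcal{Z}_S$ gives $|n(\bz^*)/\ell - \sigma_{S^c}(\bz^*)| \leq \eps \cdot |\mathcal{Z}_S|/|\mathcal{Z}| \leq \eps$. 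So if $\bZ_{S^c}$ fails this condition for any $\bz^*$, then $T^{(\ell)}_{\eps,\sigma,S}(\bZ_{S^c}) = \emptyset$ and the bound holds trivially. Henceforth assume $n(\bz^*)/\ell \approx \sigma_{S^c}(\bz^*)$ for every $\bz^*$.

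Step 2: \emph{Counting completions group by group.} Partition the $\ell$ column positions into groups $G_{\bz^*} = \{i : \bz^i_{S^c} = \bz^*\}$ of size $n(\bz^*)$. A completion $\bZ_S$ gives joint typicality iff, for every $\bz^* \in \mathcal{Z}_{S^c}$ and every $\bz_S \in \mathcal{Z}_S$, the number of positions $i \in G_{\bz^*}$ with $\bz^i_S = \bz_S$ is approximately $\ell \sigma(\bz_S \circ \bz^*)$; equivalently, using the marginal estimate from Step~1, it is close to $n(\bz^*) \cdot \sigma(\bz_S \mid \bz^*)$ up to additive error $O(\eps \ell/|\mathcal{Z}|)$. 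By the standard type-counting bound (e.g.\ the polynomial bound on the number of types combined with the entropy bound on the size of a single type class), the number of ways to fill in the entries of $\bZ_S$ restricted to the group $G_{\bz^*}$ satisfying this constraint is at most
\[
(n(\bz^*) + 1)^{|\mathcal{Z}_S|} \cdot 2^{\,n(\bz^*) \cdot H(\bz'_S \mid \bz'_{S^c} = \bz^*)\,(1+\delta_1(\eps))},
\]
where $\delta_1(\eps) \to 0$ as $\eps \to 0$; the slack $\delta_1(\eps)$ comes from allowing the empirical conditional distribution on $G_{\bz^*}$ to deviate slightly from $\sigma(\cdot \mid \bz^*)$, and continuity of entropy absorbs this into a $(1+\delta_1(\eps))$ factor.

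Step 3: \emph{Multiplying and aggregating.} Taking the product over the disjoint groups $G_{\bz^*}$,
\[
|T^{(\ell)}_{\eps,\sigma,S}(\bZ_{S^c})| \;\leq\; \prod_{\bz^* \in \mathcal{Z}_{S^c}} (n(\bz^*)+1)^{|\mathcal{Z}_S|} \cdot 2^{\,n(\bz^*)\, H(\bz'_S \mid \bz'_{S^c} = \bz^*)\,(1+\delta_1(\eps))}.
\]
The polynomial prefactor is bounded by $(\ell+1)^{|\mathcal{Z}_{S^c}| \cdot |\mathcal{Z}_S|} = (\ell+1)^{|\mathcal{Z}|}$. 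For the exponent, use $n(\bz^*) = \ell\,\sigma_{S^c}(\bz^*) + O(\eps \ell)$ from Step~1 to rewrite
\[
\sum_{\bz^*} n(\bz^*)\, H(\bz'_S \mid \bz'_{S^c} = \bz^*) \;=\; \ell \cdot H(\bz'_S \mid \bz'_{S^c}) \;+\; O(\eps \ell \cdot \log|\mathcal{Z}|),
\]
where the error term again folds into the $(1+\delta(\eps))$ factor for a suitable $\delta(\eps) = \delta_1(\eps) + O(\eps)/H(\bz'_S\mid \bz'_{S^c})$ (handling separately, in the trivial way, the degenerate case where $H(\bz'_S\mid \bz'_{S^c})=0$, in which typicality pins $\bZ_S$ down up to the polynomial factor). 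This yields the claimed bound.

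The main obstacle I expect is Step~2: verifying that strong joint typicality of $\bZ_S \circ \bZ_{S^c}$ is, up to an additive slack that scales with $\eps$, the same as requiring each group $G_{\bz^*}$ to have a strongly typical conditional type for $\sigma(\cdot\mid \bz^*)$. This requires careful bookkeeping of the $\eps/|\mathcal{Z}|$ slack under the rescaling by $\ell/n(\bz^*)$ and a continuity-of-entropy argument to pass from ``close to $\sigma(\cdot\mid\bz^*)$'' to a $(1+\delta(\eps))$ multiplicative factor on the exponent, uniformly in $\bz^*$.
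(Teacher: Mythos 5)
The paper itself does not prove Theorem~\ref{thm:infoTheory}; it is invoked as a standard consequence of the method of types, with a pointer to Cover and Thomas (Eq.~(10.175)). So there is no ``paper proof'' to compare against, and your proposal is a genuine reconstruction of the classical argument. Your group-by-group decomposition is a valid route and, after aggregation, is equivalent to the textbook argument that (i) the number of conditional types is at most $(\ell+1)^{|\mathcal{Z}|}$, (ii) each conditional type class has size at most $2^{\ell H(V\mid P_{S^c})}$ where $V$ is the empirical conditional type and $P_{S^c}$ the empirical marginal type of $\bZ_{S^c}$, and (iii) strong joint typicality forces the empirical conditional entropy to be close to $H(\bz'_S\mid\bz'_{S^c})$ by continuity. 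Step~1 (ruling out $\bZ_{S^c}$ with the wrong marginal type, so that the set is empty) is exactly the right bridge from the textbook version, which is stated for typical $\bZ_{S^c}$, to the ``for all $\bZ_{S^c}$'' phrasing used here, and the $H=0$ edge case is correctly handled by the observation that strong typicality then determines $\bZ_S$ column by column.

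The obstacle you flag in Step~2 is a real gap in the proof as written, not just a bookkeeping nuisance. The strong-typicality constraint bounds $\bigl|f_{\bZ}(\bz_S\circ\bz^*)-\ell\sigma(\bz_S\circ\bz^*)\bigr|\le \eps\ell/|\mathcal{Z}|$, which is an \emph{additive} error of order $\eps\ell$ in counts; dividing by $n(\bz^*)$ gives a relative error of order $\eps\ell/n(\bz^*)$ on the empirical conditional type within $G_{\bz^*}$. When $n(\bz^*)=O(\eps\ell)$ this relative error is $\Omega(1)$: the conditional empirical distribution on that group is essentially unconstrained, and the claimed per-group bound $(n(\bz^*)+1)^{|\mathcal{Z}_S|}\,2^{\,n(\bz^*)H(\bz'_S\mid\bz'_{S^c}=\bz^*)(1+\delta_1(\eps))}$ is simply false there. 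The fix is not a continuity-of-entropy argument inside those groups (continuity does not apply), but a split: for groups with, say, $n(\bz^*)\ge\sqrt{\eps}\,\ell$ the relative error is $O(\sqrt{\eps})$ and continuity of entropy on the probability simplex gives a uniform $(1+\delta_1(\eps))$ factor; for the small groups, use the trivial bound $|\mathcal{Z}_S|^{n(\bz^*)}$, and observe that summed over the at most $|\mathcal{Z}_{S^c}|$ such groups this contributes only $O(\sqrt{\eps}\,\ell\,\log|\mathcal{Z}_S|)$ to the log-count, which folds into the same $\delta(\eps)$ (with the $H=0$ degenerate case handled separately as you do). Alternatively, avoid the per-group split entirely: bound $|T_{\eps,\sigma,S}^{(\ell)}(\bZ_{S^c})|$ by (number of conditional types)$\,\times\,\max_V 2^{\ell H(V\mid P_{S^c})}$ and apply continuity of $H(\cdot\mid\cdot)$ once to the \emph{joint} empirical type, which is within $\eps/|\mathcal{Z}|$ of $\sigma$ in $\ell_\infty$; this sidesteps the small-group issue since conditional entropy is computed from the joint and marginal types directly rather than group by group. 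With that repair, your argument is correct.
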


With this machinery in place, we can prove Theorem~\ref{thm:slepianWolf}.

\begin{proof}[Proof of Theorem~\ref{thm:slepianWolf}]
We use the notation listed from the beginning of Section~\ref{sec:SWnotation} as well as earlier in this section.
Define
\[ T^{(\ell)}_\eps := \bigcup_{\pi \in \Delta_\ell(\mathcal{D})} T_{\eps, \sigma_\pi}^{(\ell)}. \]

We now define our new HSS scheme, $\Pi' = (\Share', \Eval', \Rec')$.  Let $\Pi = (\Share, \Eval, \Rec)$ be the HSS scheme assumed in the theorem statement.

\begin{itemize}
    \item \textbf{Sharing.} Suppose the inputs are $\bx_1, \ldots, \bx_\ell \in \mathcal{X}^m$, where $\bx_i = (x_{i,1}, x_{i,2}, \ldots, x_{i,m}).$  We define the new $\Share'$ function to just use $\Share$ to independently share each input $x_{i,r}$.
    Organize server $j$'s shares into vectors
    \[ \bx_i^j = (x_{i,1}^j, x_{i,2}^j, \ldots, x_{i,m}^j ) \in \mathcal{X}^m \]
    for each $i \in [\ell]$.
    \item \textbf{Evaluation.} Let $f = (f_1, \ldots, f_\ell) \in \mathcal{F}^\ell$.
    For each $i \in [\ell]$ and for each server $j \in [k]$, we use the $\Eval$ function from the original HSS scheme to obtain
    \[ z_{j,i} = \Eval(f_i, j, \bx_i^j) \]
    Let $\bZ \in \mathcal{Z}^\ell$ be the organization of the $z_{j,i}$ as described in Section~\ref{sec:SWnotation}.  Note that we may also think of $\bZ$ as a matrix with $\bZ_{j,i} = z_{j,i}$.  In this notation, server $j$ holds $\by^j$, the $j$'th row of $\bZ$.

    For each server $j \in [k]$, choose a random map\footnote{\newvictor{The servers can have access to the same random map by assuming they share common randomness. Alternatively, we can use pairwise independent hash functions, whose description is short and can be given as part of the input shares.}} $h_j: \mathcal{Z}_j^\ell \to \{0,1\}^{b_j}$.  Define a new evaluation function $$\Eval'(f,j,(x_1^j, \ldots, x_m^j)) := h_j(\by^j).$$  For notational convenience later on, let $h(\bZ) = (h_1(\by^1), \ldots, h_k(\by^k))$.
    \item \textbf{Reconstruction.} Given output shares $(v^1, \ldots, v^k) \in \{0,1\}^{b_1 + \cdots + b_k}$, we wish to reconstruct $f_i(\bx_i)$ for $i \in [\ell]$.  We define the new reconstruction map, $\Rec'$, as follows.  Choose a parameter $\eps > 0$.
    \begin{itemize}
        \item If there is a unique $\tilde{\bZ} \in T^{(\ell)}_{\eps}$ with rows $\tilde{\by}^j$ so that $h_j(\tilde{\by}^j) = v^j$ for all $j \in [k]$, define 
        \[ \Rec'(v^1, \ldots, v^k) = \Rec(\tilde{\bz}^1), \Rec(\tilde{\bz}^2), \ldots, \Rec(\tilde{\bz}^\ell), \] where $\tilde{\bz}^i$ is the $i$'th column of $\tilde{\bZ}$.
        \item If there is not a unique such $\tilde{\bZ}$, define $\Rec'(v^1, \ldots, v^k) = \bot$.
    \end{itemize}
\end{itemize}
We first note that $\Pi'$ is $t$-private, since $\Pi$ is.  We also note that the download cost of $\Pi'$ is indeed $b_1 + \cdots + b_k$.  
It remains to show that this scheme is correct with high probability, provided that the $b_i$ satisfy the requirements in the theorem statement. 

Suppose that the secrets $\bx_1, \ldots, \bx_\ell$ are shared and evaluated using $\Share'$ and $\Eval'$ as above. 
Server $j$ holds $\by^j$, which is organized into $\bZ \in \mathcal{Z}^\ell$, as per the definition of $\Pi'$ above.
The $i$'th columns $\bz_1, \ldots, \bz_\ell$ of $\bZ$ is generated according to a distribution $D_i \in \mathcal{D}$ that depends on $\bx_i$.  Suppose that $(D_1, D_2, \ldots, D_\ell)$ is this vector of distributions, and let $\pi \in \Delta_\ell(\mathcal{D})$ be the empirical distribution of $(D_1, \ldots, D_\ell)$.  That is, 
\[ \pi(D) := \frac{|\{i \in [\ell] \,:\, D_i = D\}|}{\ell}. \]
Define
\[ \mathcal{S}_D = \{ i \in [\ell] \,:\, D_i = D\}.\]

We first observe that if the correct matrix $\bZ$ is identified in the definition of $\Rec'$ as the unique $\tilde{\bZ} \in T_\eps^{(\ell)}$, then the decoding algorithm $\Rec'$ is correct.  Indeed, this follows from the correctness of $\Rec$.  
Thus, to show that this algorithm is correct, it suffices to show that with high probability (over $\Share$, $\Eval'$), the following two things hold.
\begin{itemize}
    \item[(A)] $\bZ$ lies in $T_\eps^{(\ell)}$.
    \item[(B)] There is no other $\tilde{\bZ} \in T_\eps^{(\ell)}$ so that $h(\tilde{\bZ}) = h(\bZ)$.
\end{itemize}

We begin with (A).  
\begin{claim}\label{cl:A}
Let $\bZ \in \mathcal{Z}^\ell$ be as above.  Then with probability at least $1 - \exp(-\Omega(\eps^2 \ell))$, 
$\bZ$ is $\eps$-strongly typical for $\sigma_{\pi}$.  That is,
$\bZ \in T_{\eps, \sigma_{\pi}}^{(\ell)}. $

Above, the $\Omega(\cdot)$ hides constants that may depend on the original HSS scheme $\Pi$ and its parameters ($m,k,|\mathcal{X}|, \mathcal{Z}$).
\end{claim}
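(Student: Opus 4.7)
The plan is to apply a standard concentration argument. Recall that the columns $\bz_1, \ldots, \bz_\ell$ of $\bZ$ are independent, with $\bz_i$ drawn from $D_i$, and that the empirical distribution of $(D_1,\ldots,D_\ell)$ is exactly $\pi$. First I would compute the expectation of the counts $f_\bZ(\bz)$: for each $\bz \in \mathcal{Z}$,
\[
\mathbb{E}[f_\bZ(\bz)] \;=\; \sum_{i=1}^{\ell} D_i(\bz) \;=\; \sum_{D \in \mathcal{D}} |\mathcal{S}_D|\, D(\bz) \;=\; \ell \sum_D \pi(D)\, D(\bz) \;=\; \ell\,\sigma_\pi(\bz).
\]
So $f_\bZ(\bz)/\ell$ is, in expectation, exactly the target $\sigma_\pi(\bz)$. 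This is the key observation that makes the claim ``want to be true.''

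Next I would split into the two conditions from Definition~\ref{def:typical}. For $\bz$ with $\sigma_\pi(\bz) = 0$: since $\sigma_\pi(\bz) = \sum_D \pi(D) D(\bz)$ is a sum of non-negative terms, each $D$ with $\pi(D) > 0$ must have $D(\bz) = 0$. In particular $D_i(\bz) = 0$ for every $i \in [\ell]$ (since $D_i$ lies in the support of $\pi$), so $f_\bZ(\bz) = 0$ with probability $1$. This part holds deterministically, with no union bound needed.

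For $\bz$ with $\sigma_\pi(\bz) > 0$, the count $f_\bZ(\bz) = \sum_{i=1}^\ell \mathbf{1}[\bz_i = \bz]$ is a sum of independent Bernoulli variables bounded in $[0,1]$. I would apply Hoeffding's inequality to get
\[
\Pr\!\left[\,\bigl|f_\bZ(\bz)/\ell - \sigma_\pi(\bz)\bigr| > \eps/|\mathcal{Z}|\,\right] \;\leq\; 2\exp\!\left(-\,2\eps^2 \ell / |\mathcal{Z}|^2\right).
\]
A union bound over the at most $|\mathcal{Z}|$ choices of $\bz$ gives total failure probability at most $2|\mathcal{Z}|\exp(-2\eps^2\ell/|\mathcal{Z}|^2) = \exp(-\Omega(\eps^2 \ell))$, where the implicit constant depends on $|\mathcal{Z}|$ (and hence on the original HSS scheme $\Pi$ and its parameters), as allowed by the claim.

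I do not foresee a real obstacle here: the statement is a textbook ``type/concentration'' fact once one recognizes that $f_\bZ(\bz)$ has the right mean regardless of the ordering of the $D_i$'s. The only mild subtlety is being careful about the degenerate case $\sigma_\pi(\bz) = 0$ (where the typicality definition demands the \emph{exact} equality $f_\bZ(\bz) = 0$, not just closeness), which is handled as above by noting that every $D_i$ gives mass zero to such $\bz$.
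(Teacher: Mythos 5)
Your proof is correct and follows essentially the same route as the paper's: compute $\mathbb{E}[f_\bZ(\bz)] = \ell\sigma_\pi(\bz)$ by decomposing over the empirical distribution $\pi$, handle the $\sigma_\pi(\bz)=0$ case deterministically, apply Hoeffding's inequality for $\sigma_\pi(\bz)>0$, and finish with a union bound over $\mathcal{Z}$. The only difference is cosmetic: you compute the expectation first and then split into cases, whereas the paper splits into cases first.
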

\begin{proof}
We must show that $\bZ$ is $\eps$-strongly typical for $\sigma_{\pi}$, or in other words that for all $\bz \in \mathcal{Z}$ with $\sigma_\pi(\bz) = 0$, we have $f_{\bZ}(\bz) = 0$; and for all $\bz \in \mathcal{Z}$ with $\sigma_\pi(\bz) > 0$, we have
%\mkw{fixed a typo here, please double check}
%\[ \left| \frac{ f_{\bZ}(\bz) }{\ell} - \sigma_{\pi}(\bz) \right| \leq \eps \cdot \sigma_{\pi}(\bz). \]
\[ \left| \frac{ f_{\bZ}(\bz) }{\ell} - \sigma_{\pi}(\bz) \right| \leq \eps/|\mathcal{Z}|. \]

To show the first thing, suppose that $\sigma_\pi(\bz) = 0$.
Then
\[ 0 = \sigma_{\pi}(\bz) = \sum_{D \in \mathcal{D}} \pi(D) D(\bz), \]
and so for each $D$, either $\pi(D) = 0$, in which case $D_i$ is never equal to $D$, or else $D(\bz) = 0$, in which case there is no way to have $\bz^i_S = \bz$ if $D_i = D$.  In either case, if $\sigma_\pi(\bz) = 0$, then $\bz$ will never appear in $\bZ$ and $f_{\bz}(\bZ) = 0$.  

Next, fix any $\bz \in \mathcal{Z}$ so that $\sigma_\pi(\bz) > 0$.
The number of $\bz$ that appear in $\bZ$ is
\[ f_{\bZ}(\bz) = \sum_{i=1}^\ell \mathbf{1}[\bz^i = \bz] = \sum_{D \in \mathcal{D}} \sum_{i \in \mathcal{S}_D} \mathbf{1}[\bz^i = \bz]. \]
Recall that we have $\bz^i \sim D_i$ for $i \in [\ell]$, each independently (and the distributions $D_i \in \mathcal{D}$ are fixed by the secrets $\bx^i$).
The expectation of $f_{\bZ}(\bz)$ is thus
\begin{align*}
\mathbb{E}[ f_{\bZ}(\bz) ] &= \sum_{D \in \mathcal{D}} \ell \pi(D) \Pr_D[\bz^i = \bz] \\
&= \sum_{D \in \mathcal{D}} \ell \pi(D) D(\bz) \\
&= \ell \sigma_{\pi}(\bz).
\end{align*}
Thus, Hoeffding's inequality implies that, for all such $\bz$,
\begin{align*} 
\Pr\left[\left| \frac{ f_{\bZ}(\bz) }{\ell} - \sigma_{\pi}(\bz) \right| > \eps / |\mathcal{Z}| \right] 
&= \Pr\left[ \left|\frac{1}{\ell}\sum_{i=1}^\ell \left( \mathbf{1}[\bz^i = \bz] - \mathbb{E}\mathbf{1}[\bz^i = \bz] \right) \right| > \eps / |\mathcal{Z}| \right] \\
&\leq 2\exp\left( - 2\ell \eps^2 / |\mathcal{Z}|^2 \right)
\end{align*}

Since $|\mathcal{Z}|$ is some constant that depends only on the original HSS scheme $\Pi$ (and in particular, is independent of $\ell$ and $\eps$),
we conclude that for any $\bz \in \mathcal{Z}$, 
\[ \Pr\left[\left| \frac{ f_{\bZ}(\bz) }{\ell} - \sigma_{\pi}(\bz) \right| > \eps/|\mathcal{Z}| \right] \leq \exp\left( -\Omega( \ell \eps^2 )\right). \]
Taking a union bound over all possible $\bz \in \mathcal{Z}$, we see that
\[ \Pr[ \bZ \not\in T_{\eps, \sigma_{\pi}}^{(\ell)} ] \leq |\mathcal{Z}| \cdot \exp\left( -\Omega(\ell \eps^2)\right) \leq \exp(-\Omega(\ell \eps^2)), \]
using the fact again that $|\mathcal{Z}|$ is a constant. This proves the claim.
\end{proof}

As a corollary of Claim~\ref{cl:A}, we see that item (A) holds with probability at least $1 - \exp(-\Omega(\ell \eps^2))$.  Indeed, with at least that probability we have \[ \bZ \in T_{\eps, \sigma_\pi}^{(\ell)} \subseteq T_\eps^{(\ell)}, \]
using the definition of $T_\eps^{(\ell)}$.

Now we turn to item (B), that there is no other $\tilde{\bZ} \in T_\eps^{(\ell)}$ so that $h(\bZ) = h(\tilde{\bZ})$.  
We establish $2^k$ bad events that could prevent (B) from occurring, one for each $S \subset [k]$.

For $S \subseteq [k]$, define the event $\mathcal{E}_S$ to be the event that there is some $\tilde{\bZ}_S \in \mathcal{Z}_S^\ell$ so that
%\mkw{Fixed some typos here, please double-check}
\begin{itemize}
    \item Each row of $\tilde{\bZ}_S$ is different than the corresponding row of $\bZ_S$.  (That is, if we let $\tilde{\by}_i$ denote the $i$'th row of $\tilde{\bZ}_S$, we have $\tilde{\by}_i \neq \by_i$ for all $i \in S$.)   %$\tilde{\bZ}_S \neq \bZ_S$
    \item $\tilde{\bZ}_S \circ \bZ_{S^c} \in T_\eps^{(\ell)}$
    \item $h(\tilde{\bZ}_S \circ \bZ_{S^c}) = h(\bZ)$
\end{itemize}
Above, we recall the notation that $\tilde{\bZ}_S \circ \bZ_{S^c}$ means that we should create a matrix in $\mathcal{Z}^\ell$ by taking the rows indexed by $S$ from $\tilde{\bZ}$ while taking the rows indexed by $S^c$ from $\bZ$. 

Observe that (B) will occur provided that none of the bad events $\mathcal{E}_S$ occur.  Indeed, suppose that there is some $\tilde{\bZ} \in T_\eps^{(\ell)}$ so that $\tilde{\bZ} \neq \bZ$ and so that  $h(\tilde{\bZ}) =h(\bZ)$.  Then $\mathcal{E}_S$ occurs, where $S \subseteq [k]$ is the set of rows on which $\bZ$ and $\tilde{\bZ}$ differ.

\begin{claim}\label{cl:B}
Fix $\bZ \in \mathcal{Z}^\ell$ and suppose that the favorable outcome of Claim~\ref{cl:A} holds for $\bZ$.  
Let $S \subseteq [k]$.
There is some function $\delta'(\eps)$ so that $\delta'(\eps) \to 0$ and $\eps \to 0$, (and which may depend on the HSS $\Pi$ and its parameters but which is independent of $\ell, \eps$), so that the following holds.

For a distribution $\pi' \in \Delta(\mathcal{D})$, let $\bz' \in \mathcal{Z}$ denote a random variable drawn from $\sigma_\pi'$.  
Suppose that
\[ \sum_{i \in S}b_i \geq \max_{\pi' \in \Delta(\mathcal{D})} \ell \cdot ( H(\bz'_S | \bz'_{S^c}) + \delta'(\eps) ), \]
where the $b_i$ are as in the definition of $\Share'$.
Then 
\[ \Pr[ \mathcal{E}_S ] \leq \exp(-\Omega(\eps \ell)), \]
where above the $\Omega(\cdot)$ notation suppresses constants that depend on the HSS $\Pi$ and its parameters.
\end{claim}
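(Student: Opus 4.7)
The plan is to execute a textbook Slepian-Wolf-style union bound, with the twist that we do not know the distributions $D_1,\ldots,D_\ell$ governing the columns of $\bZ$, so we must additionally union-bound over the empirical type that could govern a would-be colliding candidate. First I would count the candidates: any $\tilde{\bZ}_S$ witnessing $\mathcal{E}_S$ satisfies $\tilde{\bZ}_S \circ \bZ_{S^c} \in T_\eps^{(\ell)} = \bigcup_{\pi' \in \Delta_\ell(\mathcal{D})} T_{\eps,\sigma_{\pi'}}^{(\ell)}$, so $\tilde{\bZ}_S$ lies in $\bigcup_{\pi' \in \Delta_\ell(\mathcal{D})} T_{\eps,\sigma_{\pi'},S}^{(\ell)}(\bZ_{S^c})$. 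Applying Theorem~\ref{thm:infoTheory} to each $\pi'$, and using the bound $|\Delta_\ell(\mathcal{D})| \leq \ell^{|\mathcal{X}|^m}$ from \eqref{eq:sizeofDelta}, the total number of candidates is at most
\[ N \ := \ \ell^{|\mathcal{X}|^m}\cdot (\ell+1)^{|\mathcal{Z}|}\cdot 2^{\ell\cdot(1+\delta(\eps))\cdot M}, \]
where $M := \max_{\pi' \in \Delta(\mathcal{D})} H(\bz'_S\mid \bz'_{S^c})$ with $\bz' \sim \sigma_{\pi'}$ in the maximum.

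Next I would bound the hash-collision probability for a single fixed candidate. The $h_j$ for $j \in S^c$ are applied to the rows $\bZ_{S^c}$ shared by both $\bZ$ and $\tilde{\bZ}_S \circ \bZ_{S^c}$, so they contribute no constraint. For $j \in S$ the rows $\tilde{\by}^j$ and $\by^j$ differ by definition of $\mathcal{E}_S$, and since each $h_j$ is a uniformly random map into $\{0,1\}^{b_j}$ chosen independently across $j$, the joint collision probability equals $\prod_{j \in S} 2^{-b_j} = 2^{-\sum_{j\in S} b_j}$. A union bound over all candidates then gives $\Pr[\mathcal{E}_S] \leq N \cdot 2^{-\sum_{j\in S} b_j}$.

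Finally I would choose $\delta'(\eps)$ so that the right-hand side is $\exp(-\Omega(\eps \ell))$ under the hypothesis $\sum_{j \in S} b_j \geq \ell(M+\delta'(\eps))$. Taking $\log_2$, the exponent of the bound is at most $O(\log \ell) + \ell M \delta(\eps) - \ell\,\delta'(\eps)$. Since $M \leq \log_2|\mathcal{Z}|$ is a constant depending only on $\Pi$, setting $\delta'(\eps) := (\log_2|\mathcal{Z}|)\cdot \delta(\eps) + \eps$ makes this at most $O(\log \ell) - \eps\ell \leq -\Omega(\eps \ell)$ for all sufficiently large $\ell$, giving the desired bound. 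The only real conceptual obstacle is the unknown-distribution issue---the classical Slepian-Wolf theorem does not apply off the shelf---but the method of types handles it at the cost of the polynomial factor $|\Delta_\ell(\mathcal{D})|$, which is absorbed painlessly into the exponent; the remainder is bookkeeping packaged around Theorem~\ref{thm:infoTheory}.
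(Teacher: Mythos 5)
Your proof is correct and takes essentially the same approach as the paper: a union bound over type classes in $\Delta_\ell(\mathcal{D})$, invoking Theorem~\ref{thm:infoTheory} to bound the size of each conditional typical set and the size bound $|\Delta_\ell(\mathcal{D})|\leq\ell^{|\mathcal{X}|^m}$, followed by absorbing the resulting polynomial-in-$\ell$ factor and the $(1+\delta(\eps))$ slack into $\delta'(\eps)$. Your reorganization into ``count candidates, then multiply by a uniform collision probability'' is cosmetic, and your explicit choice $\delta'(\eps)=(\log_2|\mathcal{Z}|)\delta(\eps)+\eps$ together with the observation $H(\bz'_S\mid\bz'_{S^c})\leq\log_2|\mathcal{Z}|$ correctly realizes what the paper leaves implicit.
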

\begin{proof}
Note that for any $\tilde{\bZ}_S$ satisfying the first condition of $\mathcal{E}_S$, we have
\[ \Pr_h[ h(\tilde{\bZ}_S \circ \bZ_{S^c}) = h(\bZ) ] = 2^{-\sum_{i\in S} b_i}, \]
since each $h_j$ is uniformly random and since each row of $\bZ_S$ differs from the corresponding row of $\bZ$.  Thus, we may bound
\begin{align}
\Pr[ \mathcal{E}_S ] &\leq \sum_{\tilde{\bZ}_S \in \mathcal{Z}_S^\ell\setminus\{ \bZ_S \}} \mathbf{1}[\tilde{\bZ}_S \circ \bZ_{S^c} \in T_\eps^{(\ell)}] \cdot 2^{-\sum_{i\in S} b_i} \notag \\
&\leq \sum_{\pi \in \Delta_\ell(\mathcal{D})} \sum_{\tilde{\bZ}_S \in \mathcal{Z}_S^\ell} \mathbf{1}[\tilde{\bZ}_S \circ \bZ_{S^c} \in T_{\eps,\sigma_\pi}^{(\ell)}] \cdot 2^{-\sum_{i\in S} b_i} \label{eq1} \\
&\leq \sum_{\pi \in \Delta_\ell(\mathcal{D})} \left|T_{\eps, \sigma_\pi, S}^{(\ell)}(\bZ_{S^c}) \right| \cdot 2^{-\sum_{i\in S} b_i} \label{eq2} \\
&\leq \sum_{\pi \in \Delta_\ell(\mathcal{D})} (\ell + 1)^{|\mathcal{Z}|}\cdot 2^{\ell H(\bz'_S | \bz'_{S^c})(1 + \delta(\eps))-\sum_{i\in S}b_i} \label{eq3} \\
&\leq \ell^{|\mathcal{X}|^m + |\mathcal{Z}|} \max_{\pi \in \Delta(\mathcal{D})} 2^{ \ell H( \bz'_S | \bz'_{S^c}) (1 + \delta(\eps)) - \sum_{i\in S} b_i } \label{eq4}
\end{align}
where in \eqref{eq1} we have used the definition of $T_\eps^{(\ell)}$, in \eqref{eq2} we have used Definition~\ref{def:typical}, in \eqref{eq4} we have used Theorem~\ref{thm:infoTheory}, and in \eqref{eq4} we have used the fact \eqref{eq:sizeofDelta} that $|\Delta_\ell(\mathcal{D})| \leq \ell^{|\mathcal{X}|^m}$.

Thus, provided that
\[ \sum_{i \in S} b_i \geq \max_{\pi \in \Delta(\mathcal{D})} \ell\left( H(\bz'_S | \bz'_{S^c})(1 + \delta(\eps)) + \eps\right) + (|\mathcal{X}|^m + |\mathcal{Z}|) \log(\ell), \]
the probability that $\mathcal{E}_S$ occurs is at most
\[ \Pr[ \mathcal{E}_S ] \leq \exp(-\Omega(\ell \eps)). \]
Using the fact that $|\mathcal{X}|^m, |\mathcal{Z}|$ are constants that depend only on the original HSS scheme $\Pi$ (and not on $\ell$ or $\eps$), for sufficiently large $\ell$ it suffices to require that
\[ \sum_{i \in S}b_i \geq \max_{\pi \in \Delta(\mathcal{D})} \ell \cdot ( H(\bz'_S | \bz'_{S^c}) + \delta'(\eps) ) \]
for some function $\delta'$ that depends only on $\Pi$ so that $\delta'(\eps) \to 0$ as $\eps \to 0$.  This proves the claim.
\end{proof}
Finally, Claim~\ref{cl:B} implies that with high probability, (B) occurs.  Indeed, by a union bound over all $2^k$ choices for $S$, we conclude that
\begin{align*}
    \Pr[ \text{ (B) does not occur }] &= \Pr[ \text{ $\mathcal{E}_S$ occurs for some $S \subseteq [k]$ }] \\
    &\leq \sum_{S \subseteq [k]} \Pr[\mathcal{E}_S] \\
    &\leq 2^k \exp(-\Omega(\eps \ell)) \\
    &\leq \exp(-\Omega(\eps \ell)),
\end{align*}
using the fact that $k$ is a constant that depends only on the HSS $\Pi$.
Since Claim~\ref{cl:A} has established that (A) occurs with probability at least $1 - \exp(-\Omega(\eps^2 \ell))$, we conclude by another union bound that both (A) and (B) occur with probability at least $1 - \exp(-\Omega(\eps^2 \ell))$. As discussed above, if both (A) and (B) occur, then the algorithm $\Rec'$ is correct.  This completes the proof of the theorem.
\end{proof}
\subsection{Symmetrically Private HSS}\label{sec:symmetricHSS}

In HSS, the only security guarantee is that any set of $t$ colluding servers learn nothing about the input secrets.  We can also study the stronger notion of \emph{symmetric security}, where we additionally demand that the output client learn nothing beyond the desired output $f(x_1, \ldots, x_m)$. % While (as we explain below) symmetric security is easy to achieve for linear reconstruction, it is not so easy for non-linear reconstruction.  Thus, if we hope to surpass the $1 - dt/k$ barrier of Theorem~\ref{thm:barrier} by employing non-linearity, the question of symmetric security becomes interesting.

In this section, we begin an exploration of symmetric security that can beat the $1 - dt/k$ barrier.  First, we show that the Shamir-based non-linear HSS example at the beginning of Section~\ref{sec:nonlinear} is \emph{not} symmetrically secure; then we show that the Greedy Monomial CNF (Definition~\ref{def:greedy}) that is used in the proof of Corollary~\ref{cor:beatlinear} \emph{is} symmetrically secure.  We leave it as an intriguing open direction to further characterize which non-linear HSS schemes are symmetrically secure.

To begin, we first formally define symmetric security.
%In HSS, the only security guarantee is that any set of at most $t$ colluding servers don't learn the secret inputs. In this section we study the notion of \emph{symmetric} security (and its relation to rate).  In this notion of security, various other colluding parties are guaranteed not to learn any information beyond what they can naturally deduce; for example, the output client may only learn the output values reconstructed from the HSS. 
%Hence, we define a notion of \emph{symmetrically private} HSS (SHSS):

\begin{definition}[SHSS]
Let $\Pi = (\Share, \Eval, \Rec)$ be an HSS for $\mathcal{F}$ with inputs in $\mathcal{X}^m$.  We say that $\Pi$ is a \emph{symmetrically private} HSS (SHSS) if the following holds for all $f \in \mathcal{F}$ and all $\bx \in \mathcal{X}^m$.  Let  $y^{(1)}, \ldots, y^{(k)}$ denote the output shares of $\Pi$ (that is, the outputs of $\Eval$ given $f$).  Then the joint distribution of $y^{(1)}, \ldots, y^{(k)}$ depends only on $f(\bx)$.

%\begin{itemize}
%    \item an \emph{SHSS} if any subset of the $m$ input clients, $k$ servers, and the output client, learn exactly what they would learn if the same subset of \emph{just the input and (possibly) output client} interacted with an ideal functionality which takes the $m$ clients' inputs and sent the output to the output client.
%    \item 
%\end{itemize}
\end{definition}

\begin{remark}[Relationship to SPIR]
	A related notion is that of \emph{symmetrically private information retrieval} (SPIR)~\cite{GIKM98}, where the (single) client only learns its requested record from the database and nothing else.  The notion of symmetric privacy in SPIR is stronger than the one we consider here in that it rules out additional information about the function $f$ in the joint distribution of {\em both input shares and output shares}. To meet this stronger requirement, the servers must inherently share a source of common randomness which is unknown to the client. Our weaker symmetric privacy notion considers the output shares alone. This is motivated by applications in which the output shares are delivered to an external output client who does not collude with any servers or input clients.
\end{remark}

\iffalse previous version
\begin{remark}[Relationship to SPIR]
A related notion is that of \emph{symmetrically private information retrieval} (SPIR) ~\cite{GIKM98}, where the client only learns its requested record from the database and nothing else. Note, however, that in the context of SPIR there is only a single input ($m=1$), and moreover, the servers inherently share common randomness, whereas this is not necessarily the case for us.
\end{remark}
\fi

%As it turns out, an SHSS scheme which beats the $1-dt/k$ barrier is not realizable because \victor{Yuval please continue from here...}

%For constructing SHSS schemes that beat the $1-dt/k$ barrier, 
We first note that schemes with linear reconstruction, such as our constructions from Sections \ref{sec:linear-constructions} and \ref{sec:blackbox}, can be trivially made into weak SHSS schemes by having some input client distribute among the $k$ servers a random sharing of $\vec{0}\in\mathbb{F}^\ell$ according to the $\ell$-LMSSS defined by the output shares of the HSS. Then, each servers adds its share of $\vec{0}$ to its output share. Unfortunately, this will result in a fresh sharing of the secret, which is not compressible, and thus we may not apply the Slepian-Wolf-like machinery from the previous section in order to beat the $1 - dt/k$ barrier.

Hence, it is interesting to ask whether we can construct HSS schemes which go beyond the $1-dt/k$ barrier \emph{and} are an SHSS. We begin by arguing that the usual Shamir-based HSS (instances of which can be compressed according to the warm-up in Section \ref{sec:nonlinear}) is not an SHSS. The fact the output shares don't encode a uniformly random polynomial (because it is necessarily reducible) is well known, see for example \cite{STOC:BenGolWig88,STOC:ChaCreDam88}. We prove the following proposition for completeness.

\begin{proposition}[Shamir-based HSS is not an SHSS]\label{shamir-not-shss}
Let $\FF$ be a finite field such that $|\FF|>3$. Let $\Pi$ be the $1$-private $3$-server HSS for $\mathcal{F}=\{f(x_1,x_2)\mapsto x_1x_2\}$ 
where $\Share$ is given by Shamir sharing (Definition~\ref{shamir}) and
where $\Eval$ simply computes the conversion in Lemma \ref{shamir-product}.  Then $\Pi$ is not an SHSS.
\end{proposition}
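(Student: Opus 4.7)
My plan is to exhibit two input pairs $(x_1,x_2)$ with the same product but leading to distinguishable distributions on the output shares $(y^{(1)},y^{(2)},y^{(3)})$. Fix evaluation points $\alpha_0,\alpha_1,\alpha_2,\alpha_3$ with $\alpha_0=0$ (any distinct evaluation points will work after a relabeling; this is where $|\F|>3$ is used, to guarantee four distinct elements). Under Shamir sharing with $t=1$, a secret $x$ is shared as $p(X)=x+rX$ for uniform $r\in\F$, and the servers receive $p(\alpha_1),p(\alpha_2),p(\alpha_3)$. After the local multiplication of Fact~\ref{shamir-product}, the output shares are $y^{(j)}=q(\alpha_j)$, where $q(X)=p_1(X)p_2(X)$ is a polynomial of degree at most $2$. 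Since three points determine a unique degree-$\le 2$ polynomial, the output shares and $q$ are in bijection (given the fixed $\alpha_j$'s), so it suffices to compare the distributions of $q$ in two cases with the same product $x_1x_2$.

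I will take $(x_1,x_2)=(0,0)$ versus $(x_1,x_2)=(0,1)$, both yielding product $0$. In the first case $p_1(X)=r_1X$, $p_2(X)=r_2X$, so $q(X)=r_1r_2X^2$, a polynomial with \emph{zero linear coefficient with probability $1$}. In the second case $p_1(X)=r_1X$ and $p_2(X)=1+r_2X$, giving $q(X)=r_1X+r_1r_2X^2$, whose linear coefficient is $r_1$, uniform in $\F$ and hence nonzero with probability $1-1/|\F|>0$ (using $|\F|>3$, in fact $|\F|\ge 2$ suffices here).

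Thus the predicate ``the unique degree-$\le 2$ interpolant of $(\alpha_1,y^{(1)}),(\alpha_2,y^{(2)}),(\alpha_3,y^{(3)})$ has nonzero coefficient of $X$'' is a deterministic function of the output shares; it is $0$ with probability $1$ in Case~1 and positive probability in Case~2. Consequently the joint distribution of $(y^{(1)},y^{(2)},y^{(3)})$ depends on more than $x_1x_2$, which contradicts the SHSS definition. I will phrase the conclusion as: there exist inputs $(x_1,x_2)$ and $(x_1',x_2')$ with $x_1x_2=x_1'x_2'$ such that the induced distributions on output shares are distinct, hence $\Pi$ is not an SHSS.

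The argument is essentially a one-line observation (reducibility of $q$ forces nontrivial structure), so there is no serious obstacle; the only care needed is to confirm that the evaluation points can be chosen so that $\alpha_0=0$ and the $\alpha_j$ for $j\ge 1$ are nonzero distinct elements of $\E$ (immediate from $|\F|>3$ and $\E\supseteq\F$), and to note that the linear coefficient of the interpolant is an $\F$-linear function of the shares and therefore a legitimate distinguisher computable from the output shares alone.
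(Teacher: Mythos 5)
Your proof is correct and takes essentially the same approach as the paper's: both compare $(x_1,x_2)=(0,0)$ with $(0,1)$, note that $q=p_1p_2$ has degree $\le 2$ and is recoverable from the three output shares by interpolation, and distinguish the two cases by the coefficient of $X$ in $q$ (identically zero in the first case, nonzero with positive probability in the second). Your write-up is slightly more explicit about the bijection between output shares and $q$, and about the normalization $\alpha_0=0$ (equivalently, working in the basis $1,(X-\alpha_0),(X-\alpha_0)^2$), but the underlying argument is the same.
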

\begin{proof}
Let $\alpha_1,\alpha_2,\alpha_3$ be the evaluation points used in the HSS. Suppose first that we choose inputs $(x_1,x_2)=(0,0)$. For $i\in[2]$, let $p_i\in\FF[x]$ be the random polynomial of degree $1$ such that $p_i(0)=x_i=0$. After the servers apply $\Eval(f,\cdot,\cdot)$, the output shares constitute the polynomial $p=p_1\cdot p_2$, evaluated at points $\alpha_1,\alpha_2,\alpha_3$. Since the degree of the polynomial is at most $2$ we can fully recover it using $3$ evaluation points. Since $p_i(0)=0$ for every $i\in[2]$, the smallest (possibly) non-zero monomial in $p$ has degree $2$.

Now, suppose alternatively that we choose inputs $(x_1,x_2)=(0,1)$. Now the smallest (possibly) non-zero monomial in $p=p_1p_2$ has degree $1$. Since the random coefficient of this monomial is not always zero, the above distributions of $p$ are not identical, while $0\cdot 0=0\cdot 1$, and so this HSS is not an SHSS.
\end{proof}

On the flip side, as it turns out, the HSS that arises from the proof of Corollary~\ref{cor:beatlinear} is an SHSS.
%from  Definition \ref{def:greedy} does satisfy this notion of security for $k=3$. 
Given Proposition~\ref{shamir-not-shss}, this is an additional benefit that the ``Greedy Monomial CNF'' scheme has over Shamir-sharing, aside from its better compressibility compared to Shamir (as in Remark \ref{rem:tight}). 

%\victor{The $t=2$ $d=2$ $k=5$ variant is also a weak SHSS over $\mathbb{F}_2$. I couldn't check over $\mathbb{F}_3$ in Python because it required about $\sim 2^{30}$ products}

\begin{proposition}[Greedy-CNF-based HSS is an SHSS]\label{weakshss}
The HSS scheme $\Pi$ from Definition \ref{def:greedy} for $t=1,d=2,k=3$ is an SHSS. 
\end{proposition}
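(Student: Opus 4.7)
The strategy is to prove, for any two input pairs $(a, b), (a', b') \in \F^2$ with $ab = a'b'$, that the joint distribution of the three output shares coincides. Specializing Definition~\ref{def:greedy} to $f(x_1, x_2) = x_1 x_2$ with $a := x_1$ and $b := x_2$, the $1$-private CNF shares $a = a_1 + a_2 + a_3$ and $b = b_1 + b_2 + b_3$ use uniform randomness $(a_1, a_2, b_1, b_2) \in \F^4$, and the greedy monomial assignment yields
\begin{align*}
y^{(1)} &= (a_2 + a_3)(b_2 + b_3) = (a - a_1)(b - b_1), \\
y^{(2)} &= a_1 b_1 + a_1 b_3 + a_3 b_1, \\
y^{(3)} &= a_1 b_2 + a_2 b_1,
\end{align*}
summing to $ab$. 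Since $y^{(2)} = ab - y^{(1)} - y^{(3)}$ is a deterministic function of $(ab, y^{(1)}, y^{(3)})$, it suffices to show that the joint distribution of $(y^{(1)}, y^{(3)})$ depends only on $ab$.

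The key step is to substitute $u := a - a_1$ and $v := b - b_1$, which gives a bijective reparametrization under which $(u, v, a_2, b_2) \in \F^4$ is uniform, $y^{(1)} = uv$, and $y^{(3)} = (a - u) b_2 + a_2 (b - v)$. For any target $(w, z) \in \F^2$ I will count the preimages
\[
N(w, z; a, b) := \left|\{(u, v, a_2, b_2) \in \F^4 : uv = w \text{ and } (a - u) b_2 + a_2 (b - v) = z\}\right|
\]
by summing over the pairs $(u, v)$ with $uv = w$. For $(u, v) = (a, b)$ (possible only when $w = ab$) the constraint on $y^{(3)}$ degenerates to $0 = z$ and contributes $|\F|^2 \mathbf{1}[z = 0]$; for any other $(u, v)$ with $uv = w$ the constraint is a nontrivial $\F$-linear equation in $(a_2, b_2)$ and hence has exactly $|\F|$ solutions. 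Writing $N_w := |\{(u, v) \in \F^2 : uv = w\}|$, we obtain
\[
N(w, z; a, b) = |\F| \cdot N_w + |\F| \cdot \mathbf{1}[ab = w] \cdot \bigl(|\F| \cdot \mathbf{1}[z = 0] - 1\bigr),
\]
which depends on $(a, b)$ only through the product $ab$. By uniformity of the randomness, the joint distribution of $(y^{(1)}, y^{(3)})$, and thus of $(y^{(1)}, y^{(2)}, y^{(3)})$, is determined by $ab$.

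This counting identity is equivalent to the existence of an explicit bijection on $\F^4$ that preserves the output shares, as promised in the technical overview. In the nonzero regime $ab \neq 0$ such a bijection can be exhibited concretely via the scaling $(a_1, a_2, b_1, b_2) \mapsto \bigl((a'/a) a_1, (a'/a) a_2, (b'/b) b_1, (b'/b) b_2\bigr)$, under which $y^{(1)}$ and $y^{(3)}$ both acquire the common prefactor $(a'b')/(ab) = 1$ and are thus preserved. The main obstacle is the degenerate stratum $a_1 = b_1 = 0$ in the zero regime $ab = 0$: there $y^{(3)}$ collapses deterministically to $0$ while everywhere else it is conditionally uniform in $\F$, so no single algebraic symmetry can match outputs uniformly when $ab = 0$. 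The counting identity above bypasses the need for a case analysis by directly verifying that the two regimes fit together so as to yield preimage counts that depend on $(a, b)$ only through $ab$.
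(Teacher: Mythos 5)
Your proof is correct and takes a genuinely different route from the paper's. Both arguments begin the same way: observe that $y^{(2)} = ab - y^{(1)} - y^{(3)}$ is determined by $(ab, y^{(1)}, y^{(3)})$, so it suffices to show that the joint distribution of $(y^{(1)}, y^{(3)})$ is a function of $ab$ alone. From there the paths diverge. The paper constructs, for each pair of inputs $(\alpha, \beta)$ and $(\alpha', \beta')$ with $\alpha\beta = \alpha'\beta'$, an explicit bijection on the randomness $(a_1, a_2, b_1, b_2)$ that preserves $(y^{(1)}, y^{(3)})$, and because the scaling maps only make sense away from zero this requires a case analysis on which of $\alpha, \beta, \alpha', \beta'$ vanish (four subcases). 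You instead reparametrize by $(u, v) := (a - a_1, b - b_1)$, making $(u, v, a_2, b_2)$ uniform with $y^{(1)} = uv$ and $y^{(3)} = (a-u) b_2 + a_2 (b-v)$, and then directly count preimages of each target $(w, z)$ by stratifying over the fiber $\{uv = w\}$: the degenerate point $(u, v) = (a, b)$ (present only when $w = ab$) contributes $|\F|^2 \mathbf{1}[z=0]$, and every other point on the fiber contributes exactly $|\F|$ because the resulting linear equation in $(a_2, b_2)$ is nontrivial. The resulting count $N(w, z; a, b) = |\F| N_w + |\F|\,\mathbf{1}[ab = w]\bigl(|\F|\,\mathbf{1}[z=0] - 1\bigr)$ manifestly depends on $(a,b)$ only through $ab$, so dividing by $|\F|^4$ gives the claim. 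This avoids the case analysis entirely and is arguably more conceptual: it isolates exactly where the ``zero versus nonzero'' boundary interacts with the distribution and shows that the degenerate stratum contributes the same correction regardless of which input achieves $ab = 0$. Your closing remark correctly identifies why a single algebraic bijection cannot handle the zero regime uniformly (the deterministic collapse of $y^{(3)}$ on the stratum $a_1 = b_1 = 0$), which is precisely the reason the paper's proof needs its subcases. One minor note on exposition: it is worth stating explicitly that the map $(a_1, a_2, b_1, b_2) \mapsto (u, v, a_2, b_2)$ is an affine bijection of $\F^4$ for each fixed $(a,b)$, which is why uniformity is preserved; you say it is a ``bijective reparametrization,'' which is adequate but deserves a half-line of justification.
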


\begin{proof}
Suppose that we want to multiply $\alpha,\beta\in\FF$ using $\Pi$. To this end, suppose we additively share $\alpha=a_1+a_2+a_3$ and $\beta=b_1+b_2+b_3$ and, according to the $1$-CNF sharing, we give server with index $j\in[3]$ the shares $(a_i,b_i)_{i\neq j}$. Then, the servers apply their $\Eval$ algorithm and obtain output shares
\begin{align*}
        y_1 &= (a_2+a_3)(b_2+b_3) \\
        y_2 &= a_1b_1 + a_1b_3 + a_3b_1 \\
        y_3 &= a_1b_2 + a_2b_1
\end{align*}
where server with index $j$ outputs $y_j$. We need to show that the distribution of $(y_1,y_2,y_3)$ depends only on $a\cdot b$. To this end, let $\alpha,\alpha',\beta,\beta'\in\FF$ be such that $\alpha\beta=\alpha'\beta'$. In the same way $(y_1,y_2,y_3)$ depends on $(\alpha,\beta)$, suppose that $(y'_1,y'_2,y'_2)$ depends on $(\alpha',\beta')$. We need to prove that $(y_1,y_2,y_3)$ and $(y'_1,y'_2,y'_2)$ are identically distributed. Our strategy will be to find a bijective mapping $f_{\alpha,\alpha',\beta,\beta'}(a_1,a_2,b_1,b_2)=(a'_1,a'_2,b'_1,b'_2)$ such that if we replace $a_1,a_2,b_1,b_2$ by $a'_1,a'_2,b'_1,b'_2$ in the expression for $(y_1,y_2,y_2)$ we get $(y'_1,y'_2,y'_3)$. This can be shown to be equivalent to requiring that
\begin{align*}
    a_1b_2+a_2b_1&=a'_1b'_2+a'_2b'_1\\
    a_1\beta+b_1\alpha-a_1b_1&=a'_1\beta'+b'_1\alpha'-a'_1b'_1
\end{align*}
The rest of the proof proceeds by case analysis.
\paragraph{Case 1: $\alpha\beta=\alpha'\beta'\neq 0$.} We have that $\alpha,\beta,\alpha',\beta'\neq 0$ so we may choose
\begin{align*}
    a_1'&=a_1 (\beta')^{-1}\beta\\
    b_1'&=b_1 (\alpha')^{-1}\alpha\\
    a_2'&=a_2 \alpha'\alpha^{-1}\\
    b_2'&=b_2 \beta'\beta^{-1}
\end{align*}
which is a bijection.
\paragraph{Case 2: $\alpha\beta=\alpha'\beta'=0$.} If $\alpha=\alpha'=0$ and $\beta,\beta'\neq 0$ then we require
\begin{align*}
     a_1b_2+a_2b_1&=a'_1b'_2+a'_2b'_1\\
    a_1\beta-a_1b_1&=a'_1\beta'-a'_1b'_1
\end{align*}
for which we may use the mapping
\begin{align*}
   a_1'&=a_1 (\beta')^{-1}\beta\\
    b_1'&=b_1 (\beta')^{-1}\beta\\
    a_2'&=a_2 \beta'\beta^{-1}\\
    b_2'&=b_2 \beta'\beta^{-1}
\end{align*}
which is a bijection. The case $\beta=\beta'=0$ and $\alpha,\alpha'\neq 0$ is symmetric. If $\alpha=\alpha'=\beta'=0$ and $\beta\neq 0$ we require
\begin{align*}
    a_1b_2+a_2b_1&=a'_1b'_2+a'_2b'_1\\
    a_1\beta-a_1b_1&=-a'_1b'_1
\end{align*}
for which we may use the mapping
\begin{align*}
a_1' & =a_1\\
b_1' & =\begin{cases}
b_1+\beta, & a_1\neq0\\
b_1, & a_1=0
\end{cases}\\
a_2' & =a_2\\
b_2' & =\begin{cases}
b_2- \beta a_2(a_1)^{-1}, & a_1\neq0\\
b_2, & a_1=0
\end{cases}
\end{align*}
which is also a bijection. The case $\alpha'=\beta'=\beta=0$ and $\alpha\neq 0$ is symmetric. Also, if $\alpha=\beta=\alpha'=0$ and $\beta'\neq 0$ we may use the inverse
of the above mapping.
\end{proof}

Somewhat surprisingly, this pattern does not continue to $k=4$, as we show in the following example:

\begin{example} The HSS from Definition \ref{def:greedy} with $t=1,d=3,k=4$,$\FF=\FF_3$ is not an SHSS.
\end{example}
\begin{proof}
By direct calculation, when $(x_0,x_1,x_2)=(0,0,1)$ we get for the output shares $\Pr (0,0,0,0)=431/2187$, while when $(x_0,x_1,x_2)=(0,0,0)$ we get for the output shares $\Pr (0,0,0,0)=17/81$.

\end{proof}

\section*{Acknowledgements}
We thank Elette Boyle and Tsachy Weissman for helpful conversations and the ITCS reviewers for useful comments.

\bibliographystyle{alpha}
\bibliography{bibl}

\newcommand{\etalchar}[1]{$^{#1}$}
\begin{thebibliography}{DGW{\etalchar{+}}10}

\bibitem[ACE{\etalchar{+}}21]{ACEDX21}
Mark Abspoel, Ronald Cramer, Daniel Escudero, Ivan Damg{\aa}rd, and Chaoping
  Xing.
\newblock Improved single-round secure multiplication using regenerating codes.
\newblock {\em {IACR} Cryptol. ePrint Arch.}, 2021:253, 2021.

\bibitem[Bal12]{Ball12}
Simeon Ball.
\newblock On sets of vectors of a finite vector space in which every subset of
  basis size is a basis.
\newblock {\em Journal of the European Mathematical Society}, 14(3):733--748,
  2012.

\bibitem[BCG{\etalchar{+}}19]{BCGIKS19}
Elette Boyle, Geoffroy Couteau, Niv Gilboa, Yuval Ishai, Lisa Kohl, and Peter
  Scholl.
\newblock Efficient pseudorandom correlation generators: Silent {OT} extension
  and more.
\newblock In {\em {CRYPTO}}, pages 489--518, 2019.

\bibitem[BDGM19]{TCC:BDGM19}
Zvika Brakerski, Nico D{\"o}ttling, Sanjam Garg, and Giulio Malavolta.
\newblock Leveraging linear decryption: Rate-1 fully-homomorphic encryption and
  time-lock puzzles.
\newblock In {\em {TCC}}, 2019.

\bibitem[Bei96]{beimelThesis}
Amos Beimel.
\newblock {\em Secure schemes for secret sharing and key distribution}.
\newblock PhD thesis, Technion, 1996.

\bibitem[Ben86]{C:Benaloh86a}
Josh~Cohen Benaloh.
\newblock Secret sharing homomorphisms: Keeping shares of {A} secret sharing.
\newblock In Andrew~M. Odlyzko, editor, {\em {CRYPTO} '86}, pages 251--260,
  1986.

\bibitem[BER17]{BR17}
Rawad Bitar and Salim El~Rouayheb.
\newblock Staircase codes for secret sharing with optimal communication and
  read overheads.
\newblock {\em IEEE Transactions on Information Theory}, 64(2):933--943, 2017.

\bibitem[BF90]{BeaverF90}
Donald Beaver and Joan Feigenbaum.
\newblock Hiding instances in multioracle queries.
\newblock In {\em {STACS} 90}, pages 37--48, 1990.

\bibitem[BFKR90]{BeaverFKR90}
Donald Beaver, Joan Feigenbaum, Joe Kilian, and Phillip Rogaway.
\newblock Security with low communication overhead.
\newblock In {\em {CRYPTO} '90}, pages 62--76, 1990.

\bibitem[BGI15]{BoyleGI15}
Elette Boyle, Niv Gilboa, and Yuval Ishai.
\newblock Function secret sharing.
\newblock In {\em {EUROCRYPT} 2015, Part {II}}, pages 337--367, 2015.

\bibitem[BGI16a]{BGI16a}
Elette Boyle, Niv Gilboa, and Yuval Ishai.
\newblock Breaking the circuit size barrier for secure computation under {DDH}.
\newblock In Matthew Robshaw and Jonathan Katz, editors, {\em Advances in
  Cryptology - {CRYPTO} 2016 - 36th Annual International Cryptology Conference,
  Santa Barbara, CA, USA, August 14-18, 2016, Proceedings, Part {I}}, volume
  9814 of {\em Lecture Notes in Computer Science}, pages 509--539. Springer,
  2016.

\bibitem[BGI16b]{BGI16}
Elette Boyle, Niv Gilboa, and Yuval Ishai.
\newblock Function secret sharing: Improvements and extensions.
\newblock In Edgar~R. Weippl, Stefan Katzenbeisser, Christopher Kruegel,
  Andrew~C. Myers, and Shai Halevi, editors, {\em Proceedings of the 2016 {ACM}
  {SIGSAC} Conference on Computer and Communications Security, Vienna, Austria,
  October 24-28, 2016}, pages 1292--1303. {ACM}, 2016.

\bibitem[BGI{\etalchar{+}}18]{BGILT18}
Elette Boyle, Niv Gilboa, Yuval Ishai, Huijia Lin, and Stefano Tessaro.
\newblock Foundations of homomorphic secret sharing.
\newblock In Anna~R. Karlin, editor, {\em 9th Innovations in Theoretical
  Computer Science Conference, {ITCS} 2018, January 11-14, 2018, Cambridge, MA,
  {USA}}, volume~94 of {\em LIPIcs}, pages 21:1--21:21. Schloss Dagstuhl -
  Leibniz-Zentrum f{\"{u}}r Informatik, 2018.

\bibitem[BGKL03]{BGKL}
L{\'{a}}szl{\'{o}} Babai, Anna G{\'{a}}l, Peter~G. Kimmel, and Satyanarayana~V.
  Lokam.
\newblock Communication complexity of simultaneous messages.
\newblock {\em {SIAM} J. Comput.}, 33(1):137--166, 2003.

\bibitem[BGW88]{STOC:BenGolWig88}
Michael {Ben-Or}, Shafi Goldwasser, and Avi Wigderson.
\newblock Completeness theorems for non-cryptographic fault-tolerant
  distributed computation (extended abstract).
\newblock In {\em {STOC}}, 1988.

\bibitem[BIK05]{BIK}
Amos Beimel, Yuval Ishai, and Eyal Kushilevitz.
\newblock General constructions for information-theoretic private information
  retrieval.
\newblock {\em J. Comput. Syst. Sci.}, 71(2):213--247, 2005.

\bibitem[BIKO12]{BeimelIKO12}
Amos Beimel, Yuval Ishai, Eyal Kushilevitz, and Ilan Orlov.
\newblock Share conversion and private information retrieval.
\newblock In {\em {CCC} 2012}, pages 258--268, 2012.

\bibitem[BIW07]{BIW07}
Omer Barkol, Yuval Ishai, and Enav Weinreb.
\newblock On locally decodable codes, self-correctable codes, and
  \emph{t}-private {PIR}.
\newblock In {\em {RANDOM} 2007}, pages 311--325, 2007.

\bibitem[BKS19]{BoyleKS19}
Elette Boyle, Lisa Kohl, and Peter Scholl.
\newblock Homomorphic secret sharing from lattices without {FHE}.
\newblock In {\em {EUROCRYPT} 2019, Part {II}}, pages 3--33, 2019.

\bibitem[CCD88]{STOC:ChaCreDam88}
David Chaum, Claude Cr{\'e}peau, and Ivan Damg{\r a}rd.
\newblock Multiparty unconditionally secure protocols (extended abstract).
\newblock In {\em {STOC}}, 1988.

\bibitem[CCG{\etalchar{+}}07]{CCGHV07}
Hao Chen, Ronald Cramer, Shafi Goldwasser, Robbert de~Haan, and Vinod
  Vaikuntanathan.
\newblock Secure computation from random error correcting codes.
\newblock In Moni Naor, editor, {\em Advances in Cryptology - {EUROCRYPT} 2007,
  26th Annual International Conference on the Theory and Applications of
  Cryptographic Techniques, Barcelona, Spain, May 20-24, 2007, Proceedings},
  volume 4515 of {\em Lecture Notes in Computer Science}, pages 291--310.
  Springer, 2007.

\bibitem[CDI05]{CDI05}
Ronald Cramer, Ivan Damg{\aa}rd, and Yuval Ishai.
\newblock Share conversion, pseudorandom secret-sharing and applications to
  secure computation.
\newblock In Joe Kilian, editor, {\em Theory of Cryptography, Second Theory of
  Cryptography Conference, {TCC} 2005, Cambridge, MA, USA, February 10-12,
  2005, Proceedings}, volume 3378 of {\em Lecture Notes in Computer Science},
  pages 342--362. Springer, 2005.

\bibitem[CDM00]{EC:CraDamMau00}
Ronald Cramer, Ivan Damg{\aa}rd, and Ueli~M. Maurer.
\newblock General secure multi-party computation from any linear secret-sharing
  scheme.
\newblock In {\em {EUROCRYPT}}, 2000.

\bibitem[CGKS98]{Chor:1998:PIR:293347.293350}
Benny Chor, Oded Goldreich, Eyal Kushilevitz, and Madhu Sudan.
\newblock Private information retrieval.
\newblock {\em J. ACM}, 1998.

\bibitem[CHJ09]{CHJ09}
Jun Chen, D-k He, and Ashish Jagmohan.
\newblock The equivalence between slepian-wolf coding and channel coding under
  density evolution.
\newblock {\em IEEE Transactions on Communications}, 57(9):2534--2540, 2009.

\bibitem[CM21]{CouteauM21}
Geoffroy Couteau and Pierre Meyer.
\newblock Breaking the circuit size barrier for secure computation under
  quasi-polynomial {LPN}.
\newblock In {\em {EUROCRYPT} 2021, Part {II}}, pages 842--870, 2021.

\bibitem[CT06]{coverThomas}
Thomas Cover and Joy Thomas.
\newblock {\em Elements of Information Theory, 2nd Edition}.
\newblock Wiley, 2006.

\bibitem[DGI{\etalchar{+}}19]{DottlingGIMMO19}
Nico D{\"{o}}ttling, Sanjam Garg, Yuval Ishai, Giulio Malavolta, Tamer Mour,
  and Rafail Ostrovsky.
\newblock Trapdoor hash functions and their applications.
\newblock In {\em {CRYPTO} 2019, Part {III}}, pages 3--32, 2019.

\bibitem[DGW{\etalchar{+}}10]{dimakis}
Alexandros~G Dimakis, P~Brighten Godfrey, Yunnan Wu, Martin~J Wainwright, and
  Kannan Ramchandran.
\newblock Network coding for distributed storage systems.
\newblock {\em IEEE transactions on information theory}, 56(9):4539--4551,
  2010.

\bibitem[DHRW16]{DHRW16}
Yevgeniy Dodis, Shai Halevi, Ron~D. Rothblum, and Daniel Wichs.
\newblock Spooky encryption and its applications.
\newblock In Matthew Robshaw and Jonathan Katz, editors, {\em Advances in
  Cryptology - {CRYPTO} 2016 - 36th Annual International Cryptology Conference,
  Santa Barbara, CA, USA, August 14-18, 2016, Proceedings, Part {III}}, volume
  9816 of {\em Lecture Notes in Computer Science}, pages 93--122. Springer,
  2016.

\bibitem[DRWS11]{dimakisSurvey}
Alexandros~G Dimakis, Kannan Ramchandran, Yunnan Wu, and Changho Suh.
\newblock A survey on network codes for distributed storage.
\newblock {\em Proceedings of the IEEE}, 99(3):476--489, 2011.

\bibitem[Efr09]{STOC:Efremenko09}
Klim Efremenko.
\newblock 3-query locally decodable codes of subexponential length.
\newblock In {\em {STOC}}, 2009.

\bibitem[EGKY21]{EGKY21}
Karim Eldefrawy, Nicholas Genise, Rutuja Kshirsagar, and Moti Yung.
\newblock On regenerating codes and proactive secret sharing: Relationships and
  implications.
\newblock In Colette Johnen, Elad~Michael Schiller, and Stefan Schmid, editors,
  {\em Stabilization, Safety, and Security of Distributed Systems - 23rd
  International Symposium, {SSS} 2021, Virtual Event, November 17-20, 2021,
  Proceedings}, volume 13046 of {\em Lecture Notes in Computer Science}, pages
  350--364. Springer, 2021.

\bibitem[EH63]{EH63}
Paul Erd\H{o}s and Haim Hanini.
\newblock On a limit theorem in combinatorical analysis.
\newblock {\em Publ. Math. Debrecen}, 10:10--13, 1963.

\bibitem[FGJI17]{FazioGJS17}
Nelly Fazio, Rosario Gennaro, Tahereh Jafarikhah, and William E.~Skeith III.
\newblock Homomorphic secret sharing from {P}aillier encryption.
\newblock In {\em Provable Security}, 2017.

\bibitem[FIKW22]{FIKWconf}
Ingerid Fosli, Yuval Ishai, Victor~I. Kolobov, and Mary Wootters.
\newblock On the download rate of homomorphic secret sharing.
\newblock In {\em {ITCS}}, 2022.

\bibitem[FY92]{FY92}
Matthew~K. Franklin and Moti Yung.
\newblock Communication complexity of secure computation (extended abstract).
\newblock In S.~Rao Kosaraju, Mike Fellows, Avi Wigderson, and John~A. Ellis,
  editors, {\em Proceedings of the 24th Annual {ACM} Symposium on Theory of
  Computing, May 4-6, 1992, Victoria, British Columbia, Canada}, pages
  699--710. {ACM}, 1992.

\bibitem[Gen09]{STOC:Gentry09}
Craig Gentry.
\newblock Fully homomorphic encryption using ideal lattices.
\newblock In {\em {STOC}}, 2009.

\bibitem[GFC06]{GC06}
Javier Garcia-Frias and Felipe Cabarcas.
\newblock Approaching the slepian--wolf boundary using practical channel codes.
\newblock {\em Signal Processing}, 86(11):3096--3101, 2006.

\bibitem[GH19]{TCC:GenHal19}
Craig Gentry and Shai Halevi.
\newblock Compressible {FHE} with applications to {PIR}.
\newblock In {\em {TCC}}, 2019.

\bibitem[GI99]{GI99}
Niv Gilboa and Yuval Ishai.
\newblock Compressing cryptographic resources.
\newblock In Michael~J. Wiener, editor, {\em Advances in Cryptology - {CRYPTO}
  '99, 19th Annual International Cryptology Conference, Santa Barbara,
  California, USA, August 15-19, 1999, Proceedings}, volume 1666 of {\em
  Lecture Notes in Computer Science}, pages 591--608. Springer, 1999.

\bibitem[GIKM98]{GIKM98}
Yael Gertner, Yuval Ishai, Eyal Kushilevitz, and Tal Malkin.
\newblock Protecting data privacy in private information retrieval schemes.
\newblock In Jeffrey~Scott Vitter, editor, {\em Proceedings of the Thirtieth
  Annual {ACM} Symposium on the Theory of Computing, Dallas, Texas, USA, May
  23-26, 1998}, pages 151--160. {ACM}, 1998.

\bibitem[GM10]{GM10}
Berndt~M Gammel and Stefan Mangard.
\newblock On the duality of probing and fault attacks.
\newblock {\em Journal of electronic testing}, 26(4):483--493, 2010.

\bibitem[GW16]{GW16}
Venkatesan Guruswami and Mary Wootters.
\newblock Repairing reed-solomon codes.
\newblock In Daniel Wichs and Yishay Mansour, editors, {\em Proceedings of the
  48th Annual {ACM} {SIGACT} Symposium on Theory of Computing, {STOC} 2016,
  Cambridge, MA, USA, June 18-21, 2016}, pages 216--226. {ACM}, 2016.

\bibitem[HB17]{HB17}
Wentao Huang and Jehoshua Bruck.
\newblock Secret sharing with optimal decoding and repair bandwidth.
\newblock In {\em 2017 {IEEE} International Symposium on Information Theory,
  {ISIT} 2017, Aachen, Germany, June 25-30, 2017}, pages 1813--1817. {IEEE},
  2017.

\bibitem[HH19]{hafiz2019bit}
Syed~Mahbub Hafiz and Ryan Henry.
\newblock A bit more than a bit is more than a bit better.
\newblock {\em Proceedings on Privacy Enhancing Technologies}, 2019(4), 2019.

\bibitem[HLKB16]{HLKB16}
Wentao Huang, Michael Langberg, J{\"{o}}rg Kliewer, and Jehoshua Bruck.
\newblock Communication efficient secret sharing.
\newblock {\em {IEEE} Trans. Inf. Theory}, 62(12):7195--7206, 2016.

\bibitem[ILM21]{IshaiLM21}
Yuval Ishai, Russell W.~F. Lai, and Giulio Malavolta.
\newblock A geometric approach to homomorphic secret sharing.
\newblock In {\em {PKC} 2021, Part {II}}, pages 92--119, 2021.

\bibitem[ISN89]{ISN89}
Mitsuru Ito, Akira Saito, and Takao Nishizeki.
\newblock Secret sharing scheme realizing general access structure.
\newblock {\em Electronics and Communications in Japan (Part III: Fundamental
  Electronic Science)}, 72(9):56--64, 1989.

\bibitem[LMS18]{AC:LaiMalSch18}
Russell W.~F. Lai, Giulio Malavolta, and Dominique Schr{\"o}der.
\newblock Homomorphic secret sharing for low degree polynomials.
\newblock In {\em {ASIACRYPT}}, 2018.

\bibitem[Mas95]{Mas95}
James~L Massey.
\newblock Some applications of coding theory in cryptography.
\newblock {\em Codes and Ciphers: Cryptography and Coding IV}, pages 33--47,
  1995.

\bibitem[Mau06]{Maurer}
Ueli~M. Maurer.
\newblock Secure multi-party computation made simple.
\newblock {\em Discret. Appl. Math.}, 154(2):370--381, 2006.

\bibitem[OSY21]{OrlandiSY21}
Claudio Orlandi, Peter Scholl, and Sophia Yakoubov.
\newblock The rise of {Paillier}: Homomorphic secret sharing and public-key
  silent {OT}.
\newblock In {\em {EUROCRYPT} 2021, Part {I}}, pages 678--708, 2021.

\bibitem[RAD78]{RivestAD78}
Ronald~L. Rivest, Len Adleman, and Michael~L. Dertouzos.
\newblock On data banks and privacy homomorphisms.
\newblock In Richard~A. DeMillo, David~P. Dobkin, Anita~K. Jones, and
  Richard~J. Lipton, editors, {\em Foundations of Secure Computation}, pages
  165--179. Academic Press, 1978.

\bibitem[RKV18]{RKOV18}
Ankit~Singh Rawat, Onur~Ozan Koyluoglu, and Sriram Vishwanath.
\newblock Centralized repair of multiple node failures with applications to
  communication efficient secret sharing.
\newblock {\em IEEE Transactions on Information Theory}, 64(12):7529--7550,
  2018.

\bibitem[RS21]{RoyS21}
Lawrence Roy and Jaspal Singh.
\newblock Large message homomorphic secret sharing from {DCR} and applications.
\newblock In {\em {CRYPTO} 2021, Part {III}}, pages 687--717, 2021.

\bibitem[Sha79]{Shamir79}
Adi Shamir.
\newblock How to share a secret.
\newblock {\em Communications of the {A}ssociation for {C}omputing
  {M}achinery}, 1979.

\bibitem[SJ16]{SJ16}
Hua Sun and Syed~Ali Jafar.
\newblock The capacity of private information retrieval.
\newblock In {\em 2016 {IEEE} Global Communications Conference, {GLOBECOM}
  2016, Washington, DC, USA, December 4-8, 2016}, pages 1--6. {IEEE}, 2016.

\bibitem[SJ17]{SJ17}
Hua Sun and Syed~Ali Jafar.
\newblock Optimal download cost of private information retrieval for arbitrary
  message length.
\newblock {\em {IEEE} Trans. Inf. Forensics Secur.}, 12(12):2920--2932, 2017.

\bibitem[SJ18]{SJ18robust}
Hua Sun and Syed~Ali Jafar.
\newblock The capacity of robust private information retrieval with colluding
  databases.
\newblock {\em {IEEE} Trans. Inf. Theory}, 64(4):2361--2370, 2018.

\bibitem[SMT15]{SMT15}
Salman Salamatian, Muriel M{\'e}dard, and Emre Telatar.
\newblock A successive description property of monotone-chain polar codes for
  slepian-wolf coding.
\newblock In {\em 2015 IEEE International Symposium on Information Theory
  (ISIT)}, pages 1522--1526. IEEE, 2015.

\bibitem[SPDC14]{SPDC14}
Karthikeyan Shanmugam, Dimitris~S Papailiopoulos, Alexandros~G Dimakis, and
  Giuseppe Caire.
\newblock A repair framework for scalar mds codes.
\newblock {\em IEEE Journal on Selected Areas in Communications},
  32(5):998--1007, 2014.

\bibitem[SRR14]{SRR14}
Nihar~B. Shah, K.~V. Rashmi, and Kannan Ramchandran.
\newblock One extra bit of download ensures perfectly private information
  retrieval.
\newblock In {\em 2014 {IEEE} International Symposium on Information Theory,
  Honolulu, HI, USA, June 29 - July 4, 2014}, pages 856--860. {IEEE}, 2014.

\bibitem[SW21]{SW21}
Noah Shutty and Mary Wootters.
\newblock Low-bandwidth recovery of linear functions of reed-solomon-encoded
  data.
\newblock {\em arXiv preprint arXiv:2107.11847}, 2021.

\bibitem[TYB17]{TYB17}
Itzhak Tamo, Min Ye, and Alexander Barg.
\newblock Optimal repair of reed-solomon codes: Achieving the cut-set bound.
\newblock In Chris Umans, editor, {\em 58th {IEEE} Annual Symposium on
  Foundations of Computer Science, {FOCS} 2017, Berkeley, CA, USA, October
  15-17, 2017}, pages 216--227. {IEEE} Computer Society, 2017.

\bibitem[Yek07]{STOC:Yekhanin07}
Sergey Yekhanin.
\newblock Towards 3-query locally decodable codes of subexponential length.
\newblock In {\em {STOC}}, 2007.

\end{thebibliography}

\appendix

\section{Computationally Secure HSS}
\label{app:computational}

In this section we define the computational relaxation of HSS, adapting earlier definitions (see, e.g.,~\cite{BGILT18}) to our notation. 

Unlike the information-theoretic setting of Definition~\ref{def:HSS}, in the computational setting the input domain $\cal X$ and output domain $\cal Y$ are $\{0,1\}^*$ rather than finite sets. We further modify the syntax of Definition~\ref{def:HSS} in the following ways. 
\begin{itemize}
    \item The function $\Share$ takes a security parameter $\lambda$ as an additional input. 
    \item The function class $\cal F$ is replaced by a polynomial-time computable function $F({\hat f};x_1,\ldots,x_m)$, where ${\hat f}$ describes a function $f(x_1,\ldots,x_m)$ and is given as input to $\Eval$. For instance, private information retrieval can be captured by $F({\hat f};x_1)$ where ${\hat f}$ describes an $N$-symbol database and $x_1$ an index $i\in[N]$, and $F$ returns ${\hat f}[x_1]$. When referring to HSS for concrete computational models such as circuits or branching programs, the input ${\hat f}$ is a description of a circuit or a branching program with inputs $x_1,\ldots,x_m$. Finally, when considering {\em additive} HSS as in Definition~\ref{def:linear}, ${\hat f}$ also specifies the finite field over which the output is defined.
\end{itemize}

Security for computational HSS is defined in the following standard way.

\begin{definition}[Computational HSS: Security]\label{def:secure}
We say that $\Pi=(\Share,\Eval,\Rec)$ is  \emph{computationally $t$-private} if for every set of servers $T \subset [k]$ of size $t$ and polynomials $p_1,p_2$ the following holds. For all input sequences $x_\lambda,x'_\lambda$ such that $|x_\lambda|=|x'_\lambda|=p_1(\lambda)$,  circuit sequences $C_\lambda$  such that $|C_\lambda|=p_2(\lambda)$, and all sufficiently large $\lambda$, we have
\[ \Pr[C_\lambda(Y_T)=1]-\Pr[C_\lambda(Y'_T)=1] \le 1/p_2(\lambda),
\]
where $Y_T$ and $Y'_T$ are the $T$-entries of $\Share(1^\lambda,x_\lambda)$ and $\Share(1^\lambda,x'_\lambda)$, respectively.
\end{definition}

\section{Proof of Proposition~\ref{prop:impossibleShamir}}\label{app:impossibleShamir}
In this section we prove Proposition~\ref{prop:impossibleShamir}.

%\mkw{Perhaps this proof goes in the appendix?  It's a bit tedious and at the end of the day it's just an exhaustive search.} \yi{yes} 
%\mkw{moved}

First, we show that there exists a CNF-based scheme.  By (the proof of) Theorem~\ref{thm:cnf-hss}, it suffices to exhibit a $5$-party $4$-LMSSS over $\FF_2$ with download cost $5$.  For secrets $x_1, x_2,x_3,x_4 \in \FF_2$ and a random bit $r \in \FF_2$, such a scheme is given by
\[ \Share(x_1, x_2, x_3, x_4, r) = ( r, x_1 + r, x_2 + r, x_3 + r, x_4 + r) \]
and
\[ \Rec(y_1, y_2, y_3, y_4, y_5) = (y_1 + y_2, y_1 + y_3, y_1 + y_4, y_1 + y_5).\]

Next, we show that there is no Shamir-based scheme that downloads only one bit from each party.  The proof proceeds by a computer search.  However, naively such a search (for example, over all sets of $k=5$ linear functions from $\FF_8^4 \to \FF_2$) is not computationally tractable.  Instead we first analytically reduce the problem to one that is tractable.

%\mkw{This feels like an overly kludgy way to justify this.  Is there a more direct way?}

Suppose (towards a contradiction) that there were an $\F_2$-linear HSS-for-concatenation $\Pi$ based on Shamir-sharing.  
We note that since $\Rec$ is $\F_2$-linear, we may assume that $\Eval$ is also $\F_2$-linear.\footnote{Indeed, by the same argument in the proof of Theorem~\ref{thm:cnf-hss-d1}, our $\FF_2$-linear HSS-for-concatenation $\Pi = (\Share, \Eval, \Rec)$ gives rise to an $\FF_2$-linear $4$-LMSSS that has the output shares of $\Pi$ as its shares, with the same $\Rec$ function.  Since $\Rec$ is $\F_2$-linear, and since by \cite{beimelThesis} any LMSSS $\mathcal{L}$ with linear $\Rec$ may also have a $\F_2$-linear share function, we may assume that $\mathcal{L}$ has a $\F_2$-linear share function, $\Share'$.  But since $\Share$ (which is Shamir sharing) is also $\F_2$-linear, this implies that we may take $\Eval$ to be the $\F_2$-linear function $\Share' \circ \Share^{-1}$, where $\Share^{-1}$ denotes an arbitrary linear function that returns the inputs given all of the Shamir shares.}
We will write down $\Pi$ linear-algebraically over $\FF_2$.  Fix some basis for $\FF_8$ over $\FF_2$.  For any element $\alpha \in \F_8$, let $\nvec(\alpha) \in \F_2^3$ denote the vector representation of $\alpha$ in that basis, and let $\nmat(\alpha) \in \FF_2^{3 \times 3}$ denote the matrix representation of $\alpha$.  Thus, $\nmat(\alpha)\nvec(\beta) = \nvec(\alpha\beta)$ for all $\alpha, \beta \in \FF_8$.

Let $x^{(j)} \in \FF_2$ for $j \in [4]$ be the $\ell=4$ secrets.  Choose $\alpha_0, \alpha_1, \ldots, \alpha_5 \in \FF_8^*$, and suppose without loss of generality that $\alpha_0 = 0$.\footnote{Indeed, this is without loss of generality, as the Shamir scheme with general $\alpha_i$ has the same share function as the Shamir scheme with evaluation points $\alpha_i'$ where $\alpha_0' = 0$ and $\alpha_i' = \alpha_i - \alpha_0$. }
Under Shamir sharing, party $i$ receives
\[ (x^{(j)} + \rho^{(j)} \cdot \alpha_i )_{j \in [4]}, \]
where $\rho^{(j)} \in \FF_8$ is uniformly random.  Writing this over $\FF_2$, party $i$ receives
\[ [ \vec{v} | \nmat(\alpha_i) ] \cdot \begin{pmatrix} x^{(j)} \\ r_1^{(j)} \\ r_2^{(j)} \\ r_3^{(j)} \end{pmatrix} \qquad \text{for $j \in [4]$},\]
where $\vec{v} = \nvec(1) \in \FF_2^{3}$ is a column vector, and where $r_i^{(j)} \in \FF_2$ are uniformly random.  (That is, $\nvec(\rho^{(j)}) = \br^{(j)}$).  Let $W_i = [ \vec{v} | \nmat(\alpha_i) ]$ for all $i \in [5]$.

By assumption, each party $i$ sends a single bit, which must be an $\F_2$-linear combination of the bits that they hold.  Thus, each party $i$  sends
\[ z_i = \sum_{j=1}^4 \langle \bw_i^{(j)}, (x^{(j)}, r_1^{(j)}, r_2^{(j)}, r_3^{(j)} ) \rangle, \]
for some vectors $\bw_i^{(j)}$ in the rowspan of $W_i$.
Let 
\[ \bw_i = \bw_i^{(1)} \circ \bw_i^{(2)} \circ \bw_i^{(3)} \circ \bw_i^{(4)} \in \FF_2^{16}, \]
where $\circ$ denotes concatenation.
Since the recovery algorithm must also be linear, we have
\[ x^{(j)} = \sum_{i=1}^5 a_{ij} z_i \]
for some coefficients $a_{ij} \in \FF_2$ and for all $j \in [4]$.
This implies that for all $j \in [4]$,
\begin{equation}\label{eq:linalg}
 \sum_{i=1}^5 a_{ij} \bw_i = \vec{e}_{4j+1}, 
\end{equation}
where $\vec{e}_r \in \FF_2^{16}$ denotes the $r$'th standard basis vector.  (This is because if we concatenate the vectors $(x^{(j)}, r_1^{(j)}, r_2^{(j)}, r_3^{(j)} )$ to mirror the concatenation that created the $\bw_i$'s, the $x^{(j)}$ term appears in the $4j + 1$'st coordinate).  Consider the restriction $\by_i$ of $\bw_i$ to the coordinates indexed by $(2,3,4,6,7,8,10,11,12,14,15,16)$; that is, all of the elements of $[16]$ that are not equal to $4j + 1$.  Then, \eqref{eq:linalg} implies that for all $j \in [4]$, 
\[ \sum_{i=1}^5 a_{ij} \by_i = \mathbf{0}. \]
The $5 \times 4$ matrix formed by the $a_{ij}$ must be full rank, since the right hand sides of \eqref{eq:linalg} for $j \in [4]$ are linearly independent.  Thus, the matrix with the $\by_i$ as columns has a kernel of dimension at least $4$. We conclude that the $\by_i$ for $i \in [5]$ must have a span of dimension at most $1$.  In other words, without loss of generality we may assume that $\by_i = \by$ is independent of $i$.  (Notice that it's also possible that $\by_i = \vec{0}$, but in that case we could just make the corresponding $a_{ij}$ values zero and set $\by_i = \by$).  

Further, values for $\by_i$ determine all of the values in $\bw_i$.  This is because we have the restriction that each $\bw_i^{(j)}$ lies in the rowspan of $W_i$, and $\nmat(\alpha_i)$ is full rank.  This implies that the the first entry of $\bw_i^{(j)}$ is given by $\langle \by_i^{(j)} \nmat(\alpha_i)^{-1}, \vec{v} \rangle$ for all $i,j$.

Therefore, the values of $\bw_i$ for all $i \in [5]$ are determined by a single choice of $\by \in \F_2^{12}$. 
This suggests an algorithm to enumerate over all linear HSS with the claimed properties:
\begin{itemize}
    \item For each choice of distinct $\alpha_1, \ldots, \alpha_5 \in \FF_8^*$, and for each choice of $\by \in \FF_2^{4}$:
    \begin{itemize}
        \item Recover the vectors $\bw_i$ for $i \in [5]$ that are implied by $\by$, as described above.
        \item If it is the case that $\vec{e}_{3j + 1}$ lies in the span of $\{\bw_i \,:\, i \in [5]\}$ for all $j$, we have found a valid linear HSS; return it.
    \end{itemize}
    \item If we have not returned, return ``there is no such HSS.''
\end{itemize}
This search is tractable: there are only ${7 \choose 5} \cdot 2^{12}$ things to enumerate over, and for each of these we must do some linear algebra on a $16 \times 5$ matrix to determine if the choice of $\by$ results in a valid scheme for the choice of evaluation points.  We implemented this search, and did not find any valid schemes.\footnote{The code for our implementation of this search is available at \href{https://web.stanford.edu/~marykw/files/proof_of_no_Shamir_HSS.sage}{https://web.stanford.edu/\raisebox{0.5ex}{\texttildelow}marykw/files/proof\_of\_no\_Shamir\\\_HSS.sage}.}  Therefore we conclude that no such scheme exists.

\end{document}